\newcommand\ket[1]{\left|{#1}\right\rangle}
\newcommand\OC{Lambda-SX\xspace}
\newcommand\fRed{\mathtt{Red}}
\newcommand\fId{\mathtt{Id}}
\newcommand\fT{\mathtt{T}}
\newcommand\fFV{\mathtt{FV}}
\newcommand\fMinS{\mathtt{min}_S}
\newcommand\typeequiv{=}
\newcommand\eqclass[1]{\ensuremath{[#1]}}
\newcommand\Substitution[2]{[ #1 / #2 ]}
\newcommand\SN{\mathsf{SN}}
\newcommand\Neutral{\mathcal{N}}
\newcommand\lra[1][1]{\longrightarrow_{#1}}
\newcommand\nlra{\not\longrightarrow}
\newcommand\lrap{\lra[p]}
\newcommand\Ldots{\mathop{\lower.5ex\hbox{\(\hdots\)}}}
\newcommand\Ba{\ensuremath{\nu}}
\newcommand\B{\ensuremath{\mathbb B}}
\newcommand\X{\ensuremath{\mathbb X}}
\newcommand\M{\ensuremath{\mathbb M}} 
\newcommand\Q{\ensuremath{\mathbb Q}}
\newcommand\G{\ensuremath{\mathbb G}}
\newcommand\gB{\ensuremath{\Psi}} 
\newcommand\atomictypes{\ensuremath{\mathbf {A}}}
\newcommand\linearspaces{\ensuremath{\mathbf {L}}}
\newcommand\bqtypes{\ensuremath{\mathbf B}}
\newcommand\qtypes{\ensuremath{\mathbf Q}}
\newcommand\types{\ensuremath{\mathbf T}}
\newcommand\bqgeneraltypes{\ensuremath{\mathbf {G}}}
\newcommand\basis{\ensuremath{\mathcal B}}
\newcommand\ite[3]{{#1}?{#2}\mathord{\cdot}{#3}}
\newcommand\itex[3]{{#1}?_{\ensuremath{\mathbb X}}{#2}\mathord{\cdot}{#3}}
\newcommand\itei[3]{{#1}?_{\ensuremath{\mathbb B}_i}{#2}\mathord{\cdot}{#3}}
\newcommand\values{\ensuremath{\mathcal V}}
\newcommand\head{\mathtt{hd}~}
\newcommand\headl{\mathtt{hd}}
\newcommand\tail{\mathtt{tl}~}
\newcommand\taill{\mathtt{tl}}
\newcommand\z{\vec{0}}
\newcommand{\error}{\lightning}
\newcommand\tax{\textsl{Ax}}
\newcommand\tif{\textsl{If}}
\newcommand\s[1]{\ensuremath{\mathsf{#1}}}
\newcommand\rbetab{(\s{\beta_b})}
\newcommand\rbetan{(\s{\beta_n})}
\newcommand\riftrue{(\s{if_{1}})}
\newcommand\riffalse{(\s{if_{0}})}
\newcommand\rifplus{(\s{if_{+}})}
\newcommand\rifminus{(\s{if_{-}})}
\newcommand\rifuparrow{(\s{if_{\uparrow}})}
\newcommand\rifdownarrow{(\s{if_{\downarrow}})}
\newcommand\rlinr{(\s{lin^+_r})}
\newcommand\rlinscalr{(\s{lin^\alpha_r})}
\newcommand\rlinzr{(\s{lin^0_r})}
\newcommand\rlinl{(\s{lin^+_l})}
\newcommand\rlinscall{(\s{lin^\alpha_l})}
\newcommand\rlinzl{(\s{lin^0_l})}
\newcommand\rneut{(\s{zero})}
\newcommand\runit{(\s{one})}
\newcommand\rzeros{(\s{scalar_0})}
\newcommand\rzero{(\s{zero_{arg}})}
\newcommand\rprod{(\s{assoc})}
\newcommand\rdists{(\s{dist})}
\newcommand\rfact{(\s{fact})}
\newcommand\rfacto{(\s{fact_1})}
\newcommand\rfactt{(\s{fact_2})}
\newcommand\rproj{(\s{proy})}
\newcommand\rprojx{(\s{proy_{\X}})}
\newcommand\rerrorpi{(\s{proy^{\z}})}
\newcommand\rerrorpix{(\s{proy_{\X}^{\z}})}
\newcommand\rproji{(\s{proy_{\B_i}})}
\newcommand\rhead{(\s{head})}
\newcommand\rtail{(\s{tail})}
\newcommand\rcomm{(\s{conm})}
\newcommand\rassocp{(\s{asoc^+})}
\newcommand\rerrorapplyto{(\ensuremath{\error_{@}})}
\newcommand\rerrorapplyfrom{(\ensuremath{\error^{@}})}
\newcommand\rerrorsum{(\ensuremath{\error_{+}})}
\newcommand\rerrorprod{(\ensuremath{\error_{\s{scal}}})}
\newcommand\rerrortimesA{(\ensuremath{\error_{\otimes}})}
\newcommand\rerrortimesB{(\ensuremath{\error^{\otimes}})}
\newcommand\rerrorpierror{(\ensuremath{\error_{\pi}})}
\newcommand\rerrorpixerror{(\ensuremath{\error_{\pix}})}
\newcommand\rerrorpiierror{(\ensuremath{\error_{\pii}})}
\newcommand\rerrorcast{(\ensuremath{\error_\Uparrow})}
\newcommand\rerrorcastl{(\ensuremath{\error_\Uparrow^\ell})}
\newcommand\rerrorcastr{(\ensuremath{\error_\Uparrow^r})}
\newcommand\rerrorhead{(\ensuremath{\error_\headl})}
\newcommand\rerrortail{(\ensuremath{\error_\taill})}
\newcommand\rdistrp{(\s{cast_r^+})}
\newcommand\rdistlp{(\s{cast_\ell^+})}
\newcommand\rdistra{(\s{cast_r^\alpha})}
\newcommand\rdistla{(\s{cast_\ell^\alpha})}
\newcommand\rdistrz{(\s{cast_r^0})}
\newcommand\rdistlz{(\s{cast_\ell^0})}
\newcommand\rdistpup{(\s{cast_{\Uparrow}^+})}
\newcommand\rdistaup{(\s{cast_{\Uparrow}^\alpha})}
\newcommand\rneutralzup{(\s{neut_{0}^\Uparrow})}
\newcommand\rneutralrup{(\s{neut_r^\Uparrow})}
\newcommand\rneutrallup{(\s{neut_\ell^\Uparrow})}
\newcommand\rcastp{(\s{cast_{\ket{+}}})}
\newcommand\rcastm{(\s{cast_{\ket{-}}})}
\newcommand\rcastu{(\s{cast_{\ket{\uparrow}}})}
\newcommand\rcastd{(\s{cast_{\ket{\downarrow}}})}
\newcommand\rcastketz{(\s{cast_{\ket0}})}
\newcommand\rcastketo{(\s{cast_{\ket1}})}
\newcommand\Castr{\Uparrow^r}
\newcommand\Castl{\Uparrow^\ell}
\newcommand\Cast{\Uparrow}
\newcommand\pim{\pi^m}
\newcommand\pimx{\pi^m_{\X}}
\newcommand\pimi{\pi^m_{\B_i}}
\newcommand\pix{\pi_{\X}}
\newcommand\pii{\pi_{\B_i}}
\newcommand\steps[1]{\left|{#1}\right|}
\newcommand\simplesize[1]{\delta(#1)}
\newcommand\timessize[1]{|#1|_\times}
\newcommand\size[1]{\|#1\|}
\newcommand\den[1]{\left\llbracket{}#1 \right\rrbracket}
\newcommand\pair[2]{({#1}+{#2})}
\newcommand\Set[1]{\lbrace{#1}\rbrace}
\newcommand\brackets[1]{\left[{#1}\right]}
\newcommand\prnthss[1]{\left({#1}\right)}
\begin{document}

\title{A Quantum-Control Lambda-Calculus\texorpdfstring{\\}{} with Multiple Measurement Bases}

\author{Alejandro Díaz-Caro\inst{1,2} 
  \and Nicolas A. Monzon\inst{3,4}} 

\institute{
  Université de Lorraine, CNRS, Inria, LORIA, F-54000 Nancy, France
\and
  Universidad Nacional de Quilmes, Bernal, BA, Argentina
\and
Universidad Argentina de la Empresa, CABA, Argentina
\and
Universidad de la República, PEDECIBA-Informática, Montevideo, Uruguay
}

\maketitle

\begin{abstract}
  We introduce \OC, a typed quantum lambda-calculus that supports multiple
  measurement bases. By tracking duplicability relative to arbitrary bases
  within the type system, \OC enables more flexible control and compositional
  reasoning about measurements. We formalise its syntax, typing rules,
  subtyping, and operational semantics, and establish its key
  meta-theoretical properties. This proof-of-concept shows that support
  for multiple bases can be coherently integrated into the type discipline of
  quantum programming languages.
  \keywords{Quantum lambda-calculus \and Type systems \and Subtyping \and Quantum control \and Multiple measurement bases}
\end{abstract}

\section{Introduction}
\label{sec:introduction}

Quantum computing can be viewed as a computational model for quantum
mechanics. In this view, the state of a quantum system represents the state of
a computation, and its evolution corresponds to a computational process. This
opens the way to studying quantum computation using programming language
theory and, in particular, type theory. Developing a type theory for quantum
computation also creates connections with logic, following the Curry--Howard
isomorphism~\cite{SorensenUrzyczyn2006}. Eventually, this approach may lead to
a formal logic of quantum mechanics grounded in computer science.

Quantum algorithms are traditionally described using circuits, but the need for
higher-level abstractions led to the notion of classical control, where a
classical computer drives quantum execution. This idea, rooted in Knill’s qRAM
model~\cite{Knill1996}, was formalised by Selinger~\cite{Selinger2004} to
enable classical control flow over quantum hardware.  This approach led to the
development of the Quantum Lambda Calculus~\cite{SelingerValiron2006}, where
programs are expressed by a tuple of a lambda term together with a quantum
memory. This calculus has been the basis of languages like
Quipper~\cite{GreenEtAl2013} and QWIRE~\cite{PaykinRandZdancewic2017}.

An alternative paradigm is \emph{quantum control}, introduced by Altenkirch and
Grattage in the language QML~\cite{AltenkirchGrattage2005}. Here, the goal is
to avoid relying on a classical machine to drive a quantum computer, and
instead allow quantum data to control computation directly. Following this
paradigm, a quantum-control extension of the lambda calculus---later called
Lambda-S$_1$---was proposed in 2019~\cite{DiazcaroEtAl2019}, using realizability
techniques~\cite{Kleene1945}, and given a categorical model
in~\cite{DiazcaroMalherbe2022}.

Lambda-S$_1$ was the result of a long line of research on quantum control,
started by Lineal~\cite{ArrighiDowek2017}---the first extension of the lambda
calculus to embody quantum control. Lineal is an untyped lambda calculus
extended with arbitrary linear superpositions. Its rewrite rules ensure
confluence and avoid cloning arbitrary terms---a forbidden operation in
quantum computing~\cite{WoottersZurek1982}---and terms normalize to canonical
vector forms. To prevent cloning, it uses a \emph{call-by-base} strategy:
applying a lambda abstraction $\lambda x.t$ to a superposition $(\alpha.v +
\beta.w)$ yields $\alpha.(\lambda x.t)v + \beta.(\lambda x.t)w$. This
guarantees that all abstractions are linear and supports expressing matrices,
vectors, and hence quantum programs. These include non-unitary maps and
unnormalised vectors.

However, call-by-base breaks down in the presence of measurement. For instance,
if $\lambda x.\pi^1 x$ denotes a measurement on the computational
basis, then applying it to a superposition yields $\alpha.(\lambda
x.\pi^1 x)v + \beta.(\lambda x.\pi^1 x)w$, which fails to
produce a probabilistic collapse and instead behaves like the identity.

To solve this, Lambda-S~\cite{DiazcaroDowekRinaldi2019} introduced a
type-guided approach. In Lambda-S, a superposed term of type $A$ is marked
with $S(A)$, allowing beta-reduction to be guided by the argument's type. If
$\B$ is the type of base qubits $\ket 0$ and $\ket 1$, then $S(\B)$ is the
type of arbitrary qubits. Thus, in
$(\lambda x^{\B}.t)(\alpha.\ket 0 + \beta.\ket 1)$, call-by-base applies, whereas in
$(\lambda x^{S(\B)}.t)(\alpha.\ket 0 + \beta.\ket 1)$, a call-by-name strategy is used.
The latter requires a linearity check on $t$: the variable must not be
duplicated.

This modal distinction is dual to that of linear logic~\cite{Girard1987}, where types $!A$ are
duplicable. In Lambda-S, $S(A)$ marks non-duplicable types---and this duality
is made explicit by its categorical models~\cite{DiazcaroMalherbe2024,DiazcaroMalherbe2020}.

Among various quantum lambda-calculus extensions, Lambda-S stands out for its
ability to distinguish between superposed states and base states with respect
to a given measurement basis.

Lambda-S$_1$ can be seen as a restriction of Lambda-S in which only unitary
matrices and normalized vectors are considered. The technique to enforce this
restriction was introduced in~\cite{DiazcaroEtAl2019}, and a full definition of
the restricted language was given in~\cite{DiazcaroMalherbe2022}, merging
Lambda-S with that technique.

These languages favour the use of the computational basis, which is sufficient
for quantum computation. Indeed, a measurement in an arbitrary basis can always
be simulated by a rotation, followed by a measurement in the computational
basis, and then a rotation back. However, restricting to a single basis
introduces two important drawbacks.

First, duplicability is not unique to the computational basis: it is allowed
in any basis, as long as the basis is known. Therefore, if we can determine
that a quantum state is in a given basis, we can treat it as classical
information.

Second, while Lambda-S and Intuitionistic Linear Logic (ILL) can be seen as
categorical duals---via an adjunction between a Cartesian closed category and
a monoidal category, where Lambda-S is interpreted in the Cartesian side and
superpositions are captured by a monad, while ILL is interpreted in the
monoidal side with duplicable data captured by a comonad---this duality is
not complete. The asymmetry arises from Lambda-S being defined relative to a
fixed basis, while ILL does not favour any particular basis.

In this paper, we take a first step toward addressing this limitation by
extending Lambda-S to track duplicability with respect to multiple bases. We
present a proof-of-concept system that remains first-order for simplicity; the
rationale and consequences of this choice are discussed in
Section~\ref{sec:opsem}. Furthermore,
we restrict attention to single-qubit bases, extended pointwise to
non-entangled multi-qubit systems. Supporting entangled measurement bases would
require additional complexity, and we leave such extensions for future work.
These and other simplifications are intentional: our goal is not to provide a
fully general system, but to highlight a specific capability that has not been
explored in the literature so far—the ability to track duplicability with
respect to multiple bases.

\paragraph{Related works.}
The most directly related work is Lambda-S~\cite{DiazcaroDowekRinaldi2019}, which already distinguishes between base states and superpositions relative to a fixed basis. Our contribution extends this idea by making duplicability sensitive to \emph{several} measurement bases, something not addressed in Lambda-S.

Beyond Lambda-S, other formalisms support more than one basis. For example, the ZX-calculus~\cite{CoeckeRossICALP08} captures computations relative to the computational and diagonal bases, while the Many-Worlds Calculus~\cite{ChardonnetdeVismeValironVilmartLMCS25} also accommodates multiple bases. Both are graphical frameworks, whereas Lambda-SX works directly within a typed $\lambda$-calculus. Among them, the Many-Worlds Calculus is closest in spirit, since it allows superpositions of entire programs. By contrast, our system demonstrates that this flexibility can be achieved within a term language, with a type discipline that explicitly controls duplicability across bases.

A different perspective is offered by the theory of quantum information effects~\cite{HeunenKaarsgaard2021}, which uses categorical machinery to model side-effects such as measurement and decoherence. Their focus is on extending semantic models with effectful structure. In contrast, Lambda-SX introduces multiple bases directly into the syntax and type system, making the interaction of measurements with terms explicit rather than implicit in the semantics.

Carette et al.~\cite{CaretteJeunenKaarsgaardSabry2024} propose Quantum$\Pi$, a universal language obtained from two interpretations of a reversible classical calculus $\Pi$—one in the computational basis and one in a rotated basis—combined through a categorical effect construction. Unlike this approach, which derives quantum behaviour from a semantic amalgamation of classical languages, Lambda-SX directly extends a quantum $\lambda$-calculus with multiple bases and uses types to track their effect on duplicability. The emphasis is not on universality, but on showing how multiple bases can be consistently integrated into the typing discipline.

Voichick et al.~\cite{VoichickLiRandHicks2023} develop Qunity, designed to unify classical and quantum programming. Their language generalises familiar constructs, such as try-catch or sum types, and interprets duplication and discarding semantically as entanglement and partial trace. Lambda-SX takes the opposite stance: duplication is syntactically constrained by types, ensuring basis-sensitive linearity. Thus, while Qunity broadens classical constructs to the quantum setting via denotational semantics, Lambda-SX sharpens the syntactic control of measurements across incompatible bases.

Another line of research concerns semantic characterisations. Clairambault and de Visme~\cite{ClairambaultdeVisme2019} establish full abstraction for a quantum $\lambda$-calculus via game semantics and a relational model. Their contribution is to match operational and denotational equivalence. By contrast, our aim is not a new semantic characterisation but a syntactic system that makes basis transitions explicit in terms and types, thereby serving as a proof-of-concept for coherent type-theoretic treatment of multiple bases.

Finally, Choudhury and Gay~\cite{ChoudhuryGay2025} study the “duality of lambda-abstraction” by extending the simply typed $\lambda$-calculus with covalues and coabstraction, guided by categorical dualities between cartesian closure and cocartesian coclosure. Their focus lies on deepening the foundations of classical computation and logical control. Lambda-SX instead addresses specifically quantum features: it incorporates multiple measurement bases into a quantum $\lambda$-calculus and refines duplicability accordingly. Whereas their duality is rooted in classical logic, ours stems directly from quantum principles such as the no-cloning theorem.

\paragraph{Plan of the paper.}
Section~\ref{sec:Lambda-SX} introduces the \OC calculus with two measurement bases.
The main meta-theoretical results are developed in Section~\ref{sec:correctness}:
type soundness is established in Section~\ref{sec:typesoundness},
followed by a proof of strong normalisation in Section~\ref{sec:SN}.
Section~\ref{sec:arbitrary_bases} generalises the system to support
an arbitrary number of measurement bases and introduces a refined subtyping
mechanism that allows quantum states to be shared across multiple bases.
We conclude with a summary and discussion of future work in Section~\ref{sec:conclusion}.

\section{\OC}\label{sec:Lambda-SX}
\subsection{Types and terms}
\label{sec:grammar}

We consider two measurement bases: the computational basis, denoted by the type
$\B$, and the Hadamard basis, denoted by the type $\X$. The set $\bqtypes$ of
base types is defined as $\Set{\B, \X}$, closed under Cartesian product, as
shown in Figure~\ref{fig:types}.

\begin{figure}[t]
  \begin{align*}
    \Ba & := \B \mid \X                     &\hspace{-1cm} \text{Atomic types (\atomictypes)}  &\hspace{1cm} &
    \gB & := \M \mid S(\gB) \mid \gB \times \gB    & \text{Qubit types (\qtypes)}    \\
    \M  & := \Ba \mid \M \times \M          & \text{Base types (\bqtypes)} & &
    A   & := \gB \mid \gB \Rightarrow A \mid S(A)  & \text{Types (\types)}
  \end{align*}
  \caption{Type Grammar}
  \label{fig:types}
\end{figure}

Qubit types may be base types, their spans (denoted by the modality $S$), or
Cartesian products. The language is first-order: function types are only
allowed over qubit types. We work modulo associativity of the product,
and
parentheses are therefore omitted. We also use the notation
$\prod_{i=1}^n \gB_i$ to denote $\gB_1 \times \ldots \times \gB_n$.

We define a subtyping relation, shown in Figure~\ref{fig:Subtyping}. The
intuition behind subtyping is that it corresponds to set inclusion. For
example, $A \preceq S(A)$ holds because any set is included in its span, and
$S(S(A)) \preceq S(A)$ reflects the fact that the span operation is idempotent.
If $A \preceq B$ and $B \preceq A$, then $A$ and $B$ are considered
\emph{equivalent types}, and we write $A \approx B$. If $A$ and $B$ are
syntactically identical, we write $A = B$.
\begin{figure}[t]
  \[
    \infer{A \preceq A}{}
    \hspace{5.5mm}
    \infer{A \preceq C}{A \preceq B & B \preceq C}
    \hspace{5.5mm}
    \infer{A \preceq S(A)}{}
    \hspace{5.5mm}
    \infer{S(S(A)) \preceq S(A)}{}
    \hspace{5.5mm}
    \infer{\prod_{i = 0}^{n} \Ba_i \preceq S(\prod_{i = 0}^{n} \Ba_i^\prime)}{}
  \]
  \[
    \infer{S(A) \preceq S(B)}{A \preceq B}
    \qquad
    \infer{\gB_2 \Rightarrow A \preceq \gB_1 \Rightarrow B}{A \preceq B \;\; \gB_1 \preceq \gB_2}
    \qquad
    \infer{\gB_1 \times \gB_3 \preceq \gB_2 \times \gB_4}{\gB_1 \preceq \gB_2 \;\; \gB_3 \preceq \gB_4}
  \]
  \caption{Subtyping relation}
  \label{fig:Subtyping}
\end{figure}

The set of \emph{preterms} is denoted by $\Lambda$ and is defined by
the grammar shown in Figure~\ref{fig:terms}. 
\begin{figure}[t]
  \begin{align*}
    t :=\, 
    & x 
    \mid \lambda x^{\gB}.t 
    \mid tt 
    & \text{(Lambda calculus)}
    \\
    &\mid \ket 0 
    \mid \ket 1 
    \mid \ket + 
    \mid \ket - 
    \mid \ite{}tt 
    \mid \itex{}tt 
    & \text{(Constants)}
    \\
    &\mid \z 
    \mid t + t 
    \mid \alpha.t 
    \mid \error
    \mid \pim t 
    \mid \pimx t 
    & \text{(Linear combinations)}
    \\
    &\mid t \otimes  t 
    \mid \head t 
    \mid \tail t 
    \mid \Castl t 
    \mid \Castr t 
    & \text{(Lists)}
  \end{align*}
  \caption{Preterms}
  \label{fig:terms}
\end{figure}
  As usual in algebraic
calculi~\cite{ArrighiDowek2017,DiazcaroEtAl2019,AssafEtAl2014}, the symbol $+$
is treated as associative and commutative, so preterms are considered modulo
these equational laws. The grammar includes first-order lambda calculus terms,
constants (and their conditionals--we write $\ite trs$ as a shorthand for $(\ite{}rs)t$), linear combinations (with measurement as a
destructor), and tensor product terms, written using list notation since
product types are considered associative.

The symbol $\error$ denotes an error and is used to handle measurements of the
zero vector when normalisation fails. The measurement operations $\pim$ and
$\pimx$ are responsible for normalising their input prior to measurement. The
casting operations $\Castl$ and $\Castr$ allow converting between lists of
superpositions and superpositions of lists.  Indeed, lists are used to
represent tensor products. Consequently, a tensor product of superpositions can
be regarded as a superposition of tensor products, which loses information
about separability.
We may use $\Cast$ to denote either $\Castl$ or $\Castr$, depending on the context.

Free variables are defined as usual, and the set of free variables of a preterm
$t$ is denoted by $\fFV(t)$.  The sets of base terms ($\basis$) and values
($\values$) are defined by:
\begin{align*}
  b :=\, & \ket 0 \mid \ket 1 \mid \ket{+} \mid \ket{-} \mid b \otimes  b 
  & \text{Base terms (\basis)} \\
  v :=\, & x \mid \lambda x^{\gB}.t \mid b \mid \z \mid v + v \mid \alpha.v \mid v \otimes  v 
  & \text{Values (\values)}
\end{align*}

The type system is presented in Figure~\ref{fig:TS}.
A \emph{term} is a preterm $t$ for which there exists a context $\Gamma$ and a type $A$ such that $\Gamma \vdash t : A$ is derivable.
\begin{figure}[t]
  \[
    \infer[\tax]
    {x^\gB\vdash x:\gB}
    {}
    \qquad
    \infer[\Rightarrow_I]
    {\Gamma\vdash\lambda x^{\gB}.t:\gB\Rightarrow A}
    {\Gamma,x^\gB\vdash t:A}
    \qquad
    \infer[\Rightarrow_E]
    {\Gamma,\Delta\vdash tr:A}
    {
      \Gamma\vdash t:\gB\Rightarrow A
      &
      \Delta\vdash r:\gB
    }
  \]
  \[
    \infer[\Rightarrow_{ES}]
    {\Gamma,\Delta\vdash tr:S(A)}
    {
      \Gamma\vdash t:S(\gB\Rightarrow A)
      &
      \Delta\vdash r:S(\gB)
    }
  \]
  \[
    \infer[{\ket 0}]
    {\vdash\ket 0:\B}
    {}
    \qquad
    \infer[{\ket 1}]
    {\vdash\ket 1:\B}
    {}
    \qquad
    \infer[{\ket{+}}]
    {\vdash\ket{+}:\X}
    {}
    \qquad
    \infer[{\ket{-}}]
    {\vdash\ket{-}:\X}
    {}
  \]
  \[
    \infer[\tif]
    {\Gamma\vdash\ite{}tr:\B\Rightarrow A}{\Gamma\vdash t:A & \Gamma\vdash r:A}
    \qquad
    \infer[\tif_{\X}]
    {\Gamma\vdash\itex{}tr:\X\Rightarrow A}{\Gamma\vdash t:A & \Gamma\vdash r:A}
  \]
  \[
    \infer[{\vec 0}]
    {\vdash \z:S(A)}
    {}
    \qquad
    \infer[S_I^+]
    {\Gamma,\Delta\vdash t + r:S(A)}
    {
      \Gamma\vdash t:A
      &
      \Delta\vdash r:A
    }
    \qquad
    \infer[S_I^\alpha]
    {\Gamma\vdash \alpha.t:S(A)}
    {\Gamma\vdash t:A}
    \qquad
    \infer[\textrm{e}]
    {\Gamma\vdash \error:\Psi}
    {}
  \]
  \[
    \infer[S_E]
    {\Gamma\vdash\pim t:\B^m \times S\left(\prod_{i = m + 1}^{n} \Ba_i\right)}
    {\Gamma\vdash t:S\left(\prod_{i = 1}^{n} \Ba_i\right) & 0 < m \leq n}
    \qquad
    \infer[S_{E_\X}]
    {\Gamma\vdash\pimx t:\X^m \times S\left(\prod_{i = m + 1}^{n} \Ba_i\right)}
    {\Gamma\vdash t:S\left(\prod_{i = 1}^{n} \Ba_i\right) & 0 < m \leq n}
  \]
  \[
    \infer[\times_I]
    {\Gamma, \Delta \vdash t\otimes r: \gB \times \Phi}
    {\Gamma \vdash t: \gB & \Delta \vdash r: \Phi}
    \qquad
    \infer[\times_{Er}]
    {\Gamma \vdash \head t: \Ba}
    {\Gamma \vdash t: \Ba\times \M}
    \qquad
    \infer[\times_{El}]
    {\Gamma \vdash \tail t: \M}
    {\Gamma \vdash t: \Ba\times\M}
  \]
  \[
    \infer[\Castl]
    {\Gamma \vdash \Castl t: S(\gB \times \Phi)}
    {\Gamma \vdash t: S(\gB \times S(\Phi)) & \gB \neq S(\gB^\prime)}
    \quad
    \infer[\Castr]
    {\Gamma \vdash \Castr t: S(\Phi \times \gB)}
    {\Gamma \vdash t: S(S(\Phi) \times \gB) & \gB \neq S(\gB^\prime)}
  \]
  \[
    \infer[\Cast_{\X}]
    {\Gamma \vdash \Cast t: S(\B)}
    {\Gamma \vdash t: \X}
    \quad
    \infer[\Cast_{\B}]
    {\Gamma \vdash \Cast t: \B}
    {\Gamma \vdash t: \B}
  \]
  \[
    \infer[\preceq]
    {\Gamma\vdash t:B}
    {
      \Gamma\vdash t:A
      &
      A\preceq B
    }
    \qquad
    \infer[W]
    {\Gamma,x^{\M} \vdash t:A}
    {\Gamma\vdash t:A}
    \qquad
    \infer[C]
    {\Gamma,x^{\M}\vdash t \Substitution{y}{x}:A}
    {\Gamma,x^{\M},y^{\M}\vdash t:A}
  \]
  \caption{Type system}
  \label{fig:TS}
\end{figure}

\subsection{Operational semantics}
\label{sec:opsem}
The operational semantics for terms is defined by the relation $\lra[p]$,
presented in Figures~\ref{fig:RS_beta_rules}
to~\ref{fig:RSContext}.
The parameter $p \in [0,1]$ represents a probability and is primarily used in the
probabilistic reduction rule associated with measurement.

Figure~\ref{fig:RS_beta_rules} presents the reduction rules for standard lambda calculus terms and conditional constructs.
\begin{figure}[t]
  \begin{align*}
    \text{If \(\gB \notin \bqtypes\), then } (\lambda x^{\gB}.t)u &\lra t \Substitution{u}{x} & \rbetan \\[1ex]
    \text{If \(b \in \basis\) has type \(\M\), then } (\lambda x^{\M}.t)b &\lra t \Substitution{b}{x} & \rbetab \\
    \text{If \(t\) has type \(\M \Rightarrow A\), then } t (r + s) &\lra tr + ts & \rlinr \\
    \text{If \(t\) has type \(\M \Rightarrow A\), then } t(\alpha.r) &\lra \alpha.tr & \rlinscalr \\
    \text{If \(t\) has type \(\M \Rightarrow A\) and \(t \neq \error\), then } t\z &\lra \z & \rlinzr \\[1ex]
    (t + r)s &\lra ts + rs & \rlinl \\
    (\alpha.t)r &\lra \alpha.tr & \rlinscall \\
    \text{If \(t \neq \error\), then } \z t &\lra \z & \rlinzl\\[1ex]
    \ite{\ket 1}tr &\lra t & \riftrue \\
    \ite{\ket 0}tr &\lra r & \riffalse \\
    \itex{\ket +}tr &\lra t & \rifplus \\
    \itex{\ket -}tr &\lra r & \rifminus
  \end{align*}
  \caption{Reduction rules for beta-reduction and conditionals}
  \label{fig:RS_beta_rules}
\end{figure}
Rule \rbetan\ is the standard call-by-name beta-reduction rule, which applies
when the argument is not basis-typed.

Rules \rbetab, \rlinr, \rlinscalr, and \rlinzr\ implement the
call-by-base strategy~\cite{ArrighiDowek2017}, distributing the function
over the argument when the bound variable is basis-typed.
For example $(\lambda x^\B.\,x \otimes  x)(\ket 0 + \ket 1)$ reduces first to
$(\lambda x^\B.\,x \otimes  x)\ket 0 + (\lambda x^\B.\,x \otimes  x)\ket 1$ by rule
\rlinr, and then to $\ket 0 \otimes  \ket 0 + \ket 1 \otimes  \ket 1$ by rule \rbetab.

Rules \rlinl, \rlinscall, and \rlinzl\ distribute a superposition on
the left-hand side of an application over its argument.
Rules \riftrue, \riffalse, \rifplus, and \rifminus\ determine the
selected branch based on the value of the condition.

We restrict the calculus to first-order terms for simplicity. 
In a higher-order setting, one could consider the term $\lambda x^{S(\B)}.\lambda y^\B.x$, which embed an unknown qubit within a perfectly duplicable lambda abstraction.
Several solutions are possible: restricting weakening to non-arrow types, restricting the language to first-order, or introducing annotations that prevent duplication of such terms.
In this paper, we adopt the second option, as our goal is to provide a proof-of-concept system for handling multiple measurement bases.

Figure~\ref{fig:RS_vector_space_axioms_rules}
presents the reduction rules corresponding to
the vector space axioms, taken directly from~\cite{ArrighiDowek2017}. These
rules normalise expressions by rewriting linear combinations into a canonical
form.
\begin{figure}[t]
  \begin{multicols}{2}
    \begin{align*}
      \z + t &\lra t &\rneut\\
      1.t &\lra t &\runit\\
      0.t &\lra \z &\rzeros\\
      \alpha.\z &\lra \z &\rzero
    \end{align*}

    \columnbreak

    \begin{align*}
      \alpha.(\beta.t) &\lra (\alpha\beta).t &\rprod\\
      \alpha.(t + r) &\lra \alpha.t + \alpha.r &\rdists\\
      \alpha.t + \beta.t &\lra (\alpha+\beta).t &\rfact\\
      \alpha.t + t &\lra (\alpha+1).t &\rfacto\\
      t + t &\lra 2.t &\rfactt
    \end{align*}
  \end{multicols}
  \caption{Vector space axioms}
  \label{fig:RS_vector_space_axioms_rules}
\end{figure}

Figure~\ref{fig:RS_lists_rules} presents the reduction rules related to lists.
Rules \rhead\ and \rtail\ behave as standard destructors on non-superposed list
values. The remaining rules implement explicit cast operations, which allow
rearranging the interaction between linear combinations and tensor products.

For example, rule \rdistlp\ transforms the term $\ket 0 \otimes (\ket 0 +
\ket 1)$, which has type $\B \times S(\B)$ (and hence, by subtyping,
$S(\B \times S(\B))$), into the term $\ket 0 \otimes \ket 0 + \ket 0 \otimes
\ket 1$, of type $S(\B \times \B)$. See also the typing rule $\Castl$ in
Figure~\ref{fig:TS}.
\begin{figure}[t]
  \begin{align*}
    \textrm{If } h \neq u \otimes  v\textrm{ and }h \in \basis\textrm{, then } \head (h \otimes  t) &\lra h & \rhead \\
    \textrm{If }h \neq u \otimes  v\textrm{ and }h \in \basis\textrm{, then } \tail (h \otimes  t) &\lra t & \rtail\\[1ex]
    \Castl t \otimes  (r+s)                              & \lra  \Castl t \otimes  r + \Castl t \otimes  s                   & \rdistlp      \\
    \Castr (t+r) \otimes  s                              & \lra  \Castr t \otimes  s + \Castr r \otimes  s                   & \rdistrp      \\
    \Castl t \otimes  (\alpha.r)                         & \lra  \alpha. \Castl t \otimes  r                               & \rdistla      \\
    \Castr (\alpha.t) \otimes  r                         & \lra  \alpha. \Castr t \otimes  r                               & \rdistra      \\
    \Castl v \otimes  \z                                 & \lra  \z                                                      & \rdistlz      \\
    \Castr \z \otimes  v                                 & \lra  \z                                                      & \rdistrz      \\
    \Cast (t+r)                                      & \lra  \Cast t + \Cast r                                   & \rdistpup     \\
    \Cast (\alpha .t)                                  & \lra  \alpha .\Cast t                                         & \rdistaup     \\
    \Cast \z                                           & \lra  \z                                                      & \rneutralzup  \\
    \textnormal{If } b \in \basis\textrm{, then } \Castl v \otimes  b & \lra  v \otimes  b                                              & \rneutrallup  \\
    \textnormal{If } b \in \basis\textrm{, then } \Castr b \otimes  v & \lra  b \otimes  v                                              & \rneutralrup  \\
    \Cast \ket{+}                                      & \lra  \tfrac{1}{\sqrt{2}}.\ket{0} + \tfrac{1}{\sqrt{2}}.\ket{1}   & \rcastp       \\
    \Cast \ket{-}                                      & \lra  \tfrac{1}{\sqrt{2}}.\ket{0} - \tfrac{1}{\sqrt{2}}.\ket{1}   & \rcastm       \\
    \Cast \ket{0}                                      & \lra  \ket{0}                                                 & \rcastketz    \\
    \Cast \ket{1}                                      & \lra  \ket{1}                                                 & \rcastketo
  \end{align*}
  \caption{Reduction rules for destructors and casting over tensor products} 
  \label{fig:RS_lists_rules}
\end{figure}

Figure~\ref{fig:RS_measurement_rule} presents the reduction rules \emph{schemas} for
measurement: each instantiation depends on the
specific shape of the term being measured.

The operation $\pi^m$ applies to a term of type $S(\prod_{i=1}^n \Ba_i)$.
Before measurement, the term is implicitly converted to the computational
basis, yielding a sum of distinct basis vectors:
\(
  \sum_{a=1}^{f} \beta_a \ket{c_{a1}} \otimes \cdots \otimes \ket{c_{an}}
\),
with $c_{aj} \in \Set{0,1}$. Measurement is performed on the first $m$
qubits, producing a collapse to $\ket{k} \otimes \ket{\phi_k}$ with
probability
\(
p_k = \frac{1}{Z} \sum_{a \in I_k} |\beta_a|^2
\),
where $I_k$ is the set of indices $a$ such that the prefix
$\ket{c_{a1} \cdots c_{am}}$ equals $\ket{k}$, and $Z$ is the squared norm of
the original input.

The state $\ket{\phi_k}$ is defined by normalising the suffixes
of the terms in $I_k$:
\[
\ket{\phi_k} = \sum_{a \in I_k} \tfrac{\beta_a}{\sqrt{\ell}} \ket{c_{a,m+1}} \otimes \cdots \otimes \ket{c_{an}},
\qquad \text{where } \ell = \sum_{a \in I_k} |\beta_a|^2.
\]

The rule for measurement in the Hadamard basis, $\pi^m_\X$, behaves
analogously, with the input expressed in the Hadamard basis and $\ket{k}$
ranging over $\Set{+,-}^m$.

We write $\brackets{\alpha_i .}$ to indicate that scalar
may be omitted. Each $b_{hi}$ ranges over $\Set{0,1,+,-}$, and
$e \leq 4^n$ denotes the number of basis vectors of arity $n$.

If the input to $\pi^m$ or $\pi^m_\X$ is or reduces to $\z$, the result is $\error$.
\begin{figure}[t]
  \begin{align*}
    \pim \left( \sum_{i=1}^{e} \brackets{\alpha_i .} \bigotimes_{h=1}^{n} \ket{b_{hi}} \right)
      & \lra[p_k] \ket{k} \otimes \ket{\phi_k} & & \rproj &
    \pim\z
      & \lra \error & \rerrorpi \\
    \pimx \left( \sum_{i=1}^{e} \brackets{\alpha_i .} \bigotimes_{h=1}^{n} \ket{b_{hi}} \right)
      & \lra[p_k] \ket{k} \otimes \ket{\phi_k} & &\rprojx &
    \pimx\z
      & \lra \error & \rerrorpix
  \end{align*}
  \caption{Measurement rules for the computational and Hadamard bases}
  \label{fig:RS_measurement_rule}
\end{figure}

\begin{example}[Measurement]
  To simplify notation, we write $\ket{abcd}$ instead of
  $\ket{a} \otimes \ket{b} \otimes \ket{c} \otimes \ket{d}$. 
  Consider the following two semantically equivalent terms:
    $\pi^2 ( \alpha \ket{0+10} + \beta \ket{10-0} )$
    and
    $\pi^2 ( \tfrac{\alpha}{\sqrt{2}} \ket{0010} + \tfrac{\alpha}{\sqrt{2}} \ket{0110}
    + \tfrac{\beta}{\sqrt{2}} \ket{1000} - \tfrac{\beta}{\sqrt{2}} \ket{1010} )$.
  Both reduce, for instance, to 
    $\ket{10} \otimes ( \tfrac{1}{\sqrt{2}} \ket{00} + \tfrac{1}{\sqrt{2}} \ket{01} )$
    with probability $\tfrac{|\alpha|^2}{\sqrt{|\alpha|^2+|\beta|^2}}$.
\end{example}

Figure~\ref{fig:RSError} specifies how
$\error$ propagates. In each case, the presence of
$\error$ causes the entire expression to reduce to $\error$.
\begin{figure}[t]
  \begin{multicols}{3}
  \begin{align*}
    \error t 		& \lra  \error & \rerrorapplyto		\\ 
    t \error 		& \lra  \error & \rerrorapplyfrom		\\
    t + \error 	& \lra  \error & \rerrorsum			\\
    \alpha . \error 	& \lra  \error & \rerrorprod			
  \end{align*}

  \columnbreak

  \begin{align*}
    t \otimes \error 	& \lra  \error & \rerrortimesA			\\
    \error \otimes t 	& \lra  \error & \rerrortimesB			\\
    \pim \error 	& \lra  \error & \rerrorpierror		\\
    \pimx \error 	& \lra  \error & \rerrorpixerror		
  \end{align*}

  \columnbreak

  \begin{align*}
    \Castl \error 	& \lra  \error & \rerrorcastl			\\
    \Castr \error 	& \lra  \error & \rerrorcastr			\\
    \head \error 	& \lra  \error & \rerrorhead			\\
    \tail \error 	& \lra  \error & \rerrortail
  \end{align*}
  \end{multicols}
  \caption{Error propagation rules}
  \label{fig:RSError}
\end{figure}
Figure~\ref{fig:RSContext} presents the context rules.
\begin{figure}[t]
  If \(t \lrap r\), then
  \begin{align*}
    ts &\lrap rs  
    & t + s &\lrap r + s  
    & t \otimes s &\lrap r \otimes s \\
    (\lambda x^{\M}.v)\,t &\lrap (\lambda x^{\M}.v)\,r 
    & \alpha.t &\lrap \alpha.r 
    & s \otimes t &\lrap s \otimes r \\
    \ite{t}{s_1}{s_2} &\lrap \ite{r}{s_1}{s_2} 
    & \pim t &\lrap \pim r  
    & \head t &\lrap \head r \\
    \itex{t}{s_1}{s_2} &\lrap \itex{r}{s_1}{s_2} 
    & \pimx t &\lrap \pimx r  
    & \tail t &\lrap \tail r \\
    & & & & \Cast t &\lrap \Cast r  
  \end{align*}
  \caption{Context rules}
  \label{fig:RSContext}
\end{figure}

\subsection{Examples}
\label{sec:examples}
This section illustrates how the type system supports quantum states,
operations, and measurements in a functional style, highlighting its
flexibility across multiple measurement bases.

\begin{example}[Hadamard gate]
The Hadamard gate can be implemented in multiple ways, depending on the desired type. For example:
\begin{align*}
  &\vdash \lambda x^{\B}.\ite x{\ket -}{\ket +} : \B \Rightarrow \X \\
  &\vdash \lambda x^{\X}.\ite x{\ket 0}{\ket 1} : \X \Rightarrow \B  \\
  &\vdash \lambda x^{\B}.\ite x{(\Cast\ket -)}{(\Cast\ket +)} : \B \Rightarrow S(\B) 
\end{align*}
All these implementations yield equivalent results on arbitrary inputs (in either basis), but the first preserves duplicability on inputs $\ket{0}$ or $\ket{1}$, and the second does so on $\ket{+}$ or $\ket{-}$.

As an instance, we could write $\lambda x^\B.(\lambda y^\X.y\otimes y) Hx$ as soon as $H$ is the first implementation of the Hadamard gate. Notice that we are cloning a qubit; however, since the basis is tracked since the beginning, this is perfectly valid.
\end{example}

\begin{example}[CNOT gate]
In the same way, the CNOT gate can be implemented in multiple ways, for example
\begin{align*}
  &\vdash \lambda x^{\B \times \B}.(\head x) \otimes (\ite{\head x}{\textbf{NOT}(\tail x)}{\tail x}) : \B \times \B \Rightarrow \B\times\B \\
  &\begin{aligned}[t]
    \vdash \lambda x^{\X \times \B}.&\ket 0\otimes(\tail x)\\
   				   +&\ket 1\otimes((\itex{\head x}{\textbf{NOT}}{(-1).\textbf{NOT}})(\tail x)) : \X \times \B \Rightarrow S(\B\times\B) 
				 \end{aligned}
\end{align*} 
where $\textbf{NOT}$ is the NOT gate given by $\vdash\lambda x^{\B}.\ite x{\ket 0}{\ket 1}:\B\Rightarrow\B$.
\end{example}

\begin{example}[Bell states]
  The entangled Bell states can be produced by the following term, applied to a pair of computational basis qubits:
  \[
    \vdash\lambda x^{\B\times\B}.\textbf{CNOT}(\textbf{H}(\head x) \otimes \tail x): \B \times \B \Rightarrow S(\B\times\B)
  \]
  However, there are more interesting implementations. For example, the following term maps $\ket{+}$ to the Bell state $\beta_{00}$ and $\ket{-}$ to $\beta_{10}$:
  \[
    \vdash\lambda x^{\X}.(\lambda y^{\B}.y \otimes y)(\Cast x): \X \Rightarrow S(\B \times \B)
  \]
\end{example}

\begin{example}[Applying gates to multi-qubit states]
In general, we can apply a gate to a qubit in a multi-qubit state, even if entangled, with the same technique as used to apply Hadamard to the first qubit in the Bell state. For example, applying CNOT to the first of three qubits can be done as follows:
\[
  \textbf{CNOT}^3_{1, 2} = \lambda x^{\B \times \B \times \B}. \textbf{CNOT}((\head x) \otimes (\head\, \tail x)) \otimes (\tail\, \tail x)
\]
\end{example}

\begin{example}[Teleportation]
  We can use the Bell state to implement teleportation, which allows the transmission of an arbitrary qubit state from Alice to Bob using an entangled pair and classical communication.
  The term implementing it would be
  \[
    \vdash \lambda x^{S(\B)}. \pi^2 \Castl\, \textbf{Bob}(\Castl(\textbf{Alice}(x \otimes \textbf{Bell}(\ket 0\otimes\ket 0)))) : S(\B) \Rightarrow (\B\times\B \times S(\B))
  \]
  where:
    $\textbf{Bell}$ produces the Bell state, as defined earlier;
    $\textbf{Alice}$ implements Alice’s part of the protocol, defined as:
      \(
	\lambda x^{S(\B) \times S(\B \times \B)}. \pi^2 (\Castr\, \textbf{H}_{1}^3(\textbf{CNOT}^3_{1, 2}(\Castl(\Castr x))))
      \),
      where $\textbf{H}_{1}^3$ is the Hadamard operator on the first qubit:
      \(
	\lambda x^{\B\times\B\times\B}. \textbf{H}(\head x) \otimes \tail x
      \).
    $\textbf{Bob}$ implements Bob’s part:
      \(
	\lambda x^{\B \times \B \times \B}. (\head x) \otimes (\head\, \tail x) \otimes (\textbf{C-Z} (\head x)\otimes(\textbf{CNOT} {(\head\, \tail x)}\otimes(\tail\, \tail x)))
      \),
      where $\textbf{C-Z}$ is:
      \(
	\lambda x^{\B\times\B}. \ite{\head x}{\textbf Z(\tail x)}{\tail x}
      \).
\end{example}

\section{Correctness}
\label{sec:correctness}

In this section we establish the main meta-theoretical properties of our
calculus. These results show that the type system is well-behaved with respect
to the operational semantics, and they guarantee consistency.

We begin in Section~\ref{sec:typesoundness} with
the proof of \emph{subject reduction}
(Theorem~\ref{subject_reduction}), showing
that the type of a term is preserved under reduction. We then prove 
\emph{progress} 
(Theorem~\ref{progress}), ensuring that
well-typed terms are either values or can take a reduction step. The
\emph{linear casting} property follows
(Theorem~\ref{thm:casting}), showing that
terms of type $S(\B^n)$ can be rewritten---via explicit casting reductions---as
linear combinations of terms of type $\B^n$. This result provides a semantic
justification for viewing casting as a projection onto a measurement basis.

Finally, we show that all well-typed terms are \emph{strongly normalising}
(Section~\ref{sec:SN}); that is, all
well-typed terms always terminate.

\subsection{Type soundness}
\label{sec:typesoundness}
\subsubsection{Subject reduction}
The typing rules are not syntax-directed due to application (which has two
typing rules), subtyping, weakening, and contraction. Therefore, a generation
lemma---stating the conditions under which a typing judgment $\Gamma \vdash t :
A$ can be derived---is needed. This lemma is presented in
Appendix~\ref{app:generation}.

The substitution lemma, which plays a central role in the proof of subject
reduction, is stated as follows.

\begin{restatable}[Substitution]{lemma}{substitution}
  \label{substitution}
  If $\Gamma, x^A \vdash t : C$ and $\Delta \vdash r : A$, then $\Gamma, \Delta \vdash t\Substitution{r}{x} : C$.
\end{restatable}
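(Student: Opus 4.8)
The plan is to argue by induction on the derivation of $\Gamma, x^A \vdash t : C$, working under the Barendregt convention so that the bound variables of $t$ are distinct from $x$ and from $\fFV(r)$, with the case analysis driven by the last typing rule applied. Since the system is not syntax-directed, the generation lemma is available where it is convenient, but the derivation-based induction already fixes the last rule in each case, so I would mostly proceed directly.

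Most cases are routine. In the axiom case $t = x$ the context is exactly $x^A$ (so $\Gamma$ is empty and $C = A$) and $t\Substitution{r}{x} = r$, so the goal is precisely the second hypothesis $\Delta \vdash r : A$. For $\Rightarrow_I$ the substitution commutes with the abstraction by the variable convention, and the claim follows by applying the induction hypothesis to the body. The constructor and elimination rules that neither split nor restructure the context in a problematic way---$S_I^+$, $S_I^\alpha$, the conditionals $\tif$ and $\tif_\X$, the tensor introduction $\times_I$, the destructors $\headl$/$\taill$, the casts $\Cast$, and the measurements $\pim$/$\pimx$---are dispatched by applying the induction hypothesis to the premise containing $x$ and reassembling with the same rule, using that $x$ occurs in exactly one premise of a context-splitting rule so the other premise is untouched by the substitution. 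Subsumption $\preceq$ is immediate: the induction hypothesis gives $\Gamma, \Delta \vdash t\Substitution{r}{x} : A'$ with $A' \preceq C$, and one reapplies $\preceq$. The two application rules $\Rightarrow_E$ and $\Rightarrow_{ES}$ are handled identically: the context splits, $x$ lies on exactly one side, the induction hypothesis replaces it there, and recombining yields the claim with the same target type.

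The genuinely delicate cases are the structural rules $W$ and $C$, which apply only to base-typed variables ($A \in \bqtypes$) and are the only rules that can discard or duplicate the variable being substituted. If $x$ is the weakened variable, then $x \notin \fFV(t)$, so $t\Substitution{r}{x} = t$, and from the premise $\Gamma \vdash t : C$ one must still reach $\Gamma, \Delta \vdash t : C$, i.e.\ discard the entire context $\Delta$ that types $r$. If $x$ is the contracted variable, then $x$ occurs duplicated in $t$, and applying the induction hypothesis to each occurrence produces two copies of $\Delta$ that must be merged back. I expect this to be the main obstacle, precisely because $W$ and $C$ are \emph{not} available for arbitrary contexts: variables of arrow type can be neither weakened nor contracted, so replaying the structural step on $\Delta$ is not automatic. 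The argument must therefore exploit that $\Delta$ itself supports the structural operation in question, which is where the base-type restriction on $W$ and $C$ does the essential work. The remaining constant axioms ($\ket0$, $\ket1$, $\ket+$, $\ket-$, $\vec0$, $\error$) are closed and hence bind $x$ only through weakening, so they fall under that same case.
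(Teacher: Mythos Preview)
Your overall organisation---induction on the typing derivation with a case split on the last rule---matches the paper's, and your treatment of the routine cases (axiom, $\Rightarrow_I$, the context-splitting introduction/elimination rules, $\preceq$) is correct. You are also right that the structural rules $W$ and $C$ are where the real work lies.

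The gap is in your resolution of those two cases. You say that ``the base-type restriction on $W$ and $C$ does the essential work'' so that $\Delta$ supports the needed weakening or contraction, but this is not something you can extract from the stated hypotheses. When $x$ is the weakened variable you know only that $A = \M \in \bqtypes$; you then need to weaken $\Gamma \vdash t : C$ by all of $\Delta$, which requires $\fT(\Delta) \subseteq \bqtypes$. That inclusion does \emph{not} follow from $\Delta \vdash r : \M$. A concrete witness: the rule~$\textrm{e}$ derives $z^{S(\B)} \vdash \error : \B$, so take $\Delta = z^{S(\B)}$, $r = \error$, $A = \B$, and on the other side $\Gamma = \emptyset$, $t = \ket 1$, $C = \B$ (typed via $W$). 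The conclusion you would need is $z^{S(\B)} \vdash \ket 1 : \B$, which the generation lemma for $\ket 1$ rules out since $S(\B) \notin \bqtypes$. So the lemma, read literally, is false, and no purely structural argument can close your $W$/$C$ case.

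The paper's proof sidesteps this by tacitly adding a hypothesis: in every sub-case with $A \in \bqtypes$ it writes the substituted term as $b$ with $b \in \basis$. Base terms are closed, so generation on $\Delta \vdash b : \M$ forces $\fT(\Delta) \subseteq \bqtypes$, and then the weakenings and contractions over $\Delta$ are licit. This restricted form---either $A \notin \bqtypes$, or $A \in \bqtypes$ with $r \in \basis$---is exactly what subject reduction uses (it mirrors the side-conditions of \rbetan\ and \rbetab). To make your argument go through you should adopt that extra hypothesis explicitly; the alternative would be to prove separately that $\Delta \vdash r : \M$ implies $\fT(\Delta) \subseteq \bqtypes$, but the $\error$ rule as stated blocks this.
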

\begin{proof}
  By induction on the typing derivation of $t$. See Appendix~\ref{app:substitution}.
  \qed
\end{proof}

The type preservation property is strongly related to subtyping. In its proof,
several auxiliary properties of the subtyping relation are repeatedly used.
The following lemma states these properties.

\begin{restatable}[Properties of the subtyping relation]{lemma}{preceqproperties}
  \label{lem:preceq_properties}
  The subtyping relation $\preceq$ satisfies the following properties:
  \begin{enumerate}
    \item \label{preceq_relation_3} If $A \Rightarrow B \preceq C \Rightarrow D$, then $C \preceq A$ and $B \preceq D$.
    \item \label{preceq_relation_6} If $A \Rightarrow B \preceq S(C \Rightarrow D)$, then $C \preceq A$ and $B \preceq D$.
    \item \label{preceq_relation_7} If $S(A) \preceq B$, then $B \typeequiv S(C)$ and $A \preceq C$, for some $C$.
      \item \label{preceq_relation_11} If $\gB_1 \times \gB_2 \preceq A$, then $A \approx S(\gB_3 \times \gB_4)$ or $A \approx \gB_3 \times \gB_4$, for some $\gB_3$, $\gB_4$.
    \item \label{preceq_relation_12} If $A \preceq B$, then $A$ and $B$ contain the same number of product constructors.
    \item \label{preceq_relation_13} If \( A \preceq \gB_1 \Rightarrow C \) then 
      \(A \approx \gB_2 \Rightarrow D\)
      for some $\gB_2$ and $D$.
    \item \label{preceq_relation_14} If \( S(A) \preceq S(\gB_1 \Rightarrow C) \) then
      \(S(A) \approx S(\gB_2 \Rightarrow D)\),
      for some \(\gB_2\) and \(D\).
    \item \label{preceq_relation_15} If $S(\gB_1 \Rightarrow A) \preceq S(\gB_2 \Rightarrow B)$, then $\gB_2 \preceq \gB_1$ and $B \preceq A$.
  \end{enumerate}
\end{restatable}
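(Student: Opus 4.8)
The plan is to prove all eight statements by a single induction on the derivation of the subtyping judgement, treating them as one mutually-dependent family of inversion lemmas so that the inductive hypothesis of each is available to the others on any strictly smaller sub-derivation. For each property the argument proceeds by case analysis on the \emph{last} rule applied. Reflexivity and the structural rules (the arrow rule, the product rule, $S(A)\preceq S(B)$, $A\preceq S(A)$, $S(S(A))\preceq S(A)$, and $\prod \Ba_i \preceq S(\prod \Ba_i')$) are matched against the syntactic shape demanded by the hypothesis; in most properties these cases either match directly or are vacuous because the two shapes cannot unify (for instance, an $S$-type can never equal a bare arrow, so in property~\ref{preceq_relation_13} all the $S$-introducing rules are impossible).

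I would first dispatch property~\ref{preceq_relation_12}, the invariant that $\preceq$ preserves the number of product constructors, since it depends on nothing else: each rule visibly preserves this count, the product rule contributing one $\times$ on each side, the arrow rule inheriting the counts of its components, and the rule $\prod \Ba_i \preceq S(\prod \Ba_i')$ keeping the same arity on both sides. This count then serves as a structural invariant that rules out impossible intermediate shapes later. Next I would establish property~\ref{preceq_relation_7}, that a left-hand $S$-type forces a right-hand $S$-type, as it is the backbone of every statement involving the $S$ modality. Here the idempotence rule $S(S(A))\preceq S(A)$ is the delicate case: the conclusion cannot be read as a syntactic equality $B = S(C)$ with $A\preceq C$ (one is forced to take $C$ absorbing an $S$), so this is precisely the point at which one must read the conclusion \emph{up to} $\approx$ and work modulo the idempotence of $S$.

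With these two foundational facts in hand, the shape-preservation properties~\ref{preceq_relation_11}, \ref{preceq_relation_13}, and~\ref{preceq_relation_14} follow by the same induction, using property~\ref{preceq_relation_7} to peel off top-level $S$ modalities and property~\ref{preceq_relation_12} to discard intermediates of the wrong arity; the disjunction in property~\ref{preceq_relation_11} is exactly the trace of the rule $\prod \Ba_i \preceq S(\prod \Ba_i')$, which is the only rule that can turn a product into an $S$-product. The variance properties~\ref{preceq_relation_3}, \ref{preceq_relation_6}, and~\ref{preceq_relation_15} are then obtained by feeding this shape information into the induction: once an intermediate type is known to be equivalent to an arrow, the contravariant-domain / covariant-codomain premises of the arrow rule let us recombine the hypotheses in the required direction, and similarly for products.

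The main obstacle will be the transitivity rule. In every property, the case where the last rule is transitivity produces an existentially quantified intermediate type $B$ with $A\preceq B\preceq C$, and the only route to the inversion is to apply the inductive hypotheses to both halves and compose them. Making this composition go through is what forces both the simultaneous-induction packaging and the systematic use of $\approx$ in place of $=$: the shape extracted for $B$ from one half (say, ``$B$ is equivalent to an $S$-arrow'') is only pinned down up to equivalence, so each property must be robust under replacing a type by an equivalent one before its hypothesis is reapplied. Verifying this robustness, and checking that the idempotence rule $S(S(A))\preceq S(A)$ never breaks the chaining across a transitivity step, is the genuinely delicate part of the proof; the remaining rule cases reduce to routine shape matching.
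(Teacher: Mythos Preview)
Your overall plan is reasonable, but the paper organises the argument differently: rather than one simultaneous induction, it first isolates a separate lemma of auxiliary inversion facts (for instance, ``$A\Rightarrow B\preceq C$ with $C\not\approx S(\cdot)$ forces $C$ to be a \emph{syntactic} arrow $E\Rightarrow F$ with $E\preceq A$ and $B\preceq F$'', and ``$S(A\Rightarrow B)\preceq C$ forces $C = S^k(D\Rightarrow E)$ for some $k>0$''), proves those by their own inductions, and then reads off items~\ref{preceq_relation_3} and~\ref{preceq_relation_6} as one-line corollaries.  The remaining items are established one at a time, each free to invoke the previously proved ones as finished lemmas rather than only through the inductive hypothesis on a strict sub-derivation.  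What this staging buys is precisely what your proposal leaves vague: the auxiliary lemmas hand back the intermediate type's \emph{syntactic} shape, so that in the transitivity case the inductive hypothesis of the item at hand applies to the two premises without further manoeuvring.

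That is where your plan has a real gap.  You write that the properties must be ``robust under replacing a type by an equivalent one'', but this is not how the induction closes.  Take item~\ref{preceq_relation_13} in the transitivity case $A\preceq E$ and $E\preceq \gB_1\Rightarrow C$: your inductive hypothesis on the right premise yields only $E\approx\gB\Rightarrow F$; to re-invoke item~\ref{preceq_relation_13} on the left premise you need the right-hand side to be a \emph{syntactic} arrow, and manufacturing $A\preceq \gB\Rightarrow F$ via external transitivity produces a \emph{larger} derivation to which the inductive hypothesis does not apply.  The same obstruction recurs in the transitivity cases of items~\ref{preceq_relation_3}, \ref{preceq_relation_6}, \ref{preceq_relation_14}, and~\ref{preceq_relation_15}.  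The fix is to strengthen the shape-extraction items so that they return syntactic equalities (arrow, or $S^k$ of an arrow) rather than mere $\approx$---which is exactly the content of the paper's auxiliary lemmas.  Your instinct that the idempotence axiom $S(S(A))\preceq S(A)$ is the delicate case for item~\ref{preceq_relation_7} is, however, well placed: taken literally with syntactic equality in the conclusion, that axiom does not fit the stated pattern, and one is indeed forced to read the conclusion modulo~$\approx$ there.
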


\begin{proof}
  The proof relies on several technical properties.
  See Appendix~\ref{app:preceq_properties}.
  \qed
\end{proof}

The following properties are crucial for analysing cast elimination, as
they constrain the shape of product-type subtypes and relate them to simpler
forms.

\begin{restatable}[Properties of the subtyping relation on products]{lemma}{preceqproductproperties}
  \label{preceqproductproperties}
  The subtyping relation $\preceq$ satisfies the following properties on product types:
  \begin{enumerate}
    \item \label{aux_lemma_rneutrallup}
      If \(\varphi \times \M \preceq S(\gB_1 \times S(\gB_2))\), then
      \(
        \varphi \times \M \preceq S(\gB_1 \times \gB_2)
      \).
    \item \label{aux_lemma_rneutralrup}
      If \(\M \times \varphi \preceq S(S(\gB_1) \times \gB_2)\), then
      \(
        \M \times \varphi \preceq S(\gB_1 \times \gB_2)
      \).
  \end{enumerate}
\end{restatable}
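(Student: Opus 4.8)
Both items are mirror images of each other (the base-type factor is the right factor in~\ref{aux_lemma_rneutrallup} and the left factor in~\ref{aux_lemma_rneutralrup}), so the plan is to prove~\ref{aux_lemma_rneutrallup} in detail and obtain~\ref{aux_lemma_rneutralrup} by the symmetric argument. The proof goes by induction on the derivation of $\varphi\times\M\preceq S(\gB_1\times S(\gB_2))$, and the decisive point is that $\M$ is a \emph{base} type, carrying no $S$: this is exactly what lets us discard the inner $S$ on the matching component of the right-hand side. It is worth noting at the outset that $\preceq$ does \emph{not} identify the loose type $S(\gB_1\times S(\gB_2))$ with the tight type $S(\gB_1\times\gB_2)$ — the inclusion $S(\gB_1\times S(\gB_2))\preceq S(\gB_1\times\gB_2)$ genuinely fails, since it would require $S(\gB_2)\preceq\gB_2$ — so one cannot simply rewrite the right-hand type; the shape of the subtype must be used.

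Two preliminary facts would be isolated first. (a) An upward $S$-closure property: if $A\preceq B$ and $A=S(\cdot)$ then $B=S(\cdot)$, proved by a one-line induction since no rule of Figure~\ref{fig:Subtyping} produces a non-$S$ supertype of an $S$-type. (b) Componentwise inversion for products, namely that a product below a product decomposes into componentwise subtypings; this follows from items~\ref{preceq_relation_11} and~\ref{preceq_relation_12} of Lemma~\ref{lem:preceq_properties}, which fix both the shape and the product-arity of any supertype of a product. Using~(a), the only rule that can \emph{directly} conclude a judgement with a product on the left and an $S$-type on the right is transitivity: reflexivity and the $S$-congruence rule clash with the product-versus-$S$ shapes, the rule $A\preceq S(A)$ is excluded because it would force $\M$ to equal $S(\gB_2)$ (impossible for $\M\in\bqtypes$), and the basis-change rule $\prod_i\Ba_i\preceq S(\prod_i\Ba_i')$ requires the type under the outer $S$ to be a product of atomic types, which $\gB_1\times S(\gB_2)$ is not.

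The inductive heart is therefore the transitivity step $\varphi\times\M\preceq Z\preceq S(\gB_1\times S(\gB_2))$. I would classify the intermediate type $Z$ with item~\ref{preceq_relation_11}: either $Z$ is (equivalent to) a product, in which case componentwise inversion together with the induction hypothesis apply, or $Z$ is an $S$-type, which is handled by item~\ref{preceq_relation_7} and a further appeal to the hypothesis on the smaller inside-$S$ judgement. In every surviving case the relation the base factor must satisfy is $\M\preceq S(\gB_2)$; because $\M$ is atomic, this forces either $\M\preceq\gB_2$ outright — so the inner $S$ was never needed, and the conclusion follows in the ``no basis change'' mode — or a single application of the basis-change rule, whose output already sits under the top-level $S$ of the target $S(\gB_1\times\gB_2)$. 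Either way the conclusion $\varphi\times\M\preceq S(\gB_1\times\gB_2)$ is reconstructed, with the subtyping of the first component carried over unchanged.

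The step I expect to be most delicate is the interaction with the basis-change rule $\prod_i\Ba_i\preceq S(\prod_i\Ba_i')$, which permits an arbitrary change of measurement basis beneath the outer $S$ and is precisely why a naive componentwise reading of the subtyping is unsound. The crux is to show that whenever this rule witnesses the hypothesis, the same basis-change capability survives after the inner $S$ is stripped — and this is exactly where the atomicity of the base factor is indispensable. Managing this within the transitivity induction (in effect a small transitivity-elimination argument for the product/$S$ fragment, steered throughout by Lemma~\ref{lem:preceq_properties}) is where essentially all the work lies; once it is in place, assembling the witness for the conclusion is routine.
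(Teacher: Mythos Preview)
Your route differs substantially from the paper's, and the transitivity case you flag as ``where essentially all the work lies'' is a genuine gap your outline does not close.

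The paper never inducts on the subtyping derivation. Instead it builds auxiliary structure: a \emph{similarity} relation $\sim$ on qubit types (Definition~\ref{def:similarTypes}), identifying types that differ only in their choice of atomic bases; the normalisation $\fMinS$ collapsing nested $S$'s (Lemma~\ref{equals_to_min}); and a \emph{subtyping distance} on $\sim$-equivalence classes. With these it proves four sub-lemmas (Lemma~\ref{preceqproductpropertiesaux}): roughly, any non-superposition $\Phi\preceq S(\gB_1\times S(\gB_2))$ already lies below some $\gB_1'\times S(\gB_2')$ with $\gB_i'\sim\fMinS(\gB_i)$, and then $\varphi\times\M$ below that product drops below $\gB_1'\times\gB_2'$. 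The headline lemma is then one line, using that similar types span the same space, i.e.\ $S(\gB_1'\times\gB_2')\approx S(\gB_1\times\gB_2)$ (Lemma~\ref{sim_relation}.\ref{sim_relation_7}). The whole apparatus exists precisely to absorb the basis-change axiom, which is what prevents naive componentwise reasoning.

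Your induction stalls because in $\varphi\times\M\preceq Z\preceq S(\gB_1\times S(\gB_2))$ neither premise need match the hypothesis shape. Concretely, take $\X\times\X\preceq S(\B\times\B)\preceq S(\B\times S(\X))$ (basis-change, then $S$-congruence on $\B\times\B\preceq\B\times S(\X)$). With $Z=S(\B\times\B)$, the first premise has no inner $S$ on the right, and the second has no product-with-base-right-factor on the left; item~\ref{preceq_relation_7} of Lemma~\ref{lem:preceq_properties} peels $Z$ to $\B\times\B\preceq\B\times S(\X)$ but does not restore the needed form. Finishing here requires $S(\B\times\B)\preceq S(\B\times\X)$ --- exactly the ``similar types span the same space'' fact the paper isolates. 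The same obstruction recurs when $Z$ is a product whose right factor is already $S$-headed. Your proposed fix, a transitivity-elimination for the product/$S$ fragment, is therefore not a small side-calculation but the entire content of the lemma; you do not sketch it, and carrying it out in the presence of the basis-change axiom (which does not decompose componentwise) would amount to rediscovering something equivalent to the paper's $\sim$-machinery.
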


\begin{proof}
  The proof relies on several technical lemmas. See Appendix~\ref{app:preceq_product_properties}.
  \qed
\end{proof}

We are now ready to state the main result of this section: the subject reduction property, which ensures that types are preserved under reduction.
\begin{restatable}[Subject reduction]{theorem}{subjectreduction}
  \label{subject_reduction}
  \(\Gamma \vdash t : A\)
  and
  \(t \lrap r\)
  imply
  \(\Gamma \vdash r : A\).
\end{restatable}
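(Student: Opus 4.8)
The plan is to argue by induction on the derivation of the reduction $t \lrap r$, separating the context rules of Figure~\ref{fig:RSContext}, which form the inductive step, from the base rewrite rules of Figures~\ref{fig:RS_beta_rules}--\ref{fig:RSError}. The pervasive tool is the generation lemma: since the typing rules are not syntax-directed---application has two rules ($\Rightarrow_E$ and $\Rightarrow_{ES}$), and subsumption $\preceq$, weakening $W$ and contraction $C$ may be interleaved anywhere---I cannot read off the typings of subterms directly. For every redex $t$ with $\Gamma \vdash t : A$ I therefore first invert the judgment with the generation lemma, recovering the typings of the immediate subterms together with the subtyping constraints linking them to $A$, and bearing in mind that $A$ itself may sit strictly above a more informative type in the $\preceq$ order.

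For the context cases the reduction occurs inside a proper subterm; I expose the typing of that subterm by generation, retype its reduct using the induction hypothesis, and reassemble the derivation with the same constructor rule followed by the same subsumption and structural steps. These cases are routine. Among the base cases, the beta rules \rbetan\ and \rbetab\ reduce to a single application of the Substitution Lemma (Lemma~\ref{substitution}): generation of $(\lambda x^{\gB}.t)\,u$ yields $\Gamma, x^{\gB} \vdash t : A'$ and $\Delta \vdash u : \gB'$ with suitable subtyping, and Lemma~\ref{substitution} then types $t\Substitution{u}{x}$, the declared type following by subsumption. The conditionals \riftrue--\rifminus\ follow immediately from the $\tif$ and $\tif_{\X}$ rules; the measurement rules \rproj\ and \rprojx\ only require checking that the output $\ket{k} \otimes \ket{\phi_k}$ inhabits the declared type $\B^m \times S(\prod_{i=m+1}^{n}\Ba_i)$ (respectively its Hadamard variant), which holds, possibly after a subsumption step, because $\ket k$ is a base term of the right product type and $\ket{\phi_k}$ is a superposition of base vectors; and the error rules of Figure~\ref{fig:RSError} are trivial, since the axiom for $\error$ types it at every type.

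The genuinely delicate base cases are the algebraic and casting rules. For the linearity rules \rlinr, \rlinscalr, \rlinzr, \rlinl, \rlinscall\ and \rlinzl, and for the vector-space axioms of Figure~\ref{fig:RS_vector_space_axioms_rules}, I rebuild the typing with $S_I^+$, $S_I^\alpha$ and $\vec 0$, using the subtyping inversion facts of Lemma~\ref{lem:preceq_properties}---chiefly item~\ref{preceq_relation_7}, which forces any supertype of an $S(\cdot)$ to again be an $S(\cdot)$, and item~\ref{preceq_relation_15} for the arrow cases---to reconcile the $S$-modalities appearing on the two sides. The list and cast rules of Figure~\ref{fig:RS_lists_rules} are where Lemma~\ref{preceqproductproperties} is essential: for the neutrality rules \rneutrallup\ and \rneutralrup\ I use items~\ref{aux_lemma_rneutrallup} and~\ref{aux_lemma_rneutralrup} to collapse a nested modality such as $S(\gB \times S(\Phi))$ to $S(\gB \times \Phi)$ while preserving subtyping, and for the distribution-through-cast rules I additionally invoke Lemma~\ref{lem:preceq_properties} item~\ref{preceq_relation_11} to pin down the shape of product subtypes.

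The main obstacle is the uniform handling of subtyping: because a redex's type need not be principal, every base case must tolerate $A$ being reached from a strictly smaller type through the $\preceq$ rule, so the real content of the proof is carried by Lemmas~\ref{lem:preceq_properties} and~\ref{preceqproductproperties}, which transport subtyping across the $S$-modality and the product constructor. I expect the hardest points to be the neutrality cases \rneutrallup/\rneutralrup\ and the interplay between the two application rules $\Rightarrow_E$ and $\Rightarrow_{ES}$ in the call-by-base group, where the argument's type determines which rule applies and subtyping can switch between the two.
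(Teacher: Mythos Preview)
Your proposal is correct and follows essentially the same route as the paper: induction on the reduction derivation, with the contextual rules handled by the induction hypothesis and each base rule inverted via the generation lemma, then reassembled using the substitution lemma for the $\beta$-cases and the subtyping inversion lemmas (Lemmas~\ref{lem:preceq_properties} and~\ref{preceqproductproperties}) for the algebraic and cast rules. You have also correctly singled out the neutrality rules \rneutrallup/\rneutralrup\ and the $\Rightarrow_E$/$\Rightarrow_{ES}$ split as the places requiring the most care, which matches where the paper expends most of its effort.
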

\begin{proof}
  By induction on the reduction \(t \lrap r\). The complete argument is given in Appendix~\ref{app:subject_reduction}. As an illustrative case, consider rule \rlinl{}, where \(t = (t_1 + t_2)t_3\) and \(r = t_1t_3 + t_2t_3\), with \(\Gamma \vdash t : A\). By the generation lemma (Appendix~\ref{app:generation}), we have \(\Gamma = \Gamma_1, \Gamma_2, \Xi\), \(\fT(\Xi) \subseteq \bqtypes\), \(\Gamma_1, \Xi \vdash t_1 + t_2 : S(\gB_1 \Rightarrow C)\), \(\Gamma_2, \Xi \vdash t_3 : S(\gB_1)\), and \(S(C) \preceq A\). By the generation lemma again, \(\Gamma_1, \Xi = \Gamma_1', \Gamma_2', \Xi'\) with \(\Gamma_1', \Xi' \vdash t_1 : D\), \(\Gamma_2', \Xi' \vdash t_2 : D\), and \(S(D) \preceq S(\gB_1 \Rightarrow C)\). Then Lemma~\ref{lem:preceq_properties}.\ref{preceq_relation_14} gives \(S(D) \approx S(\gB_2 \Rightarrow E)\) with \(S(\gB_2 \Rightarrow E) \preceq S(\gB_1 \Rightarrow C)\), and Lemma~\ref{lem:preceq_properties}.\ref{preceq_relation_15} yields \(\gB_1 \preceq \gB_2\) and \(E \preceq C\). Hence, from \(\Gamma_1', \Xi' \vdash t_1 : S(\gB_2 \Rightarrow E)\) and \(\Gamma_2, \Xi \vdash t_3 : S(\gB_2)\), we derive \(\Gamma \vdash t_1t_3, t_2t_3 : S(E)\), and thus \(\Gamma \vdash t_1t_3 + t_2t_3 : S(S(E))\). Since \(S(S(E)) \preceq S(E) \preceq S(C) \preceq A\), by transitivity we conclude \(\Gamma \vdash r : A\).
  \qed
\end{proof}

\subsubsection{Progress}
The next result is the \emph{progress} theorem, stating that well-typed terms in normal form must be either values or the error term.
\label{sec:progress}
\begin{restatable}[Progress]{theorem}{progress}
  \label{progress}
  If \(\vdash t : A\) then \(t\) reduces, is a value, or is \(\error\).
\end{restatable}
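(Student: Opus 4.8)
The plan is to proceed by induction on the structure of $t$ (equivalently, on the typing derivation of $\vdash t:A$), inverting each typing judgement with the generation lemma (Appendix~\ref{app:generation}). The workhorse will be a \emph{canonical forms} lemma characterising the shape of closed values at each type: a closed value of an atomic type $\Ba$ is one of $\ket0,\ket1$ (for $\B$) or $\ket+,\ket-$ (for $\X$); a closed value of a base type $\M$ is a base term in $\basis$; a closed value of type $S(\prod_{i=1}^n\Ba_i)$ is either $\z$ or a linear combination $\sum_i\brackets{\alpha_i.}\bigotimes_h\ket{b_{hi}}$ of base tensors; and a closed value of type $\gB\Rightarrow A$ or $S(\gB\Rightarrow A)$ is an abstraction or a conditional (in the $S$ case possibly wrapped in sums, scalars, or $\z$). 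Establishing this lemma is where the subtyping machinery of Lemmas~\ref{lem:preceq_properties} and~\ref{preceqproductproperties} is needed, since subtyping and the implicit casts make the correspondence between a type and the syntactic shape of its closed values nontrivial.

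The base cases are immediate: a variable cannot be typed in the empty context, so that case is vacuous; abstractions, the constants $\ket0,\ket1,\ket+,\ket-$, and $\z$ are values; $\error$ is $\error$; and the conditionals $\ite{}rs$ and $\itex{}rs$ are classified as values (they are function-typed constructors in normal form, analogous to abstractions). For each remaining compound term former I first apply the induction hypothesis to the immediate subterms. If a subterm reduces, the corresponding context rule (Figure~\ref{fig:RSContext}) lets the whole term reduce; if a subterm is $\error$, the matching propagation rule (Figure~\ref{fig:RSError}) applies, e.g. $\rerrorapplyto$, $\rerrorapplyfrom$, $\rerrorsum$, or $\rerrorprod$. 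The interesting situation in every case is therefore when all relevant subterms are values, and here canonical forms decides the outcome. For a sum $r+s$, a scalar $\alpha.r$, or a tensor $r\otimes s$ whose subterms are values, the term is itself a value and we are done (it is irrelevant that a rule such as $\rneut$ or $\runit$ might additionally fire, since progress is a disjunction).

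The two genuinely substantive cases are application and the destructors. For an application $sr$ with $s$ a value, canonical forms forces $s$ to be an abstraction, a conditional, or an $S$-wrapped such term; in the first sub-case I split on whether the bound variable is base-typed, triggering either call-by-name ($\rbetan$) directly or the call-by-base rules $\rbetab$, $\rlinr$, $\rlinscalr$, $\rlinzr$ after applying the induction hypothesis and canonical forms to the argument $r$; the conditional sub-case uses canonical forms of $\B$- or $\X$-typed arguments to fire $\riftrue$/$\riffalse$ or $\rifplus$/$\rifminus$; and the sub-cases $s=s_1+s_2$, $s=\alpha.s'$, $s=\z$ fire $\rlinl$, $\rlinscall$, $\rlinzl$. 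For $\head r$ and $\tail r$ with $r$ a value, canonical forms at a product type $\Ba\times\M$ yields $r=h\otimes t$ with $h\in\basis$ atomic, so $\rhead$/$\rtail$ apply. For a measurement $\pim r$ (and $\pimx r$) with $r$ a value of type $S(\prod_i\Ba_i)$, canonical forms gives either $r=\z$, handled by $\rerrorpi$, or the linear-combination form matching the redex of $\rproj$. For a cast $\Castl r$ or $\Castr r$ with $r$ a value, I case on the shape of $r$, firing $\rneutralzup$, $\rdistpup$, $\rdistaup$, or, when $r$ is a tensor, one of $\rdistlp$, $\rdistla$, $\rdistlz$, $\rneutrallup$.

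I expect the main obstacle to be the canonical forms lemma together with the cast case. Proving that closed values of type $S(\prod_i\Ba_i)$ are exactly $\z$ or base-tensor combinations requires ruling out ill-shaped tensors such as $\ket0\otimes(\ket0+\ket1)$ at that type, which rests on the fine product-subtyping facts (e.g. $\B\times S(\B)\not\preceq S(\B\times\B)$) supplied by Lemmas~\ref{preceqproductproperties} and~\ref{lem:preceq_properties}. In the cast case one must additionally use that $\otimes$ is associative, so that a value tensor is re-associated to expose its rightmost (for $\Castl$) or leftmost (for $\Castr$) component, thereby matching the shape demanded by the distribution and neutrality cast rules; the semantic content here is exactly the linear-casting property of Theorem~\ref{thm:casting}.
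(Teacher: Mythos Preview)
Your approach---structural induction on $t$, inversion via the generation lemma, and a canonical-forms analysis of closed values at each relevant type---matches the paper's proof in Appendix~\ref{app:progress}. The paper is considerably terser: it does not isolate a standalone canonical-forms lemma but invokes the needed shape facts inline, and several cases (measurement, application, cast) are dispatched with a bare ``then $t$ reduces'' where you spell out the sub-case split. Your more explicit treatment is sound, and the subtyping facts you single out (Lemmas~\ref{lem:preceq_properties} and~\ref{preceqproductproperties}) are indeed what is needed to rule out ill-shaped tensors at type $S(\prod_i\Ba_i)$.

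One divergence is worth flagging. You classify the unapplied conditionals $\ite{}rs$ and $\itex{}rs$ as values; the paper's value grammar does \emph{not} include them. The paper finesses this by giving the applied form $\ite{r}{s_1}{s_2}$ its own top-level case in the induction (separate from general application $rs$) and never addressing the bare form $\ite{}{s_1}{s_2}$---yet that term is closed, well-typed at $\B\Rightarrow A$, irreducible, not $\error$, and not a value under the stated grammar, so the theorem as literally written fails on it. Your move to enlarge the value class is the natural repair; just be aware it is a repair rather than a restatement of the paper's definitions.
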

\begin{proof}
  By induction on $t$. The full proof is in Appendix~\ref{app:progress}.
  \qed
\end{proof}

\subsubsection{Linear casting}
Our system also exhibits a property we call \emph{linear casting}, which expresses the idea that terms of type \(S(\B^n)\) can be written using other terms of type \(\B^n\), thanks to our linear casting reduction rules. This is not trivial, as by subtyping we have \(S(\B^n) = S(\prod_{i = 1}^{n}\Ba_i)\).

\begin{theorem}[Linear casting theorem]
  \label{thm:casting}
  \(\vdash \Cast t: S(\B^n)\) implies \(\Cast t \lra[1]^* \sum \brackets{\alpha_i .} b_i\) with \(\vdash b_i : \B^n\).
\end{theorem}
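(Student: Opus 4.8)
The plan is to derive the statement from the global results of this section combined with a structural analysis of the casting rules. The starting observation is that no cast term belongs to the value grammar $\values$, so a cast can never be left as a normal form of a closed well-typed term: by strong normalisation (Section~\ref{sec:SN}) the term $\Cast t$ reaches a normal form, and by progress (Theorem~\ref{progress}) that normal form is either $\error$ or a value, so the outermost cast must be consumed along the reduction. Subject reduction (Theorem~\ref{subject_reduction}) keeps the type equal to $S(\B^n)$ throughout. Since firing a cast redex never requires a probabilistic measurement step, the whole sequence can be carried out with the probability-$1$ rules of Figures~\ref{fig:RS_vector_space_axioms_rules} and~\ref{fig:RS_lists_rules}, which matches the $\lra[1]^*$ in the statement. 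It thus suffices to reduce the argument of the cast to a value $v$ — lifting that reduction through the context rule for $\Cast$ in Figure~\ref{fig:RSContext} — to discard the $\error$ outcome, which can only arise via $\rerrorcastl$/$\rerrorcastr$ from an $\error$ argument and hence not once the argument has been reduced to a value, and to analyse $\Cast v$.

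The core is an induction on the structure of the value $v$, dispatching on the cast rule associated with each value constructor and appealing to the induction hypothesis on the smaller casts produced. The cases $v=\z$, $v=\alpha.v'$, and $v=v_1+v_2$ are discharged by $\rneutralzup$, $\rdistaup$, and $\rdistpup$, which commute the cast with the linear-combination structure. For a single-qubit constant, $\rcastketz$/$\rcastketo$ keep $\ket0$/$\ket1$ while $\rcastp$/$\rcastm$ expand $\ket+$/$\ket-$ into a computational-basis superposition. For a tensor $v_1\otimes v_2$, the distributive rules ($\rdistlp$/$\rdistrp$, $\rdistla$/$\rdistra$, $\rdistlz$/$\rdistrz$) move the cast across the superposed component — which the $\Castl$/$\Castr$ typing rules, through the side condition $\gB \neq S(\gB')$, force into the distributable (tail/head) position — and the neutral rules $\rneutrallup$/$\rneutralrup$ erase the cast once a base component is exposed, leaving a tensor of base terms. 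Interleaving these with the vector-space rules of Figure~\ref{fig:RS_vector_space_axioms_rules} (to evaluate scalars, merge equal summands, and drop zeros) normalises $\Cast v$ to a linear combination $\sum \brackets{\alpha_i .} b_i$ of base terms.

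The main obstacle is the typing bookkeeping that keeps every intermediate term aligned with $S(\B^n)$ — equivalently with $S(\prod_{i=1}^n \Ba_i)$, since these coincide under subtyping — and that certifies the residual summands $b_i$ at the announced type. The delicate step is the neutral reduction: after $\rneutrallup$ or $\rneutralrup$ erases a cast, the contractum must be re-typed against the expected product type, and this is precisely what Lemma~\ref{preceqproductproperties} (items~\ref{aux_lemma_rneutrallup} and~\ref{aux_lemma_rneutralrup}) supplies, absorbing the inner $S$ left behind by the cast. One must also invoke the shape and product-arity constraints of Lemma~\ref{lem:preceq_properties} (items~\ref{preceq_relation_11} and~\ref{preceq_relation_12}) to guarantee that a value sitting at a product type under the cast is a genuine base term and that the arity $n$ is preserved, so that subject reduction finally certifies each $b_i$ at type $\B^n$. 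I expect this interaction between the distributive and neutral cast rules, the associativity of products, and the subtyping equivalences to be the heart of the argument, whereas the commutation of the cast with sums, scalars, and the zero vector is routine.
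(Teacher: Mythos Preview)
Your route differs from the paper's: you first reduce the argument $t$ to a value and then run an explicit structural induction on that value, firing the cast rules by hand, whereas the paper reduces the entire term $\Cast t$ to a normal form and argues directly about its shape (claiming that no $\ket+$ or $\ket-$ can survive because the cast-elimination rules never produce them).

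There is, however, a genuine gap in your treatment of the neutral rules $\rneutrallup$/$\rneutralrup$. When one of these erases the cast, only one tensor factor is required to lie in $\basis$; the other factor is an arbitrary value and may itself be $\X$-typed. Concretely, $\Castr(\ket 0 \otimes \ket+)$ is typable at $S(\B^2)$ --- via $\B \times \X \preceq S(\B \times \B) \preceq S(S(\B) \times \B)$ --- yet $\rneutralrup$ fires in one step to the normal form $\ket 0 \otimes \ket+$, which has type $\B \times \X$, not $\B^2$. Lemma~\ref{preceqproductproperties} only absorbs an inner $S$; it cannot convert an $\X$ factor into $\B$, so your appeal to it does not close this case. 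The paper's own proof glosses over the same point with ``by the same argument'', so the difficulty lies with the claim as stated rather than with your decomposition in particular: a single $\Cast$ does not in general push every tensor factor into the computational basis.
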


\begin{proof}
  By Theorems~\ref{progress}
  and~\ref{subject_reduction},
  we know that \(\Cast t \lra[1]^* v\) with \(\vdash v : S(\B^n)\). By the
  generation lemma
  (Appendix~\ref{app:generation}), we
  distinguish the following cases:
  \begin{itemize}
    \item If \(v = \ket{0}\) or \(v = \ket{1}\), then \(\vdash v : \B\).
    \item Note that \(v \neq \ket{+}\), since \(\Cast t \nlra \ket{+}\): casting can only be eliminated through rules \rcastp, \rcastm, \rcastketz, or \rcastketo. Hence, this case cannot occur.  
      Similarly, \(v \neq \ket{-}\) for the same reason.
    \item If \(v\) has the form \(b_1 \otimes b_2 \otimes \cdots \otimes b_n\) with \(n \geq 1\), and each \(b_i\) a value, then each \(b_i\) must be either \(\ket{0}\) or \(\ket{1}\) by the same argument.
    \item If \(v\) has the form \(\sum \brackets{\alpha_i.} b_i\), with the \(b_i\) having distinct types, then each \(b_i\) has the form \(b_1 \otimes b_2 \otimes \cdots \otimes b_n\), where each \({b_i} \in \{\ket{0}, \ket{1}\}\).
  \end{itemize}
  Therefore, \(v = \sum \brackets{\alpha_i .} b_i\), and \(\Cast t \lra[1]^* \sum \brackets{\alpha_i .} b_i\) with \(\vdash b_i : \B^n\).
  \qed
\end{proof}

\subsection{Strong normalization}
\label{sec:SN}

We conclude the correctness properties by showing that terms are \emph{strongly normalising}, meaning that every reduction sequence eventually terminates.

Let \(\SN\) denote the set of strongly normalising terms, and \(\steps{t}\) the number of reduction steps in a reduction sequence starting from \(t\).
We also write $\fRed(t)$ for the set of one-step reducts of \(t\), i.e.\ \(\fRed(t) = \Set{r \mid t \lrap r}\).

We start our proof by observing that, excluding rules~\rbetan\ and \rbetab, all other rules strictly decrease a well-defined measure. 
This measure is invariant under commutativity and associativity of addition, meaning that terms like \( t + r \) and \( r + t \), or \( (t + r) + s \) and \( t + (r + s) \), receive the same value.
We define this measure in Definition~\ref{def:measure} and prove these properties in Theorem~\ref{thm:size_properties}.

\begin{definition}[Measure]\label{def:measure}
We define the following measure on terms:
\begin{align*}
    \size{x} & =  0 & \size{\lambda x^\gB.t} & =  \size{t} \\
    \size{\z} & =  0 & \size{tr} & =  (3\size{t}+2)(3\size{r}+2) \\
    \size{\error} & =  0 & \size{t \otimes r} & =  \size{t} + \size{r} + 1 \\
    \size{\ket{0}} & =  0 & \size{\Cast t} & =  \size{t} + 5 \\
    \size{\ket{1}} & =  0 & \size{\alpha.\Cast t} & =  \size{\Cast t} \\
    \size{\ket{+}} & =  0 & \size{\alpha.t} & =  2\size{t} + 1 \\
    \size{\ket{-}} & =  0 & \size{\Cast t + \Cast r} & =  \max\{\size{t}, \size{r}\} \\
    \size{\head t} & =  \size{t} + 1 & \size{t + r} & =  \size{t} + \size{r} + 2\quad\text{(if not both are casts)} \\
    \size{\tail t} & =  \size{t} + 1 & \size{\ite{}{t}{r}} & =  \size{t} + \size{r} \\
    \size{\pim t} & =  \size{t} + m & \size{\itex{}{t}{r}} & =  \size{t} + \size{r} \\
    \size{\pimx t} & =  \size{t} + m 
  \end{align*}
\end{definition}

\begin{restatable}[Measure decrease and invariance]{theorem}{sizeproperties}
  \label{thm:size_properties}
  The measure defined in Definition~\ref{def:measure} satisfies the following properties:
  \begin{itemize}
    \item If \(t = r\) by the commutativity or associativity properties of $+$, then \(\size{t} = \size{r}\).
    \item If \(t \lrap r\) using a rule other than \rbetan\ and \rbetab, then \(\size{t} > \size{r}\).
  \end{itemize}
\end{restatable}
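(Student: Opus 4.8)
The statement has two parts. The invariance part is the easier one and I would dispatch it first. The commutativity and associativity of $+$ must leave the measure unchanged, so I would check that the defining clauses for sums are symmetric and associative in the appropriate sense. The subtle point here is that the measure for sums is \emph{case-split}: $\size{\Cast t + \Cast r} = \max\{\size{t},\size{r}\}$ when both summands are casts, and $\size{t+r} = \size{t}+\size{r}+2$ otherwise. I would verify that for a sum of several terms, the total value does not depend on the bracketing or ordering by arguing that ``being a cast'' is a per-summand property, so one can partition any flat sum into its cast-summands and non-cast-summands; the $\max$ over the casts and the additive contribution of the rest are each manifestly symmetric and associative. Both $\max$ and $+$ are commutative and associative, so the whole measure is invariant under reordering and reassociation.

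The decrease part is the substance, and the plan is a case analysis over every reduction rule other than $\rbetan$ and $\rbetab$. For each rule $t \lrap r$ I compute $\size t$ and $\size r$ from Definition~\ref{def:measure} and check $\size t > \size r$. I would group the rules: the vector-space axioms (Figure~\ref{fig:RS_vector_space_axioms_rules}), the list/casting rules (Figure~\ref{fig:RS_lists_rules}), the error-propagation rules (Figure~\ref{fig:RSError}), the conditional and left-/right-linearity rules (Figure~\ref{fig:RS_beta_rules} minus the two beta rules), the measurement rules (Figure~\ref{fig:RS_measurement_rule}), and finally the context closure (Figure~\ref{fig:RSContext}). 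For the error rules I expect the reduct $\error$ to have measure $0$ while the redex has a strictly positive measure, which is immediate. For the context rules I argue by monotonicity: each term former must be shown strictly increasing in the measure of the reducing subterm, using the inductive hypothesis $\size{t}>\size{r}$; the application clause $(3\size t+2)(3\size r+2)$ and the sum/scalar clauses are all monotone, so the decrease propagates through contexts.

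The genuinely delicate cases are the ones the exotic measure clauses were designed for, and I expect these to be the main obstacle. The clauses $\size{\Cast t}=\size t+5$, $\size{\alpha.\Cast t}=\size{\Cast t}$, and $\size{\Cast t+\Cast r}=\max\{\size t,\size r\}$ are precisely tuned so that the cast-distribution rules $\rdistpup$, $\rdistaup$, $\rneutralzup$, $\rdistlp$, $\rdistrp$, $\rdistla$, $\rdistra$, and the cast-elimination rules $\rcastp$, $\rcastm$, $\rcastketz$, $\rcastketo$, $\rneutrallup$, $\rneutralrup$ all strictly decrease. I would treat each of these individually, taking care that after a cast is pushed inside a sum the reduct is scored by the $\max$-clause rather than the additive clause, which is exactly what makes the decrease work: pushing $\Cast$ through $(t+r)$ turns a single ``$+5$'' charge into a $\max$ that saves the extra cast and the constant $2$. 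The tensor-distribution rules $\rdistlp$--$\rdistrz$ interact both with the tensor clause ($\size{t\otimes r}=\size t+\size r+1$) and with these sum clauses, so I would verify the arithmetic carefully there. For the measurement rules $\rproj$ and $\rprojx$, the redex has the form $\pim(\sum_i \brackets{\alpha_i.}\bigotimes_h \ket{b_{hi}})$ with measure $\size{\sum\cdots}+m$, while the reduct $\ket k\otimes\ket{\phi_k}$ is a tensor of base kets whose measure is a small bounded quantity; since every base ket has measure $0$ and the input sum has strictly positive measure (it contains at least the $+2$ from some sum or the $m\ge 1$ from the projector), the decrease holds. I would finish by noting that $\rhead$ and $\rtail$ drop a ``$+1$'' head charge together with the discarded subterm's measure, and the remaining vector-space axioms are routine linear arithmetic.
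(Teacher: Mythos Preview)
Your plan mirrors the paper's proof: a direct check of commutativity/associativity invariance, then a rule-by-rule case analysis for the basic rules, with the contextual rules dispatched ``by monotonicity of each term former.'' The paper's appendix does exactly this, and in fact glosses over precisely the two places where your plan also has a gap, namely the interaction of the special clauses $\size{\alpha.\Cast t}=\size{\Cast t}$ and $\size{\Cast t+\Cast r}=\max\{\size t,\size r\}$ with invariance and with context monotonicity.

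For invariance, your ``partition the flat sum into cast-summands and non-cast-summands and take $\max$ on the former, additive on the latter'' is \emph{not} what the recursive binary definition computes. With $a=b=c=\ket 0$ one gets $\size{(\Cast a+\Cast b)+c}=\max\{0,0\}+0+2=2$, whereas $\size{\Cast a+(\Cast b+c)}=(\size a+5)+\bigl((\size b+5)+0+2\bigr)+2=14$; so the measure as written is not invariant under reassociation once casts and non-casts are mixed, and your partition argument cannot go through. For the context rules, the former $\alpha.[\cdot]$ is not monotone when the hole switches from a cast to a non-cast. Take $u=\ket 0\otimes(\beta_1.(\beta_2.(\beta_3.\ket 0)))$ and $v=\ket 0\otimes\ket 0$, so $\size u=8$ and $\size v=1$. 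Then $\Castl(u+v)\lra\Castl u+\Castl v$ by \rdistpup, hence $\alpha.\Castl(u+v)\lra\alpha.(\Castl u+\Castl v)$ by the context rule; but $\size{\alpha.\Castl(u+v)}=\size u+\size v+7=16$ while $\size{\alpha.(\Castl u+\Castl v)}=2\max\{\size u,\size v\}+1=17$, so the measure \emph{increases}. Thus neither your invariance argument nor the ``monotonicity'' shortcut for contexts goes through as stated; to make the strategy sound you would have to either refine the measure so that the cast-special clauses are stable under associativity and under the $\alpha.[\cdot]$ and $[\cdot]+s$ contexts, or abandon monotonicity and analyse each context rule directly against the head rules that can fire beneath it.
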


\begin{proof}
  We only provide an example here; the full proof is given in Appendix~\ref{app:size_properties}.
  \begin{align*}
    \size{\alpha . (t + r)}
     &= 1 + 2 \size{t + r}                        
     = 5 + 2 \size{t} + 2 \size{r}               
     = 3 + \size{\alpha . t} + \size{\alpha . r} \\
     & = 1 + \size{\alpha . t + \alpha . r}        
     > \size{\alpha . t + \alpha . r}
    \tag*{\qed}
  \end{align*}
\end{proof}

The previous result shows that all reduction sequences that do not involve
\(\beta\)-redexes are strongly normalising. We now combine this with a
standard reducibility argument to obtain the general strong normalisation
theorem. As a first step, we show that linear combinations of strongly
normalising terms are themselves strongly normalising.

\begin{restatable}[Strong normalisation of linear combinations]{lemma}{linearcombinationSN}
  \label{linear_combination_SN}
  If \(r_i \in \SN\) for all \(1 \leq i \leq n\), then \(\sum_{i=1}^{n}\brackets{\alpha_i .} r_{i} \in \SN\).
\end{restatable}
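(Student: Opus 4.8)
The plan is to prove $\sum_{i=1}^n \brackets{\alpha_i .} r_i \in \SN$ by well-founded induction, using the standard characterisation that a term lies in $\SN$ exactly when all of its one-step reducts do. First I would observe that the reduction relation is finitely branching: a term has finitely many redex occurrences, and each redex --- including measurement, whose outcomes $\ket k$ range over a finite set --- has finitely many contracta. Hence, for any $r \in \SN$, König's lemma makes the reduction tree finite, so the reduction height $\steps{r}$ (the length of a longest reduction from $r$) is a well-defined natural number. The induction will be on the lexicographic pair $\left(\sum_{i=1}^n \steps{r_i},\ \size{\sum_{i=1}^n \brackets{\alpha_i .} r_i}\right)$, where $\size{\cdot}$ is the measure of Definition~\ref{def:measure}.

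The core of the argument is a dichotomy on the shape of a one-step reduct $t'$ of $t = \sum_{i=1}^n \brackets{\alpha_i .} r_i$. Viewing $t$ as a skeleton of $+$ and scalar nodes with the $r_i$ at its leaves, every reduction is either internal to one component or fires at the skeleton. If the step is internal, it reduces some $r_i \lrap r_i'$; then $r_i' \in \SN$ with $\steps{r_i'} < \steps{r_i}$, so $t'$ is again a linear combination of strongly normalising terms whose first measure component has strictly decreased, and the induction hypothesis gives $t' \in \SN$. This case absorbs every $\beta$-redex, since $\beta$-redexes can only occur inside a component.

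For the remaining case I would enumerate the rules whose left-hand side is rooted at a $+$ or a scalar node --- namely the vector-space axioms \rneut, \runit, \rzeros, \rzero, \rprod, \rdists, \rfact, \rfacto, \rfactt\ and the error rules \rerrorsum, \rerrorprod. None of these is the rule \rbetan\ or \rbetab, so by Theorem~\ref{thm:size_properties} each strictly decreases $\size{\cdot}$. The error rules produce $\error \in \SN$ directly; for the others I re-decompose $t'$ into components that are subterms, reducts, or merges of the original $r_i$ (together with possibly $\z$), all of which are in $\SN$, and I check that $\sum_i \steps{r_i}$ does not increase. Since the first measure component weakly decreases and the second strictly decreases, the lexicographic measure drops and the induction hypothesis applies.

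The main obstacle is the last check: that the reshaping rules do not inflate $\sum_i \steps{r_i}$. This rests on two auxiliary facts I would establish first. Reducts and immediate subterms of an $\SN$ term are in $\SN$ with no greater reduction height (a one-step reduct loses at least one step, and any reduction of a subterm lifts through the surrounding scalar or $+$ by the context rules). And a superadditivity property, $\steps{t + s} \geq \steps{t} + \steps{s}$, obtained by concatenating a longest reduction of $t$ with a longest reduction of $s$ inside the sum. The superadditivity is exactly what controls \rdists, where a single component $t + s$ is split in two, while \rfact, \rfacto, \rfactt\ only merge equal components and \rprod, \runit, \rzeros, \rzero\ only shrink them; in every instance the total reduction height is non-increasing, which closes the induction.
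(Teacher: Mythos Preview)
Your proposal is correct and follows essentially the same approach as the paper: both argue by well-founded induction on the lexicographic pair $\left(\sum_i \steps{r_i},\ \size{\sum_i \brackets{\alpha_i.}r_i}\right)$, split the one-step reducts into internal steps (first coordinate drops) versus skeleton rewrites (second coordinate drops by Theorem~\ref{thm:size_properties}, first coordinate does not increase), and conclude via $\fRed(t)\subseteq\SN$. Your write-up is in fact a little more careful than the paper's, making explicit the finite-branching/K\"onig argument for $\steps{\cdot}$ and the superadditivity $\steps{t+s}\geq\steps{t}+\steps{s}$ that the paper uses silently in the \rdists\ case.
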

\begin{proof}
  By induction on the lexicographic order of \(\left(\sum_{i=1}^{n}\steps{r_i},\ \size{\sum_{i=1}^{n}\brackets{\alpha_i .} r_{i}}\right)\).
  The full proof is in Appendix~\ref{app:linear_combination_SN}.
  \qed
\end{proof}

We now define the \emph{interpretation of types} used in the strong normalisation argument. From this point onwards, we write \( t : A \) to mean that the term \( t \) has type \( A \) in \emph{any context}.

\begin{definition}[Type interpretation]
  Given a type \( A \), its interpretation \( \den{A} \) is defined inductively as follows:
  \begin{align*}
    \den{\Ba}                 &= \Set{ t : S(\Ba) \mid t \in \SN } \\[0.5ex]
    \den{\gB_1 \times \gB_2} &= \Set{ t : S(\gB_1 \times \gB_2) \mid t \in \SN } \\[0.5ex]
    \den{\gB \Rightarrow A}  &= \Set{ t : S(\gB \Rightarrow A) \mid \text{for all } r \in \den{\gB},\ tr \in \den{A} } \\[0.5ex]
    \den{S(A)} &= \Set{ t : S(A) \mid t \in \SN,\ \exists p \text{ s.t.\ } t \lrap^* \textstyle\sum_{i} [\alpha_i .] r_i,\ r_i \in \den{A} \text{ on all paths} }
  \end{align*}
  with the convention that \( \sum_{i=1}^0 [\alpha_i .] r_i  = \z \).
\end{definition}

Since our language is first-order, this interpretation is sufficient.
Traditionally, type interpretations are defined either by introduction or
elimination. In our case, however, this distinction is not necessary for
product types. This is because we define their interpretation using a
superposition type rather than a direct product. This design choice is
motivated by the interpretation of function types, which requires certain
linearity properties on the argument type---properties that hold only when the
type is a superposition.

A term \( t \in \den{A} \) is said to be \emph{reducible}, and an application
\( tr \) is said to be \emph{neutral}.  We write \( \Neutral \) for the set of
neutral terms.  In particular, expressions of the form \( \ite trs \) are in \(
\Neutral \), since this is shorthand for \( (\ite{}rs)t \).

We now establish the main properties of reducibility. What we refer to as
\textbf{LIN1} and \textbf{LIN2} are in fact specific instances of the more
general \emph{adequacy} property
(Theorem~\ref{thm:adequacy}).

\begin{restatable}[Reducibility properties]{lemma}{reducibilityproperties}
  \label{reducibility_properties}
  For every type \( A \), the following hold:
  \begin{description}
    \item[\textbf{(CR1)}] If \( t \in \den{A} \), then \( t \in \SN \).
    \item[\textbf{(CR2)}] If \( t \in \den{A} \), then \( \fRed(t) \subseteq \den{A} \).
    \item[\textbf{(CR3)}] If \( t : S(A) \), \( t \in \Neutral \), and \( \fRed(t) \subseteq \den{A} \), then \( t \in \den{A} \).
    \item[\textbf{(LIN1)}] If \( t \in \den{A} \) and \( r \in \den{A} \), then \( t + r \in \den{A} \).
    \item[\textbf{(LIN2)}] If \( t \in \den{A} \), then \( \alpha . t \in \den{A} \).
    \item[\textbf{(HAB)}] \( \z \in \den{A} \), \( \error \in \den{A} \), and for every variable \( x : A \), we have \( x \in \den{A} \). 
  \end{description}
\end{restatable}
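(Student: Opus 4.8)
The plan is to prove all six reducibility properties simultaneously by induction on the structure of the type $A$, since the definition of $\den{A}$ is itself inductive on types and the properties for compound types (especially $\gB \Rightarrow A$ and $S(A)$) will refer back to the properties for their subtypes. I would organise the argument by type constructor: atomic types $\Ba$, product types $\gB_1 \times \gB_2$, function types $\gB \Rightarrow A$, and superposition types $S(A)$, and within each, verify each of \textbf{(CR1)}--\textbf{(HAB)}.

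For the base cases, namely the atomic and product types, the interpretations are defined directly as $\Set{t : S(\cdot) \mid t \in \SN}$, so \textbf{(CR1)} is immediate and \textbf{(CR2)} follows because a reduct of a strongly normalising term is again strongly normalising. Property \textbf{(CR3)} holds because if $\fRed(t) \subseteq \SN$ and $t$ is neutral, then all reducts are $\SN$ and hence $t \in \SN$. For \textbf{(LIN1)} and \textbf{(LIN2)}, I would invoke Lemma~\ref{linear_combination_SN} (strong normalisation of linear combinations) to conclude that $t + r$ and $\alpha.t$ remain strongly normalising, while the subtyping rules from Figure~\ref{fig:Subtyping} guarantee the resulting terms still inhabit the correct superposition type. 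For \textbf{(HAB)}, the terms $\z$, $\error$, and variables $x$ are all values in $\SN$ and carry the appropriate types by the rules $\vec 0$, e, and \tax\ (together with subtyping).

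The two genuinely substantive cases are the function and superposition types, and these should be handled using a standard Girard-style reducibility pattern. For $\den{\gB \Rightarrow A}$, the key is that \textbf{(CR1)} requires showing that a reducible function is $\SN$; the usual trick is to apply $t$ to a reducible argument built from \textbf{(HAB)} on $\gB$ (for instance a variable $x \in \den{\gB}$), obtaining $tx \in \den{A}$, which is $\SN$ by the induction hypothesis \textbf{(CR1)} at the smaller type $A$, and then observe that reductions inside $t$ lift to reductions inside $tx$. \textbf{(CR3)} for function types is the classic case requiring a case analysis on the possible reducts of the neutral application $tr$. For $\den{S(A)}$, the reachability condition ($t \lrap^* \sum_i [\alpha_i.] r_i$ with $r_i \in \den{A}$ on all paths) must be shown stable under the closure properties, which is where Lemma~\ref{linear_combination_SN} and the inner induction hypothesis on $A$ interact most delicately.

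The main obstacle will be \textbf{(CR3)} and \textbf{(LIN1)}/\textbf{(LIN2)} for the superposition type $S(A)$, because the defining clause quantifies over \emph{all} reduction paths reaching a canonical linear-combination form, and the calculus is probabilistic and non-confluent (distinct measurement outcomes and casting choices). I would need to verify that adding $t+r$ or scaling $\alpha.t$ preserves the existence of a normalising path whose terminal summands are all reducible on every branch, which requires carefully combining the reachable-form witnesses for $t$ and $r$ and appealing to the additive structure of the reduction rules \rlinr, \rlinscalr, \rdistpup, and \rdistaup, together with the measure-decrease property of Theorem~\ref{thm:size_properties} to ensure the combined term still strongly normalises. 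The subtyping bookkeeping---ensuring every constructed term genuinely has the superposition type demanded by the interpretation---will also need constant attention but is routine given Lemma~\ref{lem:preceq_properties}.
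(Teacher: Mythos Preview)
Your proposal is correct and follows essentially the same approach as the paper: a simultaneous induction on the structure of $A$ with the same four cases and the same supporting lemmas (Lemma~\ref{linear_combination_SN}, subtyping, subject reduction). One calibration worth noting: in the paper's proof the heaviest work actually falls on the \emph{function type} case $\gB \Rightarrow B$, where \textbf{(CR3)}, \textbf{(LIN1)}, and \textbf{(LIN2)} each require a nested lexicographic induction on $(\steps{\cdot},\size{\cdot})$ to close the case analysis over the linear distribution rules \rlinr, \rlinscalr, \rlinl, \rlinscall\ and the $\z$/$\error$ rules; by contrast the $S(A)$ case you flag as the main obstacle is dispatched rather directly---\textbf{(CR3)} holds because any $t'\in\fRed(t)$ already satisfies the reachability clause so $t$ inherits it, and \textbf{(LIN1)}/\textbf{(LIN2)} just concatenate or scale the witness sums and invoke $S(S(B))\preceq S(B)$.
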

\begin{proof}
  All of these properties are proved simultaneously by induction on the
  structure of the type. This unified presentation is essential, as several
  cases rely on the inductive hypotheses applied to their subcomponents.  The
  detailed proof is given in Appendix~\ref{app:reducibility_properties}.
  \qed
\end{proof}

We now show that reducibility is preserved by the subtyping relation. 

\begin{restatable}[Compatibility with subtyping]{lemma}{compatibility}
  \label{compatibility}
  If \( A \preceq B \), then \( \den{A} \subseteq \den{B} \).
\end{restatable}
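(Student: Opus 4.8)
The plan is to proceed by induction on the derivation of $A \preceq B$, treating one case per subtyping rule of Figure~\ref{fig:Subtyping}. The reflexivity rule gives $\den{A} \subseteq \den{A}$ immediately, and the transitivity rule is handled by composing the two inclusions supplied by the induction hypotheses, $\den{A} \subseteq \den{B} \subseteq \den{C}$. In every remaining case the membership $t : S(\cdot)$ required by the target interpretation is obtained from the source typing by a single application of the subsumption rule $(\preceq)$, so the real content lies in matching the structural conditions ($\SN$, reducibility of arguments, reduction to a sum of reducibles) demanded by $\den{-}$.

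For the three congruence rules the induction hypotheses do almost all the work. For $S(A) \preceq S(B)$, a term $t \in \den{S(A)}$ reduces to some $\sum_i [\alpha_i .] r_i$ with each $r_i \in \den{A}$; by the induction hypothesis $\den{A} \subseteq \den{B}$ the same reduction witnesses $t \in \den{S(B)}$, the condition $t \in \SN$ being unchanged. For the product rule both interpretations have the form $\Set{t : S(\cdot) \mid t \in \SN}$, so only the retyping via subsumption and the given $\SN$ membership are needed. The arrow rule $\gB_2 \Rightarrow A \preceq \gB_1 \Rightarrow B$ is the one genuinely exploiting contravariance: given $t \in \den{\gB_2 \Rightarrow A}$ and an argument $r \in \den{\gB_1}$, the induction hypothesis on $\gB_1 \preceq \gB_2$ yields $\den{\gB_1} \subseteq \den{\gB_2}$, hence $r \in \den{\gB_2}$ and so $tr \in \den{A}$; the induction hypothesis on $A \preceq B$ then gives $tr \in \den{B}$, establishing $t \in \den{\gB_1 \Rightarrow B}$.

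The substantive cases are the structural axioms $A \preceq S(A)$, $S(S(A)) \preceq S(A)$ and the product-span axiom $\prod_i \Ba_i \preceq S(\prod_i \Ba_i')$, each of which forces membership in a superposition interpretation and therefore requires exhibiting the witnessing reduction to a sum of reducibles. For $A \preceq S(A)$, a term $t \in \den{A}$ is strongly normalising by \textbf{(CR1)}, so by progress (Theorem~\ref{progress}) and subject reduction (Theorem~\ref{subject_reduction}) it reduces to a value of type $S(A)$, which has the shape of a linear combination, while \textbf{(CR2)} guarantees that the reducts stay in $\den{A}$, supplying the required summands. The axiom $S(S(A)) \preceq S(A)$ is handled by flattening: a term of $\den{S(S(A))}$ reduces to a sum whose summands lie in $\den{S(A)}$, and each of those reduces in turn to a sum of $\den{A}$-terms, so composing the reductions places the original term in $\den{S(A)}$. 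The product-span axiom is the delicate one, since the atomic types $\Ba_i$ and $\Ba_i'$ may differ (for instance $\B$ versus $\X$): here I reduce $t$ to a value, which is a superposition of base terms, and observe that each base term is a normal form---hence strongly normalising---and can be retyped at $S(\prod_i \Ba_i')$ by subsumption (applying the very same axiom), so lies in $\den{\prod_i \Ba_i'}$. The linear casting theorem (Theorem~\ref{thm:casting}) illuminates the operational meaning of this basis change, although the formal argument here relies only on subsumption and strong normalisation.

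I expect the main obstacle to be precisely these structural cases, and specifically the bookkeeping required by the ``on all paths'' clause in the definition of $\den{S(A)}$ together with the probabilistic measurement reductions: one must check that the decomposition into reducible summands can be produced uniformly along every reduction path, not merely along a single chosen one, and that the flattening used for $S(S(A)) \preceq S(A)$ respects this quantification. Getting these quantifiers right, while leaning on \textbf{(CR1)}, \textbf{(CR2)}, progress and subject reduction, is where the care is needed; the congruence and transitivity cases become routine once the interpretations are unfolded.
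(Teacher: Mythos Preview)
Your overall structure---induction on the subtyping derivation, with the congruence and transitivity cases handled by the induction hypotheses plus subsumption---matches the paper and is correct. The arrow, product, $S$-congruence, reflexivity and transitivity cases are fine as you describe them.

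The gap is in your ``substantive'' cases $A \preceq S(A)$ and $\prod_i \Ba_i \preceq S(\prod_i \Ba_i')$, where you invoke progress (Theorem~\ref{progress}) to reduce $t$ to a value. Progress is stated only for \emph{closed} terms ($\vdash t : A$), but the interpretations $\den{A}$ contain open terms: by \textbf{(HAB)} every variable $x : A$ lies in $\den{A}$, and neutral open terms such as $\head x$ with $x : \B \times \B$ are normal forms in $\den{\B}$ that are not values and do not reduce. So the step ``reduce to a value, which is a superposition of base terms'' fails on such inputs, and the linear casting theorem (also stated for closed terms) cannot help either.

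The paper's approach sidesteps this entirely by observing that the definition of $\den{S(A)}$ admits the \emph{trivial} witness: for $A \preceq S(A)$, a term $t \in \den{A}$ already satisfies $t \lra[1]^* t$ in zero steps, viewed as the one-summand sum $\sum_{i=1}^{1} r_i$ with $r_1 = t \in \den{A}$; together with $t \in \SN$ from \textbf{(CR1)} and $t : S(A)$ by subsumption, this gives $t \in \den{S(A)}$ directly. No reduction, no progress, no value analysis. For the product-span axiom the paper first isolates an auxiliary lemma showing $\den{\prod_i \Ba_i} = \den{\prod_i \Ba_i'}$: both sets are literally $\{\,t \in \SN \mid t : S(\prod_i \Ba_i)\,\}$, and $S(\prod_i \Ba_i) \approx S(\prod_i \Ba_i')$ by the subtyping axiom itself, so the two interpretations coincide as sets. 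Then $\den{\prod_i \Ba_i'} \subseteq \den{S(\prod_i \Ba_i')}$ follows from the $A \preceq S(A)$ case just established. Your flattening argument for $S(S(A)) \preceq S(A)$ is correct and is exactly what the paper does.
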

\begin{proof}
  By induction on $\preceq$. See Appendix~\ref{app:compatibility}.
  \qed
\end{proof}

We are now ready to prove the adequacy of the interpretation with respect to the typing system.

\begin{restatable}[Adequacy]{theorem}{adequacy}
  \label{thm:adequacy}
  If \( \Gamma \vdash t: A \) and \( \theta \vDash \Gamma \), then \( \theta(t) \in \den{A} \).
\end{restatable}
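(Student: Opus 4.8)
The plan is to proceed by induction on the derivation of $\Gamma \vdash t : A$, where $\theta \vDash \Gamma$ means that $\theta$ is a simultaneous substitution whose domain is that of $\Gamma$ and that sends each $x^B \in \Gamma$ to a term $\theta(x) \in \den{B}$, so that $\theta(t)$ is the term obtained by substituting these reducible values for the free variables of $t$. The base cases are immediate: for \tax{} we have $\theta(x) \in \den{\gB}$ directly from $\theta \vDash \Gamma$; for the constants $\ket 0, \ket 1, \ket +, \ket -$ the terms are values hence in $\SN$ and carry the required $S$-type by subtyping, placing them in the corresponding $\den{\Ba}$; and $\z, \error$ are handled by \textbf{(HAB)} of Lemma~\ref{reducibility_properties}. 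The subtyping rule $\preceq$ follows immediately from Lemma~\ref{compatibility}, while weakening $W$ amounts to restricting $\theta$ and contraction $C$ to feeding a single reducible value to both merged variables before applying the inner hypothesis.

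For the remaining structural rules I would invoke the inductive hypotheses on the premises together with the reducibility toolkit. The introduction rules $S_I^+$ and $S_I^\alpha$ are discharged by \textbf{(LIN1)} and \textbf{(LIN2)} respectively; the two application rules $\Rightarrow_E$ and $\Rightarrow_{ES}$ unfold the definition of the arrow interpretation. A useful simplification is that every rule whose conclusion is a product type—$\times_I$, $\times_{Er}$, $\times_{El}$, the measurement rules $S_E, S_{E_\X}$, and the casting rules—has an interpretation that demands only strong normalisation (the product clause of $\den{\cdot}$ is just $\Set{t : S(\gB_1 \times \gB_2) \mid t \in \SN}$). Thus for these cases it suffices to combine the typing of the conclusion (available from the rule itself, using Theorem~\ref{subject_reduction} where reduction is involved) with the fact that the argument is in $\SN$ by \textbf{(CR1)}, and to check that the destructor, measurement, or cast applied to a strongly normalising input is again strongly normalising—an argument carried by the measure of Definition~\ref{def:measure} via Theorem~\ref{thm:size_properties} for the non-$\beta$ rules. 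The conditionals $\tif, \tif_\X$ are treated as the neutral terms $(\ite{}rs)t$ and handled through \textbf{(CR3)}.

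The principal obstacle is the abstraction rule $\Rightarrow_I$: to show $\lambda x^\gB.\theta(t) \in \den{\gB \Rightarrow A}$ one must prove that $(\lambda x^\gB.\theta(t))\,r \in \den{A}$ for \emph{every} $r \in \den{\gB}$, and this application is neutral, so I would argue by \textbf{(CR3)} after establishing that all one-step reducts lie in $\den{A}$. Here the call-by-base discipline forces a nested induction, taken over $\steps{\theta(t)} + \steps{r}$ (refined by $\size{r}$). The reducts split into four families: reduction inside the body or inside $r$, dispatched by the nested hypothesis together with \textbf{(CR2)}; the call-by-name redex $\rbetan$, yielding $\theta(t)\Substitution{r}{x}$, which equals $\theta[x \mapsto r](t)$ and is reducible by the \emph{outer} hypothesis since $\theta[x \mapsto r] \vDash \Gamma, x^\gB$; the call-by-base redex $\rbetab$, which fires only when $r$ is a base term $b \in \basis$ of type $\M$, again giving $\theta[x \mapsto b](t) \in \den{A}$ by the outer hypothesis because such $b$ is a value hence trivially in $\den{\M}$; and the linear-distribution rules $\rlinr, \rlinscalr, \rlinzr$, which peel a superposed argument apart and are reassembled using \textbf{(LIN1)}, \textbf{(LIN2)} and \textbf{(HAB)} on the nested hypothesis. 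The subtle point, and the one to argue carefully, is that a base-typed $r \in \den{\gB}$ need not be a base term—it may be a genuine superposition—so $\rbetab$ does not fire directly; instead the linearity rules progressively reduce $r$ towards its base constituents, and the nested induction must be set up so that it terminates exactly when those constituents are reached. The companion case is the superposed application $\Rightarrow_{ES}$, where one uses the $\den{S(\cdot)}$ clause to reduce both $\theta(t)$ and $\theta(r)$ to reducible linear combinations and then shows, again via the left and right linearity rules, that the product reduces on all paths to a reducible linear combination in $\den{S(A)}$; Lemma~\ref{linear_combination_SN} guarantees the required strong normalisation throughout. Once the abstraction and superposed-application cases are in place, the theorem follows by assembling the cases of the induction.
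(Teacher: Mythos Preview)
Your proposal is correct and matches the paper's proof: structural induction on the typing derivation, with the abstraction, conditional, and superposed-application cases discharged via \textbf{(CR3)} plus a nested lexicographic induction over reduction length and the measure $\size{\cdot}$. One small point: the casting rules conclude with $S$-types rather than product types, so your shortcut there implicitly uses the collapse $\den{S(\gB_1\times\gB_2)} = \den{\gB_1\times\gB_2}$ (valid, since the product clause already demands only strong normalisation), whereas the paper carries out a more explicit reduct analysis for $\Castl$---both routes work.
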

\begin{proof}
  By induction on the derivation of \( \Gamma \vdash t: A \). See Appendix~\ref{app:adequacy}.
  \qed
\end{proof}

\begin{corollary}[Strong normalisation]
  If \( \Gamma \vdash t: A \), then \( t \in \SN \).
\end{corollary}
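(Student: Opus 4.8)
The plan is to obtain strong normalisation as an immediate corollary of adequacy (Theorem~\ref{thm:adequacy}) together with property \textbf{(CR1)} of Lemma~\ref{reducibility_properties}. Adequacy is stated for $\theta(t)$ where $\theta$ validates $\Gamma$, while we want a conclusion about $t$ itself, so the only thing to arrange is a suitable choice of $\theta$. The standard way to close this gap is to instantiate $\theta$ with the identity substitution on the variables of $\Gamma$.

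Concretely, first I would take $\theta$ to be the identity substitution, i.e.\ $\theta(x) = x$ for every declaration $x^B$ in $\Gamma$. To invoke adequacy I must verify that $\theta \vDash \Gamma$, which reduces to checking that $x \in \den{B}$ for each $x^B$ in $\Gamma$. This is precisely the variable clause of property \textbf{(HAB)} in Lemma~\ref{reducibility_properties}, which asserts $x \in \den{B}$ whenever $x : B$. Hence the identity substitution validates $\Gamma$.

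With $\theta \vDash \Gamma$ in hand, Theorem~\ref{thm:adequacy} applied to $\Gamma \vdash t : A$ gives $\theta(t) \in \den{A}$. Since $\theta$ is the identity we have $\theta(t) = t$, so $t \in \den{A}$, and property \textbf{(CR1)} then yields $t \in \SN$, as required.

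I do not expect any real obstacle at this final stage: the argument is the routine ``closing under the identity substitution'' step. All the genuine difficulty has been front-loaded into Theorem~\ref{thm:adequacy}, whose inductive proof must treat every typing rule, and into Lemma~\ref{reducibility_properties}, where \textbf{(CR1)} and \textbf{(HAB)} are established by simultaneous induction on the type. The only point needing a moment's care here is confirming that $\theta \vDash \Gamma$ holds for the identity, which is exactly the habitation of each context type by its own variable, i.e.\ the \textbf{(HAB)} property.
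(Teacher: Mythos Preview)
Your proposal is correct and matches the paper's proof essentially line for line: the paper also instantiates $\theta$ with the identity substitution $\fId$, invokes \textbf{(HAB)} to obtain $\fId \vDash \Gamma$, applies adequacy to get $\fId(t) = t \in \den{A}$, and concludes via \textbf{(CR1)}.
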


\begin{proof}
  By Theorem~\ref{thm:adequacy}, if \( \theta \vDash \Gamma \), then \( \theta(t) \in \den{A} \). 
  By Lemma~\ref{reducibility_properties}.CR1, we have \( \den{A} \subseteq \SN \).
  Moreover, by Lemma~\ref{reducibility_properties}.HAB, we know that \( \fId \vDash \Gamma \).
  Therefore, \( \fId(t) = t \in \SN \).
  \qed
\end{proof}

\section{An Arbitrary Number of Bases}
\label{sec:arbitrary_bases}
\subsection{Extending \OC with Multiple Distinct Bases}
Up to this point, we have introduced \OC with two measurement bases: the computational basis $\B$ and an alternative basis $\X$. However, the design of the system do not need to be restricted to this choice. In fact, the calculus naturally generalises to an arbitrary collection of orthonormal bases.

Let $\B_i$ for $i = 1, \dots, n$ denote a set of alternative bases, where each $\B_i = \{ \ket{\uparrow_i}, \ket{\downarrow_i} \}$ is defined by a change of basis from the computational basis: $\ket{\uparrow_i} = \alpha_{i1} \ket{0} + \beta_{i1} \ket{1}$ and $\ket{\downarrow_i} = \alpha_{i2} \ket{0} + \beta_{i2} \ket{1}$ for some $\alpha_{i1}, \alpha_{i2}, \beta_{i1}, \beta_{i2} \in \mathbb{C}$. We assume that all $\B_i$ are distinct from the computational basis $\B$ (which retains its special role), and that one of them may coincide with $\X$.

We extend the grammar of atomic types as follows:
\[
\Ba := \B \mid \B_i
\]
and the grammar of terms with corresponding constants and operations:
\[
t ::= \dots \mid \ket{\uparrow_i} \mid \ket{\downarrow_i} \mid \itei{}tt \mid \pimi t
\]

Typing rules are extended in the natural way:
\[
  \infer[{\ket{\uparrow_i}}]
  {\vdash \ket{\uparrow_i} : \B_i}
  {}
  \quad
  \infer[{\ket{\downarrow_i}}]
  {\vdash \ket{\downarrow_i} : \B_i}
  {}
  \qquad
  \infer[S_{E_{\B_i}}]
  {\Gamma \vdash \pimi t : \B_i^m \times S\left(\prod_{i = m + 1}^{n} \Ba_i\right)}
  {\Gamma \vdash t : S\left(\prod_{i = 1}^{n} \Ba_i\right) & 0 < m \leq n}
\]
\[
  \infer[\tif_{\B_i}]
  {\Gamma \vdash \itei{}tr : \B_i \Rightarrow A}
  {\Gamma \vdash t : A & \Gamma \vdash r : A}
  \qquad
  \infer[\Cast_{\B_i}]
  {\Gamma \vdash \Cast t : S(\B)}
  {\Gamma \vdash t : \B_i}
\]

The operational semantics is similarly extended with the following rules:
\begin{align*}
  \itei{\ket{\uparrow_i}}{t}{r} &\lra t &\rifuparrow\\
  \itei{\ket{\downarrow_i}}{t}{r} &\lra r &\rifdownarrow\\
  \Cast \ket{\uparrow_i}   &\lra  \alpha_{i1} \cdot \ket{0} + \beta_{i1} \cdot \ket{1} &\rcastu\\
  \Cast \ket{\downarrow_i} &\lra  \alpha_{i2} \cdot \ket{0} + \beta_{i2} \cdot \ket{1} &\rcastd\\
  \pimi \left( \sum_{j=1}^{e} \alpha_j \cdot \bigotimes_{h=1}^{n} \ket{b_{hj}} \right) &\lra[p_k] \ket{k} \otimes \ket{\phi_k} &\rproji\\
  \pimi \error &\lra \error &\rerrorpiierror
\end{align*}
The definition of $\ket{k}$ and $\ket{\phi_k}$ is analogous to the $\pi$ and $\pix$ cases.

Such flexibility is crucial for modelling quantum procedures that involve intermediate measurements in different bases---for example, variants of phase estimation or error-correction protocols.

A subtle issue arises when a single quantum state belongs to more than one basis. For instance, consider a basis $\B_1 = \{ \ket{0}, -\ket{1} \}$. In this case, the vector $\ket{0}$ may be typed both as $\B$ and $\B_1$, but $\B$ and $\B_1$ are not subtypes of each other. Therefore, it is not safe to allow a term like $\ket{0}$ to be freely assigned both types: one must commit to a single basis when typing such terms.

To avoid ambiguity, in this first extension we assume that no basis $\B_i$ shares any element (or scalar multiple of an element) with the computational basis $\B$ or with any other $\B_j$ for $j \neq i$. This restriction simplifies reasoning and ensures disjointness at the type level.

However, in the next section, we revisit this restriction and propose an alternative typing discipline that allows overlapping bases, by making explicit the relationship between different types for the same state.

All the correctness properties from Section~\ref{sec:correctness} generalise straightforwardly to this extended system, and thus, we omit their proofs.

\subsection{Extending Subtyping Across Overlapping Bases}
\label{sec:interbasis_cast}

Let us return to our earlier example, where $\B_1 = \{ \ket{0}, -\ket{1} \}$. In this case, we would like $\ket{0}$ to be typable both as $\B$ and as $\B_1$. To make this possible, we introduce a new type $\Q_{\ket{0}}$, representing the one-dimensional subspace spanned by the vector $\ket{0}$, and require that $\Q_{\ket{0}} \preceq \B$ and $\Q_{\ket{0}} \preceq \B_1$. Such types $\Q_{\ket{\psi}}$ will not correspond to measurement bases, but rather to generators of linear subspaces.

Recall that the type $S(A)$ is interpreted as the linear span of the set denoted by $A$. Consequently, $S(\Q_{\ket{0}})$ is a strict subspace of $S(\B)$, meaning in particular that $S(\B) \npreceq S(\Q_{\ket{0}})$. This distinction is crucial: throughout the paper we rely on the fact that $S(\B) \approx S(\X)$, a property that does not hold in general when introducing $\Q_{\ket{\psi}}$ types. To maintain soundness and consistency, we must refine the subtyping relation accordingly.

Let $\linearspaces$ denote the class of linear space generators, indexed up to $m$. Since there are $n$ alternative bases besides the computational basis $\B$, and each base consists of two orthogonal qubit states, we allow $m$ to range from $3$ to $2n + 1$ to accommodate all possible subspace generators:
\begin{align*}
  \Q	& := \Q_0 \mid \ldots \mid \Q_m	& \text{Generators of linear spaces (\linearspaces)} \\
  \Ba  & := \B \mid \B_1 \mid \ldots \mid \B_n & \text{Atomic types (\atomictypes)} \\
  \M  & := \Ba \mid \M \times \M & \text{Measurement bases (\bqtypes)} \\
  \G  & := \Q \mid \Ba & \text{Single-qubit types (\bqgeneraltypes)} \\
  \gB & := \G \mid S(\gB) \mid \gB \times \gB  			& \textrm{Qubit types (\qtypes)}								
\end{align*}

As explained, the subtyping relation must be refined to reflect that $S(\Q_{\ket{\psi}})$ may be a strict subspace of $S(\Ba)$. For arbitrary atomic bases $\Ba_1$ and $\Ba_2$, we still have $S(\Ba_1) \approx S(\Ba_2)$; however, this is not the case for spaces generated by $\Q_{\ket{\psi}}$ and $\Q_{\ket{\phi}}$ in general.
To accommodate these distinctions, we introduce the following subtyping rules:
\[
  \infer{\prod\limits_{i=0}^{a}\left(\left(\prod\limits_{j = 0}^{b_i} \Ba_{ij}\right)\times\left(\prod\limits_{k = 0}^{c_i} \Q_{ik}\right)\right) \preceq S\left(\prod\limits_{i=0}^{a}\left(\left(\prod\limits_{j = 0}^{b_i} \Ba_{ij}^\prime\right)\times\left(\prod\limits_{k = 0}^{c_i} \Q_{ik}\right)\right)\right)}{}
\]
\[
  \infer{\Q_{\ket{0}} \preceq \B}{}
  \qquad
  \infer{\Q_{\ket{1}} \preceq \B}{}
  \qquad
  \infer{\Q_{\ket{\psi}} \preceq \B_i}{\ket{\psi} \in \B_i}
\]

These rules allow each $\Q_{\ket{\psi}}$ to be used in any basis containing $\ket{\psi}$, preserving semantic distinctions across bases. This enables precise typing of states that appear in multiple contexts and supports richer quantum programs involving basis reuse.
This extended system preserves the core properties proved for \OC in Section~\ref{sec:correctness}. The adaptation is straightforward.

\begin{example}[Basis-sensitive choice]
  The type $\Q_{\ket{0}}$, introduced in Section~\ref{sec:interbasis_cast}, allows a single quantum state to be treated as belonging to multiple bases. For instance, $\ket{0}$ belongs to both the computational basis $\B = \{\ket{0}, \ket{1}\}$ and the alternative basis $\mathbb Z = \{\ket{0}, -\ket{1}\}$, and we have $\Q_{\ket{0}} \preceq \B$ and $\Q_{\ket{0}} \preceq \mathbb Z$.

  This enables the same state to be interpreted differently in different contexts. Suppose we have:
  $\textbf{f} : \B \Rightarrow S(\B)$ and $\textbf{g} : \mathbb Z \Rightarrow S(\B)$.
  Then both
    $\textbf{choice}_{\B} = \lambda x^{\Q_{\ket{0}}}. \textbf{f}(x)$
    and
    $\textbf{choice}_{\mathbb Z} = \lambda x^{\Q_{\ket{0}}}. \textbf{g}(x)$
  are well-typed. We can also define a basis-sensitive choice controlled by a qubit:
  \(
    \textbf{choice} = \lambda y^{\B}. \lambda x^{\Q_{\ket{0}}}. \ite{y}{\textbf{f}(x)}{\textbf{g}(x)}
  \).
  This term, of type $\B \Rightarrow \Q_{\ket{0}} \Rightarrow S(\B)$, illustrates how $\Q_{\ket{\psi}}$ types enable basis-dependent behaviour without unsafe coercions or ad hoc annotations.
  While the example is simple, it demonstrates how the extended subtyping system supports flexible quantum program structuring, paving the way for optimisations based on contextual basis interpretation.
\end{example}

\section{Conclusion and Future Work}
\label{sec:conclusion}
We have introduced \OC, a quantum lambda-calculus that supports control over multiple measurement bases and explicit typing for quantum states shared across them. Through a range of small illustrative examples we have shown that tracking duplicability relative to distinct bases enables concise and compositional encodings of quantum procedures. This expressiveness enables facilitates modular descriptions of basis-sensitive constructs, such as conditional control and Hadamard-based branching, while preserving key meta-theoretical properties like strong normalisation.

While \OC is presented as a proof-of-concept, it opens avenues for exploring richer type disciplines that more closely mirror quantum semantics. 
Compared to previous approaches that rely on a single fixed basis, our calculus enables direct reasoning about transformations and measurements involving incompatible bases, without resorting to meta-level annotations. The fine-grained typing system not only enforces safety properties like strong normalisation but also provides a framework for understanding and structuring quantum algorithms in a modular, basis-sensitive way.

As future work, we aim to provide a categorical model for \OC. The foundational results for Lambda-S and Lambda-S$_1$~\cite{DiazcaroMalherbe2020,DiazcaroMalherbe2022,DiazcaroMalherbe2024} already establish a connection between quantum control and adjunctions between Cartesian and monoidal categories. \OC enriches this picture by integrating multiple measurement bases as first-class citizens in the type system. A natural direction is to explore categorical semantics where each measurement basis corresponds to a distinct comonadic modality, and cast operations are interpreted as morphisms connecting these modalities or embedding subspaces into larger measurement spaces, as formalised by the extended subtyping relation.

This fits into a broader research programme toward a computational quantum logic, as outlined in~\cite{DiazcaroCIE2025}, which aims to provide a Curry-Howard-Lambek-style correspondence for quantum computation. In this programme, Lambda-S represents the computational side of this correspondence, while a linear proof language---such as \(\mathcal{L}^{\mathcal{S}}\)~\cite{DiazcaroDowekMSCS24}---captures the logical side. Preliminary results suggest that quantum computation could be understood as a structural dual to intuitionistic linear logic, with semantic models built on adjunctions between symmetric monoidal categories and their Cartesian or additive counterparts. Extending these models to accommodate the richer structure of \OC, and even exploring connections to graphical calculi such as ZX, may yield new insights into both the foundations and practical implementation of quantum programming languages.

\OC opens up a new perspective on how the structure of quantum computation can be captured within a typed lambda-calculus, supporting fine control over duplication, measurement, and basis transition---all of which are essential ingredients in the development of expressive and robust quantum programming formalisms.

\subsubsection{\ackname}
This work is supported by the European Union through the MSCA SE project QCOMICAL (Grant Agreement ID: 101182520), by the Plan France 2030 through the PEPR integrated project EPiQ (ANR-22-PETQ-0007), and by the Uruguayan CSIC grant 22520220100073UD.

\bibliographystyle{splncs04}
\bibliography{biblio}

\appendix

\section{Subject reduction}

\subsection{Generation lemma}\label{app:generation}
We write $\fT(\Gamma)$ for the set of types in the context $\Gamma$, that is,
$\fT(\Gamma) = \Set{\gB \mid x^\gB \in \Gamma}$.

\begin{lemma}[Generation]
  \label{generation}
  If $\Gamma \vdash t : A$, then:
  \begin{enumerate}
    \item\label{generation_1} If $t = x$, then $x^\gB \in \Gamma$, $\gB \preceq A$, and $\fT(\Gamma
      \setminus \{x^\gB\}) \subseteq \bqtypes$.

    \item\label{generation_2} If $t = \z$, then there exists $C$ such that $\Gamma \vdash \z :
      S(C)$, $S(C) \preceq A$, and $\fT(\Gamma) \subseteq \bqtypes$.

    \item\label{generation_3} If $t = \ket{0}$ or $t = \ket{1}$, then $\B \preceq A$ and
      $\fT(\Gamma) \subseteq \bqtypes$.

    \item\label{generation_5} If $t = \ket{+}$ or $t = \ket{-}$, then $\X \preceq A$ and
      $\fT(\Gamma) \subseteq \bqtypes$.

    \item\label{generation_7} If $t = \lambda x^{\gB}.r$, then $\Gamma, x^\gB \vdash r : C$ and
      $\gB \Rightarrow C \preceq A$.

    \item\label{generation_8} If $t = rs$, then either:
      \begin{itemize}
	\item $\Gamma = \Gamma_1, \Gamma_2, \Xi$, with $\fT(\Xi) \subseteq
	  \bqtypes$, $\Gamma_1, \Xi \vdash r : \gB \Rightarrow A$, and
	  $\Gamma_2, \Xi \vdash s : \gB$, or
	\item $\Gamma_1, \Xi \vdash r : S(\gB \Rightarrow C)$, $\Gamma_2, \Xi
	  \vdash s : S(\gB)$, and $S(C) \preceq A$.
      \end{itemize}

    \item\label{generation_9} If $t = r + s$, then $\Gamma = \Gamma_1, \Gamma_2, \Xi$, with $\Xi,
      \Gamma_1 \vdash r : C$, $\Xi, \Gamma_2 \vdash s : C$, $\fT(\Xi) \subseteq
      \bqtypes$, and $S(C) \preceq A$.

    \item\label{generation_10} If $t = \alpha.r$, then $\Gamma \vdash r : C$ and $S(C) \preceq A$.

    \item\label{generation_11} If $t = r \otimes s$, then $\Gamma = \Gamma_1, \Gamma_2, \Xi$, with
      $\Xi, \Gamma_1 \vdash r : \gB_1$, $\Xi, \Gamma_2 \vdash s : \gB_2$,
      $\fT(\Xi) \subseteq \bqtypes$, and $\gB_1 \times \gB_2 \preceq A$.

    \item\label{generation_12} If $t = \pim r$, then $\Gamma \vdash r : S\left(\prod_{i=1}^{n}
      \Ba_i\right)$ and $\B^m \times S\left(\prod_{i=m+1}^{n} \Ba_i\right)
      \preceq A$ for some $0 < m \leq n$.

    \item\label{generation_13} If $t = \pimx r$, then $\Gamma \vdash r : S\left(\prod_{i=1}^{n}
      \Ba_i\right)$ and $\X^m \times S\left(\prod_{i=m+1}^{n} \Ba_i\right)
      \preceq A$ for some $0 < m \leq n$.

    \item\label{generation_14} If $t = \ite{r}{s_1}{s_2}$, then $\Gamma \vdash r : C$, $\Gamma
      \vdash s_1 : C$, and $\B \Rightarrow C \preceq A$.

    \item\label{generation_15} If $t = \itex{r}{s_1}{s_2}$, then $\Gamma \vdash r : C$, $\Gamma
      \vdash s_1 : C$, and $\X \Rightarrow C \preceq A$.

    \item\label{generation_16} If $t = \Castl r$, then either:
      \begin{itemize}
	\item $\Gamma \vdash r : S(\gB \times S(\Phi))$, with $\Phi \neq
	  S(\Phi')$ and $S(\gB \times \Phi) \preceq A$, or
	\item $\Gamma \vdash r : \X$ and $S(\B) \preceq A$, or
	\item $\Gamma \vdash r : \B$ and $\B \preceq A$.
      \end{itemize}

    \item\label{generation_17} If $t = \Castr r$, then either:
      \begin{itemize}
	\item $\Gamma \vdash r : S(S(\gB) \times \Phi)$, with $\gB \neq
	  S(\gB')$ and $S(\gB \times \Phi) \preceq A$, or
	\item $\Gamma \vdash r : \X$ and $S(\B) \preceq A$, or
	\item $\Gamma \vdash r : \B$ and $\B \preceq A$.
      \end{itemize}

    \item\label{generation_18} If $t = \head r$, then $\Gamma \vdash r : \Ba \times \M$ and $\Ba
      \preceq A$.

    \item\label{generation_19} If $t = \tail r$, then $\Gamma \vdash r : \Ba \times \M$ and $\M
      \preceq A$.
  \end{enumerate}
\end{lemma}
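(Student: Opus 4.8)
The plan is to proceed by induction on the derivation of $\Gamma \vdash t : A$, performing case analysis on the last rule applied. Because the system is not syntax-directed, for a fixed $t$ that rule is either the introduction/elimination rule(s) whose conclusion has the shape of $t$, or one of the three administrative rules---subtyping $(\preceq)$, weakening $(W)$, contraction $(C)$---which preserve the top-level constructor of $t$ (up to a variable-for-variable renaming in the case of $C$). In the structural cases the desired clause is read directly off the rule; in the administrative cases the induction hypothesis supplies the clause for the subderivation, which we then adjust.

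First I would dispatch the structural cases. When the last rule is the one matching $t$ (e.g.\ $\Rightarrow_I$ for $t = \lambda x^{\gB}.r$; the two application rules $\Rightarrow_E$ and $\Rightarrow_{ES}$ for $t = rs$; the axiom for $t = x$; the three casting rules $\Castl$, $\Cast_{\X}$, $\Cast_{\B}$ for $t = \Castl r$; and so on), the required premises are exactly the hypotheses of that rule, and the trailing subtyping side-condition ``$X \preceq A$'' holds by reflexivity. The two application rules produce the two alternatives of clause~\ref{generation_8}, and the three cast rules produce the three alternatives of clauses~\ref{generation_16} and~\ref{generation_17}; here one only checks that the side conditions $\gB \neq S(\gB')$ and $\Phi \neq S(\Phi')$ match.

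Next, the subtyping case: if the last rule is $(\preceq)$ with premises $\Gamma \vdash t : A'$ and $A' \preceq A$, I apply the induction hypothesis to $\Gamma \vdash t : A'$ and lift each conclusion from $A'$ to $A$. Every clause ends in an assertion ``$X \preceq A'$'', which becomes ``$X \preceq A$'' by transitivity of $\preceq$. The one subtlety is the first alternative of clause~\ref{generation_8}, which records an \emph{exact} type tied to $A$, namely $r : \gB \Rightarrow A$: there I push the subtyping inward, using $\gB \Rightarrow A' \preceq \gB \Rightarrow A$ (covariance of the codomain, reflexivity of the domain) and the $(\preceq)$ rule to re-derive $r : \gB \Rightarrow A$, re-establishing that alternative. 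The exact-type records in clauses~\ref{generation_16} and~\ref{generation_17} concern the argument of the cast and are not tied to $A$, so for those only the trailing $\preceq A$ clause is updated. No appeal to the inversion properties of Lemma~\ref{lem:preceq_properties} is needed: transitivity together with the monotonicity subtyping rules suffices.

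The remaining, and most delicate, cases are weakening and contraction, where the context-bookkeeping is the main obstacle. For $(W)$ the term and type are unchanged and only a base-typed variable $x^{\M}$ with $\M \in \bqtypes$ is added; I apply the induction hypothesis and place $x^{\M}$ in the shared component $\Xi$ of every context split, which preserves the splits and all the $\fT(\cdot) \subseteq \bqtypes$ side-conditions. For $(C)$ the conclusion is $\Gamma, x^{\M} \vdash t[y/x] : A$ obtained from $\Gamma, x^{\M}, y^{\M} \vdash t : A$; since $t[y/x]$ has the same top-level constructor as $t$, I apply the induction hypothesis to $t$ and transport the characterisation through the renaming. The subtle point arises for the binary clauses (application, sum, tensor), where $x$ and $y$ may lie in \emph{different} components of the split $\Gamma_1, \Gamma_2, \Xi$: merging them forces the surviving variable into $\Xi$, and one must check, case-by-case according to whether each of $x,y$ sits in $\Gamma_1$, $\Gamma_2$, or $\Xi$, that the two subderivations---after renaming---still type their respective subterms over the adjusted contexts. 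Because $x^{\M}$ and $y^{\M}$ are base-typed, moving them into $\Xi$ is legitimate and keeps every $\bqtypes$-condition intact. I expect this reconciliation of the linear context splits with contraction, rather than any subtyping reasoning, to be the crux of the proof.
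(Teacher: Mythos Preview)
Your proposal is correct and follows essentially the same approach as the paper's own proof, which is a one-sentence sketch (``by structural induction on the typing derivation, taking into account that only subtyping, weakening, and contraction are non-syntax-directed''). Your write-up simply fleshes out the details the paper leaves implicit---in particular the pushing of subtyping into the codomain for the first alternative of clause~\ref{generation_8}, and the reconciliation of linear context splits under contraction---so there is nothing to add.
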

\begin{proof}
  By structural induction on the typing derivation for $t$. The cases follow by
  inspection of the corresponding typing rules, taking into account that only the
  subtyping, weakening, and contraction rules are non-syntax-directed.
  \qed
\end{proof}

\begin{corollary}[Generalised generation lemma for sums]
  \label{generation_9_generalized}
  If $\Gamma \vdash \sum_{i=1}^{n} t_i : A$ with $n \geq 2$, then there exist contexts $\Gamma_1, \ldots, \Gamma_n$ and $\Xi$ such that $\Gamma = \Gamma_1, \ldots, \Gamma_n, \Xi$, and for all $1 \leq i \leq n$ we have $\Xi, \Gamma_i \vdash t_i : C$, $\fT(\Xi) \subseteq \bqtypes$, and $S(C) \preceq A$.
\end{corollary}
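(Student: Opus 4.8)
The plan is to prove the statement by induction on the number $n$ of summands, using Lemma~\ref{generation}.\ref{generation_9} (the binary case of the generation lemma) as the workhorse and exploiting that $+$ is associative and commutative, so that any regrouping of the summands is available. For the base case $n = 2$ there is nothing to do: the conclusion is verbatim Lemma~\ref{generation}.\ref{generation_9}, which from $\Gamma \vdash t_1 + t_2 : A$ yields $\Gamma = \Gamma_1, \Gamma_2, \Xi$ with $\Xi, \Gamma_i \vdash t_i : C$ for $i \in \Set{1,2}$, $\fT(\Xi) \subseteq \bqtypes$, and $S(C) \preceq A$.

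For the inductive step, assume the claim for $n$ and suppose $\Gamma \vdash \sum_{i=1}^{n+1} t_i : A$. Reading this, by commutativity and associativity of $+$, as $\Gamma \vdash (\sum_{i=1}^{n} t_i) + t_{n+1} : A$, I would apply Lemma~\ref{generation}.\ref{generation_9} to peel off the last summand: there are $\Delta, \Gamma_{n+1}, \Xi_0$ with $\Gamma = \Delta, \Gamma_{n+1}, \Xi_0$, $\fT(\Xi_0) \subseteq \bqtypes$, $S(C) \preceq A$, and both $\Xi_0, \Delta \vdash \sum_{i=1}^{n} t_i : C$ and $\Xi_0, \Gamma_{n+1} \vdash t_{n+1} : C$. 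Applying the induction hypothesis to the first judgment splits $\Xi_0, \Delta = \Gamma_1, \ldots, \Gamma_n, \Xi_1$ with $\Xi_1, \Gamma_i \vdash t_i : C'$, $\fT(\Xi_1) \subseteq \bqtypes$, and $S(C') \preceq C$. Rearranging the variable bindings then gives $\Gamma = \Gamma_1, \ldots, \Gamma_{n+1}, \Xi_1$, which suggests taking the final shared context to be $\Xi := \Xi_1$.

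Two reconciliations remain. The first, the \emph{type} mismatch, is easy: the summands carry types $C'$ (for $i \leq n$) and $C$ (for $i = n+1$), but the rule $C' \preceq S(C')$ together with $S(C') \preceq C$ gives $C' \preceq C$ by transitivity, so the subsumption rule $(\preceq)$ upgrades each $\Xi_1, \Gamma_i \vdash t_i : C'$ to $\Xi_1, \Gamma_i \vdash t_i : C$, making $C$ the uniform type of all $n+1$ summands, with $S(C) \preceq A$ as required. The second reconciliation, the \emph{context} merge, is the part I expect to be the main obstacle: the two derivations use different shared contexts, $\Xi_1$ for $i \leq n$ and $\Xi_0$ for $i = n+1$, so to rederive $\Xi_1, \Gamma_{n+1} \vdash t_{n+1} : C$ I must account for a base-typed variable that is free in $t_{n+1}$ and lies in $\Xi_0$ but that the induction hypothesis may have routed into some $\Gamma_j$ with $j \leq n$ rather than into $\Xi_1$. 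The key observation that resolves this is that every variable of $\Xi_0$ and $\Xi_1$ is base-typed, hence duplicable: the weakening rule $W$ adds any such variable to a judgment that lacks it, and the contraction rule $C$ removes the resulting duplicates. I would therefore normalise the partition by moving all base-typed variables out of the individual $\Gamma_i$ into a single shared $\Xi$, re-deriving each $\Xi, \Gamma_i \vdash t_i : C$ with $\Gamma_i$ now holding only the uniquely partitioned linear variables. Verifying that this normalisation of base-typed variables is consistent with the global partition $\Gamma = \Gamma_1, \ldots, \Gamma_{n+1}, \Xi$ is the only genuinely delicate bookkeeping in the argument; once it is settled, the conclusion follows.
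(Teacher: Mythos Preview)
Your proposal is correct and follows essentially the same approach as the paper: induction on $n$, using Lemma~\ref{generation}.\ref{generation_9} to peel off one summand, applying the induction hypothesis to the rest, reconciling the two inner types via $C' \preceq S(C') \preceq C$ and rule~$(\preceq)$, and reconciling the two shared base-typed contexts via weakening. The only presentational difference is that where you propose to normalise by pushing all base-typed variables into a single $\Xi$, the paper instead does a two-case split on whether $\Xi_0 \subseteq \Xi_1$ or $\Xi_1 \subseteq \Xi_0$ (which is really the same bookkeeping, and arguably your normalisation is the cleaner way to phrase it).
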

\begin{proof}
  By induction on $n$.

  By Lemma~\ref{generation}.\ref{generation_9}, the result holds for $n=2$.
  Suppose the result holds for $n-1$ terms.  
  Let $\Gamma \vdash (\sum_{i=1}^{n-1} t_i) + t_n : A$.  
  By Lemma~\ref{generation}.\ref{generation_9}, we have:
  \[
    \Gamma = \Gamma_1', \Gamma_n, \Xi,\quad
    \fT(\Xi) \subseteq \bqtypes,\quad
    \Xi, \Gamma_1' \vdash \sum_{i=1}^{n-1} t_i : C,\quad
    \Xi, \Gamma_n \vdash t_n : C,\quad
    S(C) \preceq A
  \]
  By the induction hypothesis, there exist contexts $\Gamma_1, \ldots, \Gamma_{n-1}$ and $\Xi'$ such that
  \[
    \Xi, \Gamma_1' = \Gamma_1, \ldots, \Gamma_{n-1}, \Xi',\quad
    \fT(\Xi') \subseteq \bqtypes,\quad
    \Xi', \Gamma_i \vdash t_i : D,\quad
    S(D) \preceq C
  \]
  Since $D \preceq S(D)$ and $S(D) \preceq C$, transitivity gives $D \preceq C$, so $\Xi', \Gamma_i \vdash t_i : C$ for all $1 \leq i < n$.

  We distinguish two cases:
  \begin{itemize}
    \item If $\Xi' = \Xi_1, \Xi_2$ with $\Xi = \Xi_1$ and $\Xi_2 \subseteq \Gamma_1$, then weakening gives $\Xi', \Gamma_n \vdash t_n : C$, completing the argument.
    \item If $\Xi = \Xi_1, \Xi_2$ with $\Xi' = \Xi_1$ and $\Xi_2 \subseteq \Gamma_1', \ldots, \Gamma_{n-1}'$, then weakening gives $\Xi, \Gamma_i \vdash t_i : C$ for all $i$, and the result follows.
      \qed
  \end{itemize}
\end{proof}

\begin{corollary}[Generalised generation lemma for products]
  \label{generation_11_generalized}
  If $\Gamma \vdash \bigotimes_{i=1}^{n} t_i : A$ with $n \geq 2$, then there exist contexts $\Gamma_1, \ldots, \Gamma_n$ and $\Xi$ such that $\Gamma = \Gamma_1, \ldots, \Gamma_n, \Xi$, and for all $1 \leq i \leq n$ we have $\Xi, \Gamma_i \vdash t_i : \gB_i$, $\fT(\Xi) \subseteq \bqtypes$, and $\prod_{i=1}^{n} \gB_i \preceq A$.
\end{corollary}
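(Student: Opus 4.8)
The plan is to proceed by induction on $n$, closely mirroring the proof of Corollary~\ref{generation_9_generalized}, with the generation case for sums replaced by the one for products. For the base case $n = 2$, the statement is exactly Lemma~\ref{generation}.\ref{generation_11}: writing $\bigotimes_{i=1}^{2} t_i = t_1 \otimes t_2$, that case already supplies the splitting $\Gamma = \Gamma_1, \Gamma_2, \Xi$ with $\fT(\Xi) \subseteq \bqtypes$, the judgements $\Xi, \Gamma_i \vdash t_i : \gB_i$, and the subtyping $\gB_1 \times \gB_2 \preceq A$.

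For the inductive step, I would write $\bigotimes_{i=1}^{n} t_i = \left(\bigotimes_{i=1}^{n-1} t_i\right) \otimes t_n$, using associativity of the product, and apply Lemma~\ref{generation}.\ref{generation_11}. This yields a splitting $\Gamma = \Gamma_1', \Gamma_n, \Xi$ with $\fT(\Xi) \subseteq \bqtypes$, judgements $\Xi, \Gamma_1' \vdash \bigotimes_{i=1}^{n-1} t_i : \gB$ and $\Xi, \Gamma_n \vdash t_n : \gB_n$, and the subtyping $\gB \times \gB_n \preceq A$. Applying the induction hypothesis to the first judgement then gives a further splitting $\Xi, \Gamma_1' = \Gamma_1, \ldots, \Gamma_{n-1}, \Xi'$ with $\fT(\Xi') \subseteq \bqtypes$, judgements $\Xi', \Gamma_i \vdash t_i : \gB_i$ for each $1 \leq i \leq n-1$, and $\prod_{i=1}^{n-1} \gB_i \preceq \gB$.

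For the subtyping conclusion, I would combine $\prod_{i=1}^{n-1} \gB_i \preceq \gB$ with $\gB_n \preceq \gB_n$ (reflexivity) through the product subtyping rule to obtain $\prod_{i=1}^{n} \gB_i \preceq \gB \times \gB_n$, and then conclude $\prod_{i=1}^{n} \gB_i \preceq A$ by transitivity with $\gB \times \gB_n \preceq A$. Note that, unlike the sum case of Corollary~\ref{generation_9_generalized}, the individual factor types $\gB_i$ need not be unified to a common type; each factor retains its own $\gB_i$, so no additional subsumption step is required in this part.

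The main bookkeeping obstacle, exactly as in Corollary~\ref{generation_9_generalized}, is reconciling the two shared contexts $\Xi$ and $\Xi'$, both consisting only of base-typed variables. I would split into the two symmetric cases according to whether $\Xi'$ extends $\Xi$ or $\Xi$ extends $\Xi'$, and use the weakening rule to move the discrepancy into the appropriate $\Gamma_i$, so that a single common $\Xi$ is recovered under which every $\Xi, \Gamma_i \vdash t_i : \gB_i$ holds. Since the product case requires no type-equalisation across the factors, this context reconciliation is if anything lighter here than in the sum case, and it is the only genuinely delicate point of the argument.
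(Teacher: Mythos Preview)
Your proposal is correct and follows essentially the same approach as the paper's proof: induction on $n$, base case by Lemma~\ref{generation}.\ref{generation_11}, inductive step by splitting off $t_n$, applying the induction hypothesis to the remaining tensor, chaining the subtyping via the product rule and transitivity, and finally reconciling $\Xi$ and $\Xi'$ by weakening exactly as in Corollary~\ref{generation_9_generalized}. Your observation that no type-equalisation across the factors is needed (unlike the sum case) is also spot on.
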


\begin{proof}
  By induction on $n$.

  For $n=2$, the result follows from Lemma~\ref{generation}.\ref{generation_11}.  
  Suppose it holds for $n-1$. Let $\Gamma \vdash (\bigotimes_{i=1}^{n-1} t_i) \otimes t_n : A$.  
  By Lemma~\ref{generation}.\ref{generation_11}, we have:
  \[
    \Gamma = \Gamma_1', \Gamma_n, \Xi,\quad
    \fT(\Xi) \subseteq \bqtypes,\quad
    \Xi, \Gamma_1' \vdash \bigotimes_{i=1}^{n-1} t_i : \gB',\quad
    \Xi, \Gamma_n \vdash t_n : \gB_n,\quad
    \gB' \otimes \gB_n \preceq A
  \]
  By induction, there exist $\Gamma_1, \ldots, \Gamma_{n-1}$ and $\Xi'$ such that
  \[
    \Xi, \Gamma_1' = \Gamma_1, \ldots, \Gamma_{n-1}, \Xi',\quad
    \fT(\Xi') \subseteq \bqtypes,\quad
    \Xi', \Gamma_i \vdash t_i : \gB_i,\quad
    \prod_{i=1}^{n-1} \gB_i \preceq \gB'
  \]
  By transitivity of $\preceq$, we have $\prod_{i=1}^{n} \gB_i \preceq A$.

  As in Corollary~\ref{generation_9_generalized}, the two possible splittings of $\Xi$ and $\Xi'$ can be handled via weakening to complete the result.
  \qed
\end{proof}
\begin{corollary}
  \label{b_void_context}
  If $\vdash b : \gB$, then there exists $\M \in \bqtypes$ such that $\vdash b : \M$ and $\M \preceq \gB$.
\end{corollary}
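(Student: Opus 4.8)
The plan is to prove the statement by structural induction on the base term $b \in \basis$, using the idea that every base term possesses a canonical base type which is minimal among all its types. Concretely, $\ket 0$ and $\ket 1$ are canonically of type $\B$, the states $\ket +$ and $\ket -$ of type $\X$, and a tensor product is typed by the product of the canonical types of its components; since $\bqtypes$ is closed under products, this canonical type always lies in $\bqtypes$. The generation lemma (Lemma~\ref{generation}) then guarantees it is a subtype of whatever type $\gB$ has been assigned to $b$. For the base cases, if $b = \ket 0$ (and symmetrically for $\ket 1$), Lemma~\ref{generation}.\ref{generation_3} gives $\B \preceq \gB$, while the corresponding axiom gives $\vdash \ket 0 : \B$ with $\B \in \bqtypes$; taking $\M = \B$ settles the case. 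The cases $\ket +, \ket -$ are identical using Lemma~\ref{generation}.\ref{generation_5} and $\M = \X$.

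For the inductive step, suppose $b = b_1 \otimes b_2$, so that $b_1, b_2 \in \basis$ by the grammar of base terms. Because the judgement is in the empty context, Lemma~\ref{generation}.\ref{generation_11} produces qubit types $\gB_1, \gB_2$ with $\vdash b_1 : \gB_1$, $\vdash b_2 : \gB_2$, and $\gB_1 \times \gB_2 \preceq \gB$. Applying the induction hypothesis to each component yields base types $\M_1, \M_2 \in \bqtypes$ with $\vdash b_i : \M_i$ and $\M_i \preceq \gB_i$. Rule $\times_I$ gives $\vdash b_1 \otimes b_2 : \M_1 \times \M_2$, and $\M_1 \times \M_2 \in \bqtypes$ by closure under products. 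Finally, the product subtyping rule yields $\M_1 \times \M_2 \preceq \gB_1 \times \gB_2$, and transitivity with $\gB_1 \times \gB_2 \preceq \gB$ gives $\M_1 \times \M_2 \preceq \gB$, so $\M = \M_1 \times \M_2$ discharges the obligation.

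I expect no serious obstacle here: working in the empty context eliminates any interference from weakening and contraction, and the generation lemma supplies exactly the decomposition needed at each constructor. The only point deserving care is the product case, where one must check that the subtyping derivation composes---that the product subtyping rule of Figure~\ref{fig:Subtyping} applies to the qubit types $\M_i$ and $\gB_i$, and that transitivity bridges to $\gB$---but this is routine given the shape of those rules.
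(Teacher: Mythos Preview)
Your proposal is correct and follows essentially the same approach as the paper's proof: structural induction on the base term $b$, invoking Lemma~\ref{generation}.\ref{generation_3} and \ref{generation}.\ref{generation_5} for the atomic kets, and Lemma~\ref{generation}.\ref{generation_11} together with the product subtyping rule and transitivity for the tensor case. The paper additionally prefaces the induction with a brief remark that the empty context and the qubit type $\gB$ rule out $b$ being a variable or an abstraction, but otherwise the two arguments are identical.
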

\begin{proof}
  Since the typing context is empty, $b$ cannot be a variable. Moreover, because $b$ has type $\gB$, it cannot be a lambda abstraction either.
  We proceed by structural induction on $b$:
  \begin{itemize}
    \item If $b = \ket{0}$, or $b = \ket{1}$, then $\vdash b : \B$ and by Lemma~\ref{generation}.\ref{generation_3}, $\B \preceq \gB$.
    \item If $b = \ket{+}$, or $b = \ket{-}$, then $\vdash b : \X$ and by Lemma~\ref{generation}.\ref{generation_5}, $\X \preceq \gB$.
    \item If $b = b_1 \otimes b_2$, then by Lemma~\ref{generation}.\ref{generation_11}, we have:
    \[
      \vdash b_1 : \gB_1,\quad
      \vdash b_2 : \gB_2,\quad
      \gB_1 \otimes \gB_2 \preceq \gB.
    \]
    By the induction hypothesis, there exist $\M_1, \M_2 \in \bqtypes$ such that $\vdash b_1 : \M_1$, $\M_1 \preceq \gB_1$ and $\vdash b_2 : \M_2$, $\M_2 \preceq \gB_2$. Then
    \[
      \vdash b : \M_1 \otimes \M_2,\quad
      \text{and}\quad \M_1 \otimes \M_2 \preceq \gB_1 \otimes \gB_2 \preceq \gB,
    \]
    so $\M_1 \otimes \M_2 \preceq \gB$.
    \qed
  \end{itemize}
\end{proof}

\subsection{Substitution lemma}\label{app:substitution}

\substitution*
\begin{proof}
  We proceed by induction on the derivation of $\Gamma,x^A \vdash t:C$.
  \begin{enumerate}
    \item If $t$ is a variable, then $t = x$ and $t\Substitution{r}{x} = r$.
    \begin{itemize}
      \item Case $\Gamma, x^\gB \vdash x: C$, with $\Delta \vdash r: \gB$ and $\gB \notin \bqtypes$.  
      By Lemma~\ref{generation}.\ref{generation_1}, we know that $\fT(\Gamma) \subseteq \bqtypes$ and $\gB \preceq C$.  
      We want to derive $\Gamma, \Delta \vdash r: C$, which follows by weakening and subtyping:
      \[
        \infer[W]
        {\Gamma, \Delta \vdash r: C}
        {
          \infer[\preceq]
          {\Delta \vdash r: C}
          {\Delta \vdash r: \gB}
        }
      \]
      \item Case $\Gamma, x^{\M} \vdash x: C$, with $\Delta \vdash b: \M$ and $b \in \basis$.  
      Similarly, by Lemma~\ref{generation}.\ref{generation_1}, we know $\fT(\Gamma) \subseteq \bqtypes$ and $\M \preceq C$.  
      Then:
      \[
        \infer[W]
        {\Gamma, \Delta \vdash b: C}
        {
          \infer[\preceq]
          {\Delta \vdash b: C}
          {\Delta \vdash b: \M}
        }
      \]
    \end{itemize}

    \item If $t = \z$, then $\z\Substitution{r}{x} = \z$.  
    \begin{itemize}
      \item Case $\Gamma, x^\gB \vdash \z: C$, with $\gB \notin \bqtypes$.  
      By Lemma~\ref{generation}.\ref{generation_2}, we get $\fT(\Gamma, x^\gB) \subseteq \bqtypes$, a contradiction.
      \item Case $\Gamma, x^\M \vdash \z: C$, with $\Delta \vdash b: \M$.  
      Since $\fFV(\z) = \emptyset$, the only way to obtain $\Gamma, x^\M$ is via weakening.  
      Then $\z\Substitution{b}{x} = \z$, and as before, we derive $\Gamma, \Delta \vdash \z: C$ using weakening and subtyping.
    \end{itemize}

    \item If $t$ is a base term like $\ket{0}$, $\ket{1}$, $\ket{+}$, or $\ket{-}$, all cases are similar to the previous one for $\z$: 
    the substitution has no effect, and typing is preserved using the same reasoning (i.e., weakening and subtyping after showing that 
    $b \in \basis$ and $\fT(\Gamma) \subseteq \bqtypes$).

    \item If $t = \lambda y^B. s$, then $x$ is not bound in $t$, since $x$ appears in the context.  
    By Lemma~\ref{generation}.\ref{generation_7}, we have:
    \[
      \Gamma, y^B \vdash s : D,\quad B \Rightarrow D \preceq C
    \]
    By induction hypothesis, $\Gamma, y^B, \Delta \vdash s\Substitution{r}{x} : D$, and so:
    \[
      \infer[\preceq]
      {\Gamma, \Delta \vdash (\lambda y^B.s)\Substitution{r}{x}: C}
      {
        \infer[\Rightarrow_I]
        {\Gamma, \Delta \vdash \lambda y^B.s\Substitution{r}{x} : B \Rightarrow D}
        {\Gamma, \Delta, y^B \vdash s\Substitution{r}{x} : D}
      }
    \]

    \item If $t = s_1 s_2$, use Lemma~\ref{generation}.\ref{generation_8} and proceed by distinguishing whether $x$ occurs in the context 
    of $s_1$ or $s_2$, applying the induction hypothesis in each and reconstructing the application typing using the appropriate rule 
    (standard application or scalar application).

    \item If $t = s_1 + s_2$, same structure: apply Lemma~\ref{generation}.\ref{generation_9}, identify the side where $x$ occurs, apply the induction hypothesis, 
    and reconstruct the typing via the $S^+_I$ rule, plus weakening and contraction as needed.

    \item If $t = \alpha.t_1$, use Lemma~\ref{generation}.\ref{generation_10}. By induction hypothesis we obtain typing for $t_1\Substitution{r}{x}$ and apply 
    the rule for scalar multiplication plus subtyping.

    \item If $t = s_1 \otimes s_2$, apply Lemma~\ref{generation}.\ref{generation_11}. Distinguish whether $x$ occurs in $\Gamma_1$, $\Gamma_2$, or $\Xi$, 
    and apply the induction hypothesis and weakening accordingly.

    \item If $t = \pim t_1$, apply Lemma~\ref{generation}.\ref{generation_12} and use the induction hypothesis on $t_1\Substitution{r}{x}$.

    \item If $t = \pimx t_1$, the case is analogous to the previous one.

    \item If $t = \ite{}{s_1}{s_2}$ or $t = \itex{}{s_1}{s_2}$, use Lemma~\ref{generation}.\ref{generation_14} and proceed by applying the induction hypothesis to both $s_1$ and $s_2$.

    \item If $t = \Castl t_1$ or $t = \Castr t_1$, handle the three subcases for $t_1$ described in Lemmas~\ref{generation}.\ref{generation_16} 
    and \ref{generation}.\ref{generation_17}, applying the induction hypothesis and the corresponding cast rule.

    \item If $t = \head{t_1}$ or $t = \tail{t_1}$, use Lemmas~\ref{generation}.\ref{generation_18} and \ref{generation}.\ref{generation_19}. 
    Apply the induction hypothesis to $t_1$, then reconstruct the typing via the appropriate rule and subtyping.

    \item If $t = \error$, then $t\Substitution{r}{x} = \error$, which has any type. So the result holds trivially.
      \qed
  \end{enumerate}
\end{proof}

\subsection{Properties of the subtyping relation}\label{app:preceq_properties}
In order to prove the properties of the subtyping relation, we will need the following auxiliary properties.

\begin{lemma}[More properties of the subtyping relation]
  \label{more_preceq_properties}
  The subtyping relation $\preceq$ also satisfies the following properties:
  \begin{enumerate}
    \item \label{preceq_relation_1} If $A \preceq B$ and $B \not\approx S(C)$, then for all $D$, $A \not\approx S(D)$.
    \item \label{preceq_relation_2} If $A \Rightarrow B \preceq C$ and $C \not\approx S(D)$, then there exist $E$ and $F$ such that $C \typeequiv E \Rightarrow F$ with $E \preceq A$ and $B \preceq F$.
    \item \label{preceq_relation_4} If $S(A \Rightarrow B) \preceq C$, then there exist $k > 0$, $D$ and $E$ such that $C \typeequiv S^k(D \Rightarrow E)$ with $D \preceq A$ and $B \preceq E$.
    \item \label{preceq_relation_5} If $A \Rightarrow B \preceq S(C)$ with $C \not\approx S(D)$, then there exist $E$ and $F$ such that $C \typeequiv E \Rightarrow F$ with $E \preceq A$ and $B \preceq F$.
    \item \label{preceq_relation_9} If $\gB_1 \times \gB_2 \preceq A$ and $A$ is not a superposition, then there exist $\gB_3$ and $\gB_4$ such that $A \approx \gB_3 \times \gB_4$, $\gB_1 \preceq \gB_3$ and $\gB_2 \preceq \gB_4$.
    \item \label{preceq_relation_10} If $S(\gB_1 \times \gB_2) \preceq S(A)$ and $A$ is not a superposition, then there exist $\gB_3$ and $\gB_4$ such that $A \approx \gB_3 \times \gB_4$, $S(\gB_1) \preceq S(\gB_3)$ and $S(\gB_2) \preceq S(\gB_4)$.
  \end{enumerate}
\end{lemma}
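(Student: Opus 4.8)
The plan is to read all six items as \emph{inversion} (shape-preservation) statements for $\preceq$ and to prove each by induction on the derivation of the subtyping judgment, using the generating rules of Figure~\ref{fig:Subtyping}. For the congruence rules ($S(A)\preceq S(B)$ from $A\preceq B$, the arrow rule, and the product rule) and the plain axioms ($A\preceq S(A)$, $S(S(A))\preceq S(A)$, $\prod_i\Ba_i\preceq S(\prod_i\Ba_i')$), the conclusion can be obtained directly, since each such rule already exposes the head constructor asked for---or, when its right-hand side is a span, makes the ``not a span'' hypothesis vacuous. Reflexivity is immediate. Hence the only genuinely delicate rule is transitivity, and almost all of the work lies in supplying, for a derivation $A\preceq B\preceq C$, enough information about the intermediate type $B$ to let the two induction hypotheses be chained.

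The foundational step is a shape-classification, established by a single direct induction on the derivation: whenever $A\preceq B$, either $B$ is a span, or $A$ and $B$ have the same top constructor (both arrows, both products, or the same atomic type). Property~\ref{preceq_relation_1} is exactly the contrapositive of the first disjunct, and the same classification yields the auxiliary fact---needed implicitly throughout---that an arrow type and a product type are never equivalent to a span. I would prove this classification first, as every subsequent item relies on it to control the transitivity rule.

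Armed with this, Properties~\ref{preceq_relation_2} and~\ref{preceq_relation_9} (an arrow, resp.\ a product, subtyped into a non-span) follow by the same induction, with the transitivity step as the crux: given $A\Rightarrow B\preceq B'\preceq C$ with $C$ not a span, Property~\ref{preceq_relation_1} forces $B'$ to be a non-span as well, so the induction hypothesis applies to $A\Rightarrow B\preceq B'$, giving $B'\typeequiv E'\Rightarrow F'$ with $E'\preceq A$ and $B\preceq F'$; applying it again to $E'\Rightarrow F'\preceq C$ gives $C\typeequiv E\Rightarrow F$ with $E\preceq E'$ and $F'\preceq F$; transitivity of $\preceq$ on the components then yields $E\preceq A$ and $B\preceq F$. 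The product case~\ref{preceq_relation_9} is identical, with $\times$ replacing $\Rightarrow$ and covariance in both arguments.

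The main obstacle is the span-supertype cases, Properties~\ref{preceq_relation_4}, \ref{preceq_relation_5}, and~\ref{preceq_relation_10}, where iterated spans $S^k$ arise from repeated use of $A\preceq S(A)$ together with the collapse $S(S(A))\preceq S(A)$, and where Property~\ref{preceq_relation_1} no longer pins down the transitivity intermediate (a span may legitimately sit above a span). For Property~\ref{preceq_relation_4} I would show, by induction, that acting on $S(A\Rightarrow B)$ preserves an arrow core under a stack of spans, i.e.\ the supertype is $S^{k}(D\Rightarrow E)$ with $k>0$, $D\preceq A$, and $B\preceq E$: the span-congruence rule descends into $A\Rightarrow B$ and is governed by the arrow case, while $A\preceq S(A)$ and $S(S(A))\preceq S(A)$ only increment or decrement the exponent $k$. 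Properties~\ref{preceq_relation_5} and~\ref{preceq_relation_10} then combine this with a case split on whether the transitivity intermediate is itself a span, invoking Properties~\ref{preceq_relation_2}/\ref{preceq_relation_4} (resp.\ the product analogue) accordingly, and using that the target's core is not a superposition to fix the exponent at one. The tedious part---and the heart of the proof---is this bookkeeping of the span exponent across transitivity while threading the contravariant/covariant relations on the arrow (resp.\ product) components through the span-congruence rule.
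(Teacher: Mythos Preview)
Your proposal is correct and follows essentially the same approach as the paper: each item is proved by induction on the subtyping derivation, with Property~\ref{preceq_relation_1} established first and then used to pin down the shape of the intermediate type in the transitivity case of Properties~\ref{preceq_relation_2} and~\ref{preceq_relation_9}, while Properties~\ref{preceq_relation_4}, \ref{preceq_relation_5}, and~\ref{preceq_relation_10} track the span exponent through the span-congruence rule and rely on the earlier items (the paper additionally invokes the separately proved fact that $S(A)\preceq B$ forces $B\typeequiv S(C)$ for the transitivity step of~\ref{preceq_relation_10}, which you fold into your ``product analogue''). Your ``shape-classification'' lemma is a slightly cleaner conceptual packaging of what the paper does implicitly, but the inductive structure and the dependencies between the items are the same.
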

\begin{proof}~
  \begin{enumerate}
    \item[{\ref{preceq_relation_1}}.]
      By induction on the derivation tree of the subtyping judgment.
      \begin{itemize}
	\item \emph{Base case: derivation of size 0.}

	  The only possible rule is:
	  \[
	    \infer
	    {A \preceq A}
	    {}
	  \]
	  Since $A \typeequiv B$ and $B \not\approx S(C)$, we conclude that $A \not\approx S(C)$. 
	  By $C \typeequiv D$, we get $A \not\approx S(D)$ as required.

	\item \emph{Inductive case: derivation of size greater than 0.}
	  \begin{itemize}
	    \item
	      \[
		\infer
		{A \preceq B}
		{A \preceq E & E \preceq B}
	      \]
	      Since $E \preceq B$, by the induction hypothesis we have $E \not\approx S(F)$ for any $F$.
	      Then, since $A \preceq E$, again by induction we conclude $A \not\approx S(D)$.

	    \item
	      \[
		\infer
		{\gB_2 \Rightarrow E \preceq \gB_1 \Rightarrow F}
		{E \preceq F \quad \gB_1 \preceq \gB_2}
	      \]
	      We have $A \typeequiv \gB_2 \Rightarrow E$ and $B \typeequiv \gB_1 \Rightarrow F$, with $B \not\approx S(C)$.
	      Hence $A \not\approx S(D)$ holds.

	    \item
	      \[
		\infer
		{\gB_1 \times \gB_3 \preceq \gB_2 \times \gB_4}
		{\gB_1 \preceq \gB_2 \quad \gB_3 \preceq \gB_4}
	      \]
	      Here $A \typeequiv \gB_1 \times \gB_3$ and $B \typeequiv \gB_2 \times \gB_4$, with $B \not\approx S(C)$.
	      Then $A \not\approx S(D)$ as required.
	  \end{itemize}
      \end{itemize}

    \item[{\ref{preceq_relation_2}}.]
      By induction on the derivation tree of the subtyping judgment.
      \begin{itemize}
	\item \emph{Base case: derivation of size 0.}

	  The only possible rule is:
	  \[
	    \infer
	    {A \Rightarrow B \preceq A \Rightarrow B}
	    {}
	  \]
	  Then $C \typeequiv A \Rightarrow B$, and in particular $C \not\approx S(D)$.
	  Let $E \typeequiv A$ and $F \typeequiv B$. We have $E \approx A$ and $F \approx B$, hence
	  $E \preceq A$ and $B \preceq F$ as required.

	\item \emph{Inductive case: derivation of size greater than 0.}
	  \begin{itemize}
	    \item
	      \[
		\infer
		{A \Rightarrow B \preceq E \Rightarrow F}
		{A \Rightarrow B \preceq G \quad G \preceq E \Rightarrow F}
	      \]
	      We have $C \typeequiv E \Rightarrow F$ and in particular $C \not\approx S(D)$.

	      Since $G \preceq E \Rightarrow F$, by Lemma~\ref{more_preceq_properties}.\ref{preceq_relation_1}
	      we get $G \not\approx S(I)$ for any $I$.

	      As $A \Rightarrow B \preceq G$, by induction hypothesis we get $G \approx J \Rightarrow K$
	      with $J \preceq A$ and $B \preceq K$.

	      Then $J \Rightarrow K \preceq E \Rightarrow F$, so by induction we get $E \preceq J$ and $K \preceq F$.

	      By transitivity, $E \preceq A$ and $B \preceq F$.

	    \item
	      \[
		\infer
		{\gB_2 \Rightarrow B \preceq \gB_1 \Rightarrow F}
		{B \preceq F \quad \gB_1 \preceq \gB_2}
	      \]
	      Here $A \typeequiv \gB_2$ and $E \approx \gB_1$, so $E \preceq A$ and $B \preceq F$.
	  \end{itemize}
      \end{itemize}
    \item[{\ref{preceq_relation_4}}.]
      By induction on the derivation tree.
      \begin{itemize}
	\item \emph{Base case: tree of size 0.}  
	  The only possible derivations of size 0 are:
	  \begin{itemize}
	    \item
	      \[
		\infer
		{S(A \Rightarrow B) \preceq S(A \Rightarrow B)}
		{}
	      \]
	      In this case, $C \typeequiv S^1(A \Rightarrow B)$, so $k = 1$, $D \typeequiv A$ and $E \typeequiv B$.  
	      Then, since $B \approx E$, we have $B \preceq E$, and similarly $A \approx D$, so $D \preceq A$.

	    \item
	      \[
		\infer
		{S(A \Rightarrow B) \preceq S(S(A \Rightarrow B))}
		{}
	      \]
	      Here, $C \typeequiv S^2(A \Rightarrow B)$, so $k = 2$, $D \typeequiv A$ and $E \typeequiv B$.  
	      Again, $B \preceq E$ and $D \preceq A$ follow from type equivalence.
	  \end{itemize}

	\item \emph{Inductive case: tree of size greater than 0.}
	  \begin{itemize}
	    \item
	      \[
		\infer
		{S(A \Rightarrow B) \preceq C}
		{S(A \Rightarrow B) \preceq F \quad F \preceq C}
	      \]
	      By induction hypothesis, $F \typeequiv S^{k'}(G \Rightarrow H)$ with $G \preceq A$ and $B \preceq H$.  
	      Since $S^{k'}(G \Rightarrow H) \approx S(G \Rightarrow H)$, we get $S(G \Rightarrow H) \preceq C$.  
	      Then, by induction hypothesis again, $C \typeequiv S^k(I \Rightarrow J)$ with $I \preceq G$ and $H \preceq J$.  
	      By transitivity, $I \preceq A$ and $B \preceq J$. Hence the result holds with $D \typeequiv I$ and $E \typeequiv J$.

	    \item
	      \[
		\infer
		{S(A \Rightarrow B) \preceq S^k(F)}
		{A \Rightarrow B \preceq F}
	      \]
	      Then $C \typeequiv S(F)$.
	      \begin{itemize}
		\item If $k = 1$, by Lemma~\ref{more_preceq_properties}.\ref{preceq_relation_2}, $F \typeequiv D \Rightarrow E$ with $D \preceq A$ and $B \preceq E$.
		\item If $k > 1$, by induction hypothesis, $F \typeequiv S^{k-1}(D \Rightarrow E)$ with $D \preceq A$ and $B \preceq E$.
	      \end{itemize}
	      In both cases, the result follows.
	  \end{itemize}
      \end{itemize}

    \item[{\ref{preceq_relation_5}}.]
      By induction on the derivation tree.
      \begin{itemize}
	\item \emph{Base case: tree of size 0.}  
	  The only possible derivation is:
	  \[
	    \infer
	    {A \Rightarrow B \preceq S(A \Rightarrow B)}
	    {}
	  \]
	  So \(C \typeequiv A \Rightarrow B\), and we can take \(E \typeequiv A\), \(F \typeequiv B\).  
	  Since \(A \approx E\), we get \(E \preceq A\), and likewise \(B \approx F\) implies \(B \preceq F\). Hence, the result holds.

	\item \emph{Inductive case: tree of size greater than 0.}
	  \[
	    \infer
	    {A \Rightarrow B \preceq S(C)}
	    {A \Rightarrow B \preceq G \quad G \preceq S(C)}
	  \]
	  Suppose \(G \approx S^{k'}(I)\) for some \(k' > 0\), and assume \(I \not\approx S(H)\).

	  \begin{itemize}
	    \item If \(k' = 0\), then \(G \approx I\) and by Lemma~\ref{more_preceq_properties}.\ref{preceq_relation_2} we have \(G \typeequiv J \Rightarrow K\) with \(J \preceq A\) and \(B \preceq K\).  
	      Since \(J \Rightarrow K \preceq S(C)\), by induction hypothesis we get \(C \approx L \Rightarrow M\) with \(L \preceq J\) and \(K \preceq M\).  
	      Then, by transitivity, \(L \preceq A\) and \(B \preceq M\).  
	      Letting \(E \typeequiv L\) and \(F \typeequiv M\), the result follows.

	    \item If \(k' > 0\), then by subtyping \(S^{k'}(I) \approx S(I)\), hence \(G \approx S(I)\).  
	      Since \(A \Rightarrow B \preceq S(I)\), by induction hypothesis we have \(I \approx J \Rightarrow K\) with \(J \preceq A\) and \(B \preceq K\).  
	      Then \(S(J \Rightarrow K) \preceq S(C)\), and by Lemma~\ref{more_preceq_properties}.\ref{preceq_relation_4} we get \(S(C) \typeequiv S^k(L \Rightarrow M)\) with \(L \preceq J\) and \(K \preceq M\).  
	      Since \(C \not\approx S(G)\), it must be that \(k = 1\) and \(C \approx L \Rightarrow M\).  
	      By transitivity, \(L \preceq A\) and \(B \preceq M\), and letting \(E \typeequiv L\), \(F \typeequiv M\), the result follows.
	  \end{itemize}
      \end{itemize}

    \item[{\ref{preceq_relation_9}}.]
      By induction on the derivation of the subtyping relation.
      \begin{itemize}
	\item \emph{Derivation tree of size 0}

	  The only possible rule is:
	  \[
	    \infer
	    {\gB_1 \times \gB_2 \preceq \gB_1 \times \gB_2}
	    {}
	  \]
	  Taking \(\gB_3 \approx \gB_1\) and \(\gB_4 \approx \gB_2\), the result follows by reflection.

	\item \emph{Derivation tree of size greater than 0}

	  \begin{itemize}
	    \item 
	      \[
		\infer
		{\gB_1 \times \gB_2 \preceq A}
		{\gB_1 \times \gB_2 \preceq B & B \preceq A}
	      \]
	      Since \(\gB_1 \times \gB_2 \preceq B\), by the induction hypothesis there exist \(\gB_5\) and \(\gB_6\) such that \(B \approx \gB_5 \times \gB_6\), \(\gB_1 \preceq \gB_5\) and \(\gB_2 \preceq \gB_6\).Since there exist \(\gB_7\) and \(\gB_8\) such that \(A \approx \gB_7 \times \gB_8\), \(\gB_5 \preceq \gB_7\) and \(\gB_6 \preceq \gB_8\) by the induction hypothesis again, then \(\gB_1 \preceq \gB_7\) and \(\gB_2 \preceq \gB_8\) by transitivy.
		  Taking \(\gB_3 \approx \gB_6\) and \(\gB_4 \approx \gB_7\), the result follows.
	    \item 
	      \[
		\infer
		{\gB_1 \times \gB_2 \preceq \gB_3 \times \gB_4}
		{\gB_1 \preceq \gB_3 & \gB_2 \preceq \gB_4}
	      \]
	      The conclusion directly follows by the structure of the rule.
	  \end{itemize}
      \end{itemize}
    \item[{\ref{preceq_relation_10}}.]
      By induction on the derivation of the subtyping relation.
      \begin{itemize}
	\item \emph{Derivation tree of size 0}

	  The only possible rule is:
	  \[
	    \infer
	    {S(\gB_1 \times \gB_2) \preceq S(\gB_1 \times \gB_2)}
	    {}
	  \]
	  Taking \(\gB_3 \approx \gB_1\) and \(\gB_4 \approx \gB_2\). Since \(A \approx \gB_1 \times \gB_2\), we conclude \(\gB_1 \preceq S(\gB_3)\) and \(\gB_2 \preceq S(\gB_4)\).

	\item \emph{Derivation tree of size greater than 0}
	  \begin{itemize}
	    \item 
	      \[
		\infer
		{S(\gB_1 \times \gB_2) \preceq S(A)}
		{S(\gB_1 \times \gB_2) \preceq C & C \preceq S(A)}
	      \]
	      By Lemma~\ref{lem:preceq_properties}.\ref{preceq_relation_7}, we have \(C \typeequiv S(D)\) with \(\gB_1 \times \gB_2 \preceq D\).\\
	      By induction hypothesis, there exist \(\gB_5\) and \(\gB_6\) such that \(D \approx \gB_5 \times \gB_6\), \(S(\gB_1) \preceq S(\gB_5)\) and \(S(\gB_2) \preceq S(\gB_6)\).\\
	      Since \(S(\gB_5 \times \gB_6) \preceq S(A)\), we conclude there exist \(\gB_7\) and \(\gB_8\) such that \(A \approx \gB_7 \times \gB_8\), \(S(\gB_5) \preceq S(\gB_7)\) and \(S(\gB_6) \preceq S(\gB_8)\) by induction again.\\
		  Taking \(\gB_3 \approx \gB_7\) and \(\gB_4 \approx \gB_8\), we have \(S(\gB_1) \preceq S(\gB_3)\) and \(S(\gB_2) \preceq S(\gB_4)\) by transivity, and then the result follows.
	    \item 
	      \[
		\infer
		{S(\gB_1 \times \gB_2) \preceq S(A)}
		{\gB_1 \times \gB_2 \preceq A}
	      \]
	      By Lemma~\ref{more_preceq_properties}.\ref{preceq_relation_9}, we get \(A \approx \gB_5 \times \gB_6\), \(\gB_1 \preceq \gB_5\) and \(\gB_2 \preceq \gB_6\). Then \(S(\gB_1) \preceq S(\gB_5)\) and \(S(\gB_2) \preceq S(\gB_6)\).\\
	      Taking \(\gB_3 \approx \gB_5\) and \(\gB_4 \approx \gB_6\), the result follows.
	      \qed
	  \end{itemize}
      \end{itemize}
  \end{enumerate}
\end{proof}

\preceqproperties*

\begin{proof}~
  \begin{enumerate}
    \item[{\ref{preceq_relation_3}.}]
      Immediate from Lemma~\ref{more_preceq_properties}.\ref{preceq_relation_2}.

    \item[{\ref{preceq_relation_6}.}]
      Immediate from Lemma~\ref{more_preceq_properties}.\ref{preceq_relation_5}.

    \item[{\ref{preceq_relation_7}.}]
      By induction on the derivation of the subtyping relation.
      \begin{itemize}
	\item \emph{Derivation tree of size 0}

	  The only possible rules are:
	  \begin{itemize}
	    \item 
	      \[
		\infer
		{S(A) \preceq S(A)}
		{}
	      \]
	      In this case, \(C \typeequiv A\). Hence, \(A = C\) and trivially \(A \preceq C\).

	    \item 
	      \[
		\infer
		{S(A) \preceq S(S(A))}
		{}
	      \]
	      Here, \(C \typeequiv S(A)\). Then \(S(A) = C\) and thus \(S(A) \preceq C\).
	      Since \(A \preceq S(A)\), by transitivity we get \(A \preceq C\), as required.
	  \end{itemize}

	\item \emph{Derivation tree of size greater than 0}

	  \begin{itemize}
	    \item 
	      \[
		\infer
		{S(A) \preceq B}
		{S(A) \preceq D & D \preceq B}
	      \]
	      By induction hypothesis on \(S(A) \preceq D\), we have \(D \typeequiv S(E)\) and \(A \preceq E\).\\
	      Then \(D = S(E)\), so \(S(E) \preceq B\). By induction hypothesis again, \(B \typeequiv S(F)\) and \(E \preceq F\).\\
	      By transitivity, \(A \preceq F\). Taking \(C \typeequiv F\), the result follows.

	    \item 
	      \[
		\infer
		{S(A) \preceq S(C)}
		{A \preceq C}
	      \]
	      Trivial.
	  \end{itemize}
      \end{itemize}

    \item[{\ref{preceq_relation_11}.}]
      By induction on the derivation of the subtyping relation.
      \begin{itemize}
	\item \emph{Derivation tree of size 0} \\
	  The only possible rules are:
	  \begin{itemize}
	    \item 
	      \[
		\infer
		{\prod_{i = 0}^{k} \Ba_i \times \prod_{i = k + 1}^{n} \Ba_i \preceq S(\prod_{i = 0}^{k} \Ba_i^\prime \times \prod_{i = k + 1}^{n} \Ba_i^\prime)}
		{}
	      \]
	      With \(\gB_1 \approx \prod_{i = 0}^{n} \Ba_i\) and \(\gB_3 \times \gB_4 \approx \prod_{i = 0}^{n} \Ba_i^\prime\), the result follows.

	    \item 
	      \[
		\infer
		{\gB_1 \times \gB_2 \preceq \gB_1 \times \gB_2}
		{}
	      \]
	      Taking \(\gB_3 \approx \gB_1\) and \(\gB_4 \approx \gB_2\), the result follows.

	    \item 
	      \[
		\infer
		{\gB_1 \times \gB_2 \preceq S(\gB_1 \times \gB_2)}
		{}
	      \]
	      Again, taking \(\gB_3 \approx \gB_1\) and \(\gB_4 \approx \gB_2\), the result follows.
	  \end{itemize}

	\item \emph{Derivation tree of size greater than 0}
	  \begin{itemize}
	    \item 
	      \[
		\infer
		{\gB_1 \times \gB_2 \preceq A}
		{\gB_1 \times \gB_2 \preceq B & B \preceq A}
	      \]
	      By induction hypothesis on \(\gB_1 \times \gB_2 \preceq B\), we consider two cases:
	      \begin{itemize}
		\item If \(B \approx \gB_5 \times \gB_6\), then by the induction hypothesis again on \(B \preceq A\), we obtain either \(A \approx \gB_7 \times \gB_8\) or \(A \approx S(\gB_7 \times \gB_8)\).
		\item If \(B \approx S(\gB_5 \times \gB_6)\), then by Lemma~\ref{lem:preceq_properties}.\ref{preceq_relation_7}, \(A \typeequiv S(C)\), and by Lemma~\ref{more_preceq_properties}.\ref{preceq_relation_10}, \(C \approx \gB_7 \times \gB_8\).
	      \end{itemize}
	      Taking \(\gB_3 \approx \gB_7\) and \(\gB_4 \approx \gB_8\), the result follows.

	    \item 
	      \[
		\infer
		{\gB_1 \times \gB_2 \preceq \gB_5 \times \gB_6}
		{\gB_1 \preceq \gB_5 & \gB_2 \preceq \gB_6}
	      \]
	      Taking \(\gB_3 \approx \gB_5\) and \(\gB_4 \approx \gB_6\), the result follows.
	  \end{itemize}
      \end{itemize}

    \item[{\ref{preceq_relation_12}.}]
      By induction on the derivation of the subtyping relation. We write $\timessize{A}$ for the number of product constructors in $A$.
      \begin{itemize}
	\item \emph{Base case:}
	  \begin{itemize}
	    \item \[
		\infer{\prod_{i = 0}^{n} \Ba_i \preceq S(\prod_{i = 0}^{n} \Ba_i^\prime)}{}
	      \]
	      We have $\timessize{\prod_{i = 0}^{n} \Ba_i} = n - 1$ and $\timessize{S(\prod_{i = 0}^{n} \Ba_i^\prime)} = n - 1$, so the property holds.

	    \item \[
		\infer{A \preceq A}{}
	      \]
	      Trivial.

	    \item \[
		\infer{A \preceq S(A)}{}
	      \]
	      By definition, $\timessize{S(A)} = \timessize{A}$, so the property holds.

	    \item \[
		\infer{S(S(A)) \preceq S(A)}{}
	      \]
	      Again, by definition, $\timessize{S(S(A))} = \timessize{S(A)}$, so the property holds.
	  \end{itemize}

	\item \emph{Inductive case:}
	  \begin{itemize}
	    \item \[
		\infer{A \preceq C}{A \preceq B & B \preceq C}
	      \]
	      By induction hypothesis, $\timessize{A} = \timessize{B}$ and $\timessize{B} = \timessize{C}$. Hence, by transitivity, $\timessize{A} = \timessize{C}$.

	    \item \[
		\infer{S(A) \preceq S(B)}{A \preceq B}
	      \]
	      By induction hypothesis, $\timessize{A} = \timessize{B}$, and by definition of $\timessize{\cdot}$, we have $\timessize{S(A)} = \timessize{A}$ and $\timessize{S(B)} = \timessize{B}$, so $\timessize{S(A)} = \timessize{S(B)}$.

	    \item \[
		\infer{\gB_2 \Rightarrow A \preceq \gB_1 \Rightarrow B}{A \preceq B & \gB_1 \preceq \gB_2}
	      \]
	      By induction hypothesis, $\timessize{A} = \timessize{B}$ and $\timessize{\gB_1} = \timessize{\gB_2}$. Then:
	      \[
		\timessize{\gB_2 \Rightarrow A} = \timessize{\gB_2} + \timessize{A} = \timessize{\gB_1} + \timessize{B} = \timessize{\gB_1 \Rightarrow B}
	      \]
	      Hence, the property holds.

	    \item \[
		\infer{\gB_1 \times \gB_3 \preceq \gB_2 \times \gB_4}{\gB_1 \preceq \gB_2 & \gB_3 \preceq \gB_4}
	      \]
	      By induction hypothesis, $\timessize{\gB_1} = \timessize{\gB_2}$ and $\timessize{\gB_3} = \timessize{\gB_4}$. Therefore:
	      \[
		\timessize{\gB_1 \times \gB_3} = 1 + \timessize{\gB_1} + \timessize{\gB_3} = 1 + \timessize{\gB_2} + \timessize{\gB_4} = \timessize{\gB_2 \times \gB_4}
	      \]
	      So the property holds.
	  \end{itemize}
      \end{itemize}
    \item[{\ref{preceq_relation_13}}.]
      By induction on the derivation tree of the subtyping judgment.
      \begin{itemize}
	\item \emph{Base case: derivation of size 0.}

	  The only possible rule is:
	  \[
	    \infer
	    {\gB_1 \Rightarrow C \preceq \gB_1 \Rightarrow C}
	    {}
	  \]
	  With \(\gB_2 = \gB_1\) and \(D = C\) the property hold.

	\item \emph{Inductive case: derivation of size greater than 0.}
	  \begin{itemize}
	    \item
	      \[
		\infer
		{A \preceq \gB_1 \Rightarrow C}
		{A \preceq E & E \preceq \gB_1 \Rightarrow D}
	      \]
	      Since there exists \(\gB_3\) and \(F\) such that \(E = \gB_3 \Rightarrow F\) by induction hypothesis, then there exists \(\gB_4\) and \(G\) such that \(A = \gB_4 \Rightarrow G\) by induction again. With \(\gB_2 = \gB_4\) and \(D = G\) the property hold.
	    \item
	      \[
		\infer
		{\gB_3 \Rightarrow F \preceq \gB_1 \Rightarrow C}
		{\gB_1 \preceq \gB_3 & F \preceq C}
	      \]
	      The only possible case to apply this rule, is that \(A = \gB_3 \Rightarrow F\). This is immediate with \(\gB_2 = \gB_3\) and \(D = F\).
	  \end{itemize}
      \end{itemize}

    \item[{\ref{preceq_relation_14}}.]
      We can simplify the property considering \(S(A) \approx S^k(A^\prime) \approx S(A^\prime)\), and we can proceed by induction on the derivation tree of the subtyping judgment.
      \begin{itemize}
	\item \emph{Base case: derivation of size 0.}

	  The only possible rule is:
	  \[
	    \infer
	    {S(A^\prime) \preceq S(A^\prime)}
	    {}
	  \]
	  because \(A^\prime\) is not a superposition and \(\gB \Rightarrow C\) neither.
	  Hence \(\gB_2 = \gB_1\) and \(D = C\) hold.

	\item \emph{Inductive case: derivation of size greater than 0.}
	  \begin{itemize}
	    \item
	      \[
		\infer
		{S(A) \preceq S(\gB_1 \Rightarrow C)}
		{S(A) \preceq E & E \preceq S(\gB_1 \Rightarrow C)}
	      \]
	      Since \(E = S(F)\) by Lemma~\ref{lem:preceq_properties}.\ref{preceq_relation_7}, then \(S(F) \preceq S(\gB_1 \Rightarrow C)\) and exists \(\gB_3\) and \(G\) such that \(S(E) \approx S(\gB_3 \Rightarrow G)\) by induction hypothesis. Again by induction, exists \(\gB_4\) and \(H\) such that \(S(A) \approx S(\gB_4 \Rightarrow H)\). With \(\gB_2 = \gB_4\) and \(D = H\), the property hold.
	    \item
	      \[
		\infer
		{S(D) \preceq S(\gB \Rightarrow C)}
		{D \preceq \gB \Rightarrow C}
	      \]
	      Immediate from Lemma~\ref{lem:preceq_properties}.\ref{preceq_relation_13}.
	  \end{itemize}
      \end{itemize}

    \item[{\ref{preceq_relation_15}}.]
      By induction on the derivation tree of the subtyping judgment.
      \begin{itemize}
	\item \emph{Base case: derivation of size 0.}
	  The only possible rule is:
	  \[
	    \infer
	    {S(\gB_2 \Rightarrow B) \preceq S(\gB_2 \Rightarrow B)}
	    {}
	  \]
	  Hence \(\gB_2 = \gB_1\) and \(B = A\) hold.

	\item \emph{Inductive case: derivation of size greater than 0.}
	  If $S(\gB_1 \Rightarrow A) \preceq S(\gB_2 \Rightarrow B)$, then $\gB_2 \preceq \gB_1$ and $B \preceq A$.
	  \begin{itemize}
	    \item
	      \[
		\infer
		{S(\gB_1 \Rightarrow A) \preceq S(\gB_2 \Rightarrow B)}
		{S(\gB_1 \Rightarrow A) \preceq C & C \preceq S(\gB_2 \Rightarrow B)}
	      \]
	      Since \(C = S(D)\) by Lemma~\ref{lem:preceq_properties}.\ref{preceq_relation_7}, then \(S(D) \preceq S(\gB_2 \Rightarrow B)\) and exists \(\gB_3\) and \(E\) such that \(S(D) \approx S(\gB_3 \Rightarrow E)\) by Lemma~\ref{lem:preceq_properties}.\ref{preceq_relation_15}. Then \(\gB_3 \preceq \gB_1\) and \(A \preceq E\) by induction hypothesis, and then \(\gB_2 \preceq \gB_3\) and \(E \preceq B\) by induction again. Thus \(\gB_2 \preceq \gB_1\) and \(A \preceq B\) by transitivity. Then the property hold.
	    \item
	      \[
		\infer
		{S(\gB_1 \Rightarrow A) \preceq S(\gB_2 \Rightarrow B)}
		{\gB_1 \Rightarrow A \preceq \gB_2 \Rightarrow B}
	      \]
	      Immediate from Lemma~\ref{lem:preceq_properties}.\ref{preceq_relation_3}.
	      \qed
	  \end{itemize}
      \end{itemize}
  \end{enumerate}
\end{proof}

\subsection{Properties on the subtyping relation on products}\label{app:preceq_product_properties}
In order to prove Lemma~\ref{preceqproductproperties}, we define the notion of \emph{similar types}, and prove some of its properties.

\begin{definition}[Similar types]
  \label{def:similarTypes}
  Let \(A\) be a type. We define the sequence \(\Ba_1, \Ba_2, \ldots, \Ba_n\) as the atomic types occurring from left to right in \(A\), traversing its structure recursively.

  Two types \(A\) and \(B\) are said to be \emph{similar}, written \(A \sim B\), if it is possible to replace the sequence \(\Ba_1, \Ba_2, \ldots, \Ba_n\) in \(A\) by a new sequence \(\Ba_1', \Ba_2', \ldots, \Ba_n'\) such that \(A \approx B\). 
\end{definition}

\begin{example}[Similar types]
  The following are examples of similar and non-similar types:
  \begin{align*}
    S(\B \times \B) \times \X &\sim S(\X \times \B) \times \B \\
    S(\B \Rightarrow S(\X)) &\sim S(\B \Rightarrow S(\B)) \\
    S(\Ba_1) \times \gB_1 &\sim S(\Ba_2) \times \gB_2\qquad \text{iff }\gB_1 \sim \gB_2 \\
    S(\gB_1) & \sim \gB_2\qquad  \text{iff }\gB_2 \approx S(\gB_3)\text{ and }\gB_1 \sim \gB_3\\
    S(\B \times \B) \times \gB & \not\sim S(\X \times \B)\\
    S(\X) & \not\sim \B
  \end{align*}
\end{example}

We need some properties of the similarity relation, which are stated in Lemma~\ref{sim_relation}, for which, we need the following technical lemma first.
\begin{lemma}[More properties of the similarity relation]\label{moresimrelation}
  The similarity relation $\sim$ also satisfies the following properties:
  \begin{enumerate}
    \item \label{sim_relation_1} For any $\Ba_1, \Ba_2 \in \{\B, \X\}$, we have $\Ba_1 \sim \Ba_2$.
    \item \label{sim_relation_4} If $A \sim B$, then $B \sim A$.
    \item \label{sim_relation_6} If $A \sim B$ and $C \sim D$, then $A \Rightarrow C \sim B \Rightarrow D$.
  \end{enumerate}
\end{lemma}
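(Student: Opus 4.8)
The plan is to treat all three parts through a single structural observation about type equivalence: the relation $\approx$ neither changes the left-to-right sequence of atomic types nor inspects which atom sits at each position. Scanning the subtyping rules of Figure~\ref{fig:Subtyping}, the only rule that can alter atoms is $\prod_{i=0}^{n}\Ba_i \preceq S(\prod_{i=0}^{n}\Ba_i')$, and this rule always places an extra $S$ on the larger side. Such an $S$ can never be removed inside an equivalence: by Lemma~\ref{lem:preceq_properties}.\ref{preceq_relation_7}, $S(C) \preceq D$ forces $D \approx S(E)$, so a bare product (which is not equivalent to any $S$-type) can never dominate an $S$-type. Consequently, within a derivation of $A \approx B$ the only equivalences in play are reflexivity, congruence, and the idempotence $S(S(C)) \approx S(C)$, all of which leave the atom sequence untouched and act uniformly regardless of the atoms occupying the positions. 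I would isolate this as a preliminary claim: \emph{if $A \approx B$ then $A$ and $B$ have the same atom sequence, and applying any position-wise atom renaming simultaneously to both sides preserves $\approx$.}

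Part~\ref{sim_relation_1} is then immediate from Definition~\ref{def:similarTypes}: the type $\Ba_1$ has the single atom $\Ba_1$, which I replace by $\Ba_2$ to obtain $\Ba_2 \approx \Ba_2$, witnessing $\Ba_1 \sim \Ba_2$.

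For Part~\ref{sim_relation_6} I would use that $\approx$ is a congruence for $\Rightarrow$. Writing $\sigma$ for the atom replacement witnessing $A \sim B$ (so $\sigma(A) \approx B$) and $\tau$ for the one witnessing $C \sim D$ (so $\tau(C) \approx D$), the atoms of $A \Rightarrow C$ read from left to right are exactly those of $A$ followed by those of $C$; hence the combined replacement yields $\sigma(A) \Rightarrow \tau(C)$. From $\sigma(A) \approx B$ and $\tau(C) \approx D$ the arrow subtyping rule gives both $\sigma(A) \Rightarrow \tau(C) \preceq B \Rightarrow D$ (using $B \preceq \sigma(A)$ and $\tau(C) \preceq D$) and the reverse inclusion (using $\sigma(A) \preceq B$ and $D \preceq \tau(C)$), so $\sigma(A) \Rightarrow \tau(C) \approx B \Rightarrow D$ and therefore $A \Rightarrow C \sim B \Rightarrow D$. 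Here I rely on the fact that $A$ is a qubit type, so $B$, being $\approx$ to an atom-renaming of $A$, is again a qubit type and the arrow $B \Rightarrow D$ is well-formed.

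Part~\ref{sim_relation_4} is where the preliminary claim does real work. Suppose $A \sim B$ via a replacement $\sigma$ taking the atom sequence $\Ba_1,\dots,\Ba_n$ of $A$ to $\Ba_1',\dots,\Ba_n'$, so $\sigma(A) \approx B$. By the claim, $B$ has atom sequence $\Ba_1',\dots,\Ba_n'$, and $\sigma(A)$ and $B$ differ only by $S$-idempotence. I then take the reverse replacement $\tau$ on $B$, sending $\Ba_i'$ back to $\Ba_i$; since $\sigma(A)$ and $A$ share the same skeleton (only their atoms differ position-wise), applying the position-wise renaming $\Ba_i' \mapsto \Ba_i$ to the equivalence $\sigma(A)\approx B$ transports it to $A \approx \tau(B)$, using the atom-blindness part of the claim. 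This exhibits $\tau(B) \approx A$, i.e.\ $B \sim A$. The main obstacle is precisely the preliminary claim: making rigorous that an equivalence derivation contains no net atom-changing step and is stable under simultaneous renaming. I expect this to follow by induction on the $\approx$-derivation, discharging the product-to-$S$ rule via Lemma~\ref{lem:preceq_properties}.\ref{preceq_relation_7} (and Lemma~\ref{lem:preceq_properties}.\ref{preceq_relation_12} to keep the product structure aligned), with the remaining rules handled by their evident atom-independence.
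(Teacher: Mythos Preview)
Your arguments for parts~\ref{sim_relation_1} and~\ref{sim_relation_6} are correct and coincide with the paper's (each dispatched there in a single sentence).

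For part~\ref{sim_relation_4} there is a genuine gap: your preliminary claim that $A \approx B$ forces $A$ and $B$ to have the same atom sequence is \emph{false}. The paper itself supplies the counterexample just after Definition~\ref{def:similarTypes}: $S(\B) \approx S(\X)$, with atom sequences $(\B)$ and $(\X)$. Your justification breaks at the sentence ``Consequently, within a derivation of $A \approx B$ the only equivalences in play are reflexivity, congruence, and the idempotence $S(S(C)) \approx S(C)$''. The rule $\prod_i \Ba_i \preceq S(\prod_i \Ba_i')$ \emph{does} contribute to equivalences: from $\B \preceq S(\X)$ one derives $S(\B) \preceq S(S(\X)) \preceq S(\X)$, and symmetrically $S(\X) \preceq S(\B)$. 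The extra $S$ the rule introduces is absorbed by idempotence, while the atom swap survives. Lemma~\ref{lem:preceq_properties}.\ref{preceq_relation_7} only tells you that an outermost $S$ cannot vanish, not that atoms are fixed. With claim~(a) gone, your step ``By the claim, $B$ has atom sequence $\Ba_1',\dots,\Ba_n'$'' no longer holds, and the reverse-renaming argument does not go through as written.

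The paper's own proof of symmetry is a one-liner (``reversing this replacement yields $A$ from $B$''), tacitly reading Definition~\ref{def:similarTypes} as $\sigma(A) = B$ rather than $\sigma(A) \approx B$, so your instinct that more is needed was sound. The salvageable half of your idea is atom-\emph{blindness} of $\preceq$ rather than atom-\emph{preservation} of $\approx$: every subtyping rule remains a valid instance after an arbitrary atom replacement (the product rule already quantifies over independent atoms on each side; the others mention no specific atom). That is the invariant to carry through a derivation-level induction, taking care at reflexivity and at the shared middle type in transitivity where the renamings on the two sides must agree.
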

\begin{proof}
  We prove each of the items separately:
  \begin{enumerate}
    \item[{\ref{sim_relation_1}.}]
      Let $\Ba_1, \Ba_2 \in \{\B, \X\}$. The type $\Ba_1$ has a single atomic component, and replacing it with $\Ba_2$ yields $\Ba_2$, which is definitionally equivalent. Hence, $\Ba_1 \sim \Ba_2$.
    \item[{\ref{sim_relation_4}.}]
      Suppose $A \sim B$. Then there exists a sequence $\Ba_1, \ldots, \Ba_n$ of atomic types in $A$ that can be replaced by $\Ba_1', \ldots, \Ba_n'$ to obtain $B$. Reversing this replacement yields $A$ from $B$, so $B \sim A$.
    \item[{\ref{sim_relation_6}.}]
      Suppose $A \sim B$ and $C \sim D$. Then the atomic types in $A \Rightarrow C$ can be replaced by those in $B \Rightarrow D$ to obtain a type equivalent to $B \Rightarrow D$. Hence, $A \Rightarrow C \sim B \Rightarrow D$.
      \qed
  \end{enumerate}
\end{proof}

\begin{restatable}[Properties of the similarity relation]{lemma}{simrelation}
  \label{sim_relation}
  The similarity relation $\sim$ satisfies the following properties:
  \begin{enumerate}
    \item \label{sim_relation_3} If $A \sim B$ and $B \sim C$, then $A \sim C$.
    \item \label{sim_relation_5} If $A \sim B$ and $C \sim D$, then $A \times C \sim B \times D$.
    \item \label{preceqrelation12}\label{sim_relation_7} Let $\gB_1$ and $\gB_2$ not be superpositions, then $S(\gB_1) \approx S(\gB_2)$ if and only if $\gB_1 \sim \gB_2$.
  \end{enumerate}
\end{restatable}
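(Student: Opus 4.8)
The plan is to prove the three items in the order (\ref{sim_relation_5}), (\ref{sim_relation_3}), (\ref{preceqrelation12}), since the product congruence of $\sim$ is the simplest fact and is reused when reassembling the inductive cases of item~\ref{preceqrelation12}. Throughout, I read $A \sim B$ as: there is a relabelling of the atom sequence of $A$ — yielding a type $A'$ with the same shape as $A$ but possibly different atoms — such that $A' \approx B$. Everything then reduces to how relabellings interact with $\approx$, and I rely on the already-proved properties of $\preceq$ (Lemmas~\ref{lem:preceq_properties} and~\ref{more_preceq_properties}) together with the auxiliary facts of Lemma~\ref{moresimrelation}: reflexivity on atoms (\ref{sim_relation_1}), symmetry (\ref{sim_relation_4}), and arrow-congruence (\ref{sim_relation_6}).

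For item~\ref{sim_relation_5}, from relabellings witnessing $A \sim B$ and $C \sim D$ — say $A' \approx B$ and $C' \approx D$ — I concatenate the two atom sequences to obtain a relabelling of $A \times C$ producing $A' \times C'$, and it then suffices to note that $\approx$ is a congruence for $\times$: applying the product subtyping rule in both directions to $A' \approx B$ and $C' \approx D$ gives $A' \times C' \preceq B \times D$ and $B \times D \preceq A' \times C'$. For item~\ref{sim_relation_3} (transitivity), from $A' \approx B$ and $B' \approx C$ (with $A'$ a relabelling of $A$ and $B'$ of $B$) I must produce a single relabelling $A''$ of $A$ with $A'' \approx C$. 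Since $\approx$ is transitive, the crux is a transport step: the relabelling carrying $B$ to $B'$ has to be pushed across $A' \approx B$ to yield a relabelling of $A'$ to some $A''$ with $A'' \approx B'$. This is sound because $\approx$ identifies only types of the same shape up to collapsing nested $S$'s — Lemma~\ref{lem:preceq_properties}.\ref{preceq_relation_7} pins down the $S$-layers and \ref{preceq_relation_12} preserves the product count — so the atom positions of $A'$ and $B$ are in canonical correspondence; composing this with the relabelling of $A$ into $A'$ gives $A''$, and $A'' \approx B' \approx C$ closes the case.

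For item~\ref{preceqrelation12} I induct on the structure of $\gB_1$, which, being a non-superposition qubit type, is either an atom or a product. The backward direction ($\gB_1 \sim \gB_2$ implies $S(\gB_1) \approx S(\gB_2)$) proceeds from a relabelling $\gB_1' \approx \gB_2$: by $S$-congruence $S(\gB_1') \approx S(\gB_2)$, so it remains to show $S(\gB_1) \approx S(\gB_1')$, i.e.\ that an outer $S$ absorbs the relabelling; this is where the atom-product rule $\prod \Ba_i \preceq S(\prod \Ba_i')$ is used, and the product case is recombined using item~\ref{sim_relation_5}. The forward direction ($S(\gB_1) \approx S(\gB_2)$ implies $\gB_1 \sim \gB_2$) is the technical heart: Lemma~\ref{lem:preceq_properties}.\ref{preceq_relation_12} first matches the number of products (excluding an atom-versus-product clash), then in the product case Lemma~\ref{more_preceq_properties}.\ref{preceq_relation_10} peels one layer into component-wise equivalences of the form $S(\varphi) \approx S(\varphi')$, to which the induction hypothesis applies; the pieces are recombined with items~\ref{sim_relation_3} and~\ref{sim_relation_5}, and the atom case is closed by Lemma~\ref{moresimrelation}.\ref{sim_relation_1}.

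I expect the forward direction of item~\ref{preceqrelation12} — and the dual ``$S$ absorbs a relabelling'' step of the backward direction — to be the main obstacle. The subtlety is that an atom is freely interchangeable under $\approx$ only when it lies inside an $S$-scoped product of bare atoms, whereas a bare atom occurring as a factor alongside a superposed sibling is rigid. The induction must therefore decompose through $S(\varphi) \approx S(\varphi')$ rather than through bare components (which is exactly what Lemma~\ref{more_preceq_properties}.\ref{preceq_relation_10} supplies), and must verify that the component-wise relabellings it extracts genuinely reassemble into a relabelling of the whole of $\gB_1$. Keeping this bookkeeping coherent across the $S$/product nesting is where the real work lies.
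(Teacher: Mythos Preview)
Your proposal is correct and follows essentially the same route as the paper: items~\ref{sim_relation_3} and~\ref{sim_relation_5} are handled by composing/concatenating atom relabellings, and item~\ref{preceqrelation12} is proved by structural induction on $\gB_1$, with the forward direction driven by Lemma~\ref{more_preceq_properties}.\ref{preceq_relation_10} to split $S(\gB_1 \times \gB_2) \approx S(\gB_3 \times \gB_4)$ into component-wise $S$-equivalences before invoking the induction hypothesis, and the backward direction relying on the atom-product rule and recombination via item~\ref{sim_relation_5}.

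The one substantive difference is that you take the $\approx$ in Definition~\ref{def:similarTypes} seriously: for transitivity you explicitly transport the second relabelling across $A' \approx B$, arguing via Lemmas~\ref{lem:preceq_properties}.\ref{preceq_relation_7} and~\ref{lem:preceq_properties}.\ref{preceq_relation_12} that $\approx$ preserves atom positions. The paper's proof simply writes ``the sequence replacing the atomic types in $A$ to obtain $B$'' as if the relabelling lands on $B$ exactly, glossing over the $\approx$. Your version is the honest one, and it is what the examples following Definition~\ref{def:similarTypes} (e.g.\ $S(\gB_1) \sim \gB_2$ iff $\gB_2 \approx S(\gB_3)$) actually demand. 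The paper's shortcut buys brevity; your transport step buys correctness with respect to the stated definition.
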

\begin{proof}
  We prove each of the items separately:
  \begin{enumerate}
    \item[{\ref{sim_relation_3}.}]
      Suppose $A \sim B$ and $B \sim C$. Let $\Ba_1', \ldots, \Ba_n'$ be the sequence replacing the atomic types $\Ba_1, \ldots, \Ba_n$ in $A$ to obtain $B$, and let $\Ba_1'', \ldots, \Ba_n''$ be the sequence replacing $\Ba_1', \ldots, \Ba_n'$ in $B$ to obtain $C$. Replacing $\Ba_1, \ldots, \Ba_n$ in $A$ directly by $\Ba_1'', \ldots, \Ba_n''$ gives $C$, so $A \sim C$.
    \item[{\ref{sim_relation_5}.}]
      Suppose $A \sim B$ and $C \sim D$. Then the atomic types in $A \times C$ can be replaced by those in $B \times D$ to obtain a type equivalent to $B \times D$. Hence, $A \times C \sim B \times D$.
    \item[{\ref{preceqrelation12}.}]
	  \begin{description}
	    \item[$\Longrightarrow$:]
	      By induction on $\gB_1$:
	   \begin{itemize}
	     \item With $\gB_1 \approx \Ba$, if $S(\Ba) \approx S(\gB_2)$ does not hold when $\gB_2$ is a product. However, it does hold when $\gB_2$ is $\Ba'$. Therefore, we conclude that $\Ba \sim \Ba'$.
	     \item With $\gB_1 \approx \gB_3 \times \gB_4$, if $S(\gB_3 \times \gB_4) \approx S(\gB_2)$.
	       We apply Lemma~\ref{more_preceq_properties}.\ref{preceq_relation_10} twice to obtain $\gB_5$ and $\gB_6$ such that $\gB_2 \approx \gB_5 \times \gB_6$, $S(\gB_3) \approx S(\gB_5)$ and $S(\gB_4) \approx S(\gB_6)$. Since superposition is idempotent, we can considere $\gB_3'$, $\gB_4'$, $\gB_5'$ and $\gB_6'$ are not superpositions such that $S(\gB_3) \approx S(\gB_3')$, $S(\gB_4) \approx S(\gB_4')$, $S(\gB_5) \approx S(\gB_5')$ and $S(\gB_6) \approx S(\gB_6')$. Then $S(\gB_3') \approx S(\gB_5')$ imples $\gB_3' \sim \gB_5'$ by induction hypothesis. We have $S(\gB_4') \approx S(\gB_6')$ imples $\gB_4' \sim \gB_6'$ by induction again. We can see that, regardless of whether $\gB_3$ is a superposition or not, we have $\gB_3^\prime \sim \gB_3$ by definition. We can proceed in the same way for $\gB_4$, $\gB_5$, and $\gB_6$. Since $\gB_3 \sim \gB_5$ and $\gB_4 \sim \gB_6$ by transitivity, we have $\gB_3 \times \gB_4 \sim \gB_5 \times \gB_6$ by Lemma~\ref{sim_relation}.\ref{sim_relation_5}.
	       Therefore, we conclude that $\gB_1 \sim \gB_2$.
	   \end{itemize}

    \item[$\Longleftarrow$:]
      By inducion on $\gB_1$.
      \begin{itemize}
	\item If $\gB_1=\Ba$, then $\gB_2=\Ba'$ thus $S(\gB_1) \approx S(\gB_2)$.
	\item If $\gB_1=\gB_{11}\times\gB_{12}$, then $\gB_2=\gB_{21}\times \gB_{22}$ with $\gB_{11}\sim\gB_{21}$ and $\gB_{12}\sim\gB_{22}$.
	  By the induction hypothesis, $S(\gB_{11}) \approx S(\gB_{21})$ and $S(\gB_{12}) \approx S(\gB_{22})$.
	  Therefore, $S(\gB_1) = S(\gB_{11} \times \gB_{12}) \approx S(\gB_{21} \times \gB_{22}) = S(\gB_2)$.
	  \qed
      \end{itemize}
      \end{description}
   \end{enumerate}
\end{proof}

In particular, property~\ref{sim_relation_7} tells us that similar
types span the same vector space. For example, since $\B \sim \X$, we have
$S(\B) \approx S(\X)$. More generally, if $\gB_1$ and $\gB_2$ are similar qubit
types, then $S(\gB_1) \approx S(\gB_2)$, meaning that they represent the same
subspace. 

We distinguish between the relations $\sim$, $\approx$, and $\typeequiv$ in our
type system. For instance, we have $\B \sim \X$ and $\B \not\approx \X$, but
$S(\B) \approx S(\X)$ holds.

To formalise the idea that repeated applications of the superposition operator
do not introduce new subspaces, we define a function $\fMinS$ that collapses
nested occurrences of $S$ into a single application.

\begin{definition}
  The function $\fMinS$ is defined recursively as follows:
  \begin{align*}
    \fMinS(\M) & \approx \M \\
    \fMinS(\gB \Rightarrow A) & \approx \fMinS(\gB) \Rightarrow \fMinS(A) \\
    \fMinS(\gB_1 \times \gB_2) & \approx \fMinS(\gB_1) \times \fMinS(\gB_2) \\
    \textnormal{If } A \approx S(B), \quad \fMinS(S(A)) & \approx \fMinS(A) \\
    \textnormal{If } A \not\approx S(B), \quad \fMinS(S(A)) & \approx S(\fMinS(A))
  \end{align*}
\end{definition}

This definition is closely related to the equivalence class induced by $\sim$. 
To make this connection precise, we state the following property.

\begin{restatable}[Property of $\fMinS$]{lemma}{minSproperties}
  \label{equals_to_min} For all types $A$, we have $A \approx \fMinS(A)$.
\end{restatable}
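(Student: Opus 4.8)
The plan is to prove $A \approx \fMinS(A)$ by structural induction on $A$, following exactly the case split in the definition of $\fMinS$. The only ingredient needed beyond the induction hypothesis is that $\approx$ is a \emph{congruence} for the three type constructors $S(\cdot)$, $\gB \Rightarrow (\cdot)$, and $(\cdot)\times(\cdot)$, so I would first record this as a preliminary observation. Each congruence follows immediately from the corresponding monotonicity rule of Figure~\ref{fig:Subtyping}: from $A \preceq B$ one derives $S(A) \preceq S(B)$, hence $A \approx B$ implies $S(A) \approx S(B)$; the product rule gives the analogous statement for $\times$; and the arrow rule (contravariant in the domain) gives $\gB \Rightarrow A \approx \gB' \Rightarrow A'$ from $\gB \approx \gB'$ and $A \approx A'$, by feeding the appropriate halves of the two equivalences into the rule to obtain each direction of $\preceq$ separately.

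With congruence available, the base and inductive cases are routine. For $A = \M$ the definition gives $\fMinS(\M) \approx \M$ directly. For $A = \gB \Rightarrow B$ and $A = \gB_1 \times \gB_2$, I would apply the induction hypothesis to each immediate subterm (so $\gB \approx \fMinS(\gB)$, $B \approx \fMinS(B)$, etc.) and close the case with the matching congruence property, which coincides with the recursive clause of $\fMinS$.

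The case requiring care is $A = S(B)$, handled via the definition's two-way split. If $B \not\approx S(C)$ for every $C$, then $\fMinS(S(B)) \approx S(\fMinS(B))$, and the induction hypothesis $B \approx \fMinS(B)$ together with congruence for $S$ yields $S(B) \approx S(\fMinS(B)) \approx \fMinS(S(B))$. If instead $B \approx S(C)$ for some $C$, then $\fMinS(S(B)) \approx \fMinS(B)$; here I would invoke idempotence of $S$, namely $S(S(C)) \approx S(C)$ (obtained from the rule $S(S(C)) \preceq S(C)$ and from $S(C) \preceq S(S(C))$), to compute $S(B) \approx S(S(C)) \approx S(C) \approx B$, and then conclude $S(B) \approx B \approx \fMinS(B) \approx \fMinS(S(B))$ using the induction hypothesis on the strict subterm $B$.

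The main obstacle is precisely this last subcase: one must be confident that collapsing a layer of $S$ is sound, which is exactly where idempotence and the equivalence $B \approx S(C)$ interact, and one should note that the induction remains well-founded because even the collapsing clause recurses on the structurally smaller type $B$. Everything else is a mechanical propagation of $\approx$ through the type structure, justified entirely by the congruence observations.
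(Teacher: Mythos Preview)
Your proposal is correct and follows essentially the same approach as the paper: a structural induction on $A$ with the critical case being the collapsing clause for $S(B)$ when $B \approx S(C)$, handled via the idempotence $S(S(C)) \approx S(C)$. The paper's proof is a terse sketch of this same argument; your version is more explicit in spelling out the congruence properties for $S$, $\Rightarrow$, and $\times$ that the paper leaves implicit under ``all other cases are direct by structural induction.''
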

\begin{proof}
  The function $\fMinS$ rewrites a type by collapsing nested occurrences of the
  $S$ constructor. In the critical case, if $A \approx S(B)$, then by
  definition $\fMinS(S(A)) \approx \fMinS(A)$. Since $S(S(B)) \approx S(B)$,
  the rewriting is consistent with type equivalence. All other cases are direct
  by structural induction. Therefore, $A \approx \fMinS(A)$ holds in all cases.
  \qed
\end{proof}

We begin by introducing a notion of distance between types that captures the
number of non-transitive subtyping steps required to relate them. This notion
will be instrumental in proving several technical lemmas involving nested
superpositions and product types.

Let \(\eqclass{\gB}\) denote the equivalence class of \(\gB\) under the
equivalence relation \(\sim\).  We also write \(\preceq_{NT}\) to refer to the
non-transitive fragment of the subtyping relation~\(\preceq\), that is, the
relation obtained by omitting the transitivity rule.

We now define a measure of subtyping distance between equivalence classes of
qubit types under \(\sim\).

\begin{definition}
  Let \(\precsim\) be the relation over \(\types / \sim\) defined as:
  \[
    \eqclass{\gB_1} \precsim \eqclass{\gB_2} \quad\text{if and only if}\quad \forall \gB_1' \in \eqclass{\gB_1},\; \forall \gB_2' \in \eqclass{\gB_2},\; S(\gB_1') \preceq S(\gB_2').
  \]
  Similarly, we define the non-transitive subtyping relation \(\precsim_{NT}\) as:
  \[
    \eqclass{\gB_1} \precsim_{NT} \eqclass{\gB_2} \quad\text{if and only if}\quad \forall \gB_1' \in \eqclass{\gB_1},\; \forall \gB_2' \in \eqclass{\gB_2},\; S(\gB_1') \preceq_{NT} S(\gB_2').
  \]
\end{definition}

\begin{lemma}
  \label{precsim_lemma}
  If \(\gB_1 \preceq \gB_2\), then \(\eqclass{\gB_1} \precsim \eqclass{\gB_2}\).
\end{lemma}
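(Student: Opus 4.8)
The plan is to deduce the claim from two ingredients: the monotonicity of subtyping under the \(S\) modality (the rule deriving \(S(A) \preceq S(B)\) from \(A \preceq B\)), and the fact that similar types span the same subspace. The central auxiliary fact I would prove first is: \emph{if \(\gB \sim \gB'\), then \(S(\gB) \approx S(\gB')\)}. This is essentially the \(\Longleftarrow\) direction of Lemma~\ref{sim_relation}.\ref{sim_relation_7}, but lifted to qubit types that may themselves be superpositions.

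To establish this auxiliary fact, I would first normalise the top-level superposition structure: by idempotency of \(S\) together with Lemma~\ref{equals_to_min}, for every qubit type \(\gB\) there is a non-superposition type \(\varphi\) with \(S(\gB) \approx S(\varphi)\) (take \(\varphi \approx \gB\) when \(\gB\) is not a superposition, and otherwise collapse the nested \(S\)'s, e.g.\ via \(\fMinS\)). The definition of \(\sim\) respects this structure---recall that \(S(\gB_1) \sim \gB_2\) holds iff \(\gB_2 \approx S(\gB_3)\) with \(\gB_1 \sim \gB_3\)---so \(\gB \sim \gB'\) forces the associated cores to be similar, \(\varphi \sim \varphi'\), with both non-superpositions. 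Lemma~\ref{sim_relation}.\ref{sim_relation_7} then gives \(S(\varphi) \approx S(\varphi')\), whence \(S(\gB) \approx S(\varphi) \approx S(\varphi') \approx S(\gB')\).

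With the auxiliary fact in hand, the main argument is a short chain. From \(\gB_1 \preceq \gB_2\), monotonicity of \(S\) gives \(S(\gB_1) \preceq S(\gB_2)\). Now take arbitrary representatives \(\gB_1' \in \eqclass{\gB_1}\) and \(\gB_2' \in \eqclass{\gB_2}\), i.e.\ \(\gB_1' \sim \gB_1\) and \(\gB_2' \sim \gB_2\). The auxiliary fact gives \(S(\gB_1') \approx S(\gB_1)\) and \(S(\gB_2') \approx S(\gB_2)\); unfolding \(\approx\) as mutual subtyping yields \(S(\gB_1') \preceq S(\gB_1)\) and \(S(\gB_2) \preceq S(\gB_2')\). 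Transitivity then gives
\[
  S(\gB_1') \preceq S(\gB_1) \preceq S(\gB_2) \preceq S(\gB_2').
\]
Since the representatives were arbitrary, \(\eqclass{\gB_1} \precsim \eqclass{\gB_2}\) holds by definition.

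The main obstacle is the auxiliary fact, not the final chain: Lemma~\ref{sim_relation}.\ref{sim_relation_7} applies only to non-superposition types, so the real work is justifying that \(\sim\) interacts correctly with the top-level \(S\) constructors and that stripping them via idempotency preserves both similarity and the spanned subspace. Once that normalisation step is carefully verified (the characterisation of \(\sim\) on superpositions is the key lever), the remainder is routine manipulation of \(\preceq\) and \(\approx\).
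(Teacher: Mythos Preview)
Your proposal is correct and follows essentially the same route as the paper: establish that \(\gB \sim \gB'\) implies \(S(\gB) \approx S(\gB')\), then chain this with the \(S\)-monotonicity rule and transitivity of \(\preceq\). You are in fact more careful than the paper, which simply invokes Lemma~\ref{sim_relation}.\ref{sim_relation_7} to identify \(\eqclass{\gB}\) with \(\{\gB' \mid S(\gB) \approx S(\gB')\}\) without addressing that lemma's non-superposition restriction---your normalisation step via \(\fMinS\) explicitly fills a gap the paper glosses over.
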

\begin{proof}
  We can see an equivalence on the definition of $\eqclass{\gB}$ and the set of all $\gB^\prime$ such that \(S(\gB) \approx S(\gB^\prime)\) by Lemmas~\ref{sim_relation}.\ref{preceqrelation12}. 
  Then, for all \(\gB_1' \in \eqclass{\gB_1}\) and \(\gB_2' \in \eqclass{\gB_2}\) we have \(S(\gB_1) \approx S(\gB_1')\) and \(S(\gB_2) \approx S(\gB_2')\). Therefore we have $S(\gB_1') \preceq S(\gB_2')$ by transitivity and the property hold.
  \qed
\end{proof}

\begin{definition}[Subtyping distance]
  Let \(\eqclass{\gB_1} \precsim \eqclass{\gB_2}\). The subtyping distance between \(\eqclass{\gB_1}\) and \(\eqclass{\gB_2}\), denoted \(\simplesize{ \eqclass{\gB_1}, \eqclass{\gB_2} }\), is defined recursively as follows:
  \[
    \simplesize{ \eqclass{\gB_1}, \eqclass{\gB_2} } = 
    \begin{cases}
      0 & \text{if } \eqclass{\gB_1} = \eqclass{\gB_2}, \\
      1 + \min\left\{ \simplesize{ \eqclass{\gB'}, \eqclass{\gB_2} } \;\middle|\; \eqclass{\gB_1} \precsim_{NT} \eqclass{\gB'} \precsim \eqclass{\gB_2} \right\} & \text{otherwise.}
    \end{cases}
  \]
\end{definition}

\begin{example}
  \begin{align*}
    &\simplesize{ \eqclass{\B \times \X},\; \eqclass{\X \times \X} }  = 0\\
    &\simplesize{ \eqclass{S(\B) \times S(\X)},\; \eqclass{S(S(\X) \times S(\X))} }  = 1\\
    &\simplesize{ \eqclass{\B \times \X \times S(\B \times \B)},\; \eqclass{S(S(\B) \times \X \times S(\B \times \B))} }  = 2
  \end{align*}
\end{example}

Before proving the lemma, we need the following technical results.
\begin{lemma}
  ~
  \label{preceqproductpropertiesaux}
  \begin{enumerate}
    \item \label{aux_lemma_rneutrallup_01}
      Let \(\Phi \not\approx S(\Phi')\). If \(\Phi \preceq S(\gB_1 \times S(\gB_2))\), then
      \(
	\Phi \preceq \gB_1' \times S(\gB_2')
      \)
      for some \(\gB_1' \sim \fMinS(\gB_1)\) and \(\gB_2' \sim \fMinS(\gB_2)\). 

    \item \label{aux_lemma_rneutrallup_02}
      If \(\varphi \times \M \preceq \gB_1 \times S(\gB_2)\), with \(m \geq 1\),
      \(\gB_1 \typeequiv \fMinS(\gB_1)\) and \(S(\gB_2) \typeequiv \fMinS(S(\gB_2))\), then
      \(
	\varphi \times \M \preceq \gB_1' \times \gB_2'
      \)
      for some \(\gB_1' \sim \gB_1\) and \(\gB_2' \sim \gB_2\).

    \item \label{aux_lemma_rneutralrup_01}
      Let \(\Phi \not\approx S(\Phi')\). If \(\Phi \preceq S(S(\gB_1) \times \gB_2)\), then
      \(
	\Phi \preceq S(\gB_1') \times \gB_2'
      \)
      for some \(\gB_1' \sim \fMinS(\gB_1)\) and \(\gB_2' \sim \fMinS(\gB_2)\).

    \item \label{aux_lemma_rneutralrup_02}
      If \(\M \times \varphi \preceq S(\gB_1) \times \gB_2\), with \(m \geq 1\),
      \(S(\gB_1) \typeequiv \fMinS(S(\gB_1))\) and \(\gB_2 \typeequiv \fMinS(\gB_2)\), then
      \(
	\M \times \varphi \preceq \gB_1' \times \gB_2'
      \)
      for some \(\gB_1' \sim \gB_1\) and \(\gB_2' \sim \gB_2\).
  \end{enumerate}
\end{lemma}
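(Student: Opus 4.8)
The plan is to prove the four items by exploiting a left–right symmetry and reducing each to the product-decomposition lemmas already established. Items~\ref{aux_lemma_rneutrallup_01} and~\ref{aux_lemma_rneutralrup_01} are mirror images of one another (the distinguished superposition sits on the right component of the product in the first and on the left in the second), and likewise items~\ref{aux_lemma_rneutrallup_02} and~\ref{aux_lemma_rneutralrup_02}. Since the product subtyping rule and the similarity relation $\sim$ are defined symmetrically in the two factors of a product, I would prove the ``left'' versions (\ref{aux_lemma_rneutrallup_01} and~\ref{aux_lemma_rneutrallup_02}) in full and obtain the ``right'' versions by the same argument with the two factors exchanged, using symmetry of $\sim$ (Lemma~\ref{moresimrelation}.\ref{sim_relation_4}).

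For item~\ref{aux_lemma_rneutrallup_01}, I would first observe that, since $\Phi \not\approx S(\Phi')$ and $\Phi \preceq S(\gB_1 \times S(\gB_2))$, Lemma~\ref{lem:preceq_properties}.\ref{preceq_relation_12} forces $\Phi$ to carry at least one product constructor; together with the fact that an arrow type is never a subtype of a product-shaped type, this makes $\Phi \approx \gB_a \times \gB_b$ for qubit types that are not superpositions. Applying the rule $A \preceq B \Rightarrow S(A) \preceq S(B)$ to the hypothesis yields $S(\gB_a \times \gB_b) \preceq S(\gB_1 \times S(\gB_2))$, and since $\gB_1 \times S(\gB_2)$ is not a superposition, Lemma~\ref{more_preceq_properties}.\ref{preceq_relation_10} decomposes this into $S(\gB_a) \preceq S(\gB_1)$ and $S(\gB_b) \preceq S(S(\gB_2)) \approx S(\gB_2)$. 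The second inclusion is easy to discharge: transitivity gives $\gB_b \preceq S(\gB_2)$, so taking $\gB_2' \approx \fMinS(\gB_2)$ and using Lemma~\ref{equals_to_min} we get $\gB_b \preceq S(\gB_2')$ with $\gB_2' \sim \fMinS(\gB_2)$. The first inclusion is reassembled into $\gB_a \preceq \gB_1'$ for a suitable $\gB_1' \sim \fMinS(\gB_1)$ (see below), and the two components recombine via the product subtyping rule to give $\Phi \preceq \gB_1' \times S(\gB_2')$.

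For item~\ref{aux_lemma_rneutrallup_02}, the right-hand side $\gB_1 \times S(\gB_2)$ is already a product rather than a top-level superposition, so Lemma~\ref{more_preceq_properties}.\ref{preceq_relation_9} decomposes the hypothesis directly into $\varphi \preceq \gB_1$ and $\M \preceq S(\gB_2)$. The crux is the second inclusion: $\M$ is a base type, hence a superposition-free product of atomic types, while the normalisation hypothesis $S(\gB_2) \typeequiv \fMinS(S(\gB_2))$ guarantees that $\gB_2$ itself is not a superposition. Under these conditions $\M \preceq S(\gB_2)$ can only be realised through the special rule $\prod \Ba_i \preceq S(\prod \Ba_i')$, which forces $\gB_2$ to have the same product shape and arity as $\M$; using that all atomic types are similar (Lemma~\ref{moresimrelation}.\ref{sim_relation_1}) I can then take $\gB_2' \approx \M$, obtaining $\M \preceq \gB_2'$ with $\gB_2' \sim \gB_2$, and $\gB_1' \approx \gB_1$ recombines everything into $\varphi \times \M \preceq \gB_1' \times \gB_2'$.

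The main obstacle I anticipate is the passage from span-level facts such as $S(\gB_a) \preceq S(\gB_1)$ back to a component-level subtyping $\gB_a \preceq \gB_1'$ against a \emph{similar} (rather than equal) type, while simultaneously controlling the placement of superposition markers so that the witness lands in $\fMinS$-normal form. Because the inclusion is proper (so Lemma~\ref{sim_relation}.\ref{sim_relation_7}, which only characterises $\approx$, does not apply directly) and the factors may be deeply nested products, I expect this refinement step to be carried out by induction on the subtyping distance $\simplesize{\eqclass{\cdot}, \eqclass{\cdot}}$ between the relevant equivalence classes: the base case (distance $0$) reduces to the $\sim$-characterisation of spanning, and each inductive step peels off a single non-transitive subtyping step, maintaining the similarity invariants through transitivity and congruence of $\sim$ (Lemma~\ref{sim_relation}.\ref{sim_relation_3} and~\ref{sim_relation_5}). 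This is precisely where the normalisation hypotheses and the subtyping-distance machinery introduced just before the statement earn their keep.
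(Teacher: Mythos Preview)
Your decomposition step is where the plan breaks. When you invoke Lemma~\ref{more_preceq_properties}.\ref{preceq_relation_9} or~\ref{preceq_relation_10}, the conclusion is only that \emph{some} split $\gB_3 \times \gB_4$ of the right-hand side exists with componentwise subtyping; because products are taken modulo associativity, nothing forces that split to coincide with the boundary between $\gB_1$ and $S(\gB_2)$ (respectively $\varphi$ and $\M$). So the clauses ``$\varphi \preceq \gB_1$ and $\M \preceq S(\gB_2)$'' (item~\ref{aux_lemma_rneutrallup_02}) or ``$S(\gB_a) \preceq S(\gB_1)$ and $S(\gB_b) \preceq S(S(\gB_2))$'' (item~\ref{aux_lemma_rneutrallup_01}) are not what those lemmas deliver. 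Indeed the paper's own proof of item~\ref{aux_lemma_rneutrallup_02} has to enumerate two different ways the cut can fall inside $\varphi \times \M$.

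Even granting the decomposition, the ``refinement'' you flag as the main obstacle is genuinely obstructive at the component level: from $S(\gB_a) \preceq S(\gB_1)$ one cannot in general recover $\gB_a \preceq \gB_1'$ for any $\gB_1' \sim \fMinS(\gB_1)$ (take $\gB_a = S(\B)$, $\gB_1 = \B$: the span inclusion holds but no atomic $\gB_1'$ admits $S(\B) \preceq \gB_1'$). The point is that the distance machinery is not meant to be applied \emph{after} decomposition to each factor, but to the \emph{whole} product before any splitting. The paper normalises both sides via $\fMinS$, observes that the left side is not a superposition while the right side is, so $\simplesize{\eqclass{\fMinS(\Phi)},\eqclass{\fMinS(S(\gB_1 \times S(\gB_2)))}} \geq 1$; it then takes the intermediate class $\eqclass{A}$ at distance exactly~$1$ from the target, which forces $S(A) \approx \fMinS(S(\gB_1 \times S(\gB_2)))$, and Lemma~\ref{sim_relation}.\ref{sim_relation_7} immediately gives $A \sim \fMinS(\gB_1) \times \fMinS(S(\gB_2))$. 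The similar witnesses $\gB_1', \gB_2'$ are then read off from this similarity, with no componentwise subtyping ever being isolated. Your symmetry observation for items~\ref{aux_lemma_rneutralrup_01} and~\ref{aux_lemma_rneutralrup_02} is correct, but the core argument needs to be reorganised along these lines.
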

\begin{proof}
    \item[{\ref{aux_lemma_rneutrallup_01}}.]
      By Lemma~\ref{equals_to_min}, we have $\Phi \approx \fMinS(\Phi)$.\\
      Also, $S(\gB_1 \times S(\gB_2)) \approx \fMinS(S(\gB_1 \times S(\gB_2)))$.\\
      Since $\fMinS(S(\gB_1 \times S(\gB_2)))$ is a superposition and $\fMinS(\Phi)$ is not, we get by Lemma~\ref{precsim_lemma} that
      \[
        \delta( \eqclass{\fMinS(\Phi)}, \eqclass{\fMinS(S(\gB_1 \times S(\gB_2)))} ) \geq 1.
      \]
      So there exists $A$ such that $A \typeequiv \fMinS(A) \not\approx S(B)$ and
      \[
        \delta( \eqclass{\fMinS(\Phi)}, \eqclass{A} ) \geq 0 \quad\text{and}\quad \delta( \eqclass{A}, \eqclass{\fMinS(S(\gB_1 \times S(\gB_2)))} ) = 1.
      \]
      Then $S(A) \approx \fMinS(S(\gB_1 \times S(\gB_2)))$, and by Lemma~\ref{sim_relation}.\ref{preceqrelation12}, we have
      \[
        A \sim \fMinS(\gB_1 \times S(\gB_2)) \typeequiv \fMinS(\gB_1) \times \fMinS(S(\gB_2)).
      \]
      Thus $A \approx \gB_1' \times S(\gB_2')$ with $\gB_1' \sim \fMinS(\gB_1)$ and $\gB_2' \sim \fMinS(\gB_2)$.

    \item[{\ref{aux_lemma_rneutrallup_02}}.]
      Write $\varphi \approx \prod_{i = 1}^n \varphi_i$ with $n \geq 1$.\\
      Then $\prod_{i = 1}^n \varphi_i \times \M \preceq \gB_1 \times S(\gB_2)$.\\
      By Lemma~\ref{equals_to_min}, we have:
      \[
        \prod_{i = 1}^n \varphi_i \times \M \approx \fMinS\left( \prod_{i = 1}^n \varphi_i \times \M \right).
      \]
      Since the product ends in a non-superposed type and the right-hand side ends in a superposition, we again apply Lemma~\ref{precsim_lemma} and obtain:
      \[
        \delta( \eqclass{\fMinS(\prod_{i = 1}^n \varphi_i \times \M)}, \eqclass{\fMinS(\gB_1 \times S(\gB_2))} ) \geq 1.
      \]
      Hence, there exists $A$ such that $A \typeequiv \fMinS(A) \not\approx S(B)$ and
      \[
        \delta( \eqclass{\fMinS(\prod_{i = 1}^n \varphi_i \times \M)}, \eqclass{A} ) \geq 0, \quad \delta( \eqclass{A}, \eqclass{\fMinS(\gB_1 \times S(\gB_2))} ) = 1.
      \]
      By Lemma~\ref{lem:preceq_properties}.\ref{preceq_relation_12}, we have $A \approx \gB_1' \times \gB_2'$ with $\gB_1' \sim \gB_1$ and $\gB_2' \sim \gB_2$, and one of the following holds:
      \begin{itemize}
        \item $\prod_{i = 1}^k \varphi_i \preceq \gB_1'$ and $\prod_{i = k+1}^n \varphi_i \times \M \preceq \gB_2'$
        \item $\prod_{i = 1}^n \varphi_i \times \prod_{i = 1}^k \Ba_i \preceq \gB_1'$, $\prod_{i = k+1}^{m} \Ba_i \preceq \gB_2'$
      \end{itemize}
      In either case, we conclude $\prod_{i = 1}^n \varphi_i \times \M \preceq \gB_1' \times \gB_2'$.

    \item[{\ref{aux_lemma_rneutralrup_01}}.]
      Similar to case \ref{aux_lemma_rneutrallup_01}, applying the same sequence of lemmas with symmetry:
      \[
        \Phi \preceq S(\gB_1') \times \gB_2' \quad\text{with}\quad \gB_1' \sim \fMinS(\gB_1),\; \gB_2' \sim \fMinS(\gB_2).
      \]

    \item[{\ref{aux_lemma_rneutralrup_02}}.]
      Proceed as in case \ref{aux_lemma_rneutrallup_02}, replacing $\varphi \times \M$ by $\M \times \varphi$ and adjusting order. The same conclusion holds:
      \[
        \M \times \varphi \preceq \gB_1' \times \gB_2'.
	\tag*{\qed}
      \]
\end{proof}

Finally, with all these definitions and properties, we can prove the Lemma~\ref{preceqproductproperties}.

\preceqproductproperties*
\begin{proof}~
  \begin{enumerate}

    \item[{\ref{aux_lemma_rneutrallup}}.]
      By Lemma~\ref{preceqproductpropertiesaux}.\ref{aux_lemma_rneutrallup_01} and \ref{preceqproductpropertiesaux}.\ref{aux_lemma_rneutrallup_02}, and using Lemmas~\ref{sim_relation}.\ref{sim_relation_3}, \ref{sim_relation}.\ref{sim_relation_5}, and \ref{sim_relation}.\ref{sim_relation_7}, we derive:
      \[
        \varphi \times \M \preceq S(\gB_1 \times \gB_2).
      \]

    \item[{\ref{aux_lemma_rneutralrup}}.]
      Follows by combining Lemma~\ref{preceqproductpropertiesaux}.\ref{aux_lemma_rneutralrup_01} and \ref{preceqproductpropertiesaux}.\ref{aux_lemma_rneutralrup_02}, and applying the same reasoning as in \ref{aux_lemma_rneutrallup}.
      \[
        \M \times \varphi \preceq S(\gB_1 \times \gB_2).
	\tag*{\qed}
      \]
  \end{enumerate}
\end{proof}

\subsection{Proof of Subject Reduction}\label{app:subject_reduction}
\subjectreduction*
\begin{proof}
  By induction on the reduction derivation of \(t \lrap r\). We only consider the basic cases, corresponding to the reduction rules in Figures~\ref{fig:RS_beta_rules} to \ref{fig:RSError}. The contextual rules from Figure~\ref{fig:RSContext} are straightforward using the induction hypothesis.

    \begin{description}
      \item[\rbetab{}]
	Let \(t = (\lambda x^{\M}.t')\,b\) and \(r = t'\Substitution{b}{x}\),
	with \(b \in \basis\) and \(\vdash b : \M\). Suppose \(\Gamma \vdash t
	: A\). By
	Lemma~\ref{generation}.\ref{generation_8},
	there are two possibilities:

	In the first case, we have \(\Gamma = \Gamma_1, \Gamma_2, \Xi\), with
	\(\fT(\Xi) \subseteq \bqtypes\), \(\Gamma_1, \Xi \vdash \lambda
	x^{\M}.t' : \gB \Rightarrow A\), and \(\Gamma_2, \Xi \vdash b : \gB\).
	By
	Lemma~\ref{generation}.\ref{generation_7},
	we obtain \(\Gamma_1, \Xi, x^{\M} \vdash t' : C\) and \(\Gamma_1, \Xi
	\vdash \lambda x^{\M}.t' : \M \Rightarrow C\), with \(\M \Rightarrow C
	\preceq \gB \Rightarrow A\). Then, by
	Lemma~\ref{lem:preceq_properties}.\ref{preceq_relation_3},
	we have \(\gB \preceq \M\) and \(C \preceq A\), so \(\Gamma, x^{\M}
	\vdash t' : A\). By the substitution lemma, \(\Gamma \vdash
	t'\Substitution{b}{x} : A\).

	In the second case, we again have \(\Gamma = \Gamma_1, \Gamma_2, \Xi\),
	with \(\fT(\Xi) \subseteq \bqtypes\), but now \(\Gamma_1, \Xi \vdash
	\lambda x^{\M}.t' : S(\gB \Rightarrow C)\), \(\Gamma_2, \Xi \vdash b :
	S(\gB)\), and \(S(C) \preceq A\). From
	Lemma~\ref{generation}.\ref{generation_7}
	we get \(\Gamma_1, \Xi, x^{\M} \vdash t' : D\) and \(\Gamma_1,
	\Xi \vdash \lambda x^{\M}.t' : \M \Rightarrow D\), with \(\M
	\Rightarrow D \preceq S(\gB \Rightarrow C)\). By
	Lemma~\ref{lem:preceq_properties}.\ref{preceq_relation_6},
	it follows that \(\gB \preceq \M\) and \(D \preceq C\). Therefore,
	\(\Gamma', x^{\M} \vdash t' : C\), and by substitution, \(\Gamma'
	\vdash t'\Substitution{b}{x} : C\). Since \(C \preceq S(C) \preceq A\),
	by transitivity we get \(\Gamma' \vdash t'\Substitution{b}{x} : A\).

      \item[\rbetan{}]
	Let \(t = (\lambda x^\gB.t')\,u\) and \(r = t'\Substitution{u}{x}\),
	where \(\gB \notin \bqtypes\) and \(\Gamma \vdash t : A\). By
	Lemma~\ref{generation}.\ref{generation_8},
	there are two cases to consider:

	First, suppose \(\Gamma = \Gamma_1, \Gamma_2, \Xi\), with \(\fT(\Xi)
	\subseteq \bqtypes\), \(\Gamma_1, \Xi \vdash \lambda x^\gB.t' : \gB'
	\Rightarrow A\), and \(\Gamma_2, \Xi \vdash u : \gB'\). By
	Lemma~\ref{generation}.\ref{generation_7},
	we obtain \(\Gamma_1, \Xi, x^\gB \vdash t' : C\), so that \(\Gamma_1,
	\Xi \vdash \lambda x^\gB.t' : \gB \Rightarrow C\), and \(\gB
	\Rightarrow C \preceq \gB' \Rightarrow A\). Then, by
	Lemma~\ref{lem:preceq_properties}.\ref{preceq_relation_3},
	we conclude that \(\gB' \preceq \gB\) and \(C \preceq A\), and hence
	\(\Gamma, x^\gB \vdash t' : A\). By the substitution lemma, we get
	\(\Gamma \vdash t'\Substitution{u}{x} : A\).

	In the second case, we again have \(\Gamma = \Gamma_1, \Gamma_2, \Xi\)
	and \(\fT(\Xi) \subseteq \bqtypes\), but now \(\Gamma_1, \Xi \vdash
	\lambda x^\gB.t' : S(\gB' \Rightarrow C)\), \(\Gamma_2, \Xi \vdash u :
	S(\gB')\), and \(S(C) \preceq A\). As before, from
	Lemma~\ref{generation}.\ref{generation_7}
	we obtain \(\Gamma_1, \Xi, x^\gB \vdash t' : D\), so that
	\(\Gamma_1, \Xi \vdash \lambda x^\gB.t' : \gB \Rightarrow D\) and \(\gB
	\Rightarrow D \preceq S(\gB' \Rightarrow C)\). By
	Lemma~\ref{lem:preceq_properties}.\ref{preceq_relation_6},
	it follows that \(\gB' \preceq \gB\) and \(D \preceq C\), hence
	\(\Gamma', x^\gB \vdash t' : C\). By substitution, we get \(\Gamma'
	\vdash t'\Substitution{u}{x} : C\), and since \(C \preceq S(C) \preceq
	A\), we conclude \(\Gamma' \vdash t'\Substitution{u}{x} : A\).

      \item[\riftrue{}]
	Let \(t = (\ite{}{t_1}{t_2})\,\ket{1}\) and \(r = t_1\), with \(\Gamma
	\vdash t : A\). By
	Lemma~\ref{generation}.\ref{generation_8},
	two cases arise:

	In the first case, \(\Gamma = \Gamma_1, \Gamma_2, \Xi\), with
	\(\fT(\Xi) \subseteq \bqtypes\), \(\Gamma_1, \Xi \vdash
	\ite{}{t_1}{t_2} : \gB \Rightarrow A\), and \(\Gamma_2, \Xi \vdash
	\ket{1} : \gB\). From
	Lemma~\ref{generation}.\ref{generation_14},
	we obtain \(\Gamma_1, \Xi \vdash t_1 : C\) and \(\Gamma_1, \Xi \vdash
	t_2 : C\), so that \(\Gamma_1, \Xi \vdash \ite{}{t_1}{t_2} : \B
	\Rightarrow C\), with \(\B \Rightarrow C \preceq \gB \Rightarrow A\).
	By
	Lemma~\ref{lem:preceq_properties}.\ref{preceq_relation_3},
	it follows that \(\gB \preceq \B\) and \(C \preceq A\), hence
	\(\Gamma_1, \Xi \vdash t_1 : A\).

	In the second case, again \(\Gamma = \Gamma_1, \Gamma_2, \Xi\) and
	\(\fT(\Xi) \subseteq \bqtypes\), but now \(\Gamma_1, \Xi \vdash
	\ite{}{t_1}{t_2} : S(\gB \Rightarrow C)\), \(\Gamma_2, \Xi \vdash
	\ket{1} : S(\gB)\), and \(S(C) \preceq A\). From
	Lemma~\ref{generation}.\ref{generation_14}
	we get \(\Gamma_1, \Xi \vdash t_1 : D\), \(\Gamma_1, \Xi \vdash
	t_2 : D\), and \(\Gamma_1, \Xi \vdash \ite{}{t_1}{t_2} : \B \Rightarrow
	D\), with \(\B \Rightarrow D \preceq S(\gB \Rightarrow C)\).
	Lemma~\ref{lem:preceq_properties}.\ref{preceq_relation_6}
	yields \(\gB \preceq \B\) and \(D \preceq C\), and since \(C
	\preceq S(C) \preceq A\), we conclude \(D \preceq A\), so that
	\(\Gamma_1, \Xi \vdash t_1 : A\).

      \item[\riffalse{}]
	Let \(t = (\ite{}{t_1}{t_2})\,\ket{0}\) and \(r = t_2\), with \(\Gamma
	\vdash t : A\). By
	Lemma~\ref{generation}.\ref{generation_8},
	we distinguish two cases:

	In the first case, \(\Gamma = \Gamma_1, \Gamma_2, \Xi\), with
	\(\fT(\Xi) \subseteq \bqtypes\), \(\Gamma_1, \Xi \vdash
	\ite{}{t_1}{t_2} : \gB \Rightarrow A\), and \(\Gamma_2, \Xi \vdash
	\ket{0} : \gB\). By
	Lemma~\ref{generation}.\ref{generation_14},
	we have \(\Gamma_1, \Xi \vdash t_1 : C\) and \(\Gamma_1, \Xi \vdash t_2
	: C\), and hence \(\Gamma_1, \Xi \vdash \ite{}{t_1}{t_2} : \B
	\Rightarrow C\), with \(\B \Rightarrow C \preceq \gB \Rightarrow A\).
	Lemma~\ref{lem:preceq_properties}.\ref{preceq_relation_3}
	yields \(\gB \preceq \B\) and \(C \preceq A\), so we conclude
	\(\Gamma_1, \Xi \vdash t_2 : A\).

	In the second case, again \(\Gamma = \Gamma_1, \Gamma_2, \Xi\) and
	\(\fT(\Xi) \subseteq \bqtypes\), but now \(\Gamma_1, \Xi \vdash
	\ite{}{t_1}{t_2} : S(\gB \Rightarrow C)\), \(\Gamma_2, \Xi \vdash
	\ket{0} : S(\gB)\), and \(S(C) \preceq A\). From
	Lemma~\ref{generation}.\ref{generation_14},
	we get \(\Gamma_1, \Xi \vdash t_1 : D\) and \(\Gamma_1, \Xi \vdash t_2
	: D\), so that \(\Gamma_1, \Xi \vdash \ite{}{t_1}{t_2} : \B \Rightarrow
	D\), with \(\B \Rightarrow D \preceq S(\gB \Rightarrow C)\). By
	Lemma~\ref{lem:preceq_properties}.\ref{preceq_relation_6},
	it follows that \(\gB \preceq \B\) and \(D \preceq C\). Since \(C
	\preceq S(C) \preceq A\), we conclude \(D \preceq A\), and therefore
	\(\Gamma_1, \Xi \vdash t_2 : A\).

      \item[\rifplus{}]
	Let \(t = (\itex{}{t_1}{t_2})\,\ket{+}\) and \(r = t_1\), with \(\Gamma
	\vdash t : A\). By
	Lemma~\ref{generation}.\ref{generation_8},
	we distinguish two cases:

	In the first case, \(\Gamma = \Gamma_1, \Gamma_2, \Xi\) with \(\fT(\Xi)
	\subseteq \bqtypes\), \(\Gamma_1, \Xi \vdash \itex{}{t_1}{t_2} : \gB
	\Rightarrow A\), and \(\Gamma_2, \Xi \vdash \ket{+} : \gB\). By
	Lemma~\ref{generation}.\ref{generation_15},
	we have \(\Gamma_1, \Xi \vdash t_1 : C\) and \(\Gamma_1, \Xi \vdash t_2
	: C\), hence \(\Gamma_1, \Xi \vdash \itex{}{t_1}{t_2} : \X \Rightarrow
	C\), with \(\X \Rightarrow C \preceq \gB \Rightarrow A\). Then, by
	Lemma~\ref{lem:preceq_properties}.\ref{preceq_relation_3},
	we obtain \(\gB \preceq \X\) and \(C \preceq A\), so \(\Gamma_1, \Xi
	\vdash t_1 : A\).

	In the second case, again \(\Gamma = \Gamma_1, \Gamma_2, \Xi\) and
	\(\fT(\Xi) \subseteq \bqtypes\), but now \(\Gamma_1, \Xi \vdash
	\itex{}{t_1}{t_2} : S(\gB \Rightarrow C)\), \(\Gamma_2, \Xi \vdash
	\ket{+} : S(\gB)\), and \(S(C) \preceq A\). From
	Lemma~\ref{generation}.\ref{generation_15},
	we get \(\Gamma_1, \Xi \vdash t_1 : D\) and \(\Gamma_1, \Xi \vdash t_2
	: D\), hence \(\Gamma_1, \Xi \vdash \itex{}{t_1}{t_2} : \X \Rightarrow
	D\), with \(\X \Rightarrow D \preceq S(\gB \Rightarrow C)\).
	Lemma~\ref{lem:preceq_properties}.\ref{preceq_relation_6}
	yields \(\gB \preceq \X\) and \(D \preceq C\), and since \(C
	\preceq S(C) \preceq A\), we obtain \(D \preceq A\) by transitivity.
	Thus, \(\Gamma_1, \Xi \vdash t_1 : A\).

      \item[\rifminus{}]
	Let \(t = (\itex{}{t_1}{t_2})\,\ket{-}\) and \(r = t_2\), with \(\Gamma
	\vdash t : A\). By
	Lemma~\ref{generation}.\ref{generation_8},
	we distinguish two cases:

	In the first case, \(\Gamma = \Gamma_1, \Gamma_2, \Xi\) with \(\fT(\Xi)
	\subseteq \bqtypes\), \(\Gamma_1, \Xi \vdash \itex{}{t_1}{t_2} : \gB
	\Rightarrow A\), and \(\Gamma_2, \Xi \vdash \ket{-} : \gB\). By
	Lemma~\ref{generation}.\ref{generation_15},
	we have \(\Gamma_1, \Xi \vdash t_1 : C\) and \(\Gamma_1, \Xi \vdash t_2
	: C\), so \(\Gamma_1, \Xi \vdash \itex{}{t_1}{t_2} : \X \Rightarrow
	C\), and \(\X \Rightarrow C \preceq \gB \Rightarrow A\). Then, by
	Lemma~\ref{lem:preceq_properties}.\ref{preceq_relation_3},
	we get \(\gB \preceq \X\) and \(C \preceq A\), thus \(\Gamma_1, \Xi
	\vdash t_2 : A\).

	In the second case, again \(\Gamma = \Gamma_1, \Gamma_2, \Xi\) and
	\(\fT(\Xi) \subseteq \bqtypes\), but now \(\Gamma_1, \Xi \vdash
	\itex{}{t_1}{t_2} : S(\gB \Rightarrow C)\), \(\Gamma_2, \Xi \vdash
	\ket{-} : S(\gB)\), and \(S(C) \preceq A\). By
	Lemma~\ref{generation}.\ref{generation_15},
	we have \(\Gamma_1, \Xi \vdash t_1 : D\) and \(\Gamma_1, \Xi \vdash t_2
	: D\), so \(\Gamma_1, \Xi \vdash \itex{}{t_1}{t_2} : \X \Rightarrow
	D\), and \(\X \Rightarrow D \preceq S(\gB \Rightarrow C)\). By
	Lemma~\ref{lem:preceq_properties}.\ref{preceq_relation_6},
	we obtain \(\gB \preceq \X\) and \(D \preceq C\). Since \(C \preceq
	S(C) \preceq A\), we conclude by transitivity that \(D \preceq A\), and
	thus \(\Gamma_1, \Xi \vdash t_2 : A\).

      \item[\rneut{}]
	Let \(t = \z + t'\) and \(r = t'\), with \(\Gamma \vdash \z + t' : A\).
	By
	Lemma~\ref{generation}.\ref{generation_9},
	we know that \(\Gamma = \Gamma_1, \Gamma_2, \Xi\) such that \(\Gamma_1,
	\Xi \vdash \z : C\), \(\Gamma_2, \Xi \vdash t' : C\), and hence
	\(\Gamma \vdash \z + t' : S(C)\).  Since \(S(C) \preceq A\) and \(C
	\preceq S(C)\), we conclude by transitivity that \(\Gamma_2, \Xi \vdash
	t' : A\).

      \item[\runit{}]
	Let \(t = 1.t'\) and \(r = t'\), with \(\Gamma \vdash 1.t' : A\).  By
	Lemma~\ref{generation}.\ref{generation_10},
	we have \(\Gamma \vdash t' : C\) and \(S(C) \preceq A\).  Since \(C
	\preceq S(C)\), it follows by transitivity that \(\Gamma \vdash t' :
	A\).

      \item[\rzeros{}]
	Let \(t = 0.t'\) and \(r = \z\), with \(\Gamma \vdash 0.t' : A\),
	assuming \(t' \neq \error\).  By
	Lemma~\ref{generation}.\ref{generation_10},
	we have \(\Gamma \vdash t' : C\) and \(S(C) \preceq A\).  Also, by
	Lemma~\ref{generation}.\ref{generation_2},
	we have \(\Gamma \vdash \z : S(D)\) for some \(D\).  Since \(S(D)
	\preceq S(C)\) and \(S(C) \preceq A\), we obtain by transitivity that
	\(\Gamma \vdash \z : A\).

      \item[\rzero{}]
	Let \(t = \alpha.\z\) and \(r = \z\), with \(\Gamma \vdash \alpha.\z :
	A\).  By
	Lemma~\ref{generation}.\ref{generation_10},
	we have \(\Gamma \vdash \z : C\) and \(S(C) \preceq A\).  Since \(C
	\preceq S(C)\), by transitivity we conclude \(\Gamma \vdash \z : A\).

      \item[\rprod{}]
	Let \(t = \alpha.(\beta.t')\) and \(r = (\alpha\beta).t'\), with
	\(\Gamma \vdash \alpha.(\beta.t') : A\).  By
	Lemma~\ref{generation}.\ref{generation_10},
	we know that \(\Gamma \vdash \beta.t' : C\) and \(S(C) \preceq A\).
	Also, by the same lemma, \(\Gamma \vdash t' : D\) and \(S(D) \preceq
	C\).  Since \(\alpha\beta \in \mathbb{C}\), and \(\Gamma \vdash t' :
	D\), we have \(\Gamma \vdash (\alpha\beta).t' : S(D)\).  Then, using
	the chain \(S(D) \preceq C\), \(C \preceq S(C)\), and \(S(C) \preceq
	A\), we conclude by transitivity that \(\Gamma \vdash (\alpha\beta).t'
	: A\).

      \item[\rdists{}]
	Let \(t = \alpha.(t_1 + t_2)\) and \(r = \alpha.t_1 + \alpha.t_2\),
	with \(\Gamma \vdash \alpha.(t_1 + t_2) : A\).  By
	Lemma~\ref{generation}.\ref{generation_10},
	we have \(\Gamma \vdash t_1 + t_2 : C\) and \(S(C) \preceq A\).  Then,
	by
	Lemma~\ref{generation}.\ref{generation_9},
	we obtain a decomposition \(\Gamma = \Gamma_1, \Gamma_2, \Xi\) such
	that  \(\Gamma_1, \Xi \vdash t_1 : D\) and \(\Gamma_2, \Xi \vdash t_2 :
	D\), with \(S(D) \preceq C\).  By the typing rules \(S_I^+\) and
	\(S_I^\alpha\), we conclude \(\Gamma \vdash \alpha.t_1 + \alpha.t_2 :
	S(D)\).  Using transitivity of subtyping on the chain \(S(D) \preceq
	C\), \(C \preceq S(C)\), and \(S(C) \preceq A\), we derive \(\Gamma
	\vdash \alpha.t_1 + \alpha.t_2 : A\).

      \item[\rfact{}]
	Let \(t = \alpha.t' + \beta.t'\) and \(r = (\alpha + \beta).t'\), with
	\(\Gamma \vdash \alpha.t' + \beta.t' : A\).  By
	Lemma~\ref{generation}.\ref{generation_9},
	we have a decomposition \(\Gamma = \Gamma_1, \Gamma_2, \Xi\) such that
	\(\Gamma_1, \Xi \vdash \alpha.t' : C\) and \(\Gamma_2, \Xi \vdash
	\beta.t' : C\), with \(S(C) \preceq A\).  By
	Lemma~\ref{generation}.\ref{generation_10},
	we have \(\Gamma \vdash t' : D\) and \(S(D) \preceq C\).  By rules
	\(S_I^+\) and \(S_I^\alpha\), it follows that \(\Gamma \vdash (\alpha +
	\beta).t' : S(D)\).  Then, using transitivity of subtyping through
	\(S(D) \preceq C\), \(C \preceq S(C)\), and \(S(C) \preceq A\), we
	conclude \(\Gamma \vdash (\alpha + \beta).t' : A\).

      \item[\rfacto{}]
	Let \(t = \alpha.t' + t'\) and \(r = (\alpha + 1).t'\), with \(\Gamma
	\vdash \alpha.t' + t' : A\).  By
	Lemma~\ref{generation}.\ref{generation_9},
	we obtain a decomposition \(\Gamma = \Gamma_1, \Gamma_2, \Xi\) such
	that  \(\Gamma_1, \Xi \vdash \alpha.t' : C\) and \(\Gamma_2, \Xi \vdash
	t' : C\), with \(S(C) \preceq A\).  Then, by the typing rules \(S_I^+\)
	and \(S_I^\alpha\), we derive \(\Gamma \vdash (\alpha + 1).t' : S(C)\),
	and hence \(\Gamma \vdash (\alpha + 1).t' : A\) by subtyping.

      \item[\rfactt{}]
	Let \(t = t' + t'\) and \(r = 2.t'\), with \(\Gamma \vdash t' + t' :
	A\).  By
	Lemma~\ref{generation}.\ref{generation_9},
	we obtain a decomposition \(\Gamma = \Gamma_1, \Gamma_2, \Xi\) such
	that  \(\Gamma_1, \Xi \vdash t' : C\) and \(\Gamma_2, \Xi \vdash t' :
	C\), with \(S(C) \preceq A\).  Then, by the typing rules \(S_I^+\) and
	\(S_I^\alpha\), we conclude \(\Gamma \vdash 2.t' : S(C)\), and by
	subtyping we derive \(\Gamma \vdash 2.t' : A\).
	
      \item[\rlinr{}]
	Let \(t = t_1 (t_2 + t_3)\) and \(r = t_1t_2 + t_1t_3\), where 
	\(\Gamma \vdash t : A\), and \(t_1\) has type \(\M \Rightarrow C\).
	Then \(t_1\) is a lambda term or $\error$. 

	If \(t_1 = \error\), then since \(\error\) can have any type,
	\(t_1(t_2 + t_3)\) can also have any type. For example, assume \(t_2\) and
	\(t_3\) have type \(\Psi\); then \(t_2 + t_3\) has type \(S(\Psi)\), and \(t_1(t_2 + t_3)\)
	has some type \(A\).  
	Then, it is easy to show that \(t_1t_2 + t_1t_3\) also has type \(A\).
	
	If \(t_1 = \lambda x^\M. t'\), then by 
	Lemma~\ref{generation}.\ref{generation_8},
	we have \(\Gamma = \Gamma_1, \Gamma_2, \Xi\) with \(\fT(\Xi) \subseteq
	\bqtypes\), and two possible cases:

	\begin{itemize}
	  \item First case:  
	    \(\Gamma_1, \Xi \vdash t_1 : \gB \Rightarrow A\),  
	    \(\Gamma_2, \Xi \vdash t_2 + t_3 : \gB\).  
	    By Lemma~\ref{generation}.\ref{generation_7},
	    we have \(\Gamma_1, \Xi, x^{\M} \vdash t' : C\) and \(\M
	    \Rightarrow C \preceq \gB \Rightarrow A\).  Then, by
	    Lemma~\ref{lem:preceq_properties}.\ref{preceq_relation_3},
	    we get \(\gB \preceq \M\) and \(C \preceq A\).  Also, by
	    Lemma~\ref{generation}.\ref{generation_9},
	    we get a decomposition \(\Gamma_2, \Xi = \Gamma_1', \Gamma_2',
	    \Xi'\) such that \(\Gamma_1', \Xi' \vdash t_2 : D\) and 
		\(\Gamma_2', \Xi' \vdash t_3 : D\), with \(S(D) \preceq \gB\).  
	    Since \(\gB \preceq \M\), we get \(S(D) \preceq \M\), and thus
	    \(t_2 + t_3\) has type \(S(D)\) with \(S(D) \preceq \M\),  which
	    contradicts the linear application typing conditions (as linear
	    arguments cannot be superpositions).  Hence, this case is not
	    possible.

	  \item Second case:  
	    \(\Gamma_1, \Xi \vdash t_1 : S(\gB \Rightarrow C)\),  
	    \(\Gamma_2, \Xi \vdash t_2 + t_3 : S(\gB)\), and \(S(C) \preceq A\).  
	    By Lemma~\ref{generation}.\ref{generation_7},
	    we get \(\Gamma_1, \Xi, x^{\M} \vdash t' : D\) and \(\M \Rightarrow
	    D \preceq S(\gB \Rightarrow C)\).  Then, by
	    Lemma~\ref{lem:preceq_properties}.\ref{preceq_relation_6},
	    we obtain \(\gB \preceq \M\) and \(D \preceq C\), hence \(S(D)
	    \preceq S(C)\).  Again, by
	    Lemma~\ref{generation}.\ref{generation_9},
	    we decompose \(\Gamma_2, \Xi = \Gamma_1', \Gamma_2', \Xi'\) with  
	    \(\Gamma_1', \Xi' \vdash t_2 : E\), \(\Gamma_2', \Xi' \vdash t_3 : E\), and \(S(E) \preceq S(\gB)\).  
	    Since \(S(\gB) \preceq S(\M)\), we have \(t_2, t_3 : S(\M)\).  

	    Also, since \(\Gamma_1, \Xi \vdash \lambda x^{\M}.t' : \M
	    \Rightarrow D\), and  \(\M \Rightarrow D \preceq S(\M \Rightarrow
	    D)\), we get \(\Gamma_1, \Xi \vdash \lambda x^{\M}.t' : S(\M
	    \Rightarrow D)\).  

	    Therefore, we derive:
	      $\Gamma \vdash (\lambda x^{\M}.t')t_2 : S(D)$ and
	      $\Gamma \vdash (\lambda x^{\M}.t')t_3 : S(D)$,
	    hence,  
	    \(\Gamma \vdash (\lambda x^{\M}.t')t_2 + (\lambda x^{\M}.t')t_3 : S(S(D))\).  
	    Since \(S(S(D)) \preceq S(D)\), \(S(D) \preceq S(C)\), and \(S(C) \preceq A\),  
	    we conclude by transitivity that \(\Gamma \vdash t_1t_2 + t_1t_3 : A\).
	\end{itemize}

      \item[\rlinscalr{}]
	Let \(t = t_1(\alpha.t_2)\) and \(r = \alpha.t_1t_2\), where \(\Gamma \vdash t : A\), and \(t_1\) has type \(\M \Rightarrow C\). Then \(t_1\) is a lambda term or $\error$. 

	If \(t_1 = \error\), then since \(\error\) can have any type, \(t_1(\alpha.t_2)\) can also have any type.  
	For example, assume \(t_2\) has type \(\Psi\); then \(\alpha.t_2\) has type \(S(\Psi)\), and hence \(t_1(\alpha.t_2)\) can have some type \(A\).  
	Then, it is easy to show that \(\alpha.t_1t_2\) also has type \(A\).

	If \(t_1 = \lambda x^\M. t'\), then by Lemma~\ref{generation}.\ref{generation_8}, we have \(\Gamma = \Gamma_1, \Gamma_2, \Xi\) and \(\fT(\Xi) \subseteq \bqtypes\), with two possible cases:

	\begin{itemize}
	  \item First case:  
	    \(\Gamma_1, \Xi \vdash t_1: \gB \Rightarrow A\),  
	    \(\Gamma_2, \Xi \vdash \alpha.t_2 : \gB\).  
	    By Lemma~\ref{generation}.\ref{generation_7}, we get \(\Gamma_1, \Xi, x^{\M} \vdash t' : C\), and  
	    \(\M \Rightarrow C \preceq \gB \Rightarrow A\).  
	    By Lemma~\ref{lem:preceq_properties}.\ref{preceq_relation_3}, we obtain \(\gB \preceq \M\) and \(C \preceq A\).  
	    Moreover, from Lemma~\ref{generation}.\ref{generation_10}, we get \(\Gamma_2, \Xi \vdash t_2 : D\),  
	    and hence \(\Gamma_2, \Xi \vdash \alpha.t_2 : S(D)\), with \(S(D) \preceq \gB\).  
	    Since \(\gB \preceq \M\), we get \(S(D) \preceq \M\), which is incompatible with linear function application.  
	    Therefore, this case is not possible.

	  \item Second case:  
	    \(\Gamma_1, \Xi \vdash t_1 : S(\gB \Rightarrow C)\),  
	    \(\Gamma_2, \Xi \vdash \alpha.t_2 : S(\gB)\), and \(S(C) \preceq A\).  
	    By Lemma~\ref{generation}.\ref{generation_7}, we get \(\Gamma_1, \Xi, x^{\M} \vdash t' : D\), and  
	    \(\M \Rightarrow D \preceq S(\gB \Rightarrow C)\).  
	    By Lemma~\ref{lem:preceq_properties}.\ref{preceq_relation_6}, we obtain \(\gB \preceq \M\) and \(D \preceq C\),  
	    hence \(S(D) \preceq S(C)\).

	    Again, by Lemma~\ref{generation}.\ref{generation_10}, we get \(\Gamma_2, \Xi \vdash t_2 : E\),  
	    and thus \(\Gamma_2, \Xi \vdash \alpha.t_2 : S(E)\) with \(S(E) \preceq S(\gB)\).  
	    Since \(S(\gB) \preceq S(\M)\), by transitivity we have \(\Gamma_2, \Xi \vdash t_2 : S(\M)\).  

	    As \(\Gamma_1, \Xi \vdash t_1 : S(\M \Rightarrow D)\), we can derive  
	    \(\Gamma \vdash t_1t_2 : S(D)\),  
	    and then \(\Gamma \vdash \alpha.(t_1t_2) : S(S(D))\).  
	    Since \(S(S(D)) \preceq S(D)\), \(S(D) \preceq S(C)\), and \(S(C) \preceq A\),  
	    we conclude \(\Gamma \vdash \alpha.(t_1t_2) : A\).
	\end{itemize}

      \item[\rlinzr{}]
	Let \(t = t_1\z\) and \(r = \z\), where 
	\(\Gamma \vdash t : A\), and \(t_1\) has type \(\M \Rightarrow C\).
	Since \(t_1 \neq \error\) then \(t_1\) is a lambda term. 
	Let \(t_1 = \lambda x^{\M}.t_2\), by Lemma~\ref{generation}.\ref{generation_8}, we have \(\Gamma = \Gamma_1, \Gamma_2, \Xi\) with \(\fT(\Xi) \subseteq \bqtypes\), and two possible cases:

	\begin{itemize}
	  \item First case:  
	    \(\Gamma_1, \Xi \vdash t_1 : \gB \Rightarrow A\),  
	    \(\Gamma_2, \Xi \vdash \z : \gB\).  
	    However, by Lemma~\ref{generation}.\ref{generation_2}, \(\z\) always has a superposed type of the form \(S(C)\), with \(S(C) \preceq \M\).  
	    This is incompatible with a linear function argument, so this case is not possible.

	  \item Second case:  
	    \(\Gamma_1, \Xi \vdash t_1 : S(\gB \Rightarrow C)\),  
	    \(\Gamma_2, \Xi \vdash \z : S(\gB)\), and \(S(C) \preceq A\).  
	    By Lemma~\ref{generation}.\ref{generation_7}, we get \(\Gamma_1, \Xi, x^{\M} \vdash t_2 : D\), and  
	    \(\M \Rightarrow D \preceq S(\gB \Rightarrow C)\).  
	    By Lemma~\ref{lem:preceq_properties}.\ref{preceq_relation_6}, we obtain \(\gB \preceq \M\) and \(D \preceq C\).

	    Also, by Lemma~\ref{generation}.\ref{generation_2}, we know that \(\z\) has type \(S(E)\), with \(S(E) \preceq S(\gB)\).  
	    Since \(S(\gB) \preceq S(\M)\), by transitivity we get \(\Gamma_2, \Xi \vdash \z : S(\M)\).  
	    Therefore, \(\Gamma \vdash t_1\z : S(C)\), and since \(S(C) \preceq A\), we have the result.

	    Now, the reduct \(r = \z\) has type \(S(E)\) for some \(E\), and since \(S(E) \preceq S(C)\), we get \(S(E) \preceq A\) by transitivity.  
	    Hence, \(\Gamma \vdash \z : A\) as required.
	\end{itemize}

      \item[\rlinl{}]
	Let \(t = (t_1 + t_2)t_3\) and \(r = t_1t_3 + t_2t_3\), where \(\Gamma \vdash t : A\).  
	By Lemma~\ref{generation}.\ref{generation_8}, we have \(\Gamma = \Gamma_1, \Gamma_2, \Xi\) with \(\fT(\Xi) \subseteq \bqtypes\), and two possible cases:

	\begin{itemize}
	  \item First case:  
	    \(\Gamma_1, \Xi \vdash t_1 + t_2 : \gB_1 \Rightarrow A\),  
	    \(\Gamma_2, \Xi \vdash t_3 : \gB_1\).  
	    Then, by Lemma~\ref{generation}.\ref{generation_9}, we have a decomposition \(\Gamma_1, \Xi = \Gamma_1', \Gamma_2', \Xi'\) such that  
	    \(\Gamma_1', \Xi' \vdash t_1 : C\),  
	    \(\Gamma_2', \Xi' \vdash t_2 : C\), and  
	    \(S(C) \preceq \gB_1 \Rightarrow A\).  
	    This contradicts the typing constraints for applications, so this case is not possible.

	  \item Second case:  
	    \(\Gamma_1, \Xi \vdash t_1 + t_2 : S(\gB_1 \Rightarrow C)\),  
	    \(\Gamma_2, \Xi \vdash t_3 : S(\gB_1)\), and  
	    \(S(C) \preceq A\).  
	    Again, by Lemma~\ref{generation}.\ref{generation_9}, we decompose  
	    \(\Gamma_1, \Xi = \Gamma_1', \Gamma_2', \Xi'\) with  
	    \(\Gamma_1', \Xi' \vdash t_1 : D\),  
	    \(\Gamma_2', \Xi' \vdash t_2 : D\), and  
	    \(S(D) \preceq S(\gB_1 \Rightarrow C)\).  
	    Then, by Lemma~\ref{lem:preceq_properties}.\ref{preceq_relation_14}, there exist \(\gB_2\) and \(E\) such that  
	    \(S(D) \approx S(\gB_2 \Rightarrow E)\), with  
	    \(S(\gB_2 \Rightarrow E) \preceq S(\gB_1 \Rightarrow C)\).  
	    By Lemma~\ref{lem:preceq_properties}.\ref{preceq_relation_15}, we get \(\gB_1 \preceq \gB_2\) and \(E \preceq C\).  
	    Then, since  
	    \(\Gamma_1', \Xi' \vdash t_1 : S(\gB_2 \Rightarrow E)\) and  
	    \(\Gamma_2, \Xi \vdash t_3 : S(\gB_2)\),  
	    we obtain \(\Gamma \vdash t_1t_3 : S(E)\).  
	    Similarly, \(\Gamma \vdash t_2t_3 : S(E)\),  
	    and therefore, \(\Gamma \vdash t_1t_3 + t_2t_3 : S(S(E))\).  
	    Since \(E \preceq C\), we get \(S(E) \preceq S(C)\), and hence \(S(S(E)) \preceq S(E)\).  
	    Combining with \(S(C) \preceq A\), by transitivity we conclude \(\Gamma \vdash t_1t_3 + t_2t_3 : A\).
	\end{itemize}

      \item[\rlinscall{}]
	Let \(t = (\alpha.t_1)t_2\) and \(r = \alpha.(t_1t_2)\), where \(\Gamma \vdash t : A\).  
	By Lemma~\ref{generation}.\ref{generation_8}, we have \(\Gamma = \Gamma_1, \Gamma_2, \Xi\) with \(\fT(\Xi) \subseteq \bqtypes\), and two possible cases:

	\begin{itemize}
	  \item First case:  
	    \(\Gamma_1, \Xi \vdash t_1 : \gB_1 \Rightarrow A\),  
	    \(\Gamma_2, \Xi \vdash t_2 : \gB_1\).  
	    Then, by Lemma~\ref{generation}.\ref{generation_10}, we have  
	    \(\Gamma_1, \Xi \vdash t_1 : C\) and  
	    \(\Gamma_1, \Xi \vdash \alpha.t_1 : S(C)\),  
	    with \(S(C) \preceq \gB_1 \Rightarrow A\).  
	    But by Lemma~\ref{lem:preceq_properties}.\ref{preceq_relation_7}, this case cannot occur.

	  \item Second case:  
	    \(\Gamma_1, \Xi \vdash \alpha.t_1 : S(\gB_1 \Rightarrow C)\),  
	    \(\Gamma_2, \Xi \vdash t_2 : S(\gB_1)\), and  
	    \(S(C) \preceq A\).  
	    By Lemma~\ref{generation}.\ref{generation_10}, we have  
	    \(\Gamma_1, \Xi \vdash t_1 : D\) and  
	    \(\Gamma_1, \Xi \vdash \alpha.t_1 : S(D)\),  
	    with \(S(D) \preceq S(\gB_1 \Rightarrow C)\).  
		Then, by Lemma~\ref{lem:preceq_properties}.\ref{preceq_relation_14}, there exist \(\gB_2\) and \(E\) such that  
	    \(S(D) \approx S(\gB_2 \Rightarrow E)\), with  
	    \(S(\gB_2 \Rightarrow E) \preceq S(\gB_1 \Rightarrow C)\).  
	    By Lemma~\ref{lem:preceq_properties}.\ref{preceq_relation_15}, we get \(\gB_1 \preceq \gB_2\) and \(E \preceq C\), and \(S(\gB_1) \preceq S(\gB_2)\).
	    Then, we have  
	    \(\Gamma_1, \Xi \vdash t_1 : S(\gB_1 \Rightarrow C)\) and  
	    \(\Gamma_2, \Xi \vdash t_2 : S(\gB_1)\),  
	    hence \(\Gamma \vdash t_1t_2 : S(C)\), and  
	    \(\Gamma \vdash \alpha.(t_1t_2) : S(S(C))\).  
	    Since \(S(S(C)) \preceq S(C)\) and \(S(C) \preceq A\),  
	    we conclude \(\Gamma \vdash \alpha.(t_1t_2) : A\) by transitivity.
	\end{itemize}

      \item[\rlinzl{}]
	Let \(t = \z t'\) and \(r = \z\), where \(\Gamma \vdash \z t' : A\) and \(t' \neq \error\).  
	By Lemma~\ref{generation}.\ref{generation_8}, there are two possible cases:

	\begin{itemize}
	  \item First case:  
	    \(\Gamma = \Gamma_1, \Gamma_2, \Xi\) with \(\fT(\Xi) \subseteq \bqtypes\),  
	    \(\Gamma_1, \Xi \vdash \z : \gB \Rightarrow A\), and  
	    \(\Gamma_2, \Xi \vdash t' : \gB\).  
	    But by Lemma~\ref{generation}.\ref{generation_2}, we know that there exists a type \(C\) such that  
	    \(\Gamma_2, \Xi \vdash \z : S(C)\), which contradicts the linear application typing rules.  
	    Therefore, this case is not possible.

	  \item Second case:  
	    \(\Gamma = \Gamma_1, \Gamma_2, \Xi\) with \(\fT(\Xi) \subseteq \bqtypes\),  
	    \(\Gamma_1, \Xi \vdash \z : S(\gB \Rightarrow C)\),  
	    \(\Gamma_2, \Xi \vdash t' : S(\gB)\), and  
	    \(S(C) \preceq A\).  
	    On the other hand, \(r\) is again the term \(\z\), and by Lemma~\ref{generation}.\ref{generation_2}, there exists a type \(D\) such that  
	    \(\Gamma \vdash \z : S(D)\), with \(S(D)\) being a subtype of the expected type.  
	    In this case, it suffices that \(S(D) \preceq S(C)\), and since \(S(C) \preceq A\), by transitivity we conclude that  
	    \(\Gamma \vdash \z : A\).
	\end{itemize}

      \item[\rcomm{}]
	Let \(t = t_1 + t_2\) and \(r = t_2 + t_1\).
	\begin{itemize}
	  \item[\textbf{(\(\longrightarrow\))}] Assume \(\Gamma \vdash t_1 + t_2 : A\).  
	    By Lemma~\ref{generation}.\ref{generation_9}, we have:  
	    \(\Gamma = \Gamma_1, \Gamma_2, \Xi\),  
	    \(\Gamma_1, \Xi \vdash t_1 : C\),  
	    \(\Gamma_2, \Xi \vdash t_2 : C\),  
	    and \(S(C) \preceq A\).  
	    Then, since both \(t_2\) and \(t_1\) have type \(C\), we get  
	    \(\Gamma \vdash t_2 + t_1 : S(C)\), and thus \(\Gamma \vdash t_2 + t_1 : A\).

	  \item[\textbf{(\(\longleftarrow\))}] Assume \(\Gamma \vdash t_2 + t_1 : A\).  
	    By the same reasoning, Lemma~\ref{generation}.\ref{generation_9} gives:  
	    \(\Gamma = \Gamma_1, \Gamma_2, \Xi\),  
	    \(\Gamma_1, \Xi \vdash t_2 : C\),  
	    \(\Gamma_2, \Xi \vdash t_1 : C\),  
	    with \(S(C) \preceq A\).  
	    Then \(\Gamma \vdash t_1 + t_2 : S(C)\), and therefore \(\Gamma \vdash t_1 + t_2 : A\).
	\end{itemize}

      \item[\rassocp{}]
	Let \(t = (t_1 + t_2) + t_3\) and \(r = t_1 + (t_2 + t_3)\).
	\begin{itemize}
	  \item[\textbf{(\(\longrightarrow\))}] Assume \(\Gamma \vdash (t_1 + t_2) + t_3 : A\).  
	    By Lemma~\ref{generation}.\ref{generation_9}, we get:  
	    \(\Gamma = \Gamma_1, \Gamma_2, \Xi\),  
	    \(\Gamma_1, \Xi \vdash t_1 + t_2 : C\),  
	    \(\Gamma_2, \Xi \vdash t_3 : C\),  
	    with \(S(C) \preceq A\).  
	    Again by Lemma~\ref{generation}.\ref{generation_9}, we decompose:  
	    \(\Gamma_1, \Xi = \Gamma'\),  
	    \(\Gamma' = \Gamma_1', \Gamma_2', \Xi'\),  
	    with  
	    \(\Gamma_1', \Xi' \vdash t_1 : D\),  
	    \(\Gamma_2', \Xi' \vdash t_2 : D\),  
	    and \(S(D) \preceq C\).  
	    Thus, \(t_1\) and \(t_2\) both have type \(C\), and with  
	    \(\Gamma \vdash t_1 + (t_2 + t_3) : S(C)\),  
	    we conclude \(\Gamma \vdash t_1 + (t_2 + t_3) : A\).

	  \item[\textbf{(\(\longleftarrow\))}] Assume \(\Gamma \vdash t_1 + (t_2 + t_3) : A\).  
	    By Lemma~\ref{generation}.\ref{generation_9}, we get:  
	    \(\Gamma = \Gamma_1, \Gamma_2, \Xi\),  
	    \(\Gamma_1, \Xi \vdash t_1 : C\),  
	    \(\Gamma_2, \Xi \vdash t_2 + t_3 : C\),  
	    with \(S(C) \preceq A\).  
	    Again, by Lemma~\ref{generation}.\ref{generation_9}, we decompose:  
	    \(\Gamma_2, \Xi = \Gamma'\),  
	    \(\Gamma' = \Gamma_1', \Gamma_2', \Xi'\),  
	    with  
	    \(\Gamma_1', \Xi' \vdash t_2 : D\),  
	    \(\Gamma_2', \Xi' \vdash t_3 : D\),  
	    and \(S(D) \preceq C\).  
	    Thus, \(t_2\) and \(t_3\) both have type \(C\), and we get  
	    \(\Gamma \vdash (t_1 + t_2) + t_3 : S(C)\),  
	    hence \(\Gamma \vdash (t_1 + t_2) + t_3 : A\).
	\end{itemize}

      \item[\rhead{}]
	Let \(t = \head(t_1 \otimes t_2)\) and \(r = t_1\), with \(t_1 \neq u \otimes v\) and \(t_1 \in \basis\), and assume \(\Gamma \vdash \head(t_1 \otimes t_2) : A\).  
	By Lemma~\ref{generation}.\ref{generation_18}, we know that \(\Gamma \vdash t_1 \otimes t_2 : \Ba \times \M\) and \(\Ba \preceq A\).  
	From Lemma~\ref{generation}.\ref{generation_11}, it follows that \(\Gamma = \Gamma_1, \Gamma_2, \Xi\),  
	\(\Gamma_1, \Xi \vdash t_1 : \gB_1\),  
	\(\Gamma_2, \Xi \vdash t_2 : \gB_2\), and  
	\(\Gamma \vdash t_1 \otimes t_2 : \gB_1 \times \gB_2\),  
	with \(\gB_1 \times \gB_2 \preceq \Ba \times \M\).  
	Since \(t_1 \neq u \otimes v\), we get \(\gB_1 \preceq \Ba\) and \(\gB_2 \preceq \M\).  
	Therefore, by transitivity of the subtyping relation, \(\Gamma_1, \Xi \vdash t_1 : A\).

      \item[\rtail{}]
	Let \(t = \tail(t_1 \otimes t_2)\) and \(r = t_2\), with \(t_1 \neq u \otimes v\) and \(t_1 \in \basis\), and assume \(\Gamma \vdash \tail(t_1 \otimes t_2) : A\).  
	By Lemma~\ref{generation}.\ref{generation_19}, we know that \(\Gamma \vdash t_1 \otimes t_2 : \Ba \times \M\) and \(\M \preceq A\) with \(n > 1\).  
	From Lemma~\ref{generation}.\ref{generation_11}, it follows that \(\Gamma = \Gamma_1, \Gamma_2, \Xi\),  
	\(\Gamma_1, \Xi \vdash t_1 : \gB_1\),  
	\(\Gamma_2, \Xi \vdash t_2 : \gB_2\), and  
	\(\Gamma \vdash t_1 \otimes t_2 : \gB_1 \times \gB_2\),  
	with \(\gB_1 \times \gB_2 \preceq \Ba \times \M\).  
	Since \(t_1 \neq u \otimes v\), we get \(\gB_1 \preceq \Ba\) and \(\gB_2 \preceq \M\).  
	Therefore, by transitivity of the subtyping relation, \(\Gamma_2, \Xi \vdash t_2 : A\).

      \item[\rdistlp{}]
	Let \(t = \Castl t_1 \otimes (t_2 + t_3)\) and \(r = \Castl t_1 \otimes t_2 + \Castl t_1 \otimes t_3\), with \(\Gamma \vdash \Castl t_1 \otimes (t_2 + t_3): A\).  
	By Lemma~\ref{generation}.\ref{generation_16}, one of the following holds:
	\begin{itemize}
	  \item \(\Gamma \vdash t_1 \otimes (t_2 + t_3): S(\gB \times S(\Phi))\), with \(\Phi \neq S(\Phi')\) and \(S(\gB \times \Phi) \preceq A\).  
	    By Lemma~\ref{generation}.\ref{generation_11}, we have \(\Gamma = \Gamma_1, \Gamma_2, \Xi\),  
	    \(\Gamma_1, \Xi \vdash t_1: \gB_1\), \(\Gamma_2, \Xi \vdash t_2 + t_3: \gB_2\),  
	    and \(\gB_1 \times \gB_2 \preceq S(\gB \times S(\Phi))\).  
	    By Lemma~\ref{generation}.\ref{generation_9}, we can decompose \(\Gamma_2, \Xi = \Gamma'\),  
	    where \(\Gamma' = \Gamma_1', \Gamma_2', \Xi'\), with  
	    \(\Gamma_1', \Xi' \vdash t_2: D\), \(\Gamma_2', \Xi' \vdash t_3: D\), and \(S(D) \preceq \gB_2\).  
	    Since \(D \preceq S(D)\), we get \(D \preceq \gB_2\), and so both \(t_2\) and \(t_3\) have type \(\gB_2\).  
	    Then \(\Gamma \vdash t_1 \otimes t_2: \gB_1 \times \gB_2\) and \(\Gamma \vdash t_1 \otimes t_3: \gB_1 \times \gB_2\),  
	    and both have type \(S(\gB \times S(\Phi))\).  
	    Applying the cast, we get  
	    \(\Gamma \vdash \Castl t_1 \otimes t_2: S(\gB \times \Phi)\),  
	    \(\Gamma \vdash \Castl t_1 \otimes t_3: S(\gB \times \Phi)\), and  
	    \(\Gamma \vdash \Castl t_1 \otimes t_2 + \Castl t_1 \times t_3: S(S(\gB \times \Phi))\).  
	    Since \(S(S(\gB \times \Phi)) \preceq S(\gB \times \Phi) \preceq A\), we conclude \(\Gamma \vdash r : A\).

	  \item \(\Gamma \vdash t_1 \otimes (t_2 + t_3): \X\), with \(S(\B) \preceq A\).  
	    By Lemma~\ref{generation}.\ref{generation_11}, we have \(\Gamma = \Gamma_1, \Gamma_2, \Xi\),  
	    \(\Gamma_1, \Xi \vdash t_1: \gB_1\), \(\Gamma_2, \Xi \vdash t_2 + t_3: \gB_2\),  
	    and \(\gB_1 \times \gB_2 \preceq \X\).  
	    By Lemma~\ref{lem:preceq_properties}.\ref{preceq_relation_11}, this is not possible. So this case does not apply.

	  \item \(\Gamma \vdash t_1 \otimes (t_2 + t_3): \B\), with \(\B \preceq A\).  
	    Again, by Lemma~\ref{generation}.\ref{generation_11}, we get \(\Gamma = \Gamma_1, \Gamma_2, \Xi\),  
	    \(\Gamma_1, \Xi \vdash t_1: \gB_1\), \(\Gamma_2, \Xi \vdash t_2 + t_3: \gB_2\),  
	    and \(\gB_1 \times \gB_2 \preceq \B\).  
	    By Lemma~\ref{lem:preceq_properties}.\ref{preceq_relation_11}, this is also not possible. So this case does not apply.
	\end{itemize}

      \item[\rdistrp{}]
	Let \(t = \Castr (t_1 + t_2) \otimes t_3\) and \(r = \Castr t_1 \otimes t_3 + \Castr t_2 \otimes t_3\), with \(\Gamma \vdash \Castr (t_1 + t_2) \otimes t_3 : A\).  
	By Lemma~\ref{generation}.\ref{generation_17}, one of the following holds:
	\begin{itemize}
	  \item \(\Gamma \vdash (t_1 + t_2) \otimes t_3 : S(S(\gB) \times \Phi)\), with \(\gB \neq S(\gB')\) and \(S(\gB \times \Phi) \preceq A\).  
	    By Lemma~\ref{generation}.\ref{generation_11}, we have \(\Gamma = \Gamma_1, \Gamma_2, \Xi\),  
	    \(\Gamma_1, \Xi \vdash t_1 + t_2 : \gB_1\), \(\Gamma_2, \Xi \vdash t_3 : \gB_2\),  
	    with \(\gB_1 \times \gB_2 \preceq S(S(\gB) \times \Phi)\).  
	    By Lemma~\ref{generation}.\ref{generation_9}, we can decompose \(\Gamma_1, \Xi = \Gamma'\),  
	    where \(\Gamma' = \Gamma_1', \Gamma_2', \Xi'\), with  
	    \(\Gamma_1', \Xi' \vdash t_1 : D\), \(\Gamma_2', \Xi' \vdash t_2 : D\), and \(S(D) \preceq \gB_1\).  
	    Since \(D \preceq S(D)\), we conclude \(D \preceq \gB_1\), so both \(t_1\) and \(t_2\) have type \(\gB_1\).  
	    Then \(\Gamma \vdash t_1 \otimes t_3 : \gB_1 \times \gB_2 \preceq S(S(\gB) \times \Phi)\),  
	    and similarly \(\Gamma \vdash t_2 \otimes t_3 : \gB_1 \times \gB_2 \preceq S(S(\gB) \times \Phi)\).  
	    Therefore,  
	    \(\Gamma \vdash \Castr t_1 \otimes t_3 : S(\gB \times \Phi)\),  
	    \(\Gamma \vdash \Castr t_2 \otimes t_3 : S(\gB \times \Phi)\), and  
	    \(\Gamma \vdash \Castr t_1 \otimes t_3 + \Castr t_2 \otimes t_3 : S(S(\gB \times \Phi))\).  
	    Since \(S(S(\gB \times \Phi)) \preceq S(\gB \times \Phi) \preceq A\), we conclude \(\Gamma \vdash r : A\).

	  \item \(\Gamma \vdash (t_1 + t_2) \otimes t_3 : \X\), with \(S(\B) \preceq A\).  
	    By Lemma~\ref{generation}.\ref{generation_11}, we have \(\Gamma = \Gamma_1, \Gamma_2, \Xi\),  
	    \(\Gamma_1, \Xi \vdash t_1 + t_2 : \gB_1\), \(\Gamma_2, \Xi \vdash t_3 : \gB_2\),  
	    and \(\gB_1 \times \gB_2 \preceq \X\).  
	    By Lemma~\ref{lem:preceq_properties}.\ref{preceq_relation_11}, this is not possible. So this case does not apply.

	  \item \(\Gamma \vdash (t_1 + t_2) \otimes t_3 : \B\), with \(\B \preceq A\).  
	    Again, by Lemma~\ref{generation}.\ref{generation_11}, we get \(\Gamma = \Gamma_1, \Gamma_2, \Xi\),  
	    \(\Gamma_1, \Xi \vdash t_1 + t_2 : \gB_1\), \(\Gamma_2, \Xi \vdash t_3 : \gB_2\),  
	    and \(\gB_1 \times \gB_2 \preceq \B\).  
	    By Lemma~\ref{lem:preceq_properties}.\ref{preceq_relation_11}, this is also not possible. So this case does not apply.
	\end{itemize}

      \item[\rdistla{}]
	Let \(t = \Castl t_1 \otimes (\alpha.t_2)\) and \(r = \alpha. \Castl t_1 \otimes t_2\), with \(\Gamma \vdash \Castl t_1 \otimes (\alpha.t_2) : A\).  
	By Lemma~\ref{generation}.\ref{generation_16}, one of the following holds:
	\begin{itemize}
	  \item \(\Gamma \vdash t_1 \otimes (\alpha.t_2) : S(\gB \times S(\Phi))\), with \(\Phi \neq S(\Phi')\) and \(S(\gB \times \Phi) \preceq A\).  
	    By Lemma~\ref{generation}.\ref{generation_11}, we have \(\Gamma = \Gamma_1, \Gamma_2, \Xi\),  
	    \(\Gamma_1, \Xi \vdash t_1 : \gB_1\), \(\Gamma_2, \Xi \vdash \alpha.t_2 : \gB_2\),  
	    and \(\gB_1 \times \gB_2 \preceq S(\gB \times S(\Phi))\).  
	    By Lemma~\ref{generation}.\ref{generation_10}, we get  
	    \(\Gamma_2, \Xi \vdash t_2 : C\), \(\Gamma_2, \Xi \vdash \alpha.t_2 : S(C)\), and \(S(C) \preceq \gB_2\).  
	    Since \(C \preceq S(C)\), we get \(C \preceq \gB_2\), so \(\Gamma_2, \Xi \vdash t_2 : \gB_2\).  
	    Therefore, \(\Gamma \vdash t_1 \otimes t_2 : \gB_1 \times \gB_2\),  
	    and as \(\gB_1 \times \gB_2 \preceq S(\gB \times S(\Phi))\), we get  
	    \(\Gamma \vdash t_1 \otimes t_2 : S(\gB \times S(\Phi))\), hence  
	    \(\Gamma \vdash \Castl t_1 \otimes t_2 : S(\gB \times \Phi)\).  
	    Then, \(\Gamma \vdash \alpha.\Castl t_1 \otimes t_2 : S(S(\gB \times \Phi))\).  
	    Since \(S(S(\gB \times \Phi)) \preceq S(\gB \times \Phi) \preceq A\), we conclude \(\Gamma \vdash r : A\).

	  \item \(\Gamma \vdash t_1 \otimes (\alpha.t_2) : \X\), with \(S(\B) \preceq A\).  
	    By Lemma~\ref{generation}.\ref{generation_11}, we have \(\Gamma = \Gamma_1, \Gamma_2, \Xi\),  
	    \(\Gamma_1, \Xi \vdash t_1 : \gB_1\), \(\Gamma_2, \Xi \vdash \alpha.t_2 : \gB_2\),  
	    and \(\gB_1 \times \gB_2 \preceq \X\).  
	    By Lemma~\ref{lem:preceq_properties}.\ref{preceq_relation_11}, this is not possible. So this case does not apply.

	  \item \(\Gamma \vdash t_1 \otimes (\alpha.t_2) : \B\), with \(\B \preceq A\).  
	    Again, by Lemma~\ref{generation}.\ref{generation_11}, we get \(\Gamma = \Gamma_1, \Gamma_2, \Xi\),  
	    \(\Gamma_1, \Xi \vdash t_1 : \gB_1\), \(\Gamma_2, \Xi \vdash \alpha.t_2 : \gB_2\),  
	    and \(\gB_1 \times \gB_2 \preceq \B\).  
	    By Lemma~\ref{lem:preceq_properties}.\ref{preceq_relation_11}, this is also not possible. So this case does not apply.
	\end{itemize}

      \item[\rdistra{}]
	Let \(t = \Castr (\alpha.t_1) \otimes t_2\) and \(r = \alpha.\Castr t_1 \otimes t_2\), with \(\Gamma \vdash \Castr (\alpha.t_1) \otimes t_2 : A\).  
	By Lemma~\ref{generation}.\ref{generation_17}, one of the following holds:
	\begin{itemize}
	  \item \(\Gamma \vdash (\alpha.t_1) \otimes t_2 : S(S(\gB) \times \Phi)\), with \(\gB \neq S(\gB')\) and \(S(\gB \times \Phi) \preceq A\).  
	    By Lemma~\ref{generation}.\ref{generation_11}, we have \(\Gamma = \Gamma_1, \Gamma_2, \Xi\),  
	    \(\Gamma_1, \Xi \vdash \alpha.t_1 : \gB_1\), \(\Gamma_2, \Xi \vdash t_2 : \gB_2\),  
	    and \(\gB_1 \times \gB_2 \preceq S(S(\gB) \times \Phi)\).  
	    By Lemma~\ref{generation}.\ref{generation_10}, we get  
	    \(\Gamma_1, \Xi \vdash t_1 : C\), \(\Gamma_1, \Xi \vdash \alpha.t_1 : S(C)\), and \(S(C) \preceq \gB_1\).  
	    Since \(C \preceq S(C)\), we get \(C \preceq \gB_1\), so \(\Gamma_1, \Xi \vdash t_1 : \gB_1\).  
	    Then, \(\Gamma \vdash t_1 \otimes t_2 : \gB_1 \times \gB_2 \preceq S(S(\gB) \times \Phi)\),  
	    hence \(\Gamma \vdash \Castr t_1 \otimes t_2 : S(\gB \times \Phi)\).  
	    Then, \(\Gamma \vdash \alpha.\Castr t_1 \otimes t_2 : S(S(\gB \times \Phi))\),  
	    and since \(S(S(\gB \times \Phi)) \preceq S(\gB \times \Phi) \preceq A\), we conclude \(\Gamma \vdash r : A\).

	  \item \(\Gamma \vdash (\alpha.t_1) \otimes t_2 : \X\), with \(S(\B) \preceq A\).  
	    By Lemma~\ref{generation}.\ref{generation_11}, we have \(\Gamma = \Gamma_1, \Gamma_2, \Xi\),  
	    \(\Gamma_1, \Xi \vdash \alpha.t_1 : \gB_1\), \(\Gamma_2, \Xi \vdash t_2 : \gB_2\),  
	    and \(\gB_1 \times \gB_2 \preceq \X\).  
	    By Lemma~\ref{lem:preceq_properties}.\ref{preceq_relation_11}, this is not possible. So this case does not apply.

	  \item \(\Gamma \vdash (\alpha.t_1) \otimes t_2 : \B\), with \(\B \preceq A\).  
	    Again, by Lemma~\ref{generation}.\ref{generation_11}, we get \(\Gamma = \Gamma_1, \Gamma_2, \Xi\),  
	    \(\Gamma_1, \Xi \vdash \alpha.t_1 : \gB_1\), \(\Gamma_2, \Xi \vdash t_2 : \gB_2\),  
	    and \(\gB_1 \times \gB_2 \preceq \B\).  
	    By Lemma~\ref{lem:preceq_properties}.\ref{preceq_relation_11}, this is also not possible. So this case does not apply.
	\end{itemize}

      \item[\rdistlz{}]
	Let \(t = \Castl v \otimes \z\) and \(r = \z\), with \(\Gamma \vdash \Castl v \otimes \z : A\).  
	By Lemma~\ref{generation}.\ref{generation_16}, one of the following holds:
	\begin{itemize}
	  \item \(\Gamma \vdash v \otimes \z : S(\gB \times S(\Phi))\), with \(\Phi \neq S(\Phi')\) and \(S(\gB \times \Phi) \preceq A\).  
	    By Lemma~\ref{generation}.\ref{generation_2}, there exists a type \(D\) such that \(\Gamma \vdash \z : S(D)\),  
	    and \(S(D) \preceq S(\gB \times \Phi)\).  
	    Since \(S(\gB \times \Phi) \preceq A\), by transitivity we get \(\Gamma \vdash \z : A\), so \(\Gamma \vdash r : A\).

	  \item \(\Gamma \vdash v \otimes \z : \X\), with \(S(\B) \preceq A\).  
	    By Lemma~\ref{generation}.\ref{generation_11}, we have \(\Gamma = \Gamma_1, \Gamma_2, \Xi\),  
	    \(\Gamma_1, \Xi \vdash v : \gB_1\), \(\Gamma_2, \Xi \vdash \z : \gB_2\),  
	    and \(\gB_1 \times \gB_2 \preceq \X\).  
	    By Lemma~\ref{lem:preceq_properties}.\ref{preceq_relation_11}, this is not possible. So this case does not apply.

	  \item \(\Gamma \vdash v \otimes \z : \B\), with \(\B \preceq A\).  
	    Again, by Lemma~\ref{generation}.\ref{generation_11}, we get \(\Gamma = \Gamma_1, \Gamma_2, \Xi\),  
	    \(\Gamma_1, \Xi \vdash v : \gB_1\), \(\Gamma_2, \Xi \vdash \z : \gB_2\),  
	    and \(\gB_1 \times \gB_2 \preceq \B\).  
	    By Lemma~\ref{lem:preceq_properties}.\ref{preceq_relation_11}, this is also not possible. So this case does not apply.
	\end{itemize}

      \item[\rdistrz{}]
	Let \(t = \Castr \z \otimes v\) and \(r = \z\), with \(\Gamma \vdash \Castr \z \otimes v : A\).  
	By Lemma~\ref{generation}.\ref{generation_17}, one of the following holds:
	\begin{itemize}
	  \item \(\Gamma \vdash \z \otimes v : S(S(\gB) \times \Phi)\), with \(\gB \neq S(\gB')\) and \(S(\gB \times \Phi) \preceq A\).  
	    By Lemma~\ref{generation}.\ref{generation_2}, there exists a type \(D\) such that \(\Gamma \vdash \z : S(D)\),  
	    and \(S(D) \preceq S(\gB \times \Phi)\).  
	    Since \(S(\gB \times \Phi) \preceq A\), by transitivity we obtain \(\Gamma \vdash \z : A\), so \(\Gamma \vdash r : A\).

	  \item \(\Gamma \vdash \z \otimes v : \X\), with \(S(\B) \preceq A\).  
	    By Lemma~\ref{generation}.\ref{generation_11}, we have \(\Gamma = \Gamma_1, \Gamma_2, \Xi\),  
	    \(\Gamma_1, \Xi \vdash \z : \gB_1\), \(\Gamma_2, \Xi \vdash v : \gB_2\),  
	    and \(\gB_1 \times \gB_2 \preceq \X\).  
	    By Lemma~\ref{lem:preceq_properties}.\ref{preceq_relation_11}, this is not possible. So this case does not apply.

	  \item \(\Gamma \vdash \z \otimes v : \B\), with \(\B \preceq A\).  
	    Again, by Lemma~\ref{generation}.\ref{generation_11}, we have \(\Gamma = \Gamma_1, \Gamma_2, \Xi\),  
	    \(\Gamma_1, \Xi \vdash \z : \gB_1\), \(\Gamma_2, \Xi \vdash v : \gB_2\),  
	    and \(\gB_1 \times \gB_2 \preceq \B\).  
	    By Lemma~\ref{lem:preceq_properties}.\ref{preceq_relation_11}, this is also not possible. So this case does not apply.
	\end{itemize}

      \item[\rdistpup{}]
	Let \(t = \Cast (t_1 + t_2)\) and \(r = \Cast t_1 + \Cast t_2\).

	\begin{itemize}
	  \item If \(\Gamma \vdash \Castl (t_1 + t_2) : A\),  
	    then by Lemma~\ref{generation}.\ref{generation_16}, one of the following holds:
	    \begin{itemize}
	      \item \(\Gamma \vdash t_1 + t_2 : S(\gB \times S(\Phi))\), with \(\Phi \neq S(\Phi')\) and \(S(\gB \times \Phi) \preceq A\).  
		By Lemma~\ref{generation}.\ref{generation_9},  
		\(\Gamma = \Gamma_1, \Gamma_2, \Xi\),  
		\(\Gamma_1, \Xi \vdash t_1 : C\),  
		\(\Gamma_2, \Xi \vdash t_2 : C\),  
		with \(S(C) \preceq S(\gB \times S(\Phi))\).  
		Since \(C \preceq S(C)\), we have  
		\(\Gamma_1, \Xi \vdash t_1 : S(\gB \times S(\Phi))\),  
		\(\Gamma_2, \Xi \vdash t_2 : S(\gB \times S(\Phi))\).  
		Therefore,  
		\(\Gamma_1, \Xi \vdash \Castl t_1 : S(\gB \times \Phi)\),  
		\(\Gamma_2, \Xi \vdash \Castl t_2 : S(\gB \times \Phi)\),  
		and so \(\Gamma \vdash \Castl t_1 + \Castl t_2 : S(S(\gB \times \Phi))\).  
		As \(S(S(\gB \times \Phi)) \preceq S(\gB \times \Phi)\) and \(S(\gB \times \Phi) \preceq A\), we obtain \(\Gamma \vdash r : A\).

	      \item \(\Gamma \vdash t_1 + t_2 : \X\), with \(S(\B) \preceq A\).  
		By Lemma~\ref{generation}.\ref{generation_9},  
		\(\Gamma = \Gamma_1, \Gamma_2, \Xi\),  
		\(\Gamma_1, \Xi \vdash t_1 : C\),  
		\(\Gamma_2, \Xi \vdash t_2 : C\),  
		with \(S(C) \preceq \X\).  
		By Lemma~\ref{lem:preceq_properties}.\ref{preceq_relation_7}, this is not possible. So this case does not apply.

	      \item \(\Gamma \vdash t_1 + t_2 : \B\), with \(\B \preceq A\).  
		As before, by Lemma~\ref{generation}.\ref{generation_9},  
		\(\Gamma = \Gamma_1, \Gamma_2, \Xi\),  
		\(\Gamma_1, \Xi \vdash t_1 : C\),  
		\(\Gamma_2, \Xi \vdash t_2 : C\),  
		with \(S(C) \preceq \B\).  
		By Lemma~\ref{lem:preceq_properties}.\ref{preceq_relation_7}, this is not possible. So this case does not apply.
	    \end{itemize}

	  \item If \(\Gamma \vdash \Castr (t_1 + t_2) : A\),  
	    then by Lemma~\ref{generation}.\ref{generation_17}, one of the following holds:
	    \begin{itemize}
	      \item \(\Gamma \vdash t_1 + t_2 : S(S(\gB) \times \Phi)\), with \(\gB \neq S(\gB')\) and \(S(\gB \times \Phi) \preceq A\).  
		By Lemma~\ref{generation}.\ref{generation_9},  
		\(\Gamma = \Gamma_1, \Gamma_2, \Xi\),  
		\(\Gamma_1, \Xi \vdash t_1 : C\),  
		\(\Gamma_2, \Xi \vdash t_2 : C\),  
		with \(S(C) \preceq S(S(\gB) \times \Phi)\).  
		Since \(C \preceq S(C)\), we have  
		\(\Gamma_1, \Xi \vdash t_1 : S(S(\gB) \times \Phi)\),  
		\(\Gamma_2, \Xi \vdash t_2 : S(S(\gB) \times \Phi)\).  
		Thus,  
		\(\Gamma_1, \Xi \vdash \Castr t_1 : S(\gB \times \Phi)\),  
		\(\Gamma_2, \Xi \vdash \Castr t_2 : S(\gB \times \Phi)\),  
		and so \(\Gamma \vdash \Castr t_1 + \Castr t_2 : S(S(\gB \times \Phi)) \preceq S(\gB \times \Phi) \preceq A\), as required.

	      \item \(\Gamma \vdash t_1 + t_2 : \X\), with \(S(\B) \preceq A\).  
		Again, by Lemma~\ref{generation}.\ref{generation_9},  
		\(\Gamma = \Gamma_1, \Gamma_2, \Xi\),  
		\(\Gamma_1, \Xi \vdash t_1 : C\),  
		\(\Gamma_2, \Xi \vdash t_2 : C\),  
		with \(S(C) \preceq \X\).  
		By Lemma~\ref{lem:preceq_properties}.\ref{preceq_relation_7}, this is not possible. So this case does not apply.

	      \item \(\Gamma \vdash t_1 + t_2 : \B\), with \(\B \preceq A\).  
		Same reasoning as above:  
		\(\Gamma = \Gamma_1, \Gamma_2, \Xi\),  
		\(\Gamma_1, \Xi \vdash t_1 : C\),  
		\(\Gamma_2, \Xi \vdash t_2 : C\),  
		with \(S(C) \preceq \B\).  
		By Lemma~\ref{lem:preceq_properties}.\ref{preceq_relation_7}, this is not possible. So this case does not apply.
	    \end{itemize}
	\end{itemize}

      \item[\rdistaup{}]
	Let \(t = \Cast (\alpha . t^\prime)\) and \(r = \alpha . \Cast t^\prime\).

	\begin{itemize}
	  \item If \(\Gamma \vdash \Castl (\alpha . t^\prime) : A\),  
	    then by Lemma~\ref{generation}.\ref{generation_16}, one of the following holds:
	    \begin{itemize}
	      \item \(\Gamma \vdash \alpha . t^\prime : S(\gB \times S(\Phi))\), with \(\Phi \neq S(\Phi')\) and \(S(\gB \times \Phi) \preceq A\).  
		By Lemma~\ref{generation}.\ref{generation_10},  
		\(\Gamma \vdash t^\prime : C\),  
		with \(S(C) \preceq S(\gB \times S(\Phi))\).  
		Since \(C \preceq S(C)\), we get \(\Gamma \vdash t^\prime : S(\gB \times S(\Phi))\).  
		Therefore,  
		\(\Gamma \vdash \Castl t^\prime : S(\gB \times \Phi)\),  
		so \(\Gamma \vdash \alpha . \Castl t^\prime : S(S(\gB \times \Phi))\).  
		Since \(S(S(\gB \times \Phi)) \preceq S(\gB \times \Phi)\) and \(S(\gB \times \Phi) \preceq A\), by transitivity we conclude \(\Gamma \vdash r : A\).

	      \item \(\Gamma \vdash \alpha . t^\prime : \X\), with \(S(\B) \preceq A\).  
		By Lemma~\ref{generation}.\ref{generation_10},  
		\(\Gamma \vdash t^\prime : C\),  
		with \(S(C) \preceq \X\).  
		By Lemma~\ref{lem:preceq_properties}.\ref{preceq_relation_7}, this is not possible. So this case does not apply.

	      \item \(\Gamma \vdash \alpha . t^\prime : \B\), with \(\B \preceq A\).  
		By Lemma~\ref{generation}.\ref{generation_10},  
		\(\Gamma \vdash t^\prime : C\),  
		with \(S(C) \preceq \B\).  
		Again, by Lemma~\ref{lem:preceq_properties}.\ref{preceq_relation_7}, this is not possible. So this case does not apply.
	    \end{itemize}

	  \item If \(\Gamma \vdash \Castr (\alpha . t^\prime) : A\),  
	    then by Lemma~\ref{generation}.\ref{generation_17}, one of the following holds:
	    \begin{itemize}
	      \item \(\Gamma \vdash \alpha . t^\prime : S(S(\gB) \times \Phi)\), with \(\gB \neq S(\gB')\) and \(S(\gB \times \Phi) \preceq A\).  
		By Lemma~\ref{generation}.\ref{generation_10},  
		\(\Gamma \vdash t^\prime : C\),  
		with \(S(C) \preceq S(S(\gB) \times \Phi)\).  
		Since \(C \preceq S(C)\), we obtain \(\Gamma \vdash t^\prime : S(S(\gB) \times \Phi)\).  
		Therefore,  
		\(\Gamma \vdash \Castr t^\prime : S(\gB \times \Phi)\),  
		and hence \(\Gamma \vdash \alpha . \Castr t^\prime : S(S(\gB \times \Phi))\).  
		Since \(S(S(\gB \times \Phi)) \preceq S(\gB \times \Phi)\) and \(S(\gB \times \Phi) \preceq A\), by transitivity we conclude \(\Gamma \vdash r : A\).

	      \item \(\Gamma \vdash \alpha . t^\prime : \X\), with \(S(\B) \preceq A\).  
		By Lemma~\ref{generation}.\ref{generation_10},  
		\(\Gamma \vdash t^\prime : C\),  
		with \(S(C) \preceq \X\).  
		By Lemma~\ref{lem:preceq_properties}.\ref{preceq_relation_7}, this is not possible. So this case does not apply.

	      \item \(\Gamma \vdash \alpha . t^\prime : \B\), with \(\B \preceq A\).  
		By Lemma~\ref{generation}.\ref{generation_10},  
		\(\Gamma \vdash t^\prime : C\),  
		with \(S(C) \preceq \B\).  
		Again, by Lemma~\ref{lem:preceq_properties}.\ref{preceq_relation_7}, this is not possible. So this case does not apply.
	    \end{itemize}
	\end{itemize}

      \item[\rneutralzup{}]
	Let \(t = \Cast \z\) and \(r = \z\).

	\begin{itemize}
	  \item If \(\Gamma \vdash \Castl \z : A\),  
	    then by Lemma~\ref{generation}.\ref{generation_16}, one of the following holds:
	    \begin{itemize}
	      \item \(\Gamma \vdash \z : S(\gB \times S(\Phi))\), with \(\Phi \neq S(\Phi')\) and \(S(\gB \times \Phi) \preceq A\).  
		By Lemma~\ref{generation}.\ref{generation_2}, there exists \(C\) such that \(\Gamma \vdash \z : S(C)\),  
		and \(S(C)\) is a subtype of the expected type.  
		In this case, it is enough that \(S(C) \preceq S(\gB \times \Phi)\) and \(S(\gB \times \Phi) \preceq A\);  
		by transitivity, we conclude \(\Gamma \vdash r : A\).

	      \item \(\Gamma \vdash \z : \X\) and \(S(\B) \preceq A\).  
		By Lemma~\ref{generation}.\ref{generation_2}, there exists \(C\) such that \(\Gamma \vdash \z : S(C)\),  
		and \(S(C)\) is a subtype of the expected type.  
		However, \(S(C) \preceq \X\) is not possible by Lemma~\ref{lem:preceq_properties}.\ref{preceq_relation_7}.  
		So this case does not apply.

	      \item \(\Gamma \vdash \z : \B\) and \(\B \preceq A\).  
		By Lemma~\ref{generation}.\ref{generation_2}, there exists \(C\) such that \(\Gamma \vdash \z : S(C)\),  
		and \(S(C)\) is a subtype of the expected type.  
		However, \(S(C) \preceq \B\) is not possible by Lemma~\ref{lem:preceq_properties}.\ref{preceq_relation_7}.  
		So this case does not apply.
	    \end{itemize}

	  \item If \(\Gamma \vdash \Castr \z : A\),  
	    then by Lemma~\ref{generation}.\ref{generation_17}, one of the following holds:
	    \begin{itemize}
	      \item \(\Gamma \vdash \z : S(S(\gB) \times \Phi)\), with \(\gB \neq S(\gB')\) and \(S(\gB \times \Phi) \preceq A\).  
		By Lemma~\ref{generation}.\ref{generation_2}, there exists \(C\) such that \(\Gamma \vdash \z : S(C)\),  
		and \(S(C) \preceq S(\gB \times \Phi) \preceq A\),  
		so \(\Gamma \vdash r : A\) follows.

	      \item \(\Gamma \vdash \z : \X\) and \(S(\B) \preceq A\).  
		By Lemma~\ref{generation}.\ref{generation_2}, there exists \(C\) such that \(\Gamma \vdash \z : S(C)\),  
		and \(S(C)\) is a subtype of the expected type.  
		However, \(S(C) \preceq \X\) is not possible by Lemma~\ref{lem:preceq_properties}.\ref{preceq_relation_7}.  
		So this case does not apply.

	      \item \(\Gamma \vdash \z : \B\) and \(\B \preceq A\).  
		By Lemma~\ref{generation}.\ref{generation_2}, there exists \(C\) such that \(\Gamma \vdash \z : S(C)\),  
		and \(S(C)\) is a subtype of the expected type.  
		However, \(S(C) \preceq \B\) is not possible by Lemma~\ref{lem:preceq_properties}.\ref{preceq_relation_7}.  
		So this case does not apply.
	    \end{itemize}
	\end{itemize}

      \item[\rneutrallup{}]
	Let \(t = \Castl v \otimes b\) and \(r = v \otimes b\), with \(b \in \basis\), and \(\Gamma \vdash \Castl v \otimes b: A\). 
	By Lemma~\ref{generation}.\ref{generation_16}, one of the following cases holds:
	\begin{itemize}
	  \item \(\Gamma \vdash v \otimes b: S(\gB_1 \times S(\gB_2))\), \(\gB_2 \neq S(\gB_2')\), and \(S(\gB_1 \times \gB_2) \preceq A\). 
	    By Lemma~\ref{generation}.\ref{generation_11},  
	    \(\Gamma = \Gamma_1, \Gamma_2, \Xi\),  
	    \(\Xi, \Gamma_1 \vdash v: \varphi\),  
	    \(\Xi, \Gamma_2 \vdash b: \gB\),  
	    \(\fT(\Xi) \subseteq \bqtypes\),  
	    and \(\varphi \times \gB \preceq S(\gB_1 \times S(\gB_2))\). 
	    Since \(b \in \basis\) and \(\Xi, \Gamma_2 \vdash b: \gB\), we have \(\fFV(b) = \emptyset\), hence \(\vdash b : \gB\).  
	    Then, by Corollary~\ref{b_void_context}, \(\vdash b : \M\).  
	    Thus, \(\varphi \times \M \preceq S(\gB_1 \times S(\gB_2))\),  
	    and by Lemma~\ref{preceqproductproperties}.\ref{aux_lemma_rneutrallup},  
	    \(\varphi \times \M \preceq S(\gB_1 \times \gB_2)\). 
	    Therefore, \(\Gamma \vdash v \otimes b : S(\gB_1 \times \gB_2)\),  
	    and since \(S(\gB_1 \times \gB_2) \preceq A\), we get \(\Gamma \vdash v \otimes b : A\), as required.

	  \item \(\Gamma \vdash v \otimes b: \X\) and \(S(\B) \preceq A\). 
	    By Lemma~\ref{generation}.\ref{generation_11},  
	    \(\Gamma = \Gamma_1, \Gamma_2, \Xi\),  
	    \(\Xi, \Gamma_1 \vdash v: \gB_1\),  
	    \(\Xi, \Gamma_2 \vdash b: \gB_2\),  
	    \(\fT(\Xi) \subseteq \bqtypes\),  
	    and \(\gB_1 \times \gB_2 \preceq \X\), which is not possible by Lemma~\ref{lem:preceq_properties}.\ref{preceq_relation_11}.  
	    Therefore, this case does not apply.

	  \item \(\Gamma \vdash v \otimes b: \B\) and \(\B \preceq A\). 
	    By Lemma~\ref{generation}.\ref{generation_11},  
	    \(\Gamma = \Gamma_1, \Gamma_2, \Xi\),  
	    \(\Xi, \Gamma_1 \vdash v: \gB_1\),  
	    \(\Xi, \Gamma_2 \vdash b: \gB_2\),  
	    \(\fT(\Xi) \subseteq \bqtypes\),  
	    and \(\gB_1 \times \gB_2 \preceq \B\), which is not possible by Lemma~\ref{lem:preceq_properties}.\ref{preceq_relation_11}.  
	    Therefore, this case does not apply.
	\end{itemize}

      \item[\rneutralrup{}]
	Let \(t = \Castr b \otimes v\) and \(r = b \otimes v\), with \(b \in \basis\), and \(\Gamma \vdash \Castr b \otimes v: A\).  
	By Lemma~\ref{generation}.\ref{generation_17}, one of the following cases holds:
	\begin{itemize}
	  \item \(\Gamma \vdash b \otimes v: S(S(\gB_1) \times \gB_2)\), \(\gB_1 \neq S(\gB_1')\), and \(S(\gB_1 \times \gB_2) \preceq A\).  
	    By Lemma~\ref{generation}.\ref{generation_11},  
	    \(\Gamma = \Gamma_1, \Gamma_2, \Xi\),  
	    \(\Xi, \Gamma_1 \vdash b: \gB\),  
	    \(\Xi, \Gamma_2 \vdash v: \varphi\),  
	    \(\fT(\Xi) \subseteq \bqtypes\),  
	    and \(\gB \times \varphi \preceq S(S(\gB_1) \times \gB_2)\).  
	    Since \(b \in \basis\) and \(\Xi, \Gamma_1 \vdash b: \gB\), we have \(\fFV(b) = \emptyset\), so \(\vdash b : \gB\).  
	    Then, by Corollary~\ref{b_void_context}, \(\vdash b : \M\),  
	    and hence \(\M \times \varphi \preceq S(S(\gB_1) \times \gB_2)\).  
	    By Lemma~\ref{preceqproductproperties}.\ref{aux_lemma_rneutralrup}, we obtain \(\M \times \varphi \preceq S(\gB_1 \times \gB_2)\).  
	    Therefore, \(\Gamma \vdash b \otimes v : S(\gB_1 \times \gB_2)\), and since \(S(\gB_1 \times \gB_2) \preceq A\), we conclude \(\Gamma \vdash b \otimes v : A\), as required.

	  \item \(\Gamma \vdash b \otimes v: \X\) and \(S(\B) \preceq A\).  
	    By Lemma~\ref{generation}.\ref{generation_11},  
	    \(\Gamma = \Gamma_1, \Gamma_2, \Xi\),  
	    \(\Xi, \Gamma_1 \vdash v: \gB_1\),  
	    \(\Xi, \Gamma_2 \vdash b: \gB_2\),  
	    \(\fT(\Xi) \subseteq \bqtypes\),  
	    and \(\gB_1 \times \gB_2 \preceq \X\), which is not possible by Lemma~\ref{lem:preceq_properties}.\ref{preceq_relation_11}.  
	    Therefore, this case does not apply.

	  \item \(\Gamma \vdash b \otimes v: \B\) and \(\B \preceq A\).  
	    By Lemma~\ref{generation}.\ref{generation_11},  
	    \(\Gamma = \Gamma_1, \Gamma_2, \Xi\),  
	    \(\Xi, \Gamma_1 \vdash v: \gB_1\),  
	    \(\Xi, \Gamma_2 \vdash b: \gB_2\),  
	    \(\fT(\Xi) \subseteq \bqtypes\),  
	    and \(\gB_1 \times \gB_2 \preceq \B\), which is not possible by Lemma~\ref{lem:preceq_properties}.\ref{preceq_relation_11}.  
	    Therefore, this case does not apply.
	\end{itemize}

      \item[\rcastp{} and \rcastm{}]
	Let \(t = \Cast \ket{\pm}\) and \(r = \frac{1}{\sqrt{2}}.\ket{0} \pm \frac{1}{\sqrt{2}}.\ket{1}\).  
	\begin{itemize}
	  \item If \(\Gamma \vdash \Castl \ket{\pm} : A\), then by Lemma~\ref{generation}.\ref{generation_16}, one of the following cases holds:
	    \begin{itemize}
	      \item \(\Gamma \vdash \ket{\pm} : S(\gB \times S(\Phi))\), \(\Phi \neq S(\Phi')\), and \(S(\gB \times \Phi) \preceq A\).  
		By Lemma~\ref{generation}.\ref{generation_5}, we get \(\X \preceq S(\gB \times S(\Phi))\) and \(\fT(\Gamma) \subseteq \bqtypes\).  
		This contradicts the subtyping rules. Therefore, this case does not apply.

	      \item \(\Gamma \vdash \ket{\pm} : \X\) and \(S(\B) \preceq A\).  
		By Lemma~\ref{generation}.\ref{generation_5}, we get \(\X \preceq \X\) and \(\fT(\Gamma) \subseteq \bqtypes\).  
		In our type system, \(\vdash \frac{1}{\sqrt{2}}.\ket{0} \pm \frac{1}{\sqrt{2}}.\ket{1} : S(S(\B))\), and since \(S(\B) \preceq A\) and \(S(S(\B)) \preceq S(\B)\), by transitivity we conclude \(S(S(\B)) \preceq A\), hence \(\vdash \frac{1}{\sqrt{2}}.\ket{0} \pm \frac{1}{\sqrt{2}}.\ket{1} : A\).

	      \item \(\Gamma \vdash \ket{\pm} : \B\) and \(\B \preceq A\).  
		Since \(\ket{\pm}\) does not have type \(\B\), this case does not apply.
	    \end{itemize}

	  \item If \(\Gamma \vdash \Castr \ket{\pm} : A\), then by Lemma~\ref{generation}.\ref{generation_17}, one of the following cases holds:
	    \begin{itemize}
	      \item \(\Gamma \vdash \ket{\pm} : S(S(\gB) \times \Phi)\), \(\gB \neq S(\gB')\), and \(S(\gB \times \Phi) \preceq A\).  
		By Lemma~\ref{generation}.\ref{generation_5}, we get \(\X \preceq S(S(\gB) \times \Phi)\) and \(\fT(\Gamma) \subseteq \bqtypes\).  
		This contradicts the subtyping rules. Therefore, this case does not apply.

	      \item \(\Gamma \vdash \ket{\pm} : \X\) and \(S(\B) \preceq A\).  
		By Lemma~\ref{generation}.\ref{generation_5}, we get \(\X \preceq \X\) and \(\fT(\Gamma) \subseteq \bqtypes\).  
		In our type system, \(\vdash \frac{1}{\sqrt{2}}.\ket{0} \pm \frac{1}{\sqrt{2}}.\ket{1} : S(S(\B))\), and since \(S(\B) \preceq A\) and \(S(S(\B)) \preceq S(\B)\), by transitivity we conclude \(S(S(\B)) \preceq A\), hence \(\vdash \frac{1}{\sqrt{2}}.\ket{0} \pm \frac{1}{\sqrt{2}}.\ket{1} : A\).

	      \item \(\Gamma \vdash \ket{\pm} : \B\) and \(\B \preceq A\).  
		Since \(\ket{\pm}\) does not have type \(\B\), this case does not apply.
	    \end{itemize}
	\end{itemize}

      \item[\rcastketz{} and \rcastketo{}]
	Let \(t = \Cast \ket{i}\) and \(r = \ket{i}\), where \(i \in \{0,1\}\).
	\begin{itemize}
	  \item If \(\Gamma \vdash \Castl \ket{i} : A\), then by Lemma~\ref{generation}.\ref{generation_16}, one of the following cases holds:
	    \begin{itemize}
	      \item \(\Gamma \vdash \ket{i} : S(\gB \times S(\Phi))\), \(\Phi \neq S(\Phi')\), and \(S(\gB \times \Phi) \preceq A\).  
		By Lemma~\ref{generation}.\ref{generation_3}  we have \(\B \preceq S(\gB \times S(\Phi))\) and \(\fT(\Gamma) \subseteq \bqtypes\).  
		This contradicts the subtyping rules. Therefore, this case does not apply.

	      \item \(\Gamma \vdash \ket{i} : \X\) and \(S(\B) \preceq A\).  
		Since \(\ket{i}\) does not have type \(\X\), this case does not apply.

	      \item \(\Gamma \vdash \ket{i} : \B\) and \(\B \preceq A\).  
		This case holds, and thus \(\Gamma \vdash \ket{i} : A\).
	    \end{itemize}

	  \item If \(\Gamma \vdash \Castr \ket{i} : A\), then by Lemma~\ref{generation}.\ref{generation_17}, one of the following cases holds:
	    \begin{itemize}
	      \item \(\Gamma \vdash \ket{i} : S(S(\gB) \times \Phi)\), \(\gB \neq S(\gB')\), and \(S(\gB \times \Phi) \preceq A\).  
		By Lemma~\ref{generation}.\ref{generation_3}  we have \(\B \preceq S(S(\gB) \times \Phi)\) and \(\fT(\Gamma) \subseteq \bqtypes\).  
		This contradicts the subtyping rules. Therefore, this case does not apply.

	      \item \(\Gamma \vdash \ket{i} : \X\) and \(S(\B) \preceq A\).  
		Since \(\ket{i}\) does not have type \(\X\), this case does not apply.

	      \item \(\Gamma \vdash \ket{i} : \B\) and \(\B \preceq A\).  
		This case holds, and thus \(\Gamma \vdash \ket{i} : A\).
	    \end{itemize}
	\end{itemize}

      \item[\rproj{}]
	Let \(t = \pim \prnthss{\sum_{i=1}^{M} \brackets{\alpha_i .} \bigotimes_{h=1}^{n} \ket{b_{hi}}}\) and \(r = \ket{k} \otimes \ket{\phi_k}\), with \(\Gamma \vdash t : A\).  
	By Lemma~\ref{generation}.\ref{generation_12}, we have \(\sum_{i=1}^{M} \brackets{\alpha_i .} \bigotimes_{h=1}^{n} \ket{b_{hi}} : S\left(\prod_{h = 1}^{n} \Ba_h\right)\) and \(\B^m \times S\left(\prod_{h = m + 1}^{n} \Ba_h\right) \preceq A\), for some \(0 < m \leq n\).  
	By Corollary~\ref{generation_9_generalized}, we have \(\Gamma = \Gamma_1, \Gamma_2, \ldots, \Gamma_M, \Xi\), with \(\Xi, \Gamma_i \vdash \brackets{\alpha_i .} \bigotimes_{h=1}^{n} \ket{b_{hi}} : C\), \(\fT(\Xi) \subseteq \bqtypes\), and \(S(C) \preceq S\left(\prod_{h = 1}^{n} \Ba_h\right)\).  
	By Lemma~\ref{generation}.\ref{generation_10}, \(\Xi, \Gamma_i \vdash \bigotimes_{h=1}^{n} \ket{b_{hi}} : D\) and \(S(D) \preceq C\).  
	By Corollary~\ref{generation_11_generalized}, letting \(\Delta_i = \Xi_i, \Gamma_i\), we have \(\Delta_i = {\Delta_i}_1, {\Delta_i}_2, \ldots, {\Delta_i}_n, \Xi_i^\prime\), with \(\Xi_i^\prime, {\Delta_i}_h \vdash \ket{b_{hi}} : \gB_h\), \(\fT(\Xi_i^\prime) \subseteq \bqtypes\), and \(\prod_{h=1}^{n} \gB_h \preceq S(D)\).  
	Since each \(b_{hi} \in \{0,1,+,-\}\), by Lemmas~\ref{generation}.\ref{generation_3} and \ref{generation}.\ref{generation_5}, we get \(\Xi_i, {\Delta_i}_h \vdash \ket{b_{hi}} : \Ba_h\), with \(\Ba_h \preceq \gB_h\).  
	Then, by subtyping, \(\Xi, \Gamma_i \vdash \bigotimes_{h = m+1}^{n} \ket{b_{hi}} : \prod_{h = m+1}^{n} \Ba_h\), hence \(\Xi, \Gamma_i \vdash \brackets{\alpha_i .} \bigotimes_{h = m+1}^{n} \ket{b_{hi}} : S\left(\prod_{h = m+1}^{n} \Ba_h\right)\).  
	By subtyping of sums, we conclude that \(\Gamma \vdash \sum_{i=1}^{M'} \brackets{\alpha_i .} \bigotimes_{h = m+1}^{n} \ket{b_{hi}} : S(S(S(\ldots(\prod_{h = m+1}^{n} \Ba_h)\ldots)))\).  
	Thus, by subtyping, we get \(\Gamma \vdash \sum_{i=1}^{M'} \brackets{\alpha_i .} \bigotimes_{h = m+1}^{n} \ket{b_{hi}} : S\left(\prod_{h = m+1}^{n} \Ba_h\right)\), that is, \(\Gamma \vdash \ket{\phi_k} : S\left(\prod_{h = m+1}^{n} \Ba_h\right)\).  
	Moreover, since \(\ket{k}\) is the measured part, \(\Gamma \vdash \ket{k} : \B^m\), so \(\Gamma \vdash \ket{k} \otimes \ket{\phi_k} : \B^m \times S\left(\prod_{h = m+1}^{n} \Ba_h\right)\), and since this is a subtype of \(A\), we are done.

      \item[\rprojx{}] Analogous to \rproj{}.

      \item[\(\error\) rules.] Since \(\error\) has any type, all rules involving \(\error\) trivially preserve types.
	\qed
    \end{description}
\end{proof}

\subsection{Proof of Progress}\label{app:progress}
\progress*
\begin{proof}
  We proceed by induction on the term \(t\).

  \begin{itemize}
    \item If \(t\) is a value or \(\error\), we are done.

    \item If \(t = rs\), by Lemma~\ref{generation}.\ref{generation_8},
	\(\vdash r: \gB_1 \Rightarrow A\) and \(\vdash s: \gB_1\), or
	\(\vdash r: S(\gB_1 \Rightarrow C)\) and \(\vdash s: S(\gB_1)\), with \(S(C) \preceq A\).
      By induction hypothesis, \(r\) is a value or \(\error\). Since its type is a function type, \(r\) must be \(\lambda x^{\gB_2}.t\), \(\ite{}su\), or \(\itex{}su\). Thus, \(t\) reduces and is not in normal form.

    \item If \(t = r + s\), by Lemma~\ref{generation}.\ref{generation_9}, \(\vdash r: C\) and \(\vdash s: C\) with \(S(C) \preceq A\). By induction hypothesis, \(r\) and \(s\) are values or \(\error\).
	If neither \(r\) nor \(s\) is \(\error\), then \(t\) is a value.
	If either \(r\) or \(s\) is \(\error\), then \(t\) reduces and is not in normal form.

    \item If \(t = \alpha. r\), by Lemma~\ref{generation}.\ref{generation_10}, \(\vdash r: C\) with \(S(C) \preceq A\). By induction hypothesis, \(r\) is a value or \(\error\).
	If \(r\) is not \(\error\), then \(t\) is a value.
	If \(r\) is \(\error\), then \(t\) reduces and is not in normal form.

    \item If \(t = r \otimes s\), by Lemma~\ref{generation}.\ref{generation_11}, \(\vdash r: \gB_1\), \(\vdash s: \gB_2\), with \(\gB_1 \times \gB_2 \preceq A\). By induction hypothesis, \(r\) and \(s\) are values or \(\error\).
	If neither \(r\) nor \(s\) is \(\error\), then \(t\) is a value.
	If either \(r\) or \(s\) is \(\error\), then \(t\) reduces and is not in normal form.

    \item If \(t = \pim r\), by Lemma~\ref{generation}.\ref{generation_12}, \(\vdash r: S\left(\prod_{i = 1}^{n} \Ba_i\right)\), with \(m \leq n\) and \(\B^m \times S\left(\prod_{i = m+1}^{n} \Ba_i\right) \preceq A\). By induction hypothesis, \(r\) is a value or \(\error\). Then \(t\) reduces and is not a value.

    \item If \(t = \pimx r\), by Lemma~\ref{generation}.\ref{generation_13}, this is analogous to the previous case.

    \item If \(t = \ite{r}{s_1}{s_2}\), by Lemma~\ref{generation}.\ref{generation_14}, \(\vdash s_1 : C\), \(\vdash s_2 : C\), and \(\vdash r : \B\), with \(\B \Rightarrow C \preceq A\). By induction hypothesis, \(r\) is a value or \(\error\).
	If \(r\) is not \(\error\), \(t\) reduces to \(s_1\) or \(s_2\).
	If \(r\) is \(\error\), then \(t\) reduces and is not in normal form.

    \item If \(t = \itex{r}{s_1}{s_2}\), by Lemma~\ref{generation}.\ref{generation_15}, this is analogous to the previous case.

    \item If \(t = \Castl r\), by Lemma~\ref{generation}.\ref{generation_16}, there are three possibilities.
      \begin{itemize}
	\item \(\vdash r: S(\gB_1 \times S(\gB_2))\), \(\gB_1 \neq S(\gB_1')\), and \(S(\gB_1 \times \gB_2) \preceq A\). By induction hypothesis, \(r\) is a value or \(\error\).
	    If \(r\) is a value, then \(t\) reduces.
	    If \(r\) is \(\error\), then \(t\) reduces.
	\item \(\vdash r: \X\), then \(t\) reduces by one of the rules \rcastp~or \rcastm.
	\item \(\vdash r: \B\), then \(t\) reduces by one of the rules \rcastketz~or \rcastketo.
      \end{itemize}
      In all cases, \(t\) is not a value.

    \item If \(t = \Castr r\), the reasoning is analogous to the previous case.

    \item If \(t = \head r\), by Lemma~\ref{generation}.\ref{generation_18}, \(\vdash r: \Ba \times \M\). Then \(t\) reduces and is not a value.

    \item If \(t = \tail r\), by Lemma~\ref{generation}.\ref{generation_19}, \(\vdash r: \Ba \times \M\). Then \(t\) reduces and is not a value.
      \qed
  \end{itemize}
\end{proof}

\section{Strong Normalisation}\label{app:strong_normalisation}
\subsection{Proof of Measure decrease and invariance}\label{app:size_properties}

\sizeproperties*

\begin{proof}
  First, we threat the associativity and commutativity:
  \begin{itemize}
    \item
	$\size{t+r} = \size{t} + \size{r} + 2 = \size{r} + \size{t} + 2 = \size{r+t}$
    \item
      $\begin{aligned}[t]
	\size{\pair tr + s}
	& = \size{t + r} + \size{s} + 2              
	 = \size{t} + \size{r} + 2 + \size{s} + 2   \\
	& = \size{t} + (\size{r} + \size{s} + 2) + 2 
	 = \size{t} + \size{r + s} + 2              \\
	& = \size{t + \pair rs}
      \end{aligned}$
  \end{itemize}

  For the second item, we proceed induction on the rewrite relation \(\lrap\). We only consider the basic
  reduction rules from Figures~\ref{fig:RS_beta_rules}
  (excluding rules \rbetan\ and \rbetab)
  to~\ref{fig:RSError}, as the contextual rules from Figure~\ref{fig:RSContext}
  follow straightforwardly from the induction hypothesis.

  \begin{description}
    \item[\riftrue:] 
      $\begin{aligned}[t]
	\size{\ite{\ket 1}tr}
	& = (3 \size{\ite{}tr} + 2)(3 \size{\ket{1}} + 2)  
	= (3 (\size{t} + \size{r}) + 2)2                 \\
	& = 6 (\size{t} + \size{r}) + 4                    
	= 6\size{t} + 6\size{r} + 4                      
	> \size{t}
      \end{aligned}$
    \item[\riffalse:] 
      $\begin{aligned}[t]
	\size{\ite{\ket 0}tr}
	& = (3 \size{\ite{}tr} + 2)(3 \size{\ket{0}} + 2)  
	= (3 (\size{t} + \size{r}) + 2)2                 \\
	&= 6 (\size{t} + \size{r}) + 4                    
	= 6\size{t} + 6\size{r} + 4                      
	> \size{r}
      \end{aligned}$
    \item[\rifplus:] 
      $\begin{aligned}[t]
	\size{\itex{\ket +}tr}
	& = (3 \size{\itex{}tr} + 2)(3 \size{\ket{+}} + 2) 
	= (3 (\size{t} + \size{r}) + 2)2                 \\
	& = 6(\size{t} + \size{r}) + 4                     
	= 6\size{t} + 6\size{r} + 4                      
	> \size{t}
      \end{aligned}$
    \item[\rifminus:]
      $\begin{aligned}[t]
	\size{\itex{\ket -}tr}
	& = (3 \size{\itex{}tr} + 2)(3 \size{\ket{-}} + 2) 
	= (3 (\size{t} + \size{r}) + 2)2                 \\
	& = 6(\size{t} + \size{r}) + 4                     
	= 6\size{t} + 6\size{r} + 4                      
	> \size{t}
      \end{aligned}$
    \item[\rneut:]
      $\size{\z + t} = 2 + \size{\z} + \size{t} = 2 + \size{t} > \size{t}$
    \item[\runit:]
      $\size{1 . t} = 1 + 2 \size{t} > \size{t}$
    \item[\rzeros:]
      $\size{0. t} = 1 + 2 \size{t} > 0 = \size{\z}$
    \item[\rzero:]
      $\size{\alpha . \z} = 1 + 2 \size{\z} = 1 > 0 = \size{\z}$
    \item[\rprod:]
      $\begin{aligned}[t]
	\size{\alpha . (\beta . t)}
	&= 1 + 2 \size{\beta . t}           
	= 1 + 2 (1 + 2 \size{t})           
	= 3 + 4 \size{t}                   \\
	& > 1 + 2 \size{t}                   
	= \size{(\alpha\beta) . t}
      \end{aligned}$
    \item[\rdists:]
      $\begin{aligned}[t]
	\size{\alpha . (t + r)}
	& = 1 + 2 \size{t + r}                        
	= 5 + 2 \size{t} + 2 \size{r}               
	= 3 + \size{\alpha . t} + \size{\alpha . r} \\
	&= 1 + \size{\alpha . t + \alpha . r}        
	> \size{\alpha . t + \alpha . r}
      \end{aligned}$
    \item[\rfact:]
      $\size{\alpha . t + \beta . t}
	 = 2 + \size{\alpha . t} + \size{\beta . t} 
	 = 4 + 4 \size{t}                           
	 > 1 + 2 \size{t}                           
	 = \size{(\alpha + \beta) . t}$
    \item[\rfacto:]
      $\size{\alpha . t + t}
	 = 2 + \size{\alpha . t} + \size{t} 
	 = 3 + 3 \size{t}                   
	 > 1 + 2 \size{t}                   
	 = \size{(\alpha + 1) . t}$
    \item[\rfactt:]
	$\size{t + t} = 2 + 2 \size{t} > 1 + 2 \size{t} = \size{2 . t}$
    \item[\rlinr:]
      $\begin{aligned}[t]
	\size{{t}{(r + s)}}
	& = (3 \size{t} + 2)(3 \size{r + s} + 2)                                                  \\
	& = (3 \size{t} + 2)(3 (2 + \size{r} + \size{s}) + 2)                                     \\
	& = (3 \size{t} + 2)(8 + 3 \size{r} + 3 \size{s})                                         \\
	& = 4 (3 \size{t} + 2) + (3 \size{t} + 2)(4 + 3 \size{r} + 3 \size{s})                    \\
	& = 12 \size{t} + 8 + (3 \size{t} + 2)(4 + 3 \size{r} + 3 \size{s})                       \\
	& = 12 \size{t} + 8 + (3 \size{t} + 2)((3 \size{r} + 2) + (3 \size{s} + 2))               \\
	& = 12 \size{t} + 8 + (3 \size{t} + 2)(3 \size{r} + 2) + (3 \size{t} + 2)(3 \size{s} + 2) \\
	& = 12 \size{t} + 8 + \size{{t}{r}} + \size{{t}{s}}                                       \\
	& = 12 \size{t} + 6 + \size{{t}{r} + {t}{s}}                                              
	 > \size{{t}{r} + {t}{s}}
      \end{aligned}$
    \item[\rlinscalr:]
      $\begin{aligned}[t]
	\size{{t}{(\alpha . r)}}
	& = (3 \size{t} + 2)(3 \size{\alpha . r} + 2)           
	= (3 \size{t} + 2)(3 (1 + 2 \size{r}) + 2)            \\
	& = (3 \size{t} + 2)(6 \size{r} + 5)                    
	= 3 \size{t} + 2 + (3 \size{t} + 2)(6 \size{r} + 4)   \\
	& = 3 \size{t} + 2 + 2 (3 \size{t} + 2)(3 \size{r} + 2) 
	= 3 \size{t} + 2 + 2 \size{{t}{r}}                    \\
	& = 3 \size{t} + 1 + \size{\alpha . {t}{r}}             
	> \size{\alpha . {t}{r}}
      \end{aligned}$
    \item[\rlinzr:]
      $\size{{t}{\z}}  = (3 \size{t} + 2) (3 \size{\z} + 2) > 0 = \size{\z}$
    \item[\rlinl:]
      $\begin{aligned}[t]
	\size{{(t + r)}{s}}
	& = (3 \size{t + r} + 2)(3 \size{s} + 2)                                  \\
	& = (3 (2 + \size{t} + \size{r}) + 2)(3 \size{s} + 2)                     \\
	& = (3 \size{t} + 3 \size{r} + 8)(3 \size{s} + 2)                         \\
	& = (3 \size{t} + 3 \size{r} + 4)(3 \size{s} + 2) + 4 (3 \size{s} + 2)    \\
	& = ((3 \size{t} + 3 \size{r} + 4)(3 \size{s} + 2) + 2) + 12 \size{s} + 6 \\
	& = \size{{t}{s} + {r}{s}} + 12 \size{s} + 6                              
	 > \size{{t}{s} + {r}{s}}
      \end{aligned}$
    \item[\rlinscall:]
      $\begin{aligned}[t]
	\size{{(\alpha . t)}{r}}
	& = (3 \size{\alpha . t} + 2)(3 \size{r} + 2)           
	= (3 (1 + 2 \size{t}) + 2)(3 \size{r} + 2)            \\
	& = (6 \size{t} + 5)(3 \size{r} + 2)                    
	= (6 \size{t} + 4)(3 \size{r} + 2) + (3 \size{r} + 2) \\
	& = 2 (3 \size{t} + 2)(3 \size{r} + 2) + 3 \size{r} + 2 
	= \size{\alpha . {t}{r}} + 3 \size{r} + 1             \\
	& > \size{\alpha . {t}{r}}
      \end{aligned}$
    \item[\rlinzl:]
	$\size{{\z}{t}} = (3 \size{\z} + 2)(3 \size{t} + 2) = 6 \size{t} + 4 > 0 = \size{\z}$
    \item[\rhead:]
	$\size{\head{(h \times t)}} = 1 + \size{h \times t} = 1 + 1 + \size{h} + \size{t} > \size{h}$
    \item[\rtail:]
	$\size{\tail{(h \times t)}} = 1 + \size{h \times t} = 1 + 1 + \size{h} + \size{t} > \size{t}$
    \item[\rdistlp:]
      $\begin{aligned}[t]
	\size{\Castl t \times (r+s)}
	& = \size{t \times (r+s)} + 5 
	= \size{t} + \size{r+s} + 1 + 5 \\
	& = \size{t} + \size{r} + 2 + \size{s} + 6 
	> \size{t} + \max \lbrace \size{r}, \size{s} \rbrace + 1 \\
	& = \max \lbrace \size{t} + \size{r} + 1, \size{t} + \size{s} + 1 \rbrace \\
	& = \max \lbrace \size{t \times r}, \size{t \times s} \rbrace 
	= \size{\Castl t \times r + \Castl t \times s}
      \end{aligned}$
    \item[\rdistrp:]
      $\begin{aligned}[t]
	\size{\Castr (t+r) \times s}
	& = \size{(t+r) \times s} + 5 
	= \size{t+r} + \size{s} + 1 + 5 \\
	& = \size{t} + \size{r} + 2 + \size{s} + 6 
	> \max \lbrace \size{t}, \size{r} \rbrace + \size{s} + 1 \\
	& = \max \lbrace \size{t} + \size{s} + 1, \size{r} + \size{s} + 1 \rbrace \\
	& = \max \lbrace \size{t \times s}, \size{r \times s} \rbrace 
	= \size{\Castr t \times s + \Castr r \times s}
      \end{aligned}$
    \item[\rdistla:]
      $\begin{aligned}[t]
	\size{\Castl t \times (\alpha.r)}
	& = \size{t \times (\alpha.r)} + 5 
	= \size{t} + \size{\alpha.r} + 1 + 5 \\
	& = \size{t} + 2\size{r} + 1 + 6 
	> \size{t} + \size{r} + 6 \\
	& = \size{t \times r} + 5 
	= \size{\Castl t \times r} 
	= \size{\alpha.\Castl t \times r}
      \end{aligned}$
    \item[\rdistra:]
      $\begin{aligned}[t]
	\size{\Castr (\alpha.t) \times r}
	& = \size{(\alpha.t) \times r} + 5 
	= \size{\alpha.t} + \size{r} + 1 + 5\\
	& = 2\size{t} + 1 + \size{r} + 6 
	> \size{t} + \size{r} + 6 
	= \size{t \times r} + 5 \\
	& = \size{\Castr t \times r} 
	= \size{\alpha.\Castr t \times r}
      \end{aligned}$
    \item[\rdistlz:]
	$\size{\Castl v \times \z}
	 = \size{v \times \z} + 5 
	 > 0 = \size{\z}$
    \item[\rdistrz:]
      $\size{\Castr \z \times v}
	 = \size{\z \times v} + 5 
	 > 0 = \size{\z}$
    \item[\rdistpup:]
      $\begin{aligned}[t]
	\size{\Cast (t+r)}
	& = \size{t+r} + 5 
	= \size{t} + \size{r} + 2 + 5 
	> \max \lbrace \size{t}, \size{r} \rbrace \\
	& = \size{\Cast t + \Cast r}
      \end{aligned}$
    \item[\rdistaup:]
      $\size{\Cast (\alpha .t)}
      = \size{\alpha .t} + 5 
      = 2\size{t} + 1 + 5  
      > \size{t} + 5 
      = \size{\Cast t} 
      = \size{\alpha .\Cast t}$
    \item[\rneutralzup:]
      $\size{\Castl \z}
	 = \size{\z} + 5 
	 > 0 = \size{\z}$
    \item[\rneutrallup:]
      $
	\size{\Castl v \times b}
	 = \size{v \times b} + 5 
	 > \size{v \times b}$
    \item[\rneutralrup:]
      $\size{\Castr b \times v}
	 = \size{b \times v} + 5 
	 > \size{b \times v}$
    \item[\rcastp:]
      $\begin{aligned}[t]
	\size{\Cast \ket{+}}
	& = \size{\ket{+}} + 5 
	= 5 
	> 4 
	= 2\size{\ket0} + 1 + 2\size{\ket1} + 1 + 2 \\
	& = \size{\tfrac{1}{\sqrt{2}}.\ket{0}} + \size{\tfrac{1}{\sqrt{2}}.\ket{1}} + 2 
	= \size{\tfrac{1}{\sqrt{2}}.\ket{0} + \tfrac{1}{\sqrt{2}}.\ket{1}}
      \end{aligned}$
    \item[\rcastm:]
      $\begin{aligned}[t]
	\size{\Cast \ket{-}}
	& = \size{\ket{-}} + 5 
	= 5 
	> 4 
	= 2\size{\ket0} + 1 + 2\size{\ket1} + 1 + 2 \\
	& = \size{\tfrac{1}{\sqrt{2}}.\ket{0}} + \size{\tfrac{1}{\sqrt{2}}.\ket{1}} + 2 
	= \size{\tfrac{1}{\sqrt{2}}.\ket{0} - \tfrac{1}{\sqrt{2}}.\ket{1}}
      \end{aligned}$
    \item[\rcastketz:]
      $\size{\Cast \ket{0}}
	 = \size{\ket{0}} + 5 
	 > \size{\ket{0}}$
    \item[\rcastketo:]
      $\size{\Cast \ket{1}}
	 = \size{\ket{1}} + 5 
	 > \size{\ket{1}}$
    \item[\rproj:]
      Let \(b_{hi} = 0\) or \(b_{hi} = 1\), \(m \geq 1\) and \(k \geq k'\):
      \begin{align*}
	&\size{\pim \prnthss{\sum_{i=1}^{k} \brackets{\alpha_i .} \bigotimes_{h=1}^{n} \ket{b_{hi}}}} 
	 = \size{\sum_{i=1}^{k} \brackets{\alpha_i .} \bigotimes_{h=1}^{n} \ket{b_{hi}}} + m \\
	& = \sum_{i=1}^{k}\size{\brackets{\alpha_i .} \bigotimes_{h=1}^{n} \ket{b_{hi}}} + 2(k-1) + m \\
	& = \sum_{i=1}^{k}\prnthss{2\size{\bigotimes_{h=1}^{n} \ket{b_{hi}}} + 1} + 2k - 2 + m \\
	& = \sum_{i=1}^{k}\prnthss{2\size{\bigotimes_{h=1}^{n} \ket{b_{hi}}}} + k + 2k - 2 + m \\
	& = 2\sum_{i=1}^{k}\prnthss{\size{\bigotimes_{h=1}^{n} \ket{b_{hi}}}} + 3k - 2 + m \\
	& = 2\sum_{i=1}^{k}\prnthss{n - 1} + 3k - 2 + m 
	 = 2k(n-1) + 3k -2 + m \\
	& > 2k^\prime(n-m-1) + 3k^\prime -2 + m \\
	& = 2\sum_{i=1}^{k^\prime}\prnthss{\sum_{h=m+1}^{n}\size{\ket{b_{hi}}} + n - m - 1} + 3k^\prime - 2 + m \\
	& = 2\sum_{i=1}^{k^\prime}\prnthss{\size{\bigotimes_{h=m+1}^{n} \ket{b_{hi}}}} + 3k^\prime - 2 + m\\
	& = \sum_{i=1}^{k^\prime}\prnthss{2\size{\bigotimes_{h=m+1}^{n} \ket{b_{hi}}}} + k^\prime + 2k^\prime - 2 + m\\
	& = \sum_{i=1}^{k^\prime}\prnthss{2\size{\bigotimes_{h=m+1}^{n} \ket{b_{hi}}} + 1} + 2k^\prime - 2 + m\\
	& = \sum_{i=1}^{k^\prime}\size{\brackets{\alpha_i^\prime .} \bigotimes_{h=m+1}^{n} \ket{b_{hi}}} + 2k^\prime - 2 + m\\
	& = \size{\sum_{i=1}^{k^\prime} \brackets{\alpha_i^\prime .} \bigotimes_{h=m+1}^{n} \ket{b_{hi}}} + m
	 = \size{\ket{\phi_k}} + m
      \end{align*}
      Since \(\ket{k} = \ket{\psi_j}^{\otimes _m}\) with \(\psi_j \in \Set{0,1}\) and \(1\leq j \leq m\), we have \(\size{\ket{k}} = m - 1\).
      Thus, 
      $\size{\ket{\phi_k}} + m = \size{\ket{k}} + \size{\ket{\phi_k}} + 1  = \size{\ket{k} \otimes \ket{\phi_k}}$.

    \item[\rprojx:]
      Let \(b_{hi} = +\) or \(b_{hi} = -\), \(m \geq 1\) and \(k \geq k'\).
      This case is completely analogous to the previous one, so we have:
      \begin{align*}
	&\size{\pimx \prnthss{\sum_{i=1}^{k} \brackets{\alpha_i .} \bigotimes_{h=1}^{n} \ket{b_{hi}}}} 
	> \size{\ket{k} \otimes \ket{\phi_k}}
      \end{align*}
    \item[\rerrorapplyto:]
      $\size{\error t}
	 = (3\size{\error} + 2)(3\size{t} + 2) 
	 > 0 
	 = \size{\error}$
    \item[\rerrorapplyfrom:]
      $\size{t \error}
	 = (3\size{t} + 2)(3\size{\error} + 2) 
	 > 0 
	 = \size{\error}$
    \item[\rerrorsum:]
      $\size{t + \error}
	 = \size{t} + \size{\error} + 2 
	 > \size{\error}$
    \item[\rerrorprod:]
      $\size{\alpha . \error}
	 = 2\size{\error} + 1 
	 > \size{\error}$
    \item[\rerrortimesA:]
      $\size{t \times \error}
	 = \size{t} + \size{\error} + 1 
	 > \size{\error}$
    \item[\rerrortimesB:]
      $\size{\error \times t}
	 = \size{\error} + \size{t} + 1 
	 > \size{\error}$
    \item[\rerrorpierror:]
      Since \(m \geq 1\):
      $\size{\pim \error}
	 = \size{\error} + m 
	 > \size{\error}$
    \item[\rerrorpixerror:]
      Since \(m \geq 1\):
      $\size{\pimx \error}
	 = \size{\error} + m 
	 > \size{\error}$
    \item[\rerrorpi:]
      Since \(m \geq 1\):
      $\size{\pim\z}
	 = \size{\z} + m 
	 = 0 + m 
	 > 0 
	 = \size{\error}$
    \item[\rerrorpix:]
     Since \(m \geq 1\):
      $\size{\pimx\z}
	 = \size{\z} + m 
	 = 0 + m 
	 > 0 
	 = \size{\error}$
    \item[\rerrorcast:]
      $\size{\Cast \error}
	 = \size{\error} + 5 
	 > \size{\error}$
    \item[\rerrorhead:]
      $\size{\head \error}
	 = \size{\error} + 1 
	 > \size{\error}$
    \item[\rerrortail:]
      $\size{\tail \error}
	 = \size{\error} + 1 
	 > \size{\error}$
      \qed
  \end{description}
\end{proof}

\subsection{Strong normalisation of linear combinations}\label{app:linear_combination_SN}

\linearcombinationSN*
\begin{proof}
  We proceed by induction on the lexicographic order of the pair
  \[
    \left( \sum_{i=1}^{n} \steps{r_i},\ \size{\sum_{i=1}^{n} \brackets{\alpha_i .} r_i} \right),
  \]
  We analyse each possible reduct \(t\) of \(\sum_{i=1}^{n} \brackets{\alpha_i .} r_i\).

  \begin{itemize}
    \item Case \(t = \sum_{i=1}^{n} \brackets{\alpha_i .} s_i\), where \(s_i = r_i\) for all \(i \ne k\), and \(r_k \lrap s_k\).  
      Then \(\sum_{i=1}^{n} \steps{s_i} < \sum_{i=1}^{n} \steps{r_i}\), so by the induction hypothesis, \(t \in \SN\).

    \item Case \(t = \sum_{i=1}^{n} s_i\), where \(s_i = \brackets{\alpha_i .} r_i\) for all \(i \ne k\), and \(\brackets{\alpha_k .} r_k \lrap s_k\). We consider several subcases:
      \begin{itemize}
	\item If \(\alpha_k = 0\) or \(r_k = \z\), then \(s_k = \z\), and possibly no other steps change.  
	  Hence, \(\sum_{i=1}^{n} \steps{s_i} \leq \sum_{i=1}^{n} \steps{r_i}\).  
	  Also, by Theorem~\ref{thm:size_properties}, we have \(\size{t} < \size{\sum_{i=1}^{n} \brackets{\alpha_i .} r_i}\), so \(t \in \SN\) by the induction hypothesis.

	\item If \(\alpha_k = 1\) and \(s_k = r_k\), then the number of steps does not increase,  
	  and the measure strictly decreases. Hence \(t \in \SN\) by the induction hypothesis.

	\item If \(r_k = t_1 + t_2\) and \(s_k = \alpha . t_1 + \alpha . t_2\), then  
	  \[
	    \sum_{i \ne k} \steps{s_i} + \steps{t_1} + \steps{t_2} \leq \sum_{i=1}^{n} \steps{r_i},
	  \]
	  and \(\size{t} < \size{\sum_{i=1}^{n} \brackets{\alpha_i .} r_i}\). Thus \(t \in \SN\) by the induction hypothesis.

	\item If \(r_k = \beta . u\) and \(s_k = (\alpha \beta) . u\), then  
	  \[
	    \sum_{i \ne k} \steps{s_i} + \steps{u} \leq \sum_{i=1}^{n} \steps{r_i},
	  \]
	  and again the measure strictly decreases. Hence \(t \in \SN\) by the induction hypothesis.
      \end{itemize}

    \item Case \(t = \sum_{\substack{i \ne j \\ i \ne k}} \brackets{\alpha_i .} r_i + \brackets{(\alpha_j + \alpha_k) .} r_j\), with \(r_j = r_k\).  
      Then \(\sum_{i \ne k} \steps{r_i} + \steps{r_j} \leq \sum_{i=1}^{n} \steps{r_i}\),  
      and again the measure strictly decreases. Hence \(t \in \SN\) by the induction hypothesis.
  \end{itemize}

  Since every reduct of \(\sum_{i=1}^{n} \brackets{\alpha_i .} r_i\) is in \(\SN\), the term itself is in \(\SN\).
  \qed
\end{proof}

\subsection{Reducibility properties}\label{app:reducibility_properties}

\reducibilityproperties*
\begin{proof}
  By a simultaneous induction on $A$.
  \begin{itemize}
    \item Case  $A = \Ba$:
      \begin{description}
	\item[\textbf{(CR1)}] Immediate from the definition: since \( \den{\Ba} \subseteq \SN \), it follows that \( t \in \SN \).

	\item[\textbf{(CR2)}] By definition, \( \den{\Ba} \subseteq \SN \), so \( t \in \SN \) and thus \( \fRed(t) \subseteq \SN \). Moreover, since \( t : S(\Ba) \) and reduction preserves typing (Theorem~\ref{subject_reduction}), any \( r \in \fRed(t) \) satisfies \( r : S(\Ba) \). Therefore, \( \fRed(t) \subseteq \den{\Ba} \).

	\item[\textbf{(CR3)}] Since \( \den{\Ba} \subseteq \SN \), it suffices to show that \( t \in \SN \). Given \( \fRed(t) \subseteq \den{\Ba} \subseteq \SN \), and that \( t \in \Neutral \), it follows that \( t \in \SN \), hence \( t \in \den{\Ba} \).

	\item[\textbf{(LIN1)}] From the definition of \( \den{\Ba} \), we have \( t : S(\Ba) \) and \( r : S(\Ba) \), thus \( t + r : S(S(\Ba)) \). Since \( S(S(\Ba)) \preceq S(\Ba) \) by subtyping, and \( t, r \in \SN \), we have \( t + r \in \SN \) by Lemma~\ref{linear_combination_SN}. Hence \( t + r \in \den{\Ba} \).

	\item[\textbf{(LIN2)}] Similarly, from \( t : S(\Ba) \), we have \( \alpha . t : S(S(\Ba)) \), and again \( S(S(\Ba)) \preceq S(\Ba) \). Since \( t \in \SN \), it follows that \( \alpha . t \in \SN \) by Lemma~\ref{linear_combination_SN}, so \( \alpha . t \in \den{\Ba} \).

	\item[\textbf{(HAB)}] We distinguish three cases:
	  \begin{itemize}
	    \item Since \( \z : S(\Ba) \) and \( \z \in \SN \), we have \( \z \in \den{\Ba} \).
	    \item Since \( \error \) has any type and \( \error \in \SN \), it follows that \( \error \in \den{\Ba} \).
	    \item For a variable \( x : \Ba \), we have \( x \in \SN \), and since \( \Ba \preceq S(\Ba) \), we obtain \( x \in \den{\Ba} \).
	  \end{itemize}
      \end{description}

    \item Case $A = \gB_1 \times \gB_2$:
      \begin{description}
	\item[\textbf{(CR1)}] Since \( \den{\gB_1 \times \gB_2} \subseteq \SN \), we have \( t \in \SN \).

	\item[\textbf{(CR2)}] By definition, \( \den{\gB_1 \times \gB_2} \subseteq \SN \), so \( t \in \SN \) and hence \( \fRed(t) \subseteq \SN \). Moreover, since \( t : S(\gB_1 \times \gB_2) \), Theorem~\ref{subject_reduction} ensures that if \( r \in \fRed(t) \), then \( r : S(\gB_1 \times \gB_2) \). Therefore, by definition, \( \fRed(t) \subseteq \den{\gB_1 \times \gB_2} \).

	\item[\textbf{(CR3)}] From the definition of \( \den{\gB_1 \times \gB_2} \), this amounts to proving \( t \in \SN \). Since \( \fRed(t) \subseteq \den{\gB_1 \times \gB_2} \subseteq \SN \), we conclude \( t \in \SN \).

	\item[\textbf{(LIN1)}] By definition of \( \den{\gB_1 \times \gB_2} \), we have \( t : S(\gB_1 \times \gB_2) \) and \( r : S(\gB_1 \times \gB_2) \), so \( t + r : S(S(\gB_1 \times \gB_2)) \). Since \( S(S(\gB_1 \times \gB_2)) \preceq S(\gB_1 \times \gB_2) \), and both \( t, r \in \SN \), Lemma~\ref{linear_combination_SN} gives \( t + r \in \SN \). Thus, \( t + r \in \den{\gB_1 \times \gB_2} \).

	\item[\textbf{(LIN2)}] From the definition of \( \den{\gB_1 \times \gB_2} \), we know \( t : S(\gB_1 \times \gB_2) \), so \( \alpha . t : S(S(\gB_1 \times \gB_2)) \). Since \( S(S(\gB_1 \times \gB_2)) \preceq S(\gB_1 \times \gB_2) \), and \( t \in \SN \), Lemma~\ref{linear_combination_SN} implies \( \alpha . t \in \SN \), hence \( \alpha . t \in \den{\gB_1 \times \gB_2} \).

	\item[\textbf{(HAB)}] We reason by cases:
	  \begin{itemize}
	    \item Since \( \z : S(\gB_1 \times \gB_2) \) and \( \z \in \SN \), we get \( \z \in \den{\gB_1 \times \gB_2} \).
	    \item Since \( \error \) has any type and \( \error \in \SN \), we have \( \error \in \den{\gB_1 \times \gB_2} \).
	    \item Since \( x : \gB_1 \times \gB_2 \), and \( x \in \SN \), we have \( \gB_1 \times \gB_2 \preceq S(\gB_1 \times \gB_2) \), and therefore \( x \in \den{\gB_1 \times \gB_2} \).
	  \end{itemize}
      \end{description}

    \item Case $A = \gB \Rightarrow B$:
      \begin{description}
	\item[\textbf{(CR1)}] Given \( t \in \den{\gB \Rightarrow B} \), we show that \( t \in \SN \). Let \( r \in \den{\gB} \), which exists by the induction hypothesis (HAB). By definition, \( tr \in \den{B} \). By induction hypothesis, \( tr \in \SN \), so \( \steps{tr} \) is finite. Since \( \steps{t} \leq \steps{tr} \), we conclude that \( t \in \SN \).

	\item[\textbf{(CR2)}] We aim to prove \( \fRed(t) \subseteq \den{\gB \Rightarrow B} \). That is, for any \( t' \in \fRed(t) \), we must show that \( t' \in \den{\gB \Rightarrow B} \). By definition, this means \( t' : S(\gB \Rightarrow B) \) and for all \( r \in \den{\gB} \), we must have \( t'r \in \den{B} \). \\
	  Since \( t \in \den{\gB \Rightarrow B} \), we know that \( tr \in \den{B} \) for all \( r \in \den{\gB} \). Then, by the induction hypothesis, \( \fRed(tr) \subseteq \den{B} \). In particular, for any \( t' \in \fRed(t) \), we have \( t'r \in \fRed(tr) \subseteq \den{B} \). \\
	  Moreover, since \( t : S(\gB \Rightarrow B) \), Theorem~\ref{subject_reduction} gives \( t' : S(\gB \Rightarrow B) \). Hence, \( t' \in \den{\gB \Rightarrow B} \).

	\item[\textbf{(CR3)}] Suppose \( t \in \Neutral \), \( t : S(\gB \Rightarrow B) \), and \( \fRed(t) \subseteq \den{\gB \Rightarrow B} \). We show that \( t \in \den{\gB \Rightarrow B} \). This means we must show that \( tr \in \den{B} \) for all \( r \in \den{\gB} \). By induction hypothesis, it suffices to prove \( \fRed(tr) \subseteq \den{B} \). \\
	  By induction hypothesis (CR1), \( r \in \SN \), so \( \steps{r} \) exists. Proceed by nested induction (2) on \( (\steps{r}, \size{r}) \), lexicographically. We analyse the reducts of \( tr \):

	  \begin{itemize}
	    \item \( tr \lrap t'r \) where \( t \lrap t' \): Since \( t' \in \fRed(t) \subseteq \den{\gB \Rightarrow B} \), and \( r \in \den{\gB} \), we get \( t'r \in \den{B} \).
	    \item \( tr = t(r_1 + r_2) \lrap tr_1 + tr_2 \): By Lemma~\ref{generation}.\ref{generation_8}, \( r_1 + r_2 : \gB \), and by Lemma~\ref{generation}.\ref{generation_9}, \( r_1, r_2 : \gB_1 \), with \( S(\gB_1) \preceq \gB \). By Lemma~\ref{lem:preceq_properties}.\ref{preceq_relation_7}, we get \( \gB = S(\gB_2) \). Since \( \steps{r_1} \leq \steps{r} \) and \( \size{r_1} < \size{r} \), by the induction hypothesis (2) we have \( tr_1, tr_2 \in \den{S(B)} \), and thus \( tr_1 + tr_2 \in \den{S(B)} \) by the induction hypothesis (LIN1).
	    \item \( tr = t(\alpha.r_1) \lrap \alpha.tr_1 \): Similar reasoning applies using Lemma~\ref{generation}.\ref{generation_10} and the induction hypothesis (LIN2).
	    \item \( tr = t\z \lrap \z \): Direct by the induction hypothesis (HAB).
	    \item \( tr = t\error \lrap \error \): Direct by the induction hypothesis (HAB).
	  \end{itemize}

	\item[\textbf{(LIN1)}] To show \( t + r \in \den{\gB \Rightarrow B} \), we prove \( t + r : S(\gB \Rightarrow B) \) and \( (t + r)s \in \den{B} \) for all \( s \in \den{\gB} \). Since \( t, r \in \den{\gB \Rightarrow B} \), they both have type \( S(\gB \Rightarrow B) \), and \( ts, rs \in \den{B} \). So \( t + r : S(S(\gB \Rightarrow B)) \preceq S(\gB \Rightarrow B) \), and \( (t + r)s : S(B) \). \\
	  Since \( (t + r)s \in \Neutral \), it suffices (by the induction hypothesis (CR3)) to prove \( \fRed((t + r)s) \subseteq \den{B} \). Proceed by the induction hypothesis (2) on \( (\steps{t} + \steps{r} + \steps{s}, \size{(t + r)s}) \). The reductions are similar to those already handled.

	\item[\textbf{(LIN2)}] Similar to (LIN1), replacing \( t + r \) by \( \alpha.t \). Use the induction hypothesis (CR3), and nested induction on \( (\steps{t} + \steps{s}, \size{(\alpha.t)s}) \).

	\item[\textbf{(HAB)}] We analyse:
	  \begin{itemize}
	    \item For \( \z \), show that for any \( t \in \den{\gB} \), \( \z t \in \den{B} \) by reduction cases (\( \z t \lrap \z \) and \( \z \error \lrap \error \)), using the induction hypothesis (HAB).
	    \item For \( \error \), similar reasoning: \( \error t \lrap \error \in \den{B} \).
	    \item For variables \( x : \gB \Rightarrow B \), show \( xt \in \den{B} \) by reduction on \( t \), using the induction hypothesis (2) and the same kind of case analysis as in the body.
	  \end{itemize}
      \end{description}

    \item Case \( A = S(B) \):
      \begin{description}
	\item[\textbf{(CR1)}] This follows directly from the definition of \( \den{S(B)} \).

	\item[\textbf{(CR2)}] Given \( t \in \den{S(B)} \), we prove \( \fRed(t) \subseteq \den{S(B)} \). \\
	  By definition, \( t \lrap \sum_{i=1}^n [\alpha_i .] r_i \), with each \( r_i \in \den{B} \). Then for any \( t' \in \fRed(t) \), we have:
	  \begin{itemize}
	    \item If \( t \lrap t' \lrap^* \sum_{i=1}^n [\alpha_i .] r_i \), then \( t' \in \den{S(B)} \).
	    \item If \( t \lrap \sum_{i=1}^n [\alpha_i .] r_i \), then \( t' = \sum_{i=1}^n [\alpha_i .] r_i \) directly satisfies the required form.
	  \end{itemize}
	  Therefore, \( \fRed(t) \subseteq \den{S(B)} \).

	\item[\textbf{(CR3)}] Let \( t' \in \fRed(t) \). Then \( t' \in \den{S(B)} \), which implies \( t' \lrap^* \sum_{i=1}^n [\alpha_i .] r_i \), with \( r_i \in \den{B} \). Hence, \( t \lrap^* \sum_{i=1}^n [\alpha_i .] r_i \). Since \( t : S(B) \), this shows \( t \in \den{S(B)} \).

	\item[\textbf{(LIN1)}] By definition of \( \den{S(B)} \), we have \( t \lrap^* \sum_{i=1}^n [\alpha_i .] t_i \) and \( t_i \in \den{B} \), and \( r \lra[p']^* \sum_{i=1}^{n'} [\alpha_i .] r_i \) with \( r_i \in \den{B} \). Since \( t : S(B) \) and \( r : S(B) \), then
	  \[
	    t + r \lra[pp']^* \sum_{i=1}^n [\alpha_i .] t_i + \sum_{i=1}^{n'} [\alpha_i .] r_i.
	  \]
	  Moreover, since \( t + r : S(S(B)) \), by subtyping \( S(S(B)) \preceq S(B) \), so \( t + r : S(B) \). Therefore, \( t + r \in \den{S(B)} \).

	\item[\textbf{(LIN2)}] From the definition of \( \den{S(B)} \), we have \( t \lrap^* \sum_{i=1}^n [\alpha_i .] t_i \) with \( t_i \in \den{B} \), so
	  \[
	    \alpha . t \lrap^* \alpha . \sum_{i=1}^n [\alpha_i .] t_i.
	  \]
	  By the \rdists rule, \( \alpha . \sum_{i=1}^n [\alpha_i .] t_i \lrap^+ \sum_{i=1}^{n'} \alpha . [\alpha_i .] t_i \). Since \( \alpha.t : S(S(B)) \) and \( S(S(B)) \preceq S(B) \), we get \( \alpha.t : S(B) \), and so \( \alpha.t \in \den{S(B)} \).

	\item[\textbf{(HAB)}] We prove by cases:
	  \begin{itemize}
	    \item By definition of type interpretation, \( \z \in \den{S(B)} \).
	    \item Since \( \error \) does not reduce and \( \error \in \den{B} \) (as \( B \) is smaller), we conclude \( \error \in \den{S(B)} \).
	    \item If \( x : S(B) \), then by definition \( x \in \den{S(B)} \). \qed
	  \end{itemize}
      \end{description}
  \end{itemize}
\end{proof}

\subsection{Compatibility with subtyping}\label{app:compatibility}

We begin with the following auxiliary lemma, which we need to handle the product case.

\begin{restatable}{lemma}{prodcompatibility}
  \label{prod_compatibility}
  Let \( \prod_{i = 0}^{n} \Ba_i \) and \( \prod_{i = 0}^{n} \Ba_i^\prime \) be two products. Then:
  \[
    \den{\prod_{i = 0}^{n} \Ba_i} = \den{\prod_{i = 0}^{n} \Ba_i^\prime}.
  \]
\end{restatable}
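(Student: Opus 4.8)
The plan is to collapse the claimed set equality to a single type equivalence, exploiting the fact that on product types the interpretation $\den{\cdot}$ is purely ``well-typed and strongly normalising'' rather than a structural recursion. Unfolding the definition, a product of atomic types falls under the product clause (or, when it is a single atom, the atomic clause), and in either case
\[
  \den{\textstyle\prod_{i=0}^{n}\Ba_i} = \Set{ t : S(\textstyle\prod_{i=0}^{n}\Ba_i) \mid t \in \SN },
\]
and symmetrically for the primed product. Since the side condition $t \in \SN$ is a predicate on the bare term and does not mention the type, it therefore suffices to show that the two typing conditions $t : S(\prod_{i=0}^{n}\Ba_i)$ and $t : S(\prod_{i=0}^{n}\Ba_i')$ pick out exactly the same terms.

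Next I would establish the type equivalence $S(\prod_{i=0}^{n}\Ba_i) \approx S(\prod_{i=0}^{n}\Ba_i')$. The product subtyping rule of Figure~\ref{fig:Subtyping} gives $\prod_{i=0}^{n}\Ba_i \preceq S(\prod_{i=0}^{n}\Ba_i')$; applying the monotonicity rule $\frac{A \preceq B}{S(A)\preceq S(B)}$ yields $S(\prod_{i=0}^{n}\Ba_i) \preceq S(S(\prod_{i=0}^{n}\Ba_i'))$, and composing with idempotence $S(S(\cdot))\preceq S(\cdot)$ and transitivity gives $S(\prod_{i=0}^{n}\Ba_i) \preceq S(\prod_{i=0}^{n}\Ba_i')$. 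The same rule with the roles of the two sequences exchanged gives the reverse inequality, so the two types are equivalent. Alternatively, the two products are similar in the sense of Definition~\ref{def:similarTypes}, and the equivalence follows directly from Lemma~\ref{sim_relation}.\ref{preceqrelation12}.

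Finally, from $S(\prod_{i=0}^{n}\Ba_i) \approx S(\prod_{i=0}^{n}\Ba_i')$ the subtyping typing rule applied in both directions shows that $\Gamma \vdash t : S(\prod_{i=0}^{n}\Ba_i)$ holds iff $\Gamma \vdash t : S(\prod_{i=0}^{n}\Ba_i')$, so the two typing conditions coincide and the interpretation sets are equal. I do not expect a genuine obstacle here: the only step requiring care is the first reduction, namely recognising that the product clause of $\den{\cdot}$ is defined through a superposition type and the $\SN$ predicate alone---exactly the design choice the authors highlight after the definition---which is what makes the whole statement collapse to the elementary type equivalence above.
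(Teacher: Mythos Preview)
Your proposal is correct and follows essentially the same route as the paper: unfold the interpretation of a product (or atomic) type to $\{t : S(\prod_i \Ba_i) \mid t \in \SN\}$, then use the axiom $\prod_i \Ba_i \preceq S(\prod_i \Ba_i')$ together with $S$-monotonicity and idempotence to get $S(\prod_i \Ba_i) \approx S(\prod_i \Ba_i')$, so the two typing conditions coincide. If anything, your derivation of the type equivalence is spelled out more carefully than the paper's, which presents the same subtyping chain somewhat tersely.
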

\begin{proof}
  We prove both inclusions:
  \begin{itemize}
    \item[\textbf{(\(\supseteq\))}] Suppose \( t \in \den{\prod_{i = 0}^{n} \Ba_i} \). Then \( t : S\left(\prod_{i = 0}^{n} \Ba_i\right) \) and \( t \in \SN \). Since \( \prod_{i = 0}^{n} \Ba_i^\prime \preceq S(\prod_{i = 0}^{n} \Ba_i) \), we have \( S(\prod_{i = 0}^{n} \Ba_i^\prime) \preceq S(\prod_{i = 0}^{n} \Ba_i) \), so \( t : S\left(\prod_{i = 0}^{n} \Ba_i^\prime\right) \). Hence, \( t \in \den{\prod_{i = 0}^{n} \Ba_i^\prime} \).
    \item[\textbf{(\(\subseteq\))}] The converse direction is analogous.
      \qed
  \end{itemize}
\end{proof}

\compatibility*
\begin{proof}
  We proceed by induction on the derivation of \( A \preceq B \):
  \begin{itemize}
    \item \(
	\infer{\prod_{i = 0}^{n} \Ba_i \preceq S(\prod_{i = 0}^{n} \Ba_i^\prime)}{}
      \) \\
      Let \( t \in \den{\prod_{i = 0}^{n} \Ba_i} \). By Lemma~\ref{prod_compatibility}, we have \( t \in \den{\prod_{i = 0}^{n} \Ba_i^\prime} \), and hence \( t \in \den{S(\prod_{i = 0}^{n} \Ba_i^\prime)} \) by definition.

    \item \(
	\infer{A \preceq A}{}
      \) \\
      Trivially, \( \den{A} \subseteq \den{A} \).

    \item \(
	\infer{A \preceq S(A)}{}
      \) \\
      Follows directly from the definition of \( \den{S(A)} \).

    \item \(
	\infer{S(S(A)) \preceq S(A)}{}
      \) \\
      Let \( t \in \den{S(S(A))} \). Then by definition,
      \[
	t \lrap^* \sum_{i=1}^n [\alpha_i .] r_i \quad \text{with } r_i \in \den{S(A)}.
      \]
      Also by definition, each \( r_i \lra[p^\prime]^* \sum_{j=1}^{n_i^\prime} [\alpha_j .] r_{ij} \) with \( r_{ij} \in \den{A} \). By the reduction rules for vector space axioms, we have
      \[
	t \lra[p^{\prime\prime}]^* \sum_{k=1}^{n^{\prime\prime}} [\alpha_k .] r_k.
      \]
      Moreover, since \( t : S(S(A)) \) and \( S(S(A)) \preceq S(A) \), we get \( t : S(A) \). Therefore, \( t \in \den{S(A)} \).

    \item \(
	\infer{A \preceq C}{A \preceq B & B \preceq C}
      \) \\
      By induction hypothesis, \( \den{A} \subseteq \den{B} \) and \( \den{B} \subseteq \den{C} \), so by transitivity, \( \den{A} \subseteq \den{C} \).

    \item \(
	\infer{S(A) \preceq S(B)}{A \preceq B}
      \) \\
      By induction hypothesis, \( \den{A} \subseteq \den{B} \). Let \( t \in \den{S(A)} \), then by definition:
      \[
	t \lrap^* \sum_{i=1}^n [\alpha_i .] r_i, \quad \text{with } r_i \in \den{A}.
      \]
      Since \( \den{A} \subseteq \den{B} \), we get \( r_i \in \den{B} \). Also, as \( S(A) \preceq S(B) \), we have \( t : S(B) \), so \( t \in \den{S(B)} \).

    \item \(
	\infer{\gB_2 \Rightarrow A \preceq \gB_1 \Rightarrow B}{A \preceq B & \gB_1 \preceq \gB_2}
      \) \\
      By induction hypothesis, \( \den{A} \subseteq \den{B} \) and \( \den{\gB_1} \subseteq \den{\gB_2} \). Let \( t \in \den{\gB_2 \Rightarrow A} \). Then:
      \begin{itemize}
	\item \( t : S(\gB_2 \Rightarrow A) \). Since \( \gB_2 \Rightarrow A \preceq \gB_1 \Rightarrow B \), we have \( t : S(\gB_1 \Rightarrow B) \).
	\item For every \( r \in \den{\gB_1} \), since \( \den{\gB_1} \subseteq \den{\gB_2} \), we get \( tr \in \den{A} \), and by the induction hypothesis again, \( tr \in \den{B} \).
      \end{itemize}
      Therefore, \( t \in \den{\gB_1 \Rightarrow B} \).

    \item \(
	\infer{\gB_1 \times \gB_3 \preceq \gB_2 \times \gB_4}{\gB_1 \preceq \gB_2 & \gB_3 \preceq \gB_4}
      \) \\
      By induction hypothesis, \( \den{\gB_1} \subseteq \den{\gB_2} \) and \( \den{\gB_3} \subseteq \den{\gB_4} \). Let \( t \in \den{\gB_1 \times \gB_3} \). Then \( t \in \SN \) and \( t : S(\gB_1 \times \gB_3) \). Since \( \gB_1 \times \gB_3 \preceq \gB_2 \times \gB_4 \), we also have \( t : S(\gB_2 \times \gB_4) \). Thus \( t \in \den{\gB_2 \times \gB_4} \).
      \qed
  \end{itemize}
\end{proof}

\subsection{Adequacy}\label{app:adequacy}

We need the following auxiliary lemmas.

\begin{lemma}
  \label{Span_sub_term_SN}
  Let \( s = t + r \) or \( s = \alpha . t \), with \( s \in \den{S(A)} \), and assume that each subterm of \( s \) belongs to the interpretation of its type. Then \( t \in \den{S(A)} \).
\end{lemma}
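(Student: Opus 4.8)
The plan is to treat both shapes of $s$ uniformly by first pinning down the type of the distinguished subterm $t$ with the generation lemma, then importing reducibility of $t$ from the subterm hypothesis, and finally lifting it to $\den{S(A)}$ through compatibility with subtyping. The key observation is that membership in $\den{S(A)}$ bundles together typing, strong normalisation, and the reduction-to-canonical-form condition at once, so there is no need to re-establish these separately once $t$ is shown to lie in the interpretation of a subtype of $S(A)$.

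For the case $s = t + r$, I would apply Lemma~\ref{generation}.\ref{generation_9} to the hypothesis $s : S(A)$ to obtain a type $C$ with $\Xi, \Gamma_1 \vdash t : C$ (and $\Xi, \Gamma_2 \vdash r : C$) and $S(C) \preceq S(A)$. The subterm hypothesis then yields $t \in \den{C}$, reading $C$ as the type assigned to $t$ in this derivation. Since $C \preceq S(C)$ by the subtyping axiom and $S(C) \preceq S(A)$, transitivity gives $C \preceq S(A)$; hence Lemma~\ref{compatibility} provides $\den{C} \subseteq \den{S(A)}$, and therefore $t \in \den{S(A)}$. Strong normalisation of $t$ comes for free along the way, either from Lemma~\ref{reducibility_properties}.CR1 applied to $t \in \den{C}$, or by observing that $t$ is a subterm of $s \in \SN$.

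The case $s = \alpha.t$ is completely analogous, using Lemma~\ref{generation}.\ref{generation_10} in place of the sum rule: it yields a type $C$ with $t : C$ and $S(C) \preceq S(A)$, after which the same subtyping chain $C \preceq S(C) \preceq S(A)$ and the same appeal to Lemma~\ref{compatibility} close the argument.

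The main obstacle is not the length of the derivation but the coherence of the phrase ``the interpretation of its type'': a subterm may inhabit several interpretations, so one must be careful to use exactly the type $C$ produced by the generation lemma when invoking the hypothesis, and to verify $C \preceq S(A)$ rather than merely $S(C) \preceq S(A)$. I deliberately route the reduction-behaviour requirement through Lemma~\ref{compatibility} rather than arguing directly about the reducts of $t$, since a direct analysis would have to cope with cross-term interactions in $t + r$ (for instance the factoring rules $\alpha.u + \beta.u \lrap (\alpha+\beta).u$) and with the ``on all paths'' clause in the definition of $\den{S(A)}$, neither of which is inherited by a subterm in any transparent way.
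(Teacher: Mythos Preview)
Your proposal is correct and takes a genuinely different route from the paper. The paper argues \emph{directly} about reduction behaviour: from $s = t+r \in \den{S(A)}$ it asserts that the witnessing canonical form $\sum_{i=1}^n[\alpha_i.]r_i$ of $s$ splits into two blocks, one reached from $t$ and one from $r$, and then uses the generation lemma only to push the type of $t$ up to $S(A)$; the subterm hypothesis is stated but never invoked. You instead exploit the subterm hypothesis to obtain $t\in\den{C}$ outright, and then transport this into $\den{S(A)}$ via the chain $C\preceq S(C)\preceq S(A)$ and Lemma~\ref{compatibility}, never inspecting the canonical-form clause at all. Your route is arguably more robust: the paper's splitting step (``by the construction of $s$'') is informal and, as you observe, does not transparently survive cross-term interactions such as the factoring rules $\alpha.u+\beta.u\lra(\alpha+\beta).u$ or the ``on all paths'' quantifier in the definition of $\den{S(A)}$. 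The trade-off is that your argument genuinely consumes the subterm hypothesis, so it proves a nominally weaker statement---but it is exactly the statement that the only call site (the $\Castl$/$\Castr$ cases of the Adequacy proof) requires, and there the hypothesis is supplied by the outer induction.
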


\begin{proof}
  Since \( s \in \den{S(A)} \), by definition we have \( s : S(A) \). Moreover, \( s \lrap^* \sum_{i=1}^n [\alpha_i .] r_i \) with each \( r_i \in \den{A} \). We proceed by cases:
  \begin{itemize}
    \item If \( s = t + r \), then by the construction of \( s \), we have 
    \( t \lra[p']^* \sum_{i=1}^k [\alpha_i .] r_i \) and 
    \( r \lra[p'']^* \sum_{i=k + 1}^n [\alpha_i .] r_i \), where each \( r_i \in \den{A} \)
    (up to commutativity, since any permutation of these terms yields the same result).
    By Lemma~\ref{generation}.\ref{generation_9}, we have \( t : C \) and \( r : C \), with \( S(C) \preceq S(A) \). Hence, \( t : S(A) \), and therefore \( t \in \den{S(A)} \).
    
    \item If \( s = \alpha . t \), then by the vector space reduction axioms,
    \( s \lrap^* \sum_{i=1}^n \alpha . [\alpha_i .] r_i \), where \( r_i \in \den{A} \).
    By construction of \( s \), we know \( t \lrap^* \sum_{i=1}^n [\alpha_i .] r_i \). Then, by Lemma~\ref{generation}.\ref{generation_10}, we have \( t : C \), with \( S(C) \preceq S(A) \). Hence, \( t : S(A) \), and so \( t \in \den{S(A)} \).
    \qed
  \end{itemize}
\end{proof}

\begin{lemma}
  \label{Cast_SN}
  If \( t : \M \) and \( t \in \SN \), then \( \Cast t \in \SN \).
\end{lemma}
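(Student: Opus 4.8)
The plan is to prove $\Cast t \in \SN$ by well-founded induction on $\steps{t}$, which is legitimate because $t \in \SN$ guarantees this quantity is finite. Using the standard characterisation that a term is strongly normalising exactly when all of its one-step reducts are, it suffices to show $\fRed(\Cast t) \subseteq \SN$. I would partition the reducts of $\Cast t$ into two families: those produced by the context rule, i.e.\ $\Cast t \lrap \Cast t'$ with $t \lrap t'$, and those produced by firing a cast rule at the head of $\Cast t$ itself.

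The step doing the real work, and the place I expect the main difficulty, is the shape analysis that pins down the second family. Since $\M$ is a base type, it is not a superposition, and I would argue that neither $t$ nor any of its reducts can be headed by $+$, by a scalar $\alpha.(\cdot)$, or be $\z$: by the generation lemma (items~\ref{generation_9}, \ref{generation_10}, \ref{generation_2}) any such term carries a type $S(C)$ with $S(C) \preceq \M$, and Lemma~\ref{lem:preceq_properties}.\ref{preceq_relation_7} then forces $\M \approx S(C')$, contradicting the fact that a base type is never equivalent to a superposition (no subtyping rule introduces a leading $S$ on a product-headed supertype). Subject reduction (Theorem~\ref{subject_reduction}) keeps every reduct of $t$ at type $\M$, so this constraint is stable along reductions. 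Consequently the distributive and zero cast rules \rdistlp, \rdistrp, \rdistla, \rdistra, \rdistlz, \rdistrz, \rdistpup, \rdistaup, \rneutralzup can never apply, and the only head cast-reductions available are the \emph{eliminating} ones: $\Cast \ket 0 \lra \ket 0$ and $\Cast \ket 1 \lra \ket 1$ (\rcastketz, \rcastketo), $\Cast \ket + \lra \tfrac{1}{\sqrt 2}.\ket 0 + \tfrac{1}{\sqrt 2}.\ket 1$ and $\Cast \ket - \lra \tfrac{1}{\sqrt 2}.\ket 0 - \tfrac{1}{\sqrt 2}.\ket 1$ (\rcastp, \rcastm), the list rules $\Castl v \otimes b \lra v \otimes b$ and $\Castr b \otimes v \lra b \otimes v$ with $b \in \basis$ (\rneutrallup, \rneutralrup), and $\Cast \error \lra \error$ (\rerrorcast).

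Each of these contractums is cast-free and visibly strongly normalising, which I would dispatch directly without the induction hypothesis: $\ket 0$, $\ket 1$, and $\error$ are normal; the two $\ket\pm$ contractums are linear combinations of normal base terms, hence in $\SN$ by Lemma~\ref{linear_combination_SN}; and the contractum of the two list rules is literally $t$, which is in $\SN$ by hypothesis. For the first family, the context reducts $\Cast t'$ with $t \lrap t'$, I invoke the induction hypothesis: $t' \in \SN$ with $\steps{t'} < \steps{t}$, and $t' : \M$ by subject reduction, whence $\Cast t' \in \SN$. Since every one-step reduct of $\Cast t$ is thus in $\SN$, we conclude $\Cast t \in \SN$.

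The hard part is therefore entirely in the exclusion argument of the second paragraph: one must be certain the enumeration of head reducts is exhaustive and, crucially, rigorously rule out the distributive cast rules, for which the base-typedness of the argument — combined with the generation lemma and Lemma~\ref{lem:preceq_properties}.\ref{preceq_relation_7} — is exactly what is needed. I note that the argument carries over unchanged to the many-basis extension of Section~\ref{sec:arbitrary_bases}, where the additional head reducts $\Cast \ket{\uparrow_i}$ and $\Cast \ket{\downarrow_i}$ are again linear combinations of base terms and hence in $\SN$ by Lemma~\ref{linear_combination_SN}.
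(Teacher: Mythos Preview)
Your approach is sound and genuinely more elementary than the paper's. The paper does not attempt any exclusion: it inducts on the lexicographic pair $(\steps{t},\size{\Cast t})$ and mechanically treats every cast rule, including all the distributive ones, discharging each by the induction hypothesis on the resulting sub-casts together with Lemma~\ref{linear_combination_SN}. Under the hypothesis $t:\M$ most of those branches are in fact vacuous, but the paper does not exploit this, and the second component $\size{\Cast t}$ of its measure is there precisely to make the distributive branches terminate. Your observation that base-typedness prunes the applicable head rules down to the eliminating ones lets you drop that component and run a plain induction on $\steps{t}$, with the context rule the only place the hypothesis is invoked.

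One gap remains in your exclusion argument. As written it only forbids $t$ from being \emph{itself} headed by $+$, $\alpha.(\cdot)$, or $\z$; that disposes of \rdistpup, \rdistaup, \rneutralzup, but not of \rdistlp, \rdistrp, \rdistla, \rdistra, \rdistlz, \rdistrz, whose redexes have the cast argument of shape $t_1\otimes(t_2+t_3)$, $(\alpha.t_1)\otimes t_2$, $v\otimes\z$, and so on---that is, $\otimes$-headed with the linear-combination structure one level down. To close this, push the same reasoning into the components: by the generation lemma for $\otimes$ (item~\ref{generation_11}), from $t_1\otimes t_2:\M$ one gets $\gB_1\times\gB_2\preceq\M$, and since $\M$ is a base type (hence not a superposition) an easy direct argument on the subtyping rules (or Lemma~\ref{more_preceq_properties}.\ref{preceq_relation_9}) splits this into $\gB_i\preceq\M_i$ with each $\M_i\in\bqtypes$; your Lemma~\ref{lem:preceq_properties}.\ref{preceq_relation_7} step then forbids each component from being $+/\alpha./\z$-headed. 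With that one extra sentence the exclusion is complete and your proof goes through.
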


\begin{proof}
  We show \( \fRed(\Cast t) \subseteq \SN \) by induction on the lexicographic pair \( (\steps{t}, \size{\Cast t}) \). We analyse all possible reducts of \( \Cast t \):
  \begin{itemize}
    \item \( \Castl t_1 \otimes (t_2 + t_3) \to \Castl t_1 \otimes t_2 + \Castl t_1 \otimes t_3 \). 
      By induction hypothesis, both terms on the right are in \( \SN \), and so is their sum (Lemma~\ref{linear_combination_SN}).

    \item \( \Castr (t_1 + t_2) \otimes t_3 \to \Castr t_1 \otimes t_3 + \Castr t_2 \otimes t_3 \).
      Same reasoning as above.

    \item \( \Castl t_1 \otimes (\alpha . t_2) \to \alpha . \Castl t_1 \otimes t_2 \).
      the induction hypothesis on \( \Castl t_1 \otimes t_2 \), then Lemma~\ref{linear_combination_SN}.

    \item \( \Castr (\alpha . t_1) \otimes t_2 \to \alpha . \Castr t_1 \otimes t_2 \). 
      the induction hypothesis on \( \Castr t_1 \otimes t_2 \), then Lemma~\ref{linear_combination_SN}.

    \item \( \Castl v \otimes \z \to \z \) and \( \Castr \z \otimes v \to \z \).
      \( \z \in \SN \), so the reduct is in \( \SN \).

    \item \( \Cast(t_1 + t_2) \to \Cast t_1 + \Cast t_2 \) and \( \Cast(\alpha . t_1) \to \alpha . \Cast t_1 \).
      the induction hypothesis on each \( \Cast t_i \), then Lemma~\ref{linear_combination_SN}.

    \item \( \Cast \z \to \z \).
      Trivial, since \( \z \in \SN \).

    \item \( \Castl u \otimes b \to u \otimes b \) and \( \Castr b \otimes u \to b \otimes u \), with \( b \in \basis \).
      Directly in \( \SN \) since \( u \otimes b \in \SN \).

    \item \( \Cast \ket{+} \to \tfrac{1}{\sqrt{2}}.\ket{0} + \tfrac{1}{\sqrt{2}}.\ket{1} \) and \( \Cast \ket{-} \to \tfrac{1}{\sqrt{2}}.\ket{0} - \tfrac{1}{\sqrt{2}}.\ket{1} \).
      Each reduct is in \( \SN \).

    \item \( \Cast \ket{0} \to \ket{0} \) and \( \Cast \ket{1} \to \ket{1} \).
      Trivial cases.
      \qed
  \end{itemize}
\end{proof}

\adequacy*
\begin{proof}
  We proceed by structural induction on the derivation of \( \Gamma \vdash t : A \):
  \begin{itemize}
    \item Rule \(\tax\): 
      If \( \theta \vDash x^\gB \), then \( \theta(x) \in \den{\gB} \) by definition.

    \item Rule \(\tax_{\vec 0}\): 
      Since \( \theta(\z) = \z \), and \( \z \in \den{S(A)} \) by Lemma~\ref{reducibility_properties}.HAB, the result follows.

    \item Rule \(\tax_{\ket 0}\): 
      Since \( \theta(\ket{0}) = \ket{0} \in \SN \), and \( \ket{0} : \B \preceq S(\B) \), we have \( \ket{0} \in \den{\B} \).

    \item Rules \(\tax_{\ket 1}, \tax_{\ket +}, \tax_{\ket -}\): analogous.

    \item Rule \(\Rightarrow_I\): 
      Let \( \theta' \vDash \Gamma \). We must show \( \lambda x^\gB. \theta'(t) \in \den{\gB \Rightarrow A} \), i.e., it has type \( S(\gB \Rightarrow A) \) and for all \( r \in \den{\gB} \), \( (\lambda x^\gB. \theta'(t))\,r \in \den{A} \). By induction hypothesis, for any \( \theta \vDash \Gamma, x^\gB \), we have \( \theta(t) \in \den{A} \), so \( \lambda x^\gB. \theta'(t) : \gB \Rightarrow A \preceq S(\gB \Rightarrow A) \). Let \( r \in \den{\gB} \), then \( r : S(\gB) \), hence \( (\lambda x^\gB. \theta'(t))\,r : S(A) \). Since the application is neutral, by Lemma~\ref{reducibility_properties}.CR3, it suffices to show all reducts lie in \( \den{A} \). We proceed by secondary induction on the lexicographic pair \( (\steps{\theta'(t)} + \steps{r}, \size{(\lambda x^\gB. \theta'(t))\,r}) \):
      \begin{itemize}
	\item \( (\lambda x^\gB. \theta'(t))\,r \rightarrow \theta'(t)[r/x] = \theta(t) \in \den{A} \) by the induction hypothesis.
	\item If \( r \lrap r' \), then \( (\lambda x^\gB. \theta'(t))\,r \lrap (\lambda x^\gB. \theta'(t))\,r' \in \den{A} \) by the induction hypothesis.
	\item If \( r = r_1 + r_2 \), the reduction yields a sum of two applications, both in \( \den{A} \) by the induction hypothesis, so the sum is in \( \den{A} \) by Lemma~\ref{reducibility_properties}.LIN1.
	\item If \( r = \alpha \cdot r_1 \), the reduction yields \( \alpha \cdot (\lambda x^\gB. \theta'(t))\,r_1 \), which lies in \( \den{A} \) by the induction hypothesis and Lemma~\ref{reducibility_properties}.LIN2.
	\item If \( r = \z \) or \( r = \error \), then the reduct is \( \z \) or \( \error \), which belong to \( \den{A} \) by Lemma~\ref{reducibility_properties}.HAB.
      \end{itemize}

    \item Rule \(\Rightarrow_E\): 
      By induction hypothesis, if \( \theta_1 \vDash \Gamma \) and \( \theta_2 \vDash \Delta \), then \( \theta_1(t) \in \den{\gB \Rightarrow A} \) and \( \theta_2(r) \in \den{\gB} \). Let \( \theta = \theta_1 \cup \theta_2 \), so \( \theta(tr) = \theta_1(t) \theta_2(r) \in \den{A} \) by definition.

    \item Rule \(\Rightarrow_{ES}\): 
      By induction hypothesis, we have \( \theta_1(t) \in \den{S(\gB \Rightarrow A)} \) and \( \theta_2(r) \in \den{S(\gB)} \), with \( \theta = \theta_1 \cup \theta_2 \). Since \( \theta_1(t) \theta_2(r) : S(A) \), it suffices, by Lemma~\ref{reducibility_properties}.CR3, to show that all reducts lie in \( \den{S(A)} \). As both arguments normalise and Lemma~\ref{reducibility_properties}.CR1, we proceed by secondary induction on \( (\steps{\theta_1(t)} + \steps{\theta_2(r)}, \size{\theta_1(t)\theta_2(r)}) \). Consider the possible reducts:

      \begin{itemize}
	\item \( \theta_1(t) \theta_2(r) \lrap t' \theta_2(r) \) where \( \theta_1(t) \lrap t' \): by the induction hypothesis, \( t' \theta_2(r) \in \den{S(A)} \).
	\item \( \theta_1(t) \theta_2(r) \lrap \theta_1(t) r' \) where \( \theta_2(r) \lrap r' \): analogous.
	\item \( \theta_1(t) \theta_2(r) = (\lambda x^{S(\gB)}. t_1)\, \theta_2(r) \rightarrow t_1[r/x] \): by Lemma~\ref{generation}.\ref{generation_7} and subtyping, \( t_1 : A \), hence \( t_1[r/x] \in \den{S(A)} \).
	\item \( (\lambda x^{\M}. t_1)\, \theta_2(r) \rightarrow t_1[r/x] \): analogous.
	\item \( \ite{\ket{1}}{t_1}{t_2} \rightarrow t_1 \): by Lemma~\ref{generation}.\ref{generation_8}, \( t_1 \in \den{S(A)} \).
	\item \( \ite{\ket{0}}{t_1}{t_2} \rightarrow t_2 \), \( \itex{\ket{+}}{t_1}{t_2} \rightarrow t_1 \), \( \itex{\ket{-}}{t_1}{t_2} \rightarrow t_2 \): analogous.
	\item \( \theta_1(t)(r_1 + r_2) \rightarrow \theta_1(t) r_1 + \theta_1(t) r_2 \): both terms in \( \den{S(A)} \) by the induction hypothesis, so the sum is too by Lemma~\ref{reducibility_properties}.LIN1.
	\item \( (t_1 + t_2)\, \theta_2(r) \rightarrow t_1 \theta_2(r) + t_2 \theta_2(r) \): analogous.
	\item \( \theta_1(t)(\alpha . r_1) \rightarrow \alpha . \theta_1(t) r_1 \): \( \theta_1(t) r_1 \in \den{S(A)} \), so the scaled term is too by Lemma~\ref{reducibility_properties}.LIN2.
	\item \( (\alpha . t_1)\, \theta_2(r) \rightarrow \alpha . t_1 \theta_2(r) \): analogous.
	\item \( \theta_1(t)\, \z \rightarrow \z \), \( \z\, \theta_2(r) \rightarrow \z \): \( \z \in \den{S(A)} \) by Lemma~\ref{reducibility_properties}.HAB.
	\item \( \theta_1(t)\, \error \rightarrow \error \), \( \error\, \theta_2(r) \rightarrow \error \): \( \error \in \den{S(A)} \) by Lemma~\ref{reducibility_properties}.HAB.
      \end{itemize}

    \item Rule \(S_I^+\): 
      By induction hypothesis, \( \theta_1(t) \in \den{A} \) and \( \theta_2(r) \in \den{A} \). Then \( \theta_1(t) + \theta_2(r) \in \den{S(A)} \) by Lemma~\ref{reducibility_properties}.LIN1.

    \item Rule \(S_I^\alpha\): 
      By induction hypothesis, \( \theta(t) \in \den{A} \). Then \( \alpha.\theta(t) \in \den{S(A)} \) by Lemma~\ref{reducibility_properties}.LIN2.

    \item Rule \(\times_I\): 
      By induction hypothesis, \( \theta_1(t) \in \den{\gB} \) and \( \theta_2(r) \in \den{\Phi} \), so both normalise (CR1). Thus \( \theta_1(t) \otimes \theta_2(r) \in \SN \), and has type \( \gB \times \Phi \preceq S(\gB \times \Phi) \), hence \( \theta_1(t) \otimes \theta_2(r) \in \den{\gB \times \Phi} \).

    \item Rule \(S_E\): 
      By induction hypothesis, \( \theta(t) \in \den{S(\prod_{i = 1}^{n} \Ba_i)} \), hence \( \theta(t) \in \SN \) and has type \( S(\prod_{i = 1}^{n} \Ba_i) \). Then \( \pim \theta(t) \in \SN \) and \( \pim \theta(t) : \B^m \times S(\prod_{i = m + 1}^{n} \Ba_i) \preceq S(\B^m \times S(\prod_{i = m + 1}^{n} \Ba_i)) \), thus \( \pim \theta(t) \in \den{\B^m \times S(\prod_{i = m + 1}^{n} \Ba_i)} \).

    \item Rule \(S_{E_\X}\): 
      Analogous to the previous case.

    \item Rule \(\tif\): 
      By induction hypothesis, \(\theta \vDash \Gamma\) implies \(\theta(t), \theta(r) \in \den{A}\). We must show that if \(\theta \vDash \Gamma\), then \(\ite{}{\theta(t)}{\theta(r)} \in \den{\B \Rightarrow A}\). By definition, it suffices to prove that \(\ite{}{\theta(t)}{\theta(r)}: S(\B \Rightarrow A)\) and that for all \(s \in \den{\B}\), \(\ite{s}{\theta(t)}{\theta(r)} \in \den{A}\). Since \(s: S(\B)\), we have \(\ite{s}{\theta(t)}{\theta(r)} : S(A)\), and as this is neutral, by Lemma~\ref{reducibility_properties}.CR3, it suffices to prove \(\fRed(\ite{s}{\theta(t)}{\theta(r)}) \subseteq \den{S(A)}\). We proceed by the induction hypothesis (2) on \((\steps{s}, \size{\ite{s}{\theta(t)}{\theta(r)}})\):
      \begin{itemize}
	\item \(\ite{s}{\theta(t)}{\theta(r)} \lrap \ite{s_1}{\theta(t)}{\theta(r)}\) with \(s \lrap s_1\): by the induction hypothesis, \(\ite{s_1}{\theta(t)}{\theta(r)} \in \den{A}\).
	\item If \(s = \ket{1}\), then \(\ite{s}{\theta(t)}{\theta(r)} \lra[1] \theta(t)\): by the induction hypothesis, \(\theta(t) \in \den{A}\).
	\item If \(s = \ket{0}\), then \(\ite{s}{\theta(t)}{\theta(r)} \lra[1] \theta(r)\): by the induction hypothesis, \(\theta(r) \in \den{A}\).
	\item If \(s = s_1 + s_2\), then \(\ite{s}{\theta(t)}{\theta(r)} \lra[1] \ite{s_1}{\theta(t)}{\theta(r)} + \ite{s_2}{\theta(t)}{\theta(r)}\): by the induction hypothesis, both terms are in \(\den{A}\); hence by Lemma~\ref{reducibility_properties}.LIN1, their sum is as well.
	\item If \(s = \alpha.s_1\), then \(\ite{s}{\theta(t)}{\theta(r)} \lra[1] \alpha.\ite{s_1}{\theta(t)}{\theta(r)}\): by the induction hypothesis, \(\ite{s_1}{\theta(t)}{\theta(r)} \in \den{A}\), and then by Lemma~\ref{reducibility_properties}.LIN2, the result is in \(\den{A}\).
	\item If \(s = \z\), then \(\ite{s}{\theta(t)}{\theta(r)} \lra[1] \z\): by Lemma~\ref{reducibility_properties}.HAB, \(\z \in \den{A}\).
	\item If \(s = \error\), then \(\ite{s}{\theta(t)}{\theta(r)} \lra[1] \error\): same as above.
      \end{itemize}

    \item Rule \(\tif_{\X}\): 
      Analogous to the previous case.

    \item Rule \(\Castl\):
      By induction hypothesis, \(\theta \vDash \Gamma\) implies \(\theta(t) \in \den{S(\gB \times S(\Phi))}\). We want to show that \(\Castl \theta(t) \in \den{S(\gB \times \Phi)}\). We proceed by cases:
      \begin{itemize}
	\item \(\theta(t) = \sum_{i = 1}^n \brackets{\alpha_i .} r_i \in \den{S(\gB \times S(\Phi))}\). We want \(\Castl \sum_{i = 1}^n \brackets{\alpha_i .} r_i \in \den{S(\gB \times \Phi)}\). It suffices to show \(\fRed(\Castl \sum_{i = 1}^n \brackets{\alpha_i .} r_i) \subseteq \den{S(\gB \times \Phi)}\). We proceed by the induction hypothesis (2) on \((\steps{\sum_i \brackets{\alpha_i .} r_i}, \size{\sum_i \brackets{\alpha_i .} r_i})\). We analyse each reduct:
	  \begin{itemize}
	    \item \(\Castl \sum_i \brackets{\alpha_i .} r_i \lrap \Castl s\), where \(\sum_i \brackets{\alpha_i .} r_i \lrap s\): by the induction hypothesis, \(\Castl s \in \den{S(\gB \times \Phi)}\).
	    \item \(\Castl \error \lra[1] \error\): by Lemma~\ref{reducibility_properties}.HAB.
	    \item \(\Castl(r_1 \otimes (r_2 + r_3)) \lra[1] \Castl(r_1 \otimes r_2) + \Castl(r_1 \otimes r_3)\): both summands are in \(\den{S(\gB \times \Phi)}\) by the induction hypothesis, so their sum is by Lemma~\ref{reducibility_properties}.LIN1.
	    \item \(\Castl(r_1 \otimes (\alpha . r_2)) \lra[1] \alpha . (\Castl(r_1 \otimes r_2))\): by the induction hypothesis and Lemma~\ref{reducibility_properties}.LIN2.
	    \item \(\Castl(r_1 \otimes \z) \lra[1] \z\), when \(r_1\) has type \(\gB\): by Lemma~\ref{reducibility_properties}.HAB.
	    \item \(\Castl(r_1 + r_2) \lra[1] \Castl r_1 + \Castl r_2\): since \(\theta(t) = r_1 + r_2 \in \den{S(\gB \times S(\Phi))}\), we get \(r_1, r_2 \in \den{S(\gB \times S(\Phi))}\) by Lemma~\ref{Span_sub_term_SN}, so \(\Castl r_1, \Castl r_2 \in \den{S(\gB \times \Phi)}\) by the induction hypothesis, and the sum is by Lemma~\ref{reducibility_properties}.LIN1.
	    \item \(\Castl(\alpha . r_1) \lra[1] \alpha . \Castl r_1\): same as above, using Lemma~\ref{reducibility_properties}.LIN2.
	    \item \(\Castl \z \lra[1] \z\): by Lemma~\ref{reducibility_properties}.HAB.
	    \item \(\Castl(u \otimes v) \lra[1] u \otimes v\), with \(u \in \basis\): since \(u \otimes v \in \den{S(\gB \times S(\Phi))}\), it reduces to \(u \otimes v'\) with \(v' \in \den{S(\Phi)}\) and \(u \otimes v' \in \den{\gB \times S(\Phi)}\). By Lemma~\ref{reducibility_properties}.CR1, \(u \otimes v' \in \SN\). By subject reduction, \(u \otimes v' : S(\gB \times \Phi)\), hence \(u \otimes v \in \den{S(\gB \times \Phi)}\).
	  \end{itemize}

	\item \(\fRed(\theta(t)) \subseteq \den{S(\gB \times S(\Phi))}\). We want \(\fRed(\Castl \theta(t)) \subseteq \den{S(\gB \times \Phi)}\). We proceed by the induction hypothesis (2) on \((\steps{\theta(t)}, \size{\theta(t)})\). The cases are analogous to the previous ones.
      \end{itemize}

    \item Rule \(\Castr\): Analogous to the previous case.

    \item Rule \(\Cast_{\B}\): 
      By Lemma~\ref{substitution}, \(\theta(\Cast t) = \Cast \theta(t): \B\). By induction hypothesis, \(\theta(t) \in \den{\B}\), and by Lemma~\ref{reducibility_properties}.HAB, \(\theta(t) \in \SN\). Then, by Lemma~\ref{Cast_SN}, \(\Cast \theta(t) \in \SN\), and since \(\Cast \theta(t): \B \preceq S(\B)\), we get \(\Cast \theta(t) \in \den{\B}\).

    \item Rule \(\Cast_{\X}\): 
      By Lemma~\ref{substitution}, \(\theta(\Cast t) = \Cast \theta(t): S(\B)\). By induction hypothesis, \(\theta(t) \in \den{\X}\), and by Lemma~\ref{reducibility_properties}.HAB, \(\theta(t) \in \SN\). Then, by Lemma~\ref{Cast_SN}, \(\Cast \theta(t) \in \SN\), and since \(\Cast \theta(t): S(\B) \preceq S(\X)\), we get \(\Cast \theta(t) \in \den{\X}\), and finally by Lemma~\ref{compatibility}, \(\Cast \theta(t) \in \den{S(\B)}\).

    \item Rule \(\times_{Er_\M}\): 
      By Lemma~\ref{substitution}, \(\theta(\head t) = \head \theta(t): \Ba\). By induction hypothesis, \(\theta(t) \in \den{\Ba \times \M}\), and by Lemma~\ref{reducibility_properties}.HAB, \(\theta(t) \in \SN\), so \(\head \theta(t) \in \SN\). Since \(\head \theta(t): \Ba \preceq S(\Ba)\), we have \(\head \theta(t) \in \den{\Ba}\).

    \item Rule \(\times_{El_\M}\): 
      By Lemma~\ref{substitution}, \(\theta(\tail t) = \tail \theta(t): \M\). By induction hypothesis, \(\theta(t) \in \den{\Ba \times \M}\), and by Lemma~\ref{reducibility_properties}.HAB, \(\theta(t) \in \SN\), so \(\tail \theta(t) \in \SN\). Since \(\tail \theta(t): \M \preceq S(\M)\), we have \(\tail \theta(t) \in \den{\M}\).

    \item Rule \(\textrm{err}\): 
      By definition, \(\theta(\error) = \error\), and by Lemma~\ref{reducibility_properties}.HAB, \(\error \in \den{A}\).

    \item Rule \(\preceq\): 
      By induction hypothesis, \(\theta(t) \in \den{A}\), and by Lemma~\ref{compatibility}, \(\den{A} \subseteq \den{B}\). Hence, \(\theta(t) \in \den{B}\).

    \item Rule \(W\): 
      By induction hypothesis, if \(\theta \vDash \Gamma\), then \(\theta(t) \in \den{A}\). If \(\theta \vDash \Gamma, x^{\M}\), then by definition \(\theta \vDash \Gamma\), so again \(\theta(t) \in \den{A}\).

    \item Rule \(C\): 
      By induction hypothesis, if \(\theta \vDash \Gamma, x^{\M}, y^{\M}\), then \(\theta(t) \in \den{A}\). Let \(\theta' \vDash \Gamma, x^{\M}\); then \(\theta'(t[y/x]) = \theta(t)\), so \(\theta'(t[y/x]) \in \den{A}\).
      \qed
  \end{itemize}
\end{proof}

\end{document}